\documentclass[11pt]{article}
\usepackage{mathtools,amsmath,amssymb,amsfonts,amsthm,siunitx,bm,bbm,siunitx}
\usepackage{moreverb,rotating,graphics,esint}
\usepackage[T1]{fontenc}
\usepackage{epsfig}
\usepackage[french,english]{babel} 
\usepackage{csquotes}
\usepackage{dsfont}
\usepackage{esint}

\usepackage{graphicx}
\usepackage{subcaption}
\usepackage{tikz}
\usetikzlibrary{decorations.pathreplacing}

\numberwithin{equation}{section}

\newtheorem{theorem}{Theorem}[section]
\newtheorem{proposition}[theorem]{Proposition}
\newtheorem{remark}[theorem]{Remark}
\newtheorem{lemma}[theorem]{Lemma}
\newtheorem{corollary}[theorem]{Corollary}
\newtheorem{definition}[theorem]{Definition}

\newtheorem{assumption}[theorem]{Assumption}

\allowdisplaybreaks

\def\tr{{\rm Tr \,}}

\def\N{{\mathbb N}}

\def\R{{\mathbb R}}
\def\C{{\mathbb C}}

\def\1{{\mathds{1}}}

\def\cE{{\mathcal E}}
\def\cH{{\mathcal H}}
\def\cN{{\mathcal N}}

\def\cT{{\mathcal T}}
\def\cX{{\mathcal X}}

\def\cB{{\mathcal B}}
\def\cI{{\mathcal I}}

\def\cL{{\mathcal L}}
\def\cU{{\mathcal U}}

\def\bk{{\mathbf k}}

\def\twi{{\mathrm{twi}}}
\def\px{\mathrm{ad}_{\partial_x}}
\def\weps{\eta}

\newcommand*{\Trb}{\underline{\textrm{\rm Tr}}}
\newcommand*{\Tr}{\mathrm{Tr}}
\newcommand*{\dist}{\mathrm{dist}}
\newcommand*{\rank}{\mathrm{Rank}}

\newcommand{\Op}{{\mathrm{Op}_\epsilon}}
\newcommand{\Opc}{{\mathrm{Op}_\epsilon^{\mathrm{c}}}}
\newcommand{\Supp}{{\mathrm{Supp}\,}}

\newcommand \dps{\displaystyle }

\hoffset=-1.5truecm
\voffset=-2truecm
\textwidth=15truecm
\textheight=24truecm

\usepackage[
    backend=biber,
    style=numeric,
    sorting=ynt,
    datamodel=eprint-hal,maxbibnames=99
]{biblatex}
\usepackage{hyperref}
\addbibresource{reference.bib}

\title{Semiclassical analysis of two-scale electronic Hamiltonians for twisted bilayer graphene}
\author{
Eric Canc\`es \footnote{ \textsc{Eric Canc\`es, CERMICS, \'Ecole des Ponts ParisTech and Inria Paris, 6 and 8 av. Pascal, 77455 Marne-la-Vall\'ee, France}
  \textit{E-mail address}: \texttt{\href{mailto:eric.cances@enpc.fr}{eric.cances@enpc.fr}}}
\and Long Meng\footnote{ \textsc{Long Meng, CERMICS, \'Ecole des ponts ParisTech, 6 and 8 av. Pascal, 77455 Marne-la-Vall\'ee, France}
  \textit{E-mail address}:\texttt{\href{mailto:long.meng@enpc.fr}{long.meng@enpc.fr}}}
}
\date{}

\begin{document}

\selectlanguage{english}

\maketitle

\begin{abstract}
    This paper investigates the mathematical properties of independent-electron models for twisted bilayer graphene by examining the density-of-states of corresponding single-particle Hamiltonians using tools from semiclassical analysis. This study focuses on a specific atomic-scale Hamiltonian  $H_{d,\theta}$ constructed from Density-Functional Theory, and a family of moir\'e-scale Hamiltonians $H_{d,K,\theta}^{\rm eff}$ containing the Bistritzer-MacDonald model. The parameter $d$ represents the interlayer distance, and $\theta$ the twist angle. It is shown that the density-of-states of $H_{d,\theta}$ and $H_{d,K,\theta}^{\rm eff}$ admit asymptotic expansions in the twist angle parameter $\epsilon:=\sin(\theta/2)$. The proof relies on a twisted version of the Weyl calculus and a trace formula for an exotic class of pseudodifferential operators suitable for the study of twisted 2D materials. We also show that the density-of-states of $H_{d,\theta}$ admits an asymptotic expansion in $\eta:=\tan(\theta/2)$ and comment on the differences between the expansions in $\epsilon$ and $\eta$.
\end{abstract}

\setcounter{tocdepth}{2}
\tableofcontents

\section{Introduction}

Graphene is a material consisting of a layer of carbon atoms sitting on a honeycomb lattice (Fig. \ref{fig:1a}). It has been studied theoretically since the late 40's~\cite{bandgraphite}, but was only experimentally realized in 2004~\cite{graphenfilm}. Graphene is the most famous example of single-layer 2D materials, namely periodic crystals with 2D atomic-scale lattice periodicity. Other important single-layer 2D materials include hexagonal boron nitride (hBN), transition metal dichalcogenides (TMD), phosphorene, etc. Over the past decade, a significant body of mathematical literature has been dedicated to the study of single-layer 2D materials, see e.g. \cite{fefferman2012honeycomb,fefferman2014wave,HoneyStrong,EdgeHoney,ContinnuumSchr,EdgeHall,RigorDirac,MagnGraphene,CantorGraphene} and references therein.

\begin{figure}[h!]
  \centering
  \begin{subfigure}[b]{0.4\linewidth}
    \begin{tikzpicture}[x=0.6cm,y=0.6cm]

\filldraw[-, fill=red!5, very thick] (-1.5,-0.866)node[below] {\footnotesize$B$} -- (-1.5,0.866)node[above] {\footnotesize$A$}  -- (0,1.732) node[above,xshift=0.15cm] {\footnotesize$B$} --(1.5,0.866) node[above] {\footnotesize$A$} -- (1.5,-0.866) node[right, yshift=0.1cm] {\footnotesize$B$} -- (0,-1.732) node[above] {\footnotesize$A$} -- (-1.5,-0.866);

\draw[-] (1.5,-0.866) -- (1.5,0.866) -- (3,1.732)  --(4.5,0.866)  -- (4.5,-0.866) -- (3,-1.732) -- (1.5,-0.866);

\draw[-] (0,1.732) -- (0,3.464) -- (1.5,4.33)  --(3,3.464)  -- (3,1.732) --  (1.5,0.866) -- (0,1.732);

\draw[-] (0,-1.732) -- (0,-3.464) -- (1.5,-4.33)  --(3,-3.464)  -- (3,-1.732) --  (1.5,-0.866) -- (0,-1.732);

\filldraw[->,ultra thick,color=red] (0,0) node[left,xshift=-0.2cm] {\footnotesize$\Omega$}--(1.5,2.598) node [right] {$a_2$}; 
\filldraw[->, ultra thick,color=blue] (0,0) --(1.5,-2.598) node [right] {$a_1$}; 


\end{tikzpicture}
    \caption{Lattice vectors, atomic sites, and A/B sublattices.}
    \label{fig:1a}
  \end{subfigure}
  \begin{subfigure}[b]{0.5\linewidth}
   \begin{tikzpicture}[x=1.6cm,y=1.6cm]

\draw[-]  (-1.732, 0) node[right] {$\mathbf{K'}$}-- (-0.866, 1.5) -- (0.866, 1.5)--(1.732, 0) node[right] {$\mathbf{K}$} -- (0.866, -1.5)-- (-0.866, -1.5)--(-1.732, 0);

\filldraw[->,ultra thick,color=red] (0,0) --(2.598,1.5) node [right] {$a_2^*$}; 
\filldraw[->, ultra thick,color=blue] (0,0) --(2.598,-1.5) node [right] {$a_1^*$}; 

\node at (0, -1.0){$\mathrm{B.Z.}$};


\end{tikzpicture}
    \caption{Reciprocal lattice vectors, first Brillouin zone, and Dirac points $K$ and $K'$.}
    \label{fig:1b}
  \end{subfigure}
  \caption{Monolayer graphene}
  \label{fig:1}
\end{figure}
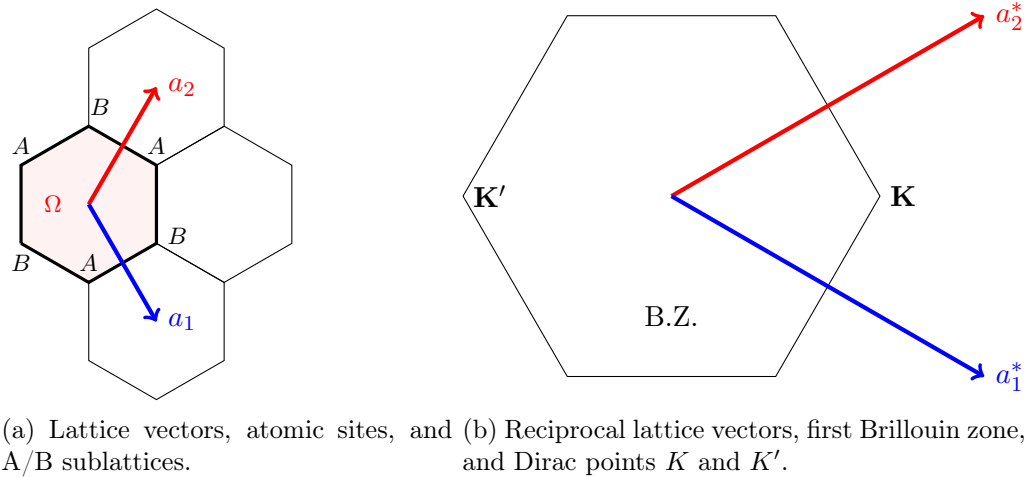

\medskip

Single-layer 2D materials can be stacked on top of each other to form 2D homostructures (e.g. bilayer graphene, trilayer graphene) or 2D heterostructures (e.g. graphene on hBN).  

Untwisted bilayer graphene \cite{mccann2012electronic} is obtained by stacking two parallel graphene layers with the same orientation and is characterized by two parameters: the interlayer distance~$d$, and the stacking vector $\mathrm b$ (Figs. \ref{fig:2a}-\ref{fig:2b}). The equilibrium geometry of untwisted bilayer graphene, for which the energy is minimal, is obtained for $d\simeq  \SI{3.4}{\angstrom}$ and $\mathrm{b}= \frac 13(a_2-a_1)$ (BA Bernal stacking, Fig. \ref{fig:2d}) or $\mathrm{b}= \frac 13(a_1-a_2)$ (AB Bernal stacking).  The configuration with $\mathrm b =0$ (AA stacking, Fig. \ref{fig:2c}) is unstable for any interlayer distance $d$.

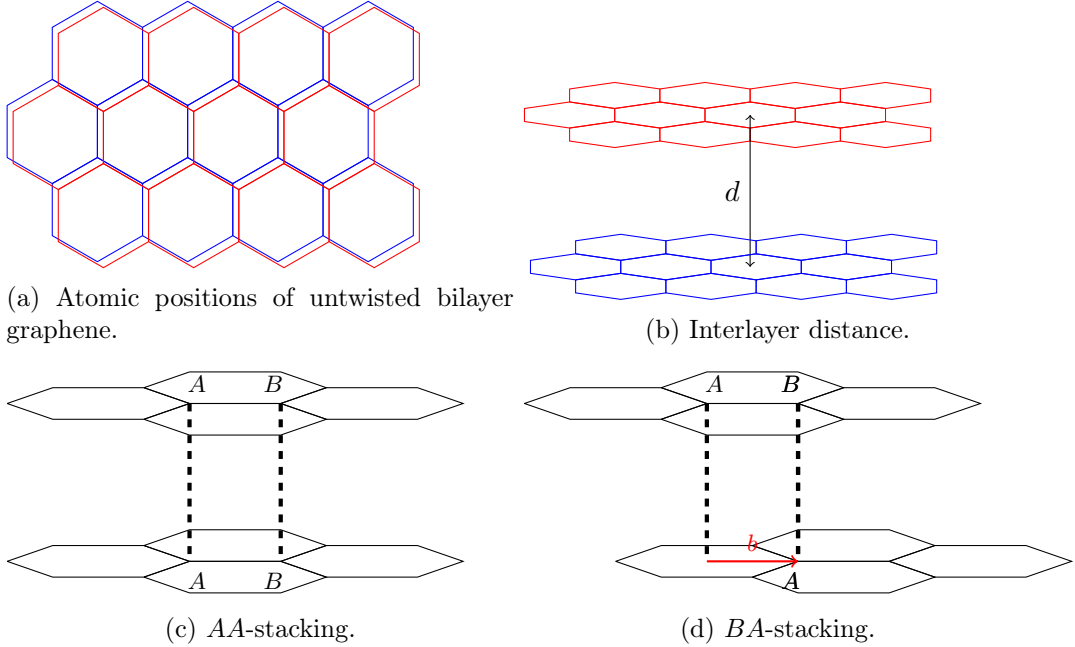
\begin{figure}[ht]
  \centering
  \begin{subfigure}[b]{0.45\linewidth}
  \begin{tikzpicture}[x=0.4cm,y=0.4cm]

\draw[-,blue] (-1.5,-0.866) -- (-1.5,0.866)  -- (0,1.732) --(1.5,0.866) -- (1.5,-0.866) -- (0,-1.732)  -- (-1.5,-0.866);

\draw[-,blue] (1.5,-0.866) -- (1.5,0.866) -- (3,1.732)  --(4.5,0.866)  -- (4.5,-0.866) -- (3,-1.732) -- (1.5,-0.866);

\draw[-,blue] (4.5,-0.866) -- (4.5,0.866) -- (6,1.732)  --(7.5,0.866)  -- (7.5,-0.866) -- (6,-1.732) -- (4.5,-0.866);

\draw[-,blue] (7.5,-0.866) -- (7.5,0.866) -- (9,1.732)  --(10.5,0.866)  -- (10.5,-0.866) -- (9,-1.732) -- (7.5,-0.866);


\draw[-,blue] (0,1.732) -- (0,3.464) -- (1.5,4.33)  --(3,3.464)  -- (3,1.732) --  (1.5,0.866) -- (0,1.732);

\draw[-,blue] (3,1.732) -- (3,3.464) -- (4.5,4.33)  --(6,3.464)  -- (6,1.732) --  (4.5,0.866) -- (3,1.732);

\draw[-,blue] (6,1.732) -- (6,3.464) -- (7.5,4.33)  --(9,3.464)  -- (9,1.732) --  (7.5,0.866) -- (6,1.732);

\draw[-,blue] (9,1.732) -- (9,3.464) -- (10.5,4.33)  --(12,3.464)  -- (12,1.732) --  (10.5,0.866) -- (9,1.732);

\draw[-,blue] (0,-1.732) -- (0,-3.464) -- (1.5,-4.33)  --(3,-3.464)  -- (3,-1.732) --  (1.5,-0.866) -- (0,-1.732);

\draw[-,blue] (3,-1.732) -- (3,-3.464) -- (4.5,-4.33)  --(6,-3.464)  -- (6,-1.732) --  (4.5,-0.866) -- (3,-1.732);

\draw[-,blue] (6,-1.732) -- (6,-3.464) -- (7.5,-4.33)  --(9,-3.464)  -- (9,-1.732) --  (7.5,-0.866) -- (6,-1.732);

\draw[-,blue] (9,-1.732) -- (9,-3.464) -- (10.5,-4.33)  --(12,-3.464)  -- (12,-1.732) --  (10.5,-0.866) -- (9,-1.732);


\draw[-,red] (-1.3,-1.066) -- (-1.3,0.666)  -- (0.2,1.532) --(1.7,0.666) -- (1.7,-1.066) -- (0.2,-1.932)  -- (-1.3,-1.066);

\draw[-,red] (1.7,-1.066) -- (1.7,0.666) -- (3.2,1.532)  --(4.7,0.666)  -- (4.7,-1.066) -- (3.2,-1.932) -- (1.7,-1.066);

\draw[-,red] (4.7,-1.066) -- (4.7,0.666) -- (6.2,1.532)  --(7.7,0.666)  -- (7.7,-1.066) -- (6.2,-1.932) -- (4.7,-1.066);

\draw[-,red] (7.7,-1.066) -- (7.7,0.666) -- (9.2,1.532)  --(10.7,0.666)  -- (10.7,-1.066) -- (9.2,-1.932) -- (7.7,-1.066);


\draw[-,red] (0.2,1.532) -- (0.2,3.264) -- (1.7,4.13)  --(3.2,3.264)  -- (3.2,1.532) --  (1.7,0.666) -- (0.2,1.532);

\draw[-,red] (3.2,1.532) -- (3.2,3.264) -- (4.7,4.13)  --(6.2,3.264)  -- (6.2,1.532) --  (4.7,0.666) -- (3.2,1.532);

\draw[-,red] (6.2,1.532) -- (6.2,3.264) -- (7.7,4.13)  --(9.2,3.264)  -- (9.2,1.532) --  (7.7,0.666) -- (6.2,1.532);

\draw[-,red] (9.2,1.532) -- (9.2,3.264) -- (10.7,4.13)  --(12.2,3.264)  -- (12.2,1.532) --  (10.7,0.666) -- (9.2,1.532);

\draw[-,red] (0.2,-1.932) -- (0.2,-3.664) -- (1.7,-4.53)  --(3.2,-3.664)  -- (3.2,-1.932) --  (1.7,-1.066) -- (0.2,-1.932);

\draw[-,red] (3.2,-1.932) -- (3.2,-3.664) -- (4.7,-4.53)  --(6.2,-3.664)  -- (6.2,-1.932) --  (4.7,-1.066) -- (3.2,-1.932);

\draw[-,red] (6.2,-1.932) -- (6.2,-3.664) -- (7.7,-4.53)  --(9.2,-3.664)  -- (9.2,-1.932) --  (7.7,-1.066) -- (6.2,-1.932);

\draw[-,red] (9.2,-1.932) -- (9.2,-3.664) -- (10.7,-4.53)  --(12.2,-3.664)  -- (12.2,-1.932) --  (10.7,-1.066) -- (9.2,-1.932);
\end{tikzpicture}
    \caption{Atomic positions of untwisted bilayer graphene.}
    \label{fig:2a}
  \end{subfigure}
  \begin{subfigure}[b]{0.45\linewidth}
  \begin{tikzpicture}[x=0.4cm,y=0.1cm]

\draw[-,red] (-1.5,-0.866) -- (-1.5,0.866)  -- (0,1.732) --(1.5,0.866) -- (1.5,-0.866) -- (0,-1.732)  -- (-1.5,-0.866);

\draw[-,red] (1.5,-0.866) -- (1.5,0.866) -- (3,1.732)  --(4.5,0.866)  -- (4.5,-0.866) -- (3,-1.732) -- (1.5,-0.866);

\draw[-,red] (4.5,-0.866) -- (4.5,0.866) -- (6,1.732)  --(7.5,0.866)  -- (7.5,-0.866) -- (6,-1.732) -- (4.5,-0.866);

\draw[-,red] (7.5,-0.866) -- (7.5,0.866) -- (9,1.732)  --(10.5,0.866)  -- (10.5,-0.866) -- (9,-1.732) -- (7.5,-0.866);


\draw[-,red] (0,1.732) -- (0,3.464) -- (1.5,4.33)  --(3,3.464)  -- (3,1.732) --  (1.5,0.866) -- (0,1.732);

\draw[-,red] (3,1.732) -- (3,3.464) -- (4.5,4.33)  --(6,3.464)  -- (6,1.732) --  (4.5,0.866) -- (3,1.732);

\draw[-,red] (6,1.732) -- (6,3.464) -- (7.5,4.33)  --(9,3.464)  -- (9,1.732) --  (7.5,0.866) -- (6,1.732);

\draw[-,red] (9,1.732) -- (9,3.464) -- (10.5,4.33)  --(12,3.464)  -- (12,1.732) --  (10.5,0.866) -- (9,1.732);

\draw[-,red] (0,-1.732) -- (0,-3.464) -- (1.5,-4.33)  --(3,-3.464)  -- (3,-1.732) --  (1.5,-0.866) -- (0,-1.732);

\draw[-,red] (3,-1.732) -- (3,-3.464) -- (4.5,-4.33)  --(6,-3.464)  -- (6,-1.732) --  (4.5,-0.866) -- (3,-1.732);

\draw[-,red] (6,-1.732) -- (6,-3.464) -- (7.5,-4.33)  --(9,-3.464)  -- (9,-1.732) --  (7.5,-0.866) -- (6,-1.732);

\draw[-,red] (9,-1.732) -- (9,-3.464) -- (10.5,-4.33)  --(12,-3.464)  -- (12,-1.732) --  (10.5,-0.866) -- (9,-1.732);


\filldraw[<->] (6,0)-- (6,-10.1) node [left] {$d$} --(6,-20.2) ; 


\draw[-,blue] (-1.3,-21.066) -- (-1.3,-19.334)  -- (0.2,-18.468) --(1.7,-19.334) -- (1.7,-21.066) -- (0.2,-21.932)  -- (-1.3,-21.066);

\draw[-,blue] (1.7,-21.066) -- (1.7,-19.334) -- (3.2,-18.468)  --(4.7,-19.334)  -- (4.7,-21.066) -- (3.2,-21.932) -- (1.7,-21.066);

\draw[-,blue] (4.7,-21.066) -- (4.7,-19.334) -- (6.2,-18.468)  --(7.7,-19.334)  -- (7.7,-21.066) -- (6.2,-21.932) -- (4.7,-21.066);

\draw[-,blue] (7.7,-21.066) -- (7.7,-19.334) -- (9.2,-18.468)  --(10.7,-19.334)  -- (10.7,-21.066) -- (9.2,-21.932) -- (7.7,-21.066);


\draw[-,blue] (0.2,-18.468) -- (0.2,-16.736) -- (1.7,-15.87)  --(3.2,-16.736)  -- (3.2,-18.468) --  (1.7,-19.334) -- (0.2,-18.468);

\draw[-,blue] (3.2,-18.468) -- (3.2,-16.736) -- (4.7,-15.87)  --(6.2,-16.736)  -- (6.2,-18.468) --  (4.7,-19.334) -- (3.2,-18.468);

\draw[-,blue] (6.2,-18.468) -- (6.2,-16.736) -- (7.7,-15.87)  --(9.2,-16.736)  -- (9.2,-18.468) --  (7.7,-19.334) -- (6.2,-18.468);

\draw[-,blue] (9.2,-18.468) -- (9.2,-16.736) -- (10.7,-15.87)  --(12.2,-16.736)  -- (12.2,-18.468) --  (10.7,-19.334) -- (9.2,-18.468);

\draw[-,blue] (0.2,-21.932) -- (0.2,-23.664) -- (1.7,-24.53)  --(3.2,-23.664)  -- (3.2,-21.932) --  (1.7,-21.066) -- (0.2,-21.932);

\draw[-,blue] (3.2,-21.932) -- (3.2,-23.664) -- (4.7,-24.53)  --(6.2,-23.664)  -- (6.2,-21.932) --  (4.7,-21.066) -- (3.2,-21.932);

\draw[-,blue] (6.2,-21.932) -- (6.2,-23.664) -- (7.7,-24.53)  --(9.2,-23.664)  -- (9.2,-21.932) --  (7.7,-21.066) -- (6.2,-21.932);

\draw[-,blue] (9.2,-21.932) -- (9.2,-23.664) -- (10.7,-24.53)  --(12.2,-23.664)  -- (12.2,-21.932) --  (10.7,-21.066) -- (9.2,-21.932);
\end{tikzpicture}
    \caption{Interlayer distance.}
    \label{fig:2b}
  \end{subfigure}
\begin{subfigure}[b]{0.45\linewidth}
\medskip
\begin{tikzpicture}[x=0.7cm,y=0.14cm]

\draw[-]  (-1.732, 0) -- (-0.866, 1.5) -- (0.866, 1.5)--(1.732, 0)  -- (0.866, -1.5)-- (-0.866, -1.5)--(-1.732, 0);

\draw[-]  (-1.732+5.196, 0) -- (-0.866+5.196, 1.5) -- (0.866+5.196, 1.5)--(1.732+5.196, 0)  -- (0.866+5.196, -1.5)-- (-0.866+5.196, -1.5)--(-1.732+5.196, 0);

\draw[-]  (-1.732+2.598, 0+1.5) -- (-0.866+2.598, 1.5+1.5) -- (0.866+2.598, 1.5+1.5)--(1.732+2.598, 0+1.5)  -- (0.866+2.598, -1.5+1.5)-- (-0.866+2.598, -1.5+1.5)--(-1.732+2.598, 0+1.5);

\draw[-]  (-1.732+2.598, 0-1.5) -- (-0.866+2.598, 1.5-1.5) -- (0.866+2.598, 1.5-1.5)--(1.732+2.598, 0-1.5)  -- (0.866+2.598, -1.5-1.5)-- (-0.866+2.598, -1.5-1.5)--(-1.732+2.598, 0-1.5);


\draw[-]  (-1.732, 0-15) -- (-0.866, 1.5-15) -- (0.866, 1.5-15)--(1.732, 0-15)  -- (0.866, -1.5-15)-- (-0.866, -1.5-15)--(-1.732, 0-15);

\draw[-]  (-1.732+5.196, 0-15) -- (-0.866+5.196, 1.5-15) -- (0.866+5.196, 1.5-15)--(1.732+5.196, 0-15)  -- (0.866+5.196, -1.5-15)-- (-0.866+5.196, -1.5-15)--(-1.732+5.196, 0-15);

\draw[-]  (-1.732+2.598, 0+1.5-15) -- (-0.866+2.598, 1.5+1.5-15) -- (0.866+2.598, 1.5+1.5-15)--(1.732+2.598, 0+1.5-15)  -- (0.866+2.598, -1.5+1.5-15)-- (-0.866+2.598, -1.5+1.5-15)--(-1.732+2.598, 0+1.5-15);

\draw[-]  (-1.732+2.598, 0-1.5-15) -- (-0.866+2.598, 1.5-1.5-15) -- (0.866+2.598, 1.5-1.5-15)--(1.732+2.598, 0-1.5-15)  -- (0.866+2.598, -1.5-1.5-15)-- (-0.866+2.598, -1.5-1.5-15)--(-1.732+2.598, 0-1.5-15);

\draw[-,ultra thick,dashed]  (-1.732+5.196, 0) node[above,xshift=-0.1cm] {\footnotesize$B$} -- (-1.732+5.196, 0-15) node[below,xshift=-0.1cm] {\footnotesize$B$};

\draw[-,ultra thick,dashed]  (1.732, 0) node[above,xshift=0.1cm] {\footnotesize$A$} -- (1.732, 0-15) node[below,xshift=0.1cm] {\footnotesize$A$};

\end{tikzpicture}
    \caption{$AA$-stacking.}
    \label{fig:2c}
\end{subfigure}
  \begin{subfigure}[b]{0.45\linewidth}
  \begin{tikzpicture}[x=0.7cm,y=0.14cm]

\draw[-]  (-1.732, 0) -- (-0.866, 1.5) -- (0.866, 1.5)--(1.732, 0)  -- (0.866, -1.5)-- (-0.866, -1.5)--(-1.732, 0);

\draw[-]  (-1.732+5.196, 0) -- (-0.866+5.196, 1.5) -- (0.866+5.196, 1.5)--(1.732+5.196, 0)  -- (0.866+5.196, -1.5)-- (-0.866+5.196, -1.5)--(-1.732+5.196, 0);

\draw[-]  (-1.732+2.598, 0+1.5) -- (-0.866+2.598, 1.5+1.5) -- (0.866+2.598, 1.5+1.5)--(1.732+2.598, 0+1.5)  -- (0.866+2.598, -1.5+1.5)-- (-0.866+2.598, -1.5+1.5)--(-1.732+2.598, 0+1.5);

\draw[-]  (-1.732+2.598, 0-1.5) -- (-0.866+2.598, 1.5-1.5) -- (0.866+2.598, 1.5-1.5)--(1.732+2.598, 0-1.5)  -- (0.866+2.598, -1.5-1.5)-- (-0.866+2.598, -1.5-1.5)--(-1.732+2.598, 0-1.5);


\draw[-]  (-1.732+1.732, 0-15) -- (-0.866+1.732, 1.5-15) -- (0.866+1.732, 1.5-15)--(1.732+1.732, 0-15)  -- (0.866+1.732, -1.5-15)-- (-0.866+1.732, -1.5-15)--(-1.732+1.732, 0-15);

\draw[-]  (-1.732+5.196+1.732, 0-15) -- (-0.866+5.196+1.732, 1.5-15) -- (0.866+5.196+1.732, 1.5-15)--(1.732+5.196+1.732, 0-15)  -- (0.866+5.196+1.732, -1.5-15)-- (-0.866+5.196+1.732, -1.5-15)--(-1.732+5.196+1.732, 0-15);

\draw[-]  (-1.732+2.598+1.732, 0+1.5-15) -- (-0.866+2.598+1.732, 1.5+1.5-15) -- (0.866+2.598+1.732, 1.5+1.5-15)--(1.732+2.598+1.732, 0+1.5-15)  -- (0.866+2.598+1.732, -1.5+1.5-15)-- (-0.866+2.598+1.732, -1.5+1.5-15)--(-1.732+2.598+1.732, 0+1.5-15);

\draw[-]  (-1.732+2.598+1.732, 0-1.5-15) -- (-0.866+2.598+1.732, 1.5-1.5-15) -- (0.866+2.598+1.732, 1.5-1.5-15)--(1.732+2.598+1.732, 0-1.5-15)  -- (0.866+2.598+1.732, -1.5-1.5-15)-- (-0.866+2.598+1.732, -1.5-1.5-15)--(-1.732+2.598+1.732, 0-1.5-15);

\draw[-,ultra thick,dashed]  (-1.732+5.196, 0)node[above,xshift=-0.1cm] {\footnotesize$B$} -- (-1.732+5.196, 0-15)node[below,xshift=-0.1cm] {\footnotesize$A$};

\draw[-,ultra thick,dashed]  (-1.732+5.196, 0)node[above,xshift=-0.1cm] {\footnotesize$B$} -- (-1.732+5.196, 0-15)node[below,xshift=-0.1cm] {\footnotesize$A$};


\draw[-,ultra thick,dashed]  (1.732, 0)node[above,xshift=0.1cm] {\footnotesize$A$} -- (1.732, 0-15);

\filldraw[red,->,thick] (1.732, 0-15)-- (1.732+0.866, 0-15) node[above] {\footnotesize$b$}--(-1.732+5.196, 0-15);

\end{tikzpicture}
    \caption{$BA$-stacking.}
    \label{fig:2d}
\end{subfigure}
  \caption{Untwisted bilayer graphene.}
  \label{fig:2}
\end{figure}

\medskip

Twisted bilayer graphene (TBG) \cite{carr2017twistronics} is obtained by stacking two identical graphene sheets on top of one another and rotating them in opposite directions around the transverse direction by a relative, typically small, angle $\theta$. Seen from above, this gives rise to a moiré pattern whose diameter scales as $\theta^{-1}$ (Fig. \ref{fig:3}). TBG is a controllable quantum system, in the sense that its properties can be finely tuned by playing with the twist angle (with, currently, an experimental accuracy of $\sim 0.1^\circ$), the transverse external electric field generated by gating (which allows one to control the doping, that is the density of charge carriers), the in-plane external electric field, the external magnetic field, the temperature, the interlayer distance (which can be changed by applying a pressure field), etc. The theoretical and experimental study of TBG has been one of the hottest topics in condensed matter physics since notably, the experimental discovery of supposedly unconventional superconducting regions in the phase diagram of TBG at ``magic'' twist angle $\theta \simeq 1.08^\circ$~\cite{magicangle}. We refer to~\cite{bernevig2021twistedI,bernevig2021twistedII,bernevig2021twistedIII,bernevig2021twistedIV,bernevig2021twistedV,bernevig2021twistedVI,carr2020electronic} for recent reviews on TBG.

\begin{figure}[ht]
     \begin{center}
    \includegraphics{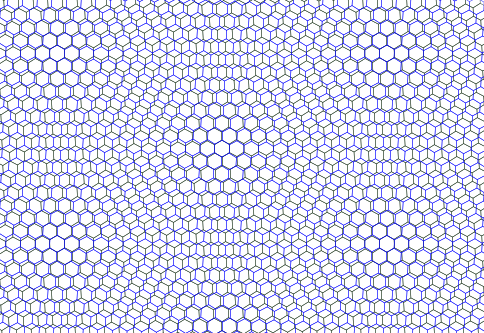}
    \caption{Moir\'e lattice vectors for twist angle $\theta\approx 2.2^{\circ}$.}
    \label{fig:3}
     \end{center}
\end{figure}
\medskip

While thousands of physics papers on TBG have been published in the past 10 years, the mathematical literature about TBG is sparse. 

A first series of works focuses on the chiral limit of the Bistritzer-MacDonald (BM) model~\cite{bistritzer2011moire2}. The BM model is an empirical non-interacting model describing low-energy excitations in TBG in the $K$ (or $K'$) valley, on which most theoretical works on TBG are based (see~\cite{faulstich2023interacting,bernevig2021twistedIII} for interacting versions of this model). The BM Hamiltonian $H_{K,\theta}^{\rm BM}$ is an unbounded self-adjoint operator on $L^2(\R^2;\C^4)$, which has the periodicity of the moiré cell. Its mathematical expression will be recalled below in Section~\ref{sec:MS-TBG}. This Hamiltonian depends on $\epsilon$ and on two empirical parameters $w_{\rm AA}$ and $w_{\rm AB}$ characterizing interlayer interactions at AA and AB (or BA) stackings~\cite{bistritzer2011moire2}. In this framework, magic angles are characterized by the existence of almost flat Bloch bands at the Fermi level $\mu=0$. The chiral limit of the BM Hamiltonian is obtained by setting $w_{\rm AA}=0$. In this limit, the BM Hamiltonian has chiral symmetry and can be studied using advanced mathematical analysis~{\cite{Mathmagicangles,becker2022fine,becker2023degenerate} sometimes combined with rigorously founded computer-aided proofs~{\cite{integrabilitychiral,firstmagicangle,TightBM}. The most recent result states that the chiral BM Hamiltonian has perfectly flat Bloch bands for an infinite discrete sequence of twist angles $\theta_1 > \theta_2 > \theta_3 > ...$ accumulating at $\theta_\infty=0$~\cite{integrabilitychiral}.

\medskip

Except for a countable set of twist angles (which has nothing to do with the countable set of magic angles mentioned above), TBG is {\em not} a periodic system at the atomic scale. It is an incommensurate system, which makes its mathematical description quite involved. At the tight-binding level of theory, incommensurate multilayer 2D materials can be efficiently described using a C*-algebra formalism~\cite{cances2017generalized}, in the spirit of the approach introduced by Belissard and co-workers to handle disordered cyrstals~\cite{Bellissard2002,bellissard1994non,prodan2017computational}. The C*-algebra formalism has been used to compute macroscopic physical quantities of interest such as conductivity matrices~\cite{etter2020modeling,watson2023mathematical,massatt2020efficient}, or to study atomic relaxation in multilayer 2D materials~\cite{massatt2023electronic,zhu2020modeling,cazeaux2023relaxation,cazeaux2020energy}. For TBG, it has been shown both theoretically and experimentally that when the twist angle $\theta$ becomes very small (typically lower than $0.5^\circ$), atomic relaxation becomes crucial and leads to the formation of moir\'e-scale energy favorable AB or BA Bernal stacking triangular regions separated by relatively thin walls. Formally extending the BM model to this setting, it has been shown that in the presence of gating, topologically protected transport on the triangular lattice of domain walls should occur~\cite{bal2023mathematical}.

\medskip

A natural mathematical question is to investigate the connections between atomic-scale models of TBG and effective moiré-scale models such as the BM model. One approach is to try and connect the two scales by analyzing the dynamics of low-energy traveling wave-packets. This idea was already used in e.g.~\cite{gerard1997homogenization} to derive the effective mass model for conducting electrons and holes in semiconductors, and in~\cite{fefferman2014wave} to derive the massless Dirac model for low-energy excitations in monolayer graphene. Using this approach, it was shown in \cite{TightBM} that the BM model can be obtained as the limit when the twist angle goes to zero of an atomic-scale tight-binding model provided the hoping terms satisfy some decay and scaling properties. Likewise, it was shown in~\cite{cances2022simple}, that an effective moiré-scale model similar, though not identical, to the BM model could be formally obtained in the small twist angle limit by a simple variational approximation of the time-dependent Schr\"odinger equation built from an approximate two-scale Kohn-Sham Hamiltonian 
\begin{equation}\label{eq:approx_KSH}
H_{d,\theta}:=-\frac 12 \Delta_{x,z} + V_d\left( (1-\epsilon^2)^{1/2} x, \epsilon x,z \right) \quad \mbox{on } L^2(\R^3;\C), \quad \mbox{with} \quad \epsilon:=\sin \frac\theta 2, 
\end{equation}
where $V_d:\R^2 \times \R^2 \times \R \to \R$ is a potential periodic with respect to each of the first two variables (with different periods) and vanishing at infinity in the third variable (see Section~\ref{sec:KS-TBG} for details).
Note that $H_{d,\theta}$ is incommensurate for all but a countable number of values of $\theta$.

\medskip

In this article, we follow a different approach and study the density-of-states (DoS) of the two-scale Hamiltonian $H_{d,\theta}$ using tools from semiclassical analysis. We first prove that the DoS of $H_{d,\theta}$ admits an asymptotic expansion in $\epsilon:=\sin(\theta/2)$. We then establish a similar result for a family of effective moiré-scale Hamiltonians $H_{d,K,\theta}^{\rm eff}$ containing the BM Hamiltonian as well as the leading term of the Hamiltonian derived in~\cite{cances2022simple}. We also show that $H_{d,\theta}$ admits an asymptotic expansion in $\eta:=\tan(\theta/2)$ and comment on the differences between the expansions in $\epsilon$ and $\eta$.

 Semiclassical analysis has been widely used in solid states physics in the past decades, notably to derive effective semiclassical dynamics of Bloch electrons in periodic crystals subjected to external electromagnetic fields~\cite{Buslaev_1987,panati2003effective,helffer1990diamagnetism,guillot1988semi,rammal1990algebraic,nenciu1991dynamics}, or to mathematically justify reduced quantum models~\cite{gerard1991mathematical,dimassi1993developpements,gerard1997homogenization}. In particular, two-scale single-particle Hamiltonians modeling perfect crystals in the presence of space-dependent electromagnetic potentials $(A,\phi)$ slowly varying at the atomic scale, have attracted a lot of attention. Such Hamiltonians are of the form
\begin{align}\label{eq:Hamiltonian-GMS}
    H_\epsilon^{A,\phi}=(-i\nabla_x +A(\epsilon x))^2+V(x)+\phi(\epsilon x) \quad \mbox{on } L^2(\R^3;\C),
\end{align}
where $V: \R^3 \to \R$ is a $\Gamma$-periodic potential, with $\Gamma$ a 3D lattice of $\R^3$, representing the mean-field potential in the perfect crystal, $A: \R^3 \to \R^3$ and $\phi: \R^3 \to \R$, and can be embedded in a more general family of pseudodifferential operators of the form
$$
P_\epsilon=P(\epsilon x,x,-i\nabla_x+A(\epsilon x)) \quad \mbox{on } \R^d.
$$
It was observed in \cite{Buslaev_1987} that if $\widetilde u_\epsilon : \R^3 \times \R^3 \to \C$ is a solution $\Gamma$-periodic with respect to the first variable of 
\begin{align*}
  \widetilde{P}_\epsilon \widetilde u_\epsilon= \lambda \widetilde u_\epsilon \quad \mbox{with} \quad \widetilde{P}_\epsilon = P(X,x,-i\nabla_x-\epsilon\nabla_X +A(X)) \quad \mbox{on } \R^{2d},
\end{align*}
then ${u}_\epsilon(x):=\widetilde u(x,\epsilon x)$ satisfies
\begin{align}\label{eq:u-tilde-eigen}
    P_\epsilon u_\epsilon=\lambda u_\epsilon.
\end{align}
The author then uses this idea to construct asymptotic solutions of \eqref{eq:u-tilde-eigen} by considering $ \widetilde{P}_\epsilon$ as a pseudo-differential operator in the mesoscopic variable $X$ with operator-valued symbol. A similar approach, combined with the introduction of a suitable Grushin problem, was used in \cite{gerard1991mathematical} to construct a reduced effective Hamiltonian having the same spectrum as $P_\epsilon$ in the neighborhood of some given $\lambda_0 \in \R$.
An asymptotic expansion in $\epsilon$ of  $\tr(f(H^{A,\phi}_\epsilon))$ for $f: \R \to \R$ supported in a spectral gap of $H^{0,0}_0$, was then given  
in \cite{dimassi1993developpements}. Semiclassical analysis with operator-valued symbols, combined with adiabatic methods, was also used in \cite{panati2003effective} to provide a mathematical justification of Peierls substitution method, and compute the first-order correction to the standard semiclassical dynamics of Bloch electrons, involving Berry curvature terms (see also \cite{nenciu1991dynamics} and references therein).

\medskip

To show that the DoS of the two-scale Hamiltonian $H_{d,\theta}$ defined in~\eqref{eq:approx_KSH} can be expanded as an asymptotic series in powers of $\epsilon:=\sin(\theta/2)$, we start from the observation that $H_{d,\theta}$ is unitary equivalent to a pseudodifferential operator with operator-valued symbol $h_{d,\epsilon}(k,X)$, see \eqref{eq:Hdeps=Opeps} below. The symbol $h_{d,\epsilon}(k,X)$ is quasi-periodic with respect to the momentum variable $k \in \R^2$, and depends on the variable $X \in \R^2$ representing physically the position in the moir\'e cell, or equivalently the local stacking parameter. However, this is not  a semiclassical symbol in the sense that it cannot be directly expanded as an asymptotic series in $\epsilon$ of symbols in a suitable class. To circumvent this technical issue, we introduce a twisted version of the standard Weyl calculus (see \eqref{eq:TWQ}) and show that $H_{d,\theta}$ is unitary equivalent to the twisted Weyl quantization of the $\epsilon$-independent symbol $h_{d,0}$:
$$
H_{d,\theta} = \cU^{-1} {\rm Op}_\epsilon^c(h_{d,0}) \cU,
$$
where $\cU$ is the 2D Bloch transform with respect to the in-plane coordinates and ${\rm Op}_\epsilon^c$ the twisted Weyl quantization operator.
We then use this relation to derive an asymptotic expansion in $\epsilon:=\sin(\theta/2)$ of the density of states of $H_{d,\theta}$. To derive the asymptotic expansion of the DoS of $H_{d,\theta}$ in powers of $\eta:=\tan(\theta/2)$, we first rescale the $x$-variable and then show that the so-obtained operator is unitary equivalent to a pseudodifferential operator with operator-valued symbol in a usual class of symbols.

\medskip

The article is organized as follows. In Section~\ref{sec:models}, we present the various mathematical models we will be dealing with, namely the monolayer graphene Kohn-Sham Hamiltonian $H_{\rm MG}$ (Section~\ref{sec:MG}), the TBG approximate Kohn-Sham Hamiltonian $H_{d,\theta}$ introduced in~\cite{cances2022simple}  (Section~\ref{sec:KS-TBG}), and the TBG moir\'e-scale Hamiltonians $H_{K,\theta}^{\rm BM}$ and $H_{d,K,\theta}^{\rm eff}$ (Section~\ref{sec:MS-TBG}). In Section~\ref{sec:mainresult}, we state our main results. In Section~\ref{sec:hd0}, we investigate the properties of the operator-valued symbol $h_{d,0}(k,X)$. In Section~\ref{sec:Weyl-moire}, we first recall the basics of the Weyl quantization of operator-valued symbols (Section~\ref{sec:Weyl}). We then derive trace formulae for operator-valued symbols useful for the study of the DoS of $H_{d,\theta}$, $H_{K,\theta}^{\rm BM}$ and $H_{d,K,\theta}^{\rm eff}$ (Section~\ref{sec:traceform}), and study the twisted Weyl quantization operator ${\rm Op}_\epsilon^c$ (Section~\ref{sec:TWC}). The proofs of our main results, namely Theorems~\ref{DoS-TBG} and \ref{DoS-BM}, are postponed until Sections~\ref{sec:red-TBG} and \ref{sec:red-BM}, respectively.

The relationships between the density of states of the atomic-scale Hamiltonian $H_{d,\theta}$ around the Fermi level and the density of states of the effective moir\'e-scale Hamiltonian $H_{d,K,\theta}^{\rm eff}$ will be investigated in a forthcoming paper.

\section{Mathematical models}
\label{sec:models}

\subsection{Monolayer graphene}\label{sec:MG}

We first set up some notation and review the basic properties of the monolayer graphene (MG) Kohn-Sham Hamiltonian that are relevant to this work. Before going further, we shall point out that the MG Hamiltonian we are considering here is an operator on $L^2(\mathbb{R}^3;\mathbb{C})$, where $\mathbb{R}^3$ is identified by the physical three-dimensional space, while the model studied in \cite{fefferman2012honeycomb,fefferman2014wave} is an effective atomic-scale Hamiltonian acting on $L^2(\mathbb{R}^2;\mathbb{C})$, where $\mathbb{R}^2$ is identified with the plane containing the monolayer. 

We denote the 3D position variable as $(x,z)^T$ where $x=(x_1,x_2)\in\mathbb{R}^2$ and $z\in\mathbb{R}$ are respectively the longitudinal (in-plane) and transverse (out-of-plane) position variables.

We consider in this section a graphene monolayer in the horizontal plane $z=0$ with Bravais lattice (see Fig. \ref{fig:1a})
\begin{align*}
\mathbb{L}=\mathbb{Z}{\rm a}_1+\mathbb{Z}{\rm a}_2\qquad \mathrm{with}\quad {\rm a}_1=a_0\begin{pmatrix}1/2\\ -\sqrt{3}/2\end{pmatrix}\quad\mathrm{and}\quad {\rm a}_2=a_0\begin{pmatrix}1/2\\ \sqrt{3}/2\end{pmatrix},
\end{align*}
where $a_0$ is the graphene lattice constant ($a_0=\sqrt{3}a$, where $a$ the carbon-carbon distance in the honeycomb lattice). 
We also define the reciprocal lattice of $\mathbb{L}$ by
\[
\mathbb{L}^*:=\{G\in\mathbb{R}^2 \;| \; \forall R\in\mathbb{L}, \;   e^{2\pi i G \cdot R}=1\},
\]
(see Fig. \ref{fig:1b}). More explicitly,
\begin{align*}
    \mathbb{L}^* ={\rm a}_1^*\mathbb{Z}+{\rm a}_2^*\mathbb{Z},\quad\textrm{with}\quad {\rm a}_1^*=\sqrt{3}k_D\begin{pmatrix}\sqrt{3}/2\\ -1/2\end{pmatrix}\quad\mathrm{and}\quad {\rm a}_2^*=\sqrt{3}k_D\begin{pmatrix}\sqrt{3}/2\\ 1/2\end{pmatrix},
\end{align*}
where $k_D:=\frac{4\pi}{3a_0}$. Then the corresponding Wigner-Seitz cell $\Omega$ and first Brillouin zone $\Omega^*$ can be identified with the two-dimensional tori
\[
\Omega \equiv \mathbb{R}^2/\mathbb{L},\quad \Omega^* \equiv \mathbb{R}^2/\mathbb{L}^*,
\]
respectively. The carbon atoms in the unit cell $\Omega$ are located at points 
$\frac 13 a_1 + \frac 23 a_2$ and $\frac 23 a_1 + \frac 13 a_2$. The origin of the Cartesian frame then lies at the center of a regular hexagon formed by six carbon atoms.

The MG Kohn-Sham Hamiltonian is the operator on $L^2(\mathbb{R}^3;\mathbb{C})$ with domain $H^2(\mathbb{R}^3;\mathbb{C})$ defined by
\begin{align}\label{eq:H-MG}
    h_{\mathrm{MG}}:=-\frac{1}{2}\Delta_x^2-\frac{1}{2}\partial_z^2+V_{\mathrm{MG}}(x,z).
\end{align}
We assume that the potential $V_{\mathrm{MG}}$ satisfies the following properties (compare with \cite[Definition 2.1]{fefferman2012honeycomb}). 
\begin{assumption}[MG potentials]\label{def:MGpotential}
The potential $V_{\rm MG}$ satisfies
\begin{enumerate}
\item\label{Ass:2.1.1} (Smoothness) $V_{\mathrm{MG}} \in C^\infty(\mathbb{R}^2\times \mathbb{R};{\mathbb R})$;
    \item\label{Ass:2.1.2} (Symmetry)  $V_{\mathrm{MG}}$ is invariant under the Dg80$={\rm D}_{\rm 6h} \ltimes  {\mathbb L}$ (p6/mmm) symmetry group of graphene, i.e. for all $(x,z) \in \mathbb{R}^2 \times \mathbb{R}$,
    \begin{align*}
    & V_{\mathrm{MG}}(x+R,z)=V_{\mathrm{MG}}(x,z) \quad \mbox{for all } R \in {\mathbb L},  \\
    & V_{\mathrm{MG}}(x_1,-x_2,z)=V_{\mathrm{MG}}(x_1,x_2,z)=V_{\mathrm{MG}}(-x_1,x_2,z) \\
    &V_{\mathrm{MG}}(x,z)= V_{\mathrm{MG}}(x,-z), \\
    & [{\rm R}_{\frac{\pi}{3}}V_{\mathrm{MG}}](x,z)=V_{\mathrm{MG}}(\mathcal{R}_{\frac{\pi}{3}}^{-1} x,z)=V_{\mathrm{MG}}(x,z),
    \end{align*}
    where $\mathcal{R}_\theta$ denotes the 2D rotation matrix of angle $\theta\in\mathbb{R}$:
    \begin{align}\label{eq:R}
        \mathcal{R}_\theta:=\begin{pmatrix}
\cos\theta&-\sin\theta\\
\sin\theta&\cos\theta
\end{pmatrix}
    \end{align}
and ${\rm R}_{\frac{\pi}{3}}$ its action on the functions of $L^1_{\rm loc}(\R^3;\C)$;
\item\label{Ass:2.1.3} (Decay property in the transverse direction) for all $\alpha\in \mathbb{N}^{3}$, there is $\delta_\alpha>0$ and $C_\alpha \in \R_+$ such that 
    \begin{align} \label{eq:decay_MG}
        \sup_{x\in \mathbb{R}^2} |\partial_{x,z}^\alpha V_{\mathrm{MG}}(x,z)|\leq C_{\alpha}(1+|z|^2)^{-\delta_\alpha}\qquad\textrm{with}\quad\partial_{x,z}:=(\partial_{x},\partial_z).
    \end{align}
\end{enumerate}
\end{assumption}

In this setting, the Bloch transform (also called Zak transform) is the unitary operator 
\begin{equation}\label{eq:Bloch_1}
\cU : L^2(\R^3;\mathbb{C}) \to \cH:=L^2_{\rm qp}(\Omega^*;L^2_{\rm per})
\end{equation}
such that
\begin{align*}
&\forall u \in C^\infty_{\rm c}(\R^3;\C), \quad (\cU u)_k(x,z) = \sum_{R \in \mathbb{L}} u(x+R,z) e^{-ik \cdot (x+R)},  \\
& L^2_{\rm per}:=\{f \in L^2_{\rm loc}(\mathbb{R}^2; L^2(\mathbb{R};\mathbb{C})) \; |  \; \forall R\in\mathbb{L}, 
\; f(x-R,z)= f(x,z) \mbox{ for a.a. } (x,z) \in \R^2 \times \R \}, \\
&L^2_{\rm qp}(\Omega^*;L^2_{\rm per}):= \{ u_\bullet \in L^2_{\rm loc}(\R^2;L^2_{\rm per}) \; | \; \forall G\in\mathbb{L}^*,  \; u_{k-G} = \tau_G u_k \mbox{ for a.a. } k \in \R^2 \},
\end{align*}
where $\tau_G$ is the unitary operator on $L^2_{\rm per}$ acting by multiplication by the $\mathbb{L}$-periodic function $\R^3 \ni (x,z) \mapsto e^{iG \cdot x} \in U(1)$. The space $L^2_{\rm per}$ is
endowed with the inner product
\begin{align*}
    \langle u,v \rangle_{L^2_{\rm per}} :=\int_{\Omega\times \mathbb{R}} u^*(x,z)v(x,z) \, dx \, dz,
\end{align*}
and the space $\cH$ with the inner product
$$
\langle u_\bullet ,v_\bullet \rangle_\cH :=\fint_{\Omega^*} \langle u_k, v_k \rangle_{L^2_{\rm per}}  \, dk.
$$
We also set, for all $s\in \N$,
$$
H^s_{\rm per}:=\{ u \in L^2_{\rm per} \; | \; \partial_x^\alpha\partial_z\beta u \in L^2_{\rm per}, \, \forall (\alpha,\beta) \in \N^2 \times \N \mbox{ s.t. } |\alpha|+\beta\le s \}
$$
(endowed with its natural inner product). Using Fourier series in the $x$ variable and Fourier transform in the $z$ variable, Sobolev spaces $H^s_{\rm per}$ can be defined for any $s \in \R$. Note that the functions in $L^2_{\rm per}$ and $H^s_{\rm per}$ are $\mathbb L$-periodic in the longitudinal (in-plane) variable $x$ but not in the transverse (out-of-plane) variable $z$. 

\medskip

Since $H_{\mathrm{MG}}$ is $\mathbb{L}$-periodic, it is decomposed by the Bloch transform $\cU$: 
$$
H_{\mathrm{MG}} = \cU^{-1} \left( \fint_{\Omega^*}^\oplus  h_{\mathrm{MG}}(k) \, dk \right) \cU,
$$
where $h_{\mathrm{MG}}(k)$ is the operator on $L^2_{\rm per}$ with domain $H^2_{\rm per}$ given by
$$
h_{\mathrm{MG}}(k):= \frac{1}{2}(-i\nabla_x+k)^2-\frac{1}{2}\partial_z^2+V_{\rm MG}(x,z).
$$

\medskip 

\begin{remark}
It easily follows from properties \ref{Ass:2.1.1} and \ref{Ass:2.1.3} in Assumption~\ref{def:MGpotential} that $h_{\mathrm{MG}}$, hence all the $h_{\mathrm{MG}}(k)$'s, are self-adjoint, and that for all $k\in \Omega^*$
\begin{align}
    \sigma_{\rm ess}(h_{\mathrm{MG}}(k)):= [0,+\infty). \label{eq:sigma_ess_hMG}
\end{align}
While all-electron Kohn-Sham potentials have Coulomb singularities at nuclear positions and thus fail to satisfy the smoothness property, the local component of the MG Kohn-Sham potential obtained with the Goedecker-Teter-Hutter (GTH) norm-conserving pseudopotential \cite{GTH1996} and standard exchange-correlation functional should satisfy all the properties in Assumption~\ref{def:MGpotential}.
\end{remark}

\subsection{Approximate Kohn-Sham model for twisted bilayer graphene}
\label{sec:KS-TBG}

We consider two parallel layers of graphene, separated by a distance $d>0$, and with a twist angle $\theta\in [0,\frac{\pi}{3})$ between the two, and neglect atomic relaxation. More precisely, the upper layer is placed in the plane $z=d/2$ and rotated by $-\theta/2$ around the $z$-axis, and the lower layer is placed in the plane $z=-d/2$ and rotated by $\theta/2$ around the $z$-axis.

\medskip

We label the so-obtained configuration using the parameters $d$ and $\epsilon:=\sin\left(\frac{\theta}{2}\right)$. For convenience, we also set $c(\epsilon):=\frac{1-\sqrt{1-\epsilon^2}}{\epsilon}=\frac{1-\cos\left(\frac{\theta}{2}\right)}{\sin\left(\frac{\theta}{2}\right)}$. Note that the function $c(\epsilon)$ is continuous on $[-1,1]$ and real-analytic in $(-1,1)$, and that the rotation matrix $\mathcal{R}_{\theta/2}$ can be decomposed as
\[
\mathcal{R}_{\theta/2}=(1-c(\epsilon) \epsilon)\mathbb{I}_2-\epsilon J,\quad \textrm{with}\quad J:=\begin{pmatrix}0&1\\-1&0\end{pmatrix}.
\]
We then introduce the unitary operator $U_{d,\epsilon}$ on $L^2(\mathbb{R}^3;\mathbb{C})$ defined by
\begin{align}\label{eq:U-dX-trans}
    (U_{d,\epsilon}f)(x,z)=f\left(\mathcal{R}_{\theta/2}x,z-\frac{d}{2}\right)=f\left((1-\epsilon c(\epsilon))x-\epsilon J x,z-\frac{d}{2}\right),
\end{align}
and the following approximation of the TBG Kohn-Sham potential:
\begin{align}\label{eq:V-TBG-0}
    V_{{\rm TBG},d,\epsilon}^{\rm approx}(x,z) :&=(U_{d,\epsilon} V_{\mathrm{MG}})(x,z)+(U^{-1}_{d,\epsilon} V_{\mathrm{MG}})(x,z)+V_{{\mathrm{int}},d}(z)\notag\\
    &=\sum_{s=\pm 1}V_{\mathrm{MG}}\left((1- \epsilon c(\epsilon))x -s \epsilon  Jx,z-s\frac{d}{2}\right)+V_{{\mathrm{int}},d}(z),
\end{align}
where $V_{{\mathrm{int}},d}$ is a smooth function of the transverse variable $z$ only, which can be computed numerically \cite{cances2022simple}. This approximation was introduced and justified by numerical simulations in \cite{TSKCLPC2016}. Interestingly, $V_{{\rm TBG},d,\epsilon}^{\rm approx}$ has a simple multiscale structure:
\begin{align}\label{eq:V-TBG}
V_{{\rm TBG},d,\epsilon}^{\rm approx}(x,z) := V_{d}\left(\sqrt{1-\epsilon^2} x,\epsilon x,z\right) = V_{d}(x-\epsilon c(\epsilon) x,\epsilon x,z),
\end{align}
where the function 
$$
V_{d}(x,X,z) := \sum_{\sigma=\pm 1}V_{\mathrm{MG}} \left(x -\sigma J X,z-\sigma \frac{d}{2}\right)+V_{{\mathrm{int}},d}(z)
$$
is $\mathbb L$-periodic w.r.t. the atomic-scale variable $x$ and $J\mathbb L$-periodic w.r.t. the mesoscale variable $X$. The approximate Kohn-Sham Hamiltonian is then defined as
\begin{align}
    H_{d,\theta}&= - \frac{1}{2}\Delta_x-\frac{1}{2}\partial_z^2 +V_{d}\left(\sqrt{1-\epsilon^2} x,\epsilon x,z\right) \nonumber \\
    &=- \frac{1}{2}\Delta_x-\frac{1}{2}\partial_z^2 +V_{d}(x- c(\epsilon) \epsilon x,\epsilon x ,z). \label{eq:H-epsil}
\end{align}

\medskip

To proceed further, we have to make assumptions on the mathematical properties of the potential $V_{{\mathrm{int}},d}$. 
\begin{assumption}[Interlayer correction potential]\label{def:inter}
The interlayer potential $V_{{\mathrm{int}},d} \in C^\infty(\mathbb{R};\R)$ satisfies
\begin{enumerate}
\item (Smoothness) $V_{{\mathrm{int}},d} \in C^\infty(\mathbb{R};{\mathbb R})$;
    \item (Decay property) Let $d > 0$. 
    For all $\alpha\in \mathbb{N}^{3}$, there exist $\delta_\alpha >0$ and $C_\alpha \in \R_+$ such that 
    \begin{align} \label{eq:decay_V_int}
     \forall z \in {\mathbb R}, \quad   |\partial_z^\alpha V_{{\mathrm{int}},d}(z)|\leq C_{\alpha}(1+|z|^2)^{-\delta_\alpha}.
    \end{align}
\end{enumerate}
\end{assumption}
These assumptions are compatible with the numerical experiments we have done with the DFTK software \cite{DFTK} using GTH pseudopotentials.

\subsection{Effective moir\'e-scale models for twisted bilayer graphene}
\label{sec:MS-TBG}

Most theoretical investigations on TBG rely on continuous models \cite{bistritzer2011moire1,bistritzer2011moire2,mele2010commensuration,dos2012continuum} such as the Bistritzer-MacDonald (BM) model \cite{bistritzer2011moire2}, a 2D effective model at the moir\'e scale describing low-energy excitations in the $K$-valley. Up to a unitary transform, the BM Hamiltonian also describes the behavior of low-energy excitations in the $K'$-valley. After suitable gauge transform, the BM Hamiltonian is the self-adjoint operator on $L^2(\mathbb{R}^2;\mathbb{C}^4)$ with domain $H^1(\mathbb{R}^2;\mathbb{C}^4)$ given by
\begin{align*}
    H_{K,\theta}^{\rm BM}=\begin{pmatrix}v_{\rm F} \pmb{\sigma}_{-\theta/2}\cdot (-i\nabla_x-K) & \mathbf{V}(\epsilon x)\\ \mathbf{V}(\epsilon x)^*& v_{\rm F} \pmb{\sigma}_{\theta/2}\cdot (-i\nabla_x-K)\end{pmatrix} \qquad \mbox{(with $\epsilon := \sin \frac \theta 2$)},
\end{align*}
where $v_{\rm F} > 0$ is the Fermi velocity in monolayer graphene, and $\pmb{\sigma}_{\pm\theta/2}:= e^{\mp i\frac{\theta}{4}\sigma_3}(\sigma_1,\sigma_2)e^{\pm i\frac{\theta}{4}\sigma_3}$, with
\begin{align*}
    \sigma_1:=\begin{pmatrix} 0&1\\1&0\end{pmatrix},\quad \sigma_2:=\begin{pmatrix} 0&-i\\i&0\end{pmatrix},\quad \sigma_3:=\begin{pmatrix} 1&0\\0&-1\end{pmatrix} \qquad \mbox{(Pauli matrices)}.
\end{align*}
The function $\mathbf{V}:\mathbb{R}^2\to \mathbb{C}^2$ is $J\mathbb{L}$-periodic and depends on two empirical parameters $w_{\rm AA}$ and $w_{\rm AB}$ describing interlayer coupling in AA and AB (or BA) stackings respectively.

In this paper, we consider a more general class of effective moir\'e-scale Hamiltonians of the form:
\begin{align} \label{eq:EMS_model}
    H_{d,K,\theta}^{\rm eff}:=T_\epsilon(-i\nabla_x-K) +\mathcal{V}_{d,K}(\epsilon x) \qquad \mbox{(with $\epsilon := \sin \frac \theta 2$)},
\end{align}
where $\mathcal{V}_{d,K}(X)\in  C^\infty(\R^2;\C^{4\times 4}_{\rm herm})$ is a $J\mathbb{L}$-periodic potential and
 \begin{align}\label{eq:T-V-1}
  T_\epsilon(\kappa):=\begin{pmatrix}
   v_{\rm F}\pmb{\sigma}_{-\theta/2}\cdot \kappa & 0\\
   0&  v_{\rm F}\pmb{\sigma}_{\theta/2}\cdot \kappa \end{pmatrix} \qquad \mbox{(with $\epsilon := \sin \frac \theta 2$)}.
\end{align}
We use the subscript $d$ to emphasize that $\mathcal{V}_{d,K}$ and $ H_{d,K,\theta}^{\rm eff}$ actually dependent on $d$. This class of Hamiltonian contains the BM Hamiltonian as well as the leading term of the effective Hamiltonian derived in \cite{cances2022simple}.

\medskip

The Hamiltonians $H^{\rm BM}_{K,\theta}$ and $H_{d,K,\theta}^{\rm eff}$ are both $J\mathbb L$-periodic Hamiltonians on $L^2(\mathbb{R}^2;\mathbb{C}^4)$ and are therefore decomposed by the 2D Bloch transform
\begin{equation}\label{eq:Bloch_2}
\widetilde\cU : L^2(\R^2;\mathbb{C}^4) \to \widetilde\cH:=L^2_{\rm qp}(\Omega^*;L^2_\#)
\end{equation}
with
\begin{align*}
&\forall u \in C^\infty_{\rm c}(\R^2;\C^4), \quad (\widetilde\cU u)_k(x) = \sum_{R \in \mathbb{L}} u(x+R) e^{-ik \cdot (x+R)},  \\
& L^2_\#:=\{f \in L^2_{\rm loc}(\mathbb{R}^2;\mathbb{C}^4)) \; |  \; \forall R\in\mathbb{L}, 
\; f(x-R)= f(x) \mbox{ for a.a. } x \in \R^2 \}, \\
&L^2_{\rm qp}(\Omega^*;L^2_\#):= \{ u_\bullet \in L^2_{\rm loc}(\R^2;L^2_\#) \; | \; \forall G\in\mathbb{L}^*,  \; u_{k-G} = \tau_G u_k \mbox{ for a.a. } k \in \R^2 \},
\end{align*}
where $\tau_G$ is the unitary operator on $L^2_\#$ acting by multiplication by the $\mathbb{L}$-periodic function $\R^2 \ni x \mapsto e^{iG \cdot x} \in U(1)$. The space $L^2_\#$ is
endowed with its usual inner product
\begin{align*}
    \langle u,v \rangle_{L^2_\#} :=\int_{\Omega} u^*(x)v(x) \, dx,
\end{align*}
and the space $\widetilde\cH$ with the inner product
$$
\langle u_\bullet ,v_\bullet \rangle_{\widetilde\cH} :=\fint_{\Omega^*} \langle u_k, v_k \rangle_{L^2_\#}  \, dk.
$$
We finally denote by $H^s_\#$ the corresponding periodic Sobolev spaces
(endowed with their usual inner products).

\section{Main results}\label{sec:mainresult}

We will analyze the atomic-scale Hamiltonian $H_{d,\theta}$ and the moir\'e-scale Hamiltonian $H_{d,K,\theta}^{\rm eff}$ through their density-of-states (DoS). 
Consider a family $(\chi_N)_{N \in \N^*} \in (C^\infty_{\rm c}(\R^2;\R))^{\N^*}$ of spatial cutoff functions such that
\begin{align}
    & 0\leq \chi_N(x)\leq 1 \quad \mbox{ for all } x \in \R^2, \label{eq:chi_1} \\
    &  \chi_N(x)=1 \textrm{ if }\;x\in NJ\Omega, \quad \textrm{and}\quad \chi_N(x)=0 \textrm{ if }\;x\notin (N+1)J\Omega, \label{eq:chi_2} \\
    &  \sup_{N \in \N} \|\partial^\alpha \chi_N \|_{L^\infty} < +\infty \quad \mbox{ for all  } \alpha\in\mathbb{N}^2. \label{eq:chi_3}
\end{align}
For any $f \in C^\infty_{\rm c}(\R;\R_+)$,  the traces per unit area of the operators $f(H_{d,\theta})$ and $f(H^{\rm eff}_{d,K,\theta})$ are defined as
\begin{align} \label{eq:DoS_H}
    \Trb[f(H_{d,\theta})]:&=\lim_{N\to +\infty}\frac{1}{|NJ\Omega|}\Tr_{L^2(\mathbb{R}^3)} \left( \chi_N f(H_{d,\theta})\chi_N \right)
\end{align}
and
\begin{align}  \label{eq:DoS_T}
    \Trb[f(H_{d,K,\theta}^{\rm eff})]:&=\lim_{N\to +\infty}\frac{1}{| NJ\Omega|}\Tr_{L^2(\mathbb{R}^2)}\left( \chi_N f(H_{d,K,\theta}^{\rm eff})\chi_N \right)
\end{align}
respectively.
\begin{remark} The limit in \eqref{eq:DoS_T} exists (in $\R_+ \bigcup \{+\infty\}$) because $f$ is nonnegative and $H_{d,K,\theta}^{\rm eff}$ is an $\epsilon^{-1} J\mathbb L$-periodic operator. The limit in \eqref{eq:DoS_H} is also well-defined because $f$ is nonnegative and $H_{d,\theta}$ is either periodic (for a countable set of twist angles) or ergodic.
\end{remark}
 The DoS of $H_{d,\theta}$ in the negative energy window and the DoS of $H_{d,K,\theta}^{\rm eff}$ are then the positive Borel measures $\nu_{H_{d,\theta}}$ and $\nu_{H_{d,K,\theta}^{\rm eff}}$ on $(-\infty,0)$ and $\R$ respectively such that 
\begin{align}
 \forall f \in C^\infty_{\rm c}(\R;\R_+) \mbox{ with support in } (0,+\infty), \quad  & \Trb[f(H_{d,\theta})] = \int_{(-\infty,0)} f(E) \, d\nu_{H_{d,\theta}}(E),  \label{eq:DoS_H_2} \\
\forall f \in C^\infty_{\rm c}(\R;\R_+), \quad  & \Trb[f(H_{d,K,\theta}^{\rm eff})] = \int_\R f(E) \, d\nu_{H_{d,K,\theta}^{\rm eff}}(E).  \label{eq:DoS_T_2}
\end{align}
\medskip

To formulate our main results, we need to recall some concepts and tools of semiclassical and spectral analysis. The first concept is the one of operator-valued symbols introduced in~\cite{balazard1985calcul,gerard1991mathematical} and used notably e.g.~\cite{panati2003effective,teufel2003adiabatic} to analyze the effective dynamics of Bloch electrons in the adiabatic approximation. The key formulae in our approach are
\begin{equation}\label{eq:Hdeps=Opeps}
H_{d,\theta} = \cU^{-1} {\rm Op}_\epsilon(h_{d,\epsilon}) \cU,\qquad H_{d,K,\theta}^{\rm eff} =\widetilde\cU^{-1} {\rm Op}_\epsilon(h^{\rm eff}_{d,K,\epsilon}) \qquad \mbox{(with $\epsilon:=\sin\frac\theta 2$)},
\end{equation}
(see \cite[Eqns. (12)-(13)]{panati2003effective}) where 
\begin{itemize}
\item $\cU: L^2(\R^3) \to \cH$ and $\widetilde\cU: L^2(\R^2) \to \widetilde\cH$  are the Bloch transforms defined in \eqref{eq:Bloch_1} and \eqref{eq:Bloch_2};
\item $h_{d,\epsilon}$ is the operator-valued symbol on $\R^2 \times \R^2$ such that for all $(k,X) \in \R^2 \times \R^2$,  $h_{d,\epsilon}(k,X)$ is the self-adjoint operator on $L^2_{\rm per}$ with domain $H^2_{\rm per}$ defined by: for all $u \in H^2_{\rm per}$ 
\begin{equation}\label{eq:def_hdeps}
[h_{d,\epsilon}(k,X) u](x,z) :=  \left[\frac 12 \left( -i \nabla_x+k \right)^2 u- \frac 12 \partial^2_z u \right](x,z)+ V_d(x-c(\epsilon)X,X,z) u(x,z);
\end{equation}
\item $h^{\rm eff}_{d,K,\epsilon}$ is the operator-valued symbol on $\R^2 \times \R^2$ such that for all $(k,X) \in \R^2 \times \R^2$,  $h^{\rm eff}_{d,K,\epsilon}(k,X)$ is the self-adjoint operator on $L^2_\#$ with domain $H^1_\#$ defined by: for all $u \in H^1_\#$ 
\begin{equation}\label{eq:def_tdeps}
[h^{\rm eff}_{d,K,\epsilon}(k,X) u](x) :=T_\epsilon(-i\nabla_x+k-K)u(x) +\mathcal{V}_{d,K}(X)u(x);
\end{equation}
\item  ${\rm Op}_\epsilon(a)$ is the self-adjoint operator on $\cH$ or $\widetilde \cH$ obtained from the operator-valued symbol $a(k,X)$ by the following Weyl quantization rule:
\begin{equation}\label{eq:Weyl_quantization}
[{\rm Op}_\epsilon(a) \phi]_k (r) = \frac{1}{(2\pi\epsilon)^2} \int_{\R^2 \times \R^2} \left[a\left( \frac{k+k'}2,X \right) \phi_{k'} \right](r) \; e^{-i \frac{(k-k')\cdot X}\epsilon} \, dk' \, dX.
\end{equation}
\end{itemize}

\begin{remark}\label{rem:weyl}
    The sign in the phase factor in the Weyl quantization formula~\eqref{eq:Weyl_quantization} is not the usual one. Formally, the operator defined by \eqref{eq:Weyl_quantization} is 
   $a(k,i \epsilon \nabla_k)$, while the usual definition leads to $a(k,- i \epsilon \nabla_k)$. This is due to the fact that in this setting,  the role usually assigned to the momentum variable is played by the position variable $X$. Note that in the special case when $a(k,-X)=a(k,X)$ for all $(k,X)$, these two definitions agree. 
   \end{remark}

\medskip

\begin{definition}[Operator-valued symbols]\label{def:symbols} Let $\cE$ be a Banach space of bounded operators, and let $\omega:\;\mathbb{R}^2\times \mathbb{R}^2\to [0,+\infty)$ be an order function in the sense that there exist $C_0 \in \R_+$ and $N_0 \in \R_+$ such that
\begin{align*}
 \forall (k,X), (k',X') \in  \mathbb{R}^{2}\times \mathbb{R}^2, \quad  \omega(k,X)\leq C_0(1+|k-k'|^2+|X-X'|^2)^{N_0/2}\omega(k',X').
\end{align*}
\begin{enumerate}
\item A symbol  $a \in C^\infty(\R^2 \times \R^2;\cE)$ is in the class $S^{\omega}(\cE)$ if for any $\alpha,\beta\in\mathbb{N}^2$, 
\begin{align}\label{norm:symbols}
    \cN^{\omega,\cE}_{\alpha,\beta}(a):= \sup_{k,X\in\mathbb{R}^2}\omega(k,X)^{-1}\left\|\partial_k^\alpha\partial_X^\beta a(k,X)\right\|_{\mathcal{E}} < \infty.
\end{align}
Let $\epsilon_0 > 0$. The function $a_\bullet : (0,\epsilon_0] \ni \epsilon \mapsto a_\epsilon \in S^{\omega}(\cE)$ is a symbol in the class $S^{\omega,\epsilon_0}(\cE)$ if for any $\alpha,\beta\in \N^2$, 
\begin{align*}
   \cN^{\omega,\cE,\epsilon_0}_{\alpha,\beta}(a_\bullet):= \sup_{\epsilon\in (0,\epsilon_0]}\cN^{\omega,\cE}_{\alpha,\beta}(a_\epsilon) < \infty.
\end{align*}
Endowed with the family of seminorms $\cN^{\omega,\cE}_{\alpha,\beta}$ (resp. $\cN^{\omega,\cE,\epsilon_0}_{\alpha,\beta}$), the vector space $S^{\omega}(\cE)$ (resp.  $S^{\omega,\epsilon_0}(\cE)$) is a Fr\'echet space.
\item  A function $a_\bullet \in S^{\omega,\epsilon_0}(\cE)$ is called an $\cE$-valued semiclassical symbol with order function $\omega$ if there exists a sequence of $\cE$-valued symbols $(a_j)_{j \in \N} \in (S^{\omega}(\cE))^\N$ such that for each $n \in \N$, 
\begin{equation}\label{eq:expansion_symbol}
a_\epsilon = \sum_{j=0}^n \epsilon^j a_j + O_{S^{\omega,\epsilon_0}(\cE)}(\epsilon^{n+1}).
\end{equation}
The class of semiclassical symbols of order $\omega$ is denoted as $S^{\omega,\epsilon_0}(\epsilon,\cE)$.
\end{enumerate}
\end{definition}

\begin{definition}[$\tau$-equivariant operator-valued symbols]\label{def:tau-equi} Let $\omega:\;\mathbb{R}^2\times \mathbb{R}^2\to [0,+\infty)$ be an order function. Assume $\cE = \cB(H^{s_1}_{\rm per},H^{s_2}_{\rm per})$ for some $s_1,s_2 \in \R$, or $\cE = {\mathfrak S}_1(L^2_{\rm per})$ (resp. $\cE = \cB(H^{s_1}_\#,H^{s_2}_\#)$  for some $s_1,s_2 \in \R$, or $\cE = {\mathfrak S}_1(L^2_\#)$). A symbol $a \in S^{\omega}(\cE)$ is called $\tau$-equivariant in the $k$-variable if 
\begin{equation}\label{eq:tau-equiv}
    \forall (k,X) \in \R^2 \times \R^2, \quad \forall G \in \mathbb{L}^*, \quad a(k-G,X) = \tau_G a(k,X) \tau_G^{-1},
    \end{equation}
    where $\tau_G$ is the multiplication operator on $H^s_{\rm per}$ (resp. $H^s_\#$) by the $\mathbb L$-periodic function $(x,z) \mapsto e^{iG \cdot x}$ (resp. $x\mapsto e^{iG \cdot x}$). We denote by 
    \begin{enumerate}
        \item $S^{\omega}_\tau(\cE)$ (resp. $S^{\omega,\epsilon_0}_\tau(\cE)$) the class of symbols in $S^{\omega}(\cE)$ (resp. in $S^{\omega,\epsilon_0}(\cE)$) which are $\tau$-equivariant with respect to the variable $k$;
        \item $S^{\omega}_{\tau,\rm per}(\cE)$ (resp. $S^{\omega,\epsilon_0}_{\tau,\rm per}(\cE)$) the class of symbols in $S^{\omega}(\cE)$ (resp. in $S^{\omega,\epsilon_0}(\cE)$) which are $\tau$-equivariant with respect to the variable $k$ and $J\mathbb L$-periodic with respect to the variable~$X$.
    \end{enumerate}

\end{definition}
Note that $\tau_G$ defines a bounded operator on $H^s_{\rm per}$ (resp. $H^s_\#$) for any $s \in \R$, which is in addition unitary for $s=0$.

\medskip

We will see in Section~\ref{sec:red-BM} that $h^{\rm eff}_{d,K,\bullet}$ is a $\tau$-equivariant semiclassical symbol in the class $S^{\omega,1}_{\tau,\rm per}(\cE)$  with $\omega(k,X):=(1+|\bk|^2)^{1/2}$ and $\cE:=\cB(H^1_\#;L^2_\#)$. This will allow us to study the asymptotic expansion in $\epsilon$ of $\Trb[f(H_{d,K,\theta}^{\rm eff})]$ using techniques introduced in \cite{dimassi1993developpements,panati2003effective}. The situation is a bit more involved for the atomic-scale model. For any $0 < \epsilon \le 1$, the symbol $h_{d,\epsilon}$ belongs to the class $S^{\omega}(\mathcal{E})$ with $\omega(k,X):=1+|k|^2$ and $\mathcal{E}:=\cB(H^2_{\rm per};L^2_{\rm per})$, and can thus be Weyl-quantized \cite{teufel2003adiabatic}. We have in fact $h_{d,\bullet} \in S^{\omega,1}_\tau(\cE)$. On the other hand, $h_{d,\bullet}$ is not a semiclassical symbol, because the coefficients of its asymptotic expansion are not all in~$S^{\omega}(\cE)$:
\begin{align}\label{eq:hdepsi-non-symbol}
    h_{d,\epsilon}(k,X) = \underbrace{h_{d,0}(k,X)}_{\in S^{\omega}(\cE)} - \frac{\epsilon}2 \underbrace{X \cdot \nabla_xV(x,X)}_{\notin S^{\omega}(\cE)} + \underbrace{\frac{\epsilon^2}{8} [D^2_{xx}V(x,X)](X,X)}_{\notin S^{\omega}(\cE)} + ...
\end{align}
In order to expand $\Trb[f(H_{d,\theta})]$ in powers of $\epsilon$, we circumvent this technical issue by introducing a twisted version of the standard Weyl calculus. The idea is to rewrite~\eqref{eq:Hdeps=Opeps} as
\begin{equation}\label{eq:Hdeps=Opceps}
H_{d,\theta} = \cU^{-1} {\rm Op}^c_\epsilon(h_{d,0}) \cU \qquad \mbox{(with $\epsilon:=\sin\frac\theta 2$)},
\end{equation}
where 
\begin{equation}\label{eq:TWQ}
 {\rm Op}^c_\epsilon(a_\epsilon):={\rm Op}_\epsilon\left( a_\epsilon^c \right) \quad \mbox{with} \quad
a^c_\epsilon(k,X) := \cT_{c(\epsilon) X} a_\epsilon(k,X)  \cT_{c(\epsilon) X}^{-1},
\end{equation}
where for each $x_0 \in \R^2$, $\cT_{x_0}$ is the translation operator defined on each $H^s_{\rm per}$ space by
$$
(\cT_{x_0} u)(x,z) = u(x-x_0,z).
$$
Let us now introduce classes of symbols containing $h_{d,0}$ as well as the symbols of $\cU f(H_{d,\theta}) \cU^{-1}$, and suitable for the twisted Weyl quantization rule ${\rm Op}_\epsilon^c$ defined in \eqref{eq:TWQ}. We set
$$
{\rm ad}_{\partial_x}:=[\partial_x,\bullet].
$$

\begin{definition}[Operator-valued symbols for twisted Weyl calculus]\label{def:tau-equi-g} 
Let $\cE = \cB(H^{s_1}_{\rm per},H^{s_2}_{\rm per})$ for some $s_1,s_2 \in \R$, or $\cE = {\mathfrak S}_1(L^2_{\rm per})$, and let $\omega:\;\mathbb{R}^2\times \mathbb{R}^2\to [0,+\infty)$ be an order function.
\begin{enumerate}
\item A symbol $a \in C^\infty(\R^2 \times \R^2;\cE)$ is in the class $S^{\omega}_{\rm twi}(\cE)$ if the following three conditions are satisfied:
\begin{description}
    \item[(i)] $a$ is $\tau$-equivariant in the variable $k$, i.e.
    $$
    \forall (k,X) \in \R^2 \times \R^2, \quad \forall G \in \mathbb{L}^*, \quad a(k-G,X) = \tau_G a(k,X) \tau_G^{-1};
    $$
    \item[(ii)] $a$ is $J\mathbb{L}$-periodic in the variable $X$, i.e.
    $$
    \forall (k,X) \in \R^2 \times \R^2, \quad \forall R \in J\mathbb{L}, \quad a(k,X-R) =  a(k,X);
    $$
    \item[(iii)] for all $\alpha,\beta,\gamma\in\mathbb{N}^2$, 
\begin{align}\label{norm:symbols_twi}
    \cN^{\omega,\cE}_{\alpha,\beta,\gamma}(a):= \sup_{k,X\in\mathbb{R}^2}\omega(k,X)^{-1}\left\|\partial_k^\alpha\partial_X^\beta\px^\gamma a(k,X)\right\|_{\mathcal{E}} < \infty.
\end{align}
Let $\epsilon_0 > 0$. The function $a_\bullet : (0,\epsilon_0] \ni \epsilon \mapsto a_\epsilon \in S^{\omega}_{\rm twi}(\cE)$ is a symbol in the class $S^{\omega,\epsilon_0}_{\rm twi}(\cE)$ if for any $\alpha,\beta,\gamma\in \N^2$, 
\begin{align*}
   \cN^{\omega,\cE,\epsilon_0}_{\alpha,\beta,\gamma}(a_\bullet):= \sup_{\epsilon\in (0,\epsilon_0]}\cN^{\omega,\cE}_{\alpha,\beta,\gamma}(a_\epsilon) < \infty.
\end{align*}
Endowed with the family of seminorms $\cN^{\omega,\cE}_{\alpha,\beta,\gamma}$ (resp. $\cN^{\omega,\cE,\epsilon_0}_{\alpha,\beta,\gamma}$), the vector space $S^{\omega}_{\rm twi}(\cE)$ (resp.  $S^{\omega,\epsilon_0}(\cE)$) is a Fr\'echet space.
\end{description}

\item  A function $a_\bullet \in S^{\omega,\epsilon_0}_{\rm twi}(\cE)$ is called a semiclassical symbol if there exists a sequence  $(a_j)_{j \in \N} \in (S^{\omega}_{\rm twi}(\cE))^\N$ such that for each $n \in \N$, 
\begin{equation}\label{eq:expansion_symbol-g}
a_\epsilon = \sum_{j=0}^n \epsilon^j a_j + O_{S^{\omega,\epsilon_0}_{\rm twi}(\cE)}(\epsilon^{n+1}).
\end{equation}
The class of semiclassical symbols of order $\omega$ is denoted as $S_{\rm twi}^{\omega,\epsilon_0}(\epsilon,\cE)$.
\end{enumerate}
\end{definition}

Note that if $a_\bullet \in S^{\omega,\epsilon_0}_{\rm twi}(\cE)$, then $a^c_\bullet \in S^{\omega,\epsilon_0}(\cE)$. Indeed, since 
\begin{align}\label{eq:partial}
    &\partial_k (a^c_\epsilon)=(\partial_k a_\bullet)^c_\epsilon,\qquad \partial_X (a^c_\epsilon)=\left(\partial_X a_\bullet \right)^c_\epsilon-c(\epsilon)\left(\px a_\bullet\right)^c_\epsilon, 
\end{align}
and $\cT_{x_0}$ is unitary on $H^s_{\rm per}$ for any $s \in \R$, there exists, for all $\beta \in \N^2$, a constant $C_\beta \in \R_+$ such that for all $\alpha \in \N^2$ and $a_\bullet \in S^{\omega,\epsilon_0}_{\rm twi}(\cE)$
\begin{align}\label{eq:N<N-twi}
     \cN^{\omega,\cE,\epsilon_0}_{\alpha,\beta}(a^c_\bullet)\leq C_{\beta}\sum_{\beta'+\gamma=\beta} \cN^{\omega,\cE,\epsilon_0}_{\alpha,\beta',\gamma}(a_\bullet).
\end{align}
 The symbol $a_\epsilon$ can therefore be quantized for all $0 < \epsilon \le \epsilon_0$ using the twisted Weyl quantization rule ${\rm Op}_{\epsilon}^c$ defined in \eqref{eq:TWQ}.

Let us point out that 
\begin{align}\label{eq:derivative-tau}
    a_\bullet \in S^{\omega,\epsilon_0}_{\rm twi}(\cE) \Rightarrow \partial_k^\alpha\partial_X^\beta{\rm ad}_{\partial_x}^\gamma a_\bullet \in S^{\omega,\epsilon_0}_{\rm twi}(\cE),\qquad \forall\;\alpha,\beta,\gamma\in \N^2.
\end{align}
This is due to that for any $\tau$-equivariant symbol $a_\epsilon$, the symbols $\partial_k a_\epsilon$, $\partial_X a_\epsilon$ and ${\rm ad}_{\partial_x}a_\epsilon$ are also $\tau$-equivariant in the variable $k$ and $J\mathbb{L}$-periodic in the variable $X$.

\medskip

Thanks to \eqref{eq:Hdeps=Opceps}, the study of  $\Trb(f(H_{d,\theta}))$ boils down to 
\begin{description}
\item[(i)] the study of the operator-valued symbol $h_{d,0}(k,X)$, which can be seen as an element of $S^{\omega}_{\rm twi}(\cE)$ with $\omega(k,X):=1+|k|^2$ and $\mathcal{E}:=\cB(H^2_{\rm per};L^2_{\rm per})$; 
\item[(ii)]  the study of the twisted Weyl quantization operator ${\rm Op}^c_\epsilon$, acting on symbols in $S^{\omega,\epsilon_0}_{\rm twi}(\cE)$ (see Section~\ref{sec:TWC}).
\end{description}

\medskip

Another essential tool for working out a semiclassical expansions of $\Trb[f(H_{d,\theta})]$ and $\Trb[f(H_{d,K,\theta}^{\rm eff})]$ is functional calculus for pseudodifferential operators based on Helffer-Sj\"ostrand's formula (see e.g. \cite{dimassi1993developpements,Helffer1989EquationDS}):  for any self-adjoint operator $A$ on a Hilbert space~$\cX$ and any function $f \in C^\infty_{\rm c}(\mathbb{R;\C})$, it holds
\begin{equation}\label{eq:HF_formula}
f(A)=-\frac{1}{\pi}\int_{\mathbb{C}}\overline{\partial}\widetilde{f}(\zeta)(\zeta-A)^{-1}dL(\zeta),
\end{equation}
where $dL(\zeta)$ denotes the Lebesgue measure on $\mathbb{C}$ and where $\widetilde{f}$ is an almost analytic extension of $f$, that is a function  $\widetilde{f}\in C^\infty_{\rm c}(\mathbb{C};\C)$,  satisfying 
\begin{enumerate}
    \item $\Supp(\widetilde{f})$ is a complex neighborhood of $\Supp(f)$;
    \item $\widetilde{f}(\zeta)=f(\zeta)$ for any $\zeta\in \mathbb{R}$;
    \item $|\overline{\partial}\widetilde{f}(\zeta)|=\mathcal{O}(|\Im \zeta|^\infty)$, i.e., for any $n\in \mathbb{N}$, $|\overline{\partial}\widetilde{f}(\zeta)|=\mathcal{O}(|\Im \zeta|^n)$ when $\Im \zeta \to 0$.
\end{enumerate}

\begin{remark}
There are several ways to define such almost analytic extensions. One way uses the Fourier transform $\mathcal{F}(f)$ and two cutoff functions $\chi\in C^\infty_{\rm c}(-1,1)$, $\zeta\in C^\infty_{\rm c}(U)$ ( $U$ is a suitable real neighborhood of $\Supp(f)$) with $\chi=1$ near zero and $\varphi=1$ in $\Supp(f)$:
\begin{align}
    \widetilde{f}(x+iy):=\frac{1}{(2\pi)^{1/2}}\varphi(x)\chi(y)\int_\R e^{i\xi(x+iy)}\chi(y\xi) \, \mathcal{F}(f)(\xi) \, d\xi.
\end{align}
\end{remark}

Lastly, we recall the definition of the Poisson bracket $\{a,b\}$ of two operator-valued symbols $a,b$: 
\begin{align}\label{eq:poisson}
        \{a,b\}:=\nabla_X a \cdot\nabla_k b-\nabla_k a \cdot \nabla_X b.
\end{align}
We also define
\begin{align*}
    \{a,b\}_{2}:&= D^2_{kk}a : D^2_{XX}b + D^2_{XX}a : D^2_{kk}b-2 D^2_{kX} : a D^2_{kX}b \\
    &=\sum_{1\leq i,j\leq 2}\left(\partial_{k_i}\partial_{k_j}a\partial_{X_i}\partial_{X_j}b +\partial_{X_i}\partial_{X_j}a\partial_{k_i}\partial_{k_j}b-2\partial_{k_i}\partial_{X_j}a\partial_{X_i}\partial_{k_j}b\right)
\end{align*}
In the case when $a \in C^\infty(\R^2 \times \R^2;{\mathcal B}(L^2_{\rm per}))$ and $b \in C^\infty(\R^2 \times \R^2;{\mathcal B}(H^2_{\rm per};L^2_{\rm per}))$, we have $\{a,b\},\; \{a,b\}_2 \in C^\infty(\R^2 \times \R^2;{\mathcal B}(H^2_{\rm per};L^2_{\rm per}))$. This property will be used below. 

\medskip

We are now in position to state our main results. The first one provides asymptotic expansions in $\epsilon:=\sin(\theta/2)$ and $\eta:=\tan(\theta/2)$ of  $\Trb(f(H_{d,\theta}))$ for any $f \in C^\infty_{\rm c}(\R;\R)$ with support in $(-\infty,0)$, thus providing information (through~\eqref{eq:DoS_H_2}) on the DoS of $H_{d,\theta}$ on the whole negative energy window.

\begin{theorem} \label{DoS-TBG}
Under Assumptions~\ref{def:MGpotential} and \ref{def:inter}, for all $f \in C^\infty_{\rm c}(\R;\R_+)$ with compact support included in $(-\infty,0)$, there exist operator-valued symbols $f_{d,j} \in S^{1}_{\rm twi}(\mathfrak{S}_1(L^2_{\rm per}))$ and $f_{d,j}' \in S^{1}_{\tau,\rm per}(\mathfrak{S}_1(L^2_{\rm per}))$, $j\in \N$, depending linearly on $f$ and such that the following asymptotic expansions hold true for any $n \in \N$:
\begin{align}
  \Trb[f(H_{d,\theta})]&=\sum_{j=0}^n \frac{\epsilon^{2j}}{(2\pi)^2} \fint_{J\Omega}\int_{\Omega^*}\Tr_{L^2_{\rm per}}[f_{d,2j}(k,X)] \, dk \, dX+\mathcal{O}(\epsilon^{2n+2}), \label{eq:main2} \\
  \Trb[f(H_{d,\theta})]&=\sum_{j=0}^n \frac{\eta^{2j}}{(2\pi)^2} \fint_{J\Omega}\int_{\Omega^*}\Tr_{L^2_{\rm per}}[f_{d,2j}'(k,X)] \, dk \, dX+\mathcal{O}(\eta^{2n+2}), \label{eq:main2'}
\end{align}
with $\epsilon:=\sin(\theta/2)$ and $\eta:=\tan(\theta/2)$.
In particular, we can take
\begin{align*}
    f_{d,0}(k,X)&=f_{d,0}'(k,X)=f(h_{d,0}(k,X)),
\end{align*}
and
\begin{align}
 \MoveEqLeft    f_{d,2}(k,X)=-\frac{1}{\pi}\int_\C \overline{\partial}\widetilde{f}(\zeta)  \Big[-\frac{1}{4}(\zeta-h_{d,0})^{-1}\{(\zeta-h_{d,0}),(\zeta-h_{d,0})^{-1}\}^2\notag\\
     &+\frac{1}{4}\{(\zeta-h_{d,0})^{-1},\{(\zeta-h_{d,0}),(\zeta-h_{d,0})^{-1}\}\}\notag\\
     &+\frac{1}{8}(\zeta-h_{d,0})^{-1}\{(\zeta-h_{d,0}),(\zeta-h_{d,0})^{-1}\}_2\notag\\
        &+\frac{i}{4}(\zeta-h_{d,0})^{-1}\Big(\partial_k h_{d,0}{\rm ad}_{\partial_x} (\zeta-h_{d,0})^{-1}-{\rm ad}_{\partial_x}h_{d,0} \partial_k (\zeta-h_{d,0})^{-1}\Big)\Big](k,X) \, dL(\zeta), \label{eq:f_d2}
\end{align}
\begin{align}
 \MoveEqLeft    f_{d,2}'(k,X)=-\frac{1}{\pi}\int_\C \overline{\partial}\widetilde{f}(\zeta)  \Big[-\frac{1}{4}(\zeta-h_{d,0})^{-1}\{(\zeta-h_{d,0}),(\zeta-h_{d,0})^{-1}\}^2\notag\\
     &+\frac{1}{4}\{(\zeta-h_{d,0})^{-1},\{(\zeta-h_{d,0}),(\zeta-h_{d,0})^{-1}\}\}\notag\\
     &+\frac{1}{8}(\zeta-h_{d,0})^{-1}\{(\zeta-h_{d,0}),(\zeta-h_{d,0})^{-1}\}_2\notag\\
     &-\frac{1}{2}(\zeta-h_{d,0})^{-1}(-i\nabla_x+k)^2(\zeta-h_{d,0})^{-1}-(\zeta-h_{d,0})^{-1} \Big](k,X) \, dL(\zeta), \label{eq:f_d2'}
\end{align}
 where $\widetilde{f}$ is any almost analytic extension of $f$.
\end{theorem}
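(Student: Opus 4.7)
The plan is to exploit the unitary equivalence~\eqref{eq:Hdeps=Opceps}, which reads $H_{d,\theta} = \cU^{-1}{\rm Op}_\epsilon^c(h_{d,0})\cU$ and turns the analysis of $f(H_{d,\theta})$ into a \emph{semiclassical} problem for the twisted Weyl quantization of the $\epsilon$-independent operator-valued symbol $h_{d,0} \in S^{\omega}_{\rm twi}(\cB(H^2_{\rm per},L^2_{\rm per}))$ with order function $\omega(k,X) = 1+|k|^2$. For $f \in C^\infty_{\rm c}(\R;\R_+)$ supported in $(-\infty,0)$ and $\widetilde f$ an almost analytic extension of $f$, I first use Helffer--Sj\"ostrand's formula~\eqref{eq:HF_formula} to write
\[
f(H_{d,\theta}) = -\frac{1}{\pi}\,\cU^{-1}\!\int_\C \overline{\partial}\widetilde f(\zeta)\bigl(\zeta - {\rm Op}_\epsilon^c(h_{d,0})\bigr)^{-1}dL(\zeta)\,\cU,
\]
so the task reduces to expanding $\bigl(\zeta - {\rm Op}_\epsilon^c(h_{d,0})\bigr)^{-1}$ as a twisted Weyl quantization of a semiclassical symbol $R_\bullet(\zeta)$.

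The core step is to construct an asymptotic parametrix in the twisted Weyl calculus developed in Section~\ref{sec:TWC}. Using its composition law $\#_\epsilon^c$, I look for $R_\bullet(\zeta)$ solving $(\zeta - h_{d,0})\,\#_\epsilon^c\,R_\epsilon(\zeta) = 1 + O(\epsilon^\infty)$; writing $R_\epsilon(\zeta) = \sum_{j\ge 0}\epsilon^j R_j(\zeta)$ gives $R_0(\zeta) = (\zeta-h_{d,0})^{-1}$ and a recursion determining each $R_j$. Because the twisted quantization~\eqref{eq:TWQ} is built from $\cT_{c(\epsilon)X}$ with $c(\epsilon) = \tfrac12\epsilon + \tfrac18\epsilon^3 + \ldots$, the expansion of $\#_\epsilon^c$ differs from the standard Moyal product by additional contributions featuring $\ad_{\partial_x}h_{d,0}$ (cf.~\eqref{eq:partial}); these generate the last line of~\eqref{eq:f_d2}, while the Poisson-bracket terms arise from the usual second-order Moyal contributions. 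Setting $f_{d,\bullet} := -\frac{1}{\pi}\int_\C \overline{\partial}\widetilde f(\zeta)R_\bullet(\zeta)\,dL(\zeta) \in S^{1,1}_{\rm twi}(\epsilon,\mathfrak{S}_1(L^2_{\rm per}))$ (trace-class regularity being granted by the compact support of $f$ in the discrete-spectrum window of $h_{d,0}(k,X)$, so that only finitely many bands contribute), I obtain $f(H_{d,\theta}) = \cU^{-1}{\rm Op}_\epsilon^c(f_{d,\bullet})\cU + O_{\cB(\cH)}(\epsilon^\infty)$, and the trace formula of Section~\ref{sec:traceform}, namely $\Trb[\cU^{-1}{\rm Op}_\epsilon^c(a)\cU] = (2\pi)^{-2}\fint_{J\Omega}\int_{\Omega^*}\Tr_{L^2_{\rm per}}[a(k,X)]\,dk\,dX$ for $a \in S^{1,1}_{\rm twi}(\epsilon,\mathfrak{S}_1(L^2_{\rm per}))$, then yields an \emph{a priori} asymptotic expansion of $\Trb[f(H_{d,\theta})]$ whose $j$-th coefficient is $f_{d,j}$.

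The hard part will be to show that every \emph{odd}-order coefficient in this expansion has vanishing integrated trace, so that only the even powers in~\eqref{eq:main2} survive. I expect this to rest on two ingredients: first, the classical observation that the $\epsilon^{2j+1}$-contributions produced by the ordinary Moyal product are sums of Poisson brackets of operator-valued symbols, whose traces are total $(k,X)$-divergences and hence integrate to zero over $\Omega^* \times J\Omega$ by periodicity and integration by parts; second, a parity argument exploiting the $D_{6h}$-symmetry of $V_{\rm MG}$ from Assumption~\ref{def:MGpotential}. One checks directly that $V_d(-x,X,z) = V_d(x,-X,z)$, which implies $\cP\,h_{d,0}(k,X)\cP^{-1} = h_{d,0}(-k,-X)$ where $\cP u(x,z) := u(-x,z)$, and this symbol symmetry lifts through the recursion to each $R_j(\zeta)$. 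Applied under the trace and after the change of variables $(k,X)\to(-k,-X)$, it forces the odd-order twisted contributions (in particular those involving $\ad_{\partial_x}h_{d,0}$, which is parity-odd) to have vanishing integrated trace. The explicit formula~\eqref{eq:f_d2} for $f_{d,2}$ then follows from a direct bookkeeping of the $\epsilon^2$ terms in $R_\epsilon(\zeta)$.

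For the expansion in $\eta := \tan(\theta/2)$, the idea is to bypass the twisted calculus altogether and use a rescaling. Conjugating $H_{d,\theta}$ by the $L^2$-dilation $(U_\lambda u)(y,z) := \lambda^{-1}u(y/\lambda,z)$ with $\lambda := \sqrt{1-\epsilon^2}$ gives $\widetilde H_{d,\eta} := U_\lambda H_{d,\theta}U_\lambda^{-1} = -\tfrac{1-\epsilon^2}{2}\Delta_y - \tfrac12\partial_z^2 + V_d(y,\eta y,z)$, which is unitarily equivalent via $\cU$ to the \emph{standard} (untwisted) Weyl quantization of the operator-valued symbol $\widetilde h_{d,\eta}(k,X) := \tfrac{1-\epsilon^2}{2}(-i\nabla_x + k)^2 - \tfrac12\partial_z^2 + V_d(x,X,z)$; since $1-\epsilon^2 = 1/(1+\eta^2)$ is smooth in $\eta^2$, this symbol sits in the ordinary semiclassical class $S^{\omega,1}_{\tau,{\rm per}}(\eta,\cE)$ with $\widetilde h_{d,\eta} = h_{d,0} - \tfrac{\eta^2}{2}(-i\nabla_x + k)^2 + O(\eta^4)$. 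Running the same Helffer--Sj\"ostrand + parametrix + trace-formula machinery (now with the ordinary Moyal product, hence no $\ad_{\partial_x}$ terms) gives a semiclassical expansion of $\Trb[f(\widetilde H_{d,\eta})]$ in $\eta^2$. Going back to $\Trb[f(H_{d,\theta})]$, the area Jacobian $\lambda^2 = 1-\epsilon^2 = 1 - \eta^2 + O(\eta^4)$ multiplies the whole series: the $-(\zeta - h_{d,0})^{-1}$ term in~\eqref{eq:f_d2'} comes from this $-\eta^2$ factor applied to the leading order, while $-\tfrac12(\zeta - h_{d,0})^{-1}(-i\nabla_x + k)^2(\zeta - h_{d,0})^{-1}$ is the first resolvent correction generated by the $-\tfrac{\eta^2}{2}(-i\nabla_x + k)^2$ perturbation of $\widetilde h_{d,\eta}$.
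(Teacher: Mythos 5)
Your overall strategy matches the paper's: conjugate by the Bloch transform, use $H_{d,\theta}=\cU^{-1}\Opc(h_{d,0})\cU$, apply Helffer--Sj\"ostrand, build a twisted-Weyl parametrix for $(\zeta-\Opc(h_{d,0}))^{-1}$, integrate against $\overline\partial\widetilde f$, and invoke the trace formula. The twisted Moyal expansion with the extra $\ad_{\partial_x}$ contributions from $\cT_{c(\epsilon)X}$, and the rescaling by $\cP_{(1+\eta^2)^{1/2}}$ reducing to an ordinary semiclassical symbol for the $\eta$-expansion, are both exactly what the paper does; the identification of the two extra terms in~\eqref{eq:f_d2'} with the Jacobian prefactor and the $-\tfrac{\eta^2}{2}(-i\nabla_x+k)^2$ perturbation is correct.

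However, there is one genuine gap and two smaller discrepancies with the paper's argument.

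\emph{The gap: trace-class regularity of the parametrix coefficients.} You assert that $f_{d,j}:=-\tfrac1\pi\int_\C\overline\partial\widetilde f(\zeta)R_j(\zeta)\,dL(\zeta)$ lies in $S^1_{\rm twi}(\mathfrak S_1(L^2_{\rm per}))$ ``because only finitely many bands contribute'' to the support of $f$. This is exactly the step the paper flags as non-obvious and deliberately avoids. The parametrix coefficients $R_j(\zeta)$ for $j\ge 1$ are products of resolvents $(\zeta-h_{d,0})^{-1}$ interlaced with the derivatives $\partial_k h_{d,0}$, $\partial_X h_{d,0}$, $\ad_{\partial_x}h_{d,0}$, which are \emph{bounded but not trace-class}, and these factors mix the finitely-many ``low'' bands with the continuous spectrum of $h_{d,0}(k,X)$; the compact support of $f$ alone does not obviously kill the contribution from that mixing after integration in $\zeta$. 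The paper's solution is to introduce the modified symbol $h_{d,0,f}:=\chi_f(h_{d,0})\,h_{d,0}$ (see~\eqref{eq:h-t}), whose resolvent is holomorphic on $\Supp(\widetilde f)$, write
\[
f(\Opc(h_{d,0}))=\frac1\pi\int_\C\overline\partial\widetilde f(\zeta)\,
(\zeta-\Opc(h_{d,0}))^{-1}\bigl(\Opc(h_{d,0,f})-\Opc(h_{d,0})\bigr)(\zeta-\Opc(h_{d,0,f}))^{-1}\,dL(\zeta),
\]
and then exploit Lemma~\ref{cor:trace}, which shows that $h_{d,0}-h_{d,0,f}\in S^1_{\rm twi}(\mathfrak S_1(L^2_{\rm per}))$ (a uniformly finite-rank factor). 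Two separate parametrix expansions for $(\zeta-\Opc(h_{d,0}))^{-1}$ and $(\zeta-\Opc(h_{d,0,f}))^{-1}$ are then built in $\cB(L^2_{\rm per})$ (Lemmas~\ref{lem:5.7} and~\ref{lem:5.8}), and the finite-rank middle factor, propagated through the Moyal product via~\eqref{Holder}, is what delivers trace-class symbols. Without this decomposition your construction does not yield the needed membership in $S^1_{\rm twi}(\mathfrak S_1(L^2_{\rm per}))$; you would at best control $f(\Opc(h_{d,0}))$ in $\cB(\cH)$, which is insufficient to apply the trace formula.

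\emph{The odd-order vanishing.} Your proposed combination of a ``Poisson brackets are total divergences'' argument and a parity symmetry for $h_{d,0}$ is much heavier than what the paper uses. The paper simply observes that $H_{d,\theta}$ and $H_{d,-\theta}$ are unitarily equivalent (via reflection across the $Ox_1z$ plane), so $\Trb[f(H_{d,\theta})]$ is an even function of $\theta$, hence of $\epsilon$ and of $\eta$, which forces every odd coefficient of the \emph{already established} $\sum_j\epsilon^j a_j$ expansion to vanish. Your alternative could perhaps be carried out, but it would in particular require checking that the $\ad_{\partial_x}$-terms of the twisted composition integrate to zero at odd order — something you state but would have to prove, whereas the symmetry argument handles all orders at once.

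\emph{The trace formula.} The identity you quote for $\Trb[\cU^{-1}\Opc(a)\cU]$ is stated informally: in the paper, establishing the trace per unit area requires inserting the cutoffs $\Op_\epsilon(\chi_N^2)$, proving uniform $\mathfrak S_1$-bounds (Proposition~\ref{th:sigma1}), and then using Lemma~\ref{trace} and Corollary~\ref{cor:trace-comp}, which produce the desired integral formula only up to an $\cO(\epsilon^\infty)$ remainder — precisely why the trace-class control above is indispensable. Your proposal glosses over this step.
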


Note that the odd terms of the expansions vanish because $H_{d,-\theta}$ and $H_{d,\theta}$ are unitary equivalent by symmetry with respect to e.g. the vertical plane $Ox_1z$.

\medskip

Several remarks are in orders:
\begin{enumerate}
    \item the two expansions \eqref{eq:main2} and \eqref{eq:main2'} are somehow redundant since, rewriting them as 
    \begin{align*}
        \Trb[f(H_{d,\theta})] &  \sim \sum_{j=0}^{+\infty} a_{2j} \epsilon^{2j} \quad \mbox{and} \quad
        \Trb[f(H_{d,\theta})]   \sim \sum_{j=0}^{+\infty} a'_{2j} \eta^{2j} = \sum_{j=0}^{+\infty} a'_{2j} \frac{\epsilon^{2j}}{(1-\epsilon^2)^j},
    \end{align*}
    we have 
    $$
    a'_0=a_0, \quad a_2=a'_2, \quad a_4=a'_4+a'_2, \quad a_6=a'_6+2a'_4+a'_2, \cdots;
    $$
     however, the coefficients $a_{2j}$ and $a_{2j}'$ are obtained by computing integrals of traces of operator-valued symbols $f_{d,j}$ and $f_{d,j}'$ which are substantially different (compare e.g. the last lines of \eqref{eq:f_d2}-\eqref{eq:f_d2'}). Notably, $f_{d,j}$ and $f_{d,j}'$ leave in different classes of symbols, namely $S^{1}_{\rm twi}(\mathfrak{S}_1(L^2_{\rm per}))$ and $S^{1}_{\tau,\rm per}(\mathfrak{S}_1(L^2_{\rm per}))$ respectively. It is not clear {\it a priori} which of these two formalisms lends itself best to numerical approximations. This point will be investigated in a future work; 
    
    \item it is not immediate to see from the expressions of $f_{d,2}$ and $f_{d,2}'$ that we actually have
    \begin{equation}\label{eq:a2=a2'}
    a_2=a_2'. 
    \end{equation}
    This equality results from an integration by part, as shown in Section~\ref{sec:check};
    
    \item from each of the two expansions \eqref{eq:main2} and \eqref{eq:main2'}, asymptotic expansions of $\Trb[f(H_{d,\theta})]$ in powers of $\theta$ can be easily obtained by expanding $\epsilon=\sin(\theta/2)$ and $\eta=\tan(\theta/2)$ in powers of $\theta$. Of course, these two expansions agree;
    
    \item our proof of \eqref{eq:main2} is slightly longer than our proof of \eqref{eq:main2'}, since the former makes use of an exotic class of operator-valued symbols and a twisted version of Weyl calculus, while the latter just uses standard classes of operator-valued symbols and standard Weyl calculus. On the other hand, the rescaling argument on which the proof of \eqref{eq:main2'} is based is specific to the considered Hamiltonian and does not apply to more general Hamiltonians, for instance those of the form
    $$
    -\frac 12 \Delta_{x,z} + V_{{\mathfrak d}_\epsilon(\epsilon x)}\left( x-J \Xi_\epsilon(\epsilon x),\Phi_\epsilon(\epsilon x), z \right),
    $$
    where ${\mathfrak d}_\epsilon:\R^2 \to \R_+^*$, $\Phi_\epsilon: \R^2 \to \R^2$, and $\Xi_\epsilon:\R^2 \to \R^2$ are smooth $J\mathbb L$-periodic functions having asymptotic expansions in $\epsilon$. Such Hamiltonians, whose density-of-states can be studied using our $\epsilon$-expansion method, could possibly be used to take atomic relaxation into account.
\end{enumerate}

\begin{theorem}\label{DoS-BM}
Under Assumptions~\ref{def:MGpotential} and \ref{def:inter}, for all $f \in C^\infty_{\rm c}(\R;\R_+)$, there exist matrix-valued symbols $f_{d,K,j} \in S^{1}_\tau(\C^{4\times 4}),\;j\in \N$, depending linearly on $f$ and such that the following asymptotic expansion holds true for any $n \in \N$: 
\begin{align}\label{eq:DoS-BM}
\Trb[f(H_{d,K,\theta}^{\rm eff})]&=\sum_{j=0}^n \frac{\epsilon^j}{(2\pi)^2} \fint_{J\Omega}\int_{\R^2}\Tr_{\C^4}[f_{d,K,j}(\kappa,X)] \, d\kappa \, dX+\mathcal{O}(\epsilon^{n+1}).
\end{align}
In particular, we can take 
\begin{align*}
    f_{d,K,0}(\kappa,X)=f( \mathfrak{h}_{d,K,0}^{\rm eff}(\kappa,X)),
\end{align*}
\begin{align*}
    f_{d,K,1}(\kappa,X)&:=- \frac{1}{\pi}\int_{\C} \overline{\partial}\widetilde{f}(\zeta ) \bigg[ - \frac i2 \{(\zeta- \mathfrak{h}_{d,K,0}^{\rm eff})^{-1},(\zeta- \mathfrak{h}_{d,K,0}^{\rm eff})\}(\kappa,X)(\zeta- \mathfrak{h}_{d,K,0}^{\rm eff}(\kappa,X))^{-1} \\
    &\qquad \qquad + (\zeta- \mathfrak{h}_{d,K,0}^{\rm eff}(\kappa,X))^{-1}T_{0,1}(\kappa)(\zeta- \mathfrak{h}_{d,K,0}^{\rm eff}(\kappa,X))^{-1} \bigg] \, dL(\zeta)
\end{align*}
and
\begin{align*}
  \MoveEqLeft  f_{d,K,2}(\kappa,X)=-\frac{1}{\pi}\int_{\C} \overline{\partial}\widetilde{f}(\zeta )  \Big[-\frac{1}{4}(\zeta-\mathfrak{h}_{d,K,0}^{\rm eff})^{-1}\{(\zeta-\mathfrak{h}_{d,K,0}^{\rm eff}),(\zeta-\mathfrak{h}_{d,K,0}^{\rm eff})^{-1}\}^2\\
     &+\frac{1}{4}\{(\zeta-\mathfrak{h}_{d,K,0}^{\rm eff})^{-1},\{(\zeta-\mathfrak{h}_{d,K,0}^{\rm eff}),(\zeta-\mathfrak{h}_{d,K,0}^{\rm eff})^{-1}\}\}\\
     &+\frac{1}{8}(\zeta-\mathfrak{h}_{d,K,0}^{\rm eff})^{-1}\{(\zeta-\mathfrak{h}_{d,K,0}^{\rm eff}),(\zeta-\mathfrak{h}_{d,K,0}^{\rm eff})^{-1}\}_2\\
     & - \frac 12(\zeta-\mathfrak{h}_{d,K,0}^{\rm eff} )^{-1}T_{0}(\zeta-\mathfrak{h}_{d,K,0}^{\rm eff} )^{-1}\\
     & +(\zeta-\mathfrak{h}_{d,K,0}^{\rm eff} )^{-1}\Big[T_{0,1}(\zeta-\mathfrak{h}_{d,K,0}^{\rm eff} )^{-1}\Big]^2\\
     &+ \frac{i}{2}\{(\zeta-\mathfrak{h}_{d,K,0}^{\rm eff})^{-1}, T_{0,1}\}(\zeta-\mathfrak{h}_{d,K,0}^{\rm eff})^{-1}\\
     &- \frac{i}{2}\{(\zeta-\mathfrak{h}_{d,K,0}^{\rm eff})^{-1}T_{0,1}(\zeta-\mathfrak{h}_{d,K,0}^{\rm eff})^{-1},\zeta-\mathfrak{h}_{d,K,0}^{\rm eff}\}(\zeta-\mathfrak{h}_{d,K,0}^{\rm eff})^{-1}\\
     &-\frac{i}{2}\{(\zeta-\mathfrak{h}_{d,K,0}^{\rm eff})^{-1},\zeta-\mathfrak{h}_{d,K,0}^{\rm eff}\}(\zeta-\mathfrak{h}_{d,K,0}^{\rm eff})^{-1}T_{0,1}(\zeta-\mathfrak{h}_{d,K,0}^{\rm eff})^{-1}\Big](\kappa,X) dL(\zeta).
\end{align*}
where $\widetilde{f}$ is any almost analytic extension of $f$, 
\begin{align} \label{eq:mat_h_eff}
    \mathfrak{h}_{d,K,0}^{\rm eff}(\kappa,X):&=T_0( \kappa)+\mathcal{V}_{d,K}(X), \\
    T_{0,1}(\kappa):&=\begin{pmatrix}
   v_{\rm F} (-\sigma_2,\sigma_1) \cdot \kappa & 0\\
   0&  -v_{\rm F} (-\sigma_2,\sigma_1) \cdot \kappa \end{pmatrix}, \notag
\end{align}
where the matrix-valued functions $T_0(\bullet)$ and ${\mathcal V}_{d,K}(\bullet)$ are given by  \eqref{eq:EMS_model}-\eqref{eq:T-V-1}. 
\end{theorem}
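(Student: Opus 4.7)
The plan is to combine the identity $H_{d,K,\theta}^{\rm eff} = \widetilde\cU^{-1}\Op(h^{\rm eff}_{d,K,\epsilon})\widetilde\cU$ from \eqref{eq:Hdeps=Opeps} with a semiclassical expansion of the symbol $h^{\rm eff}_{d,K,\epsilon}$, a parametrix construction for the resolvent driven by Helffer--Sj\"ostrand's formula \eqref{eq:HF_formula}, and the trace formula for operator-valued pseudodifferential operators of Section~\ref{sec:traceform}. Unlike in Theorem~\ref{DoS-TBG}, the atomic-scale part of $h^{\rm eff}_{d,K,\epsilon}$ is a constant-coefficient Dirac-type operator, so no twisted Weyl calculus is required: standard semiclassical calculus with operator-valued symbols in $S^{\omega,1}_{\tau,\mathrm{per}}(\cB(H^1_\#,L^2_\#))$ with $\omega(k,X)=(1+|k|^2)^{1/2}$ suffices. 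The only $\epsilon$-dependence in the symbol enters through $T_\epsilon$, and the expansion $e^{\mp i\theta/4\sigma_3} = 1 \mp \tfrac{i\epsilon}{2}\sigma_3 + O(\epsilon^2)$ (using $\theta/4 = \epsilon/2 + O(\epsilon^3)$) yields $\pmb{\sigma}_{\pm\theta/2} = (\sigma_1,\sigma_2) \pm \epsilon(\sigma_2,-\sigma_1) + O(\epsilon^2)$, hence $h^{\rm eff}_{d,K,\epsilon} = \mathfrak{h}_{d,K,0}^{\rm eff} + \epsilon T_{0,1} + \epsilon^2 T_{0,2} + \cdots$ with $T_{0,1}$ as stated in~\eqref{eq:mat_h_eff}.

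Next I would apply \eqref{eq:HF_formula} and construct a parametrix $r_\epsilon(\zeta;k,X) = \sum_{j\ge 0}\epsilon^j r_j(\zeta;k,X)$ for the symbol of $(\zeta - H_{d,K,\theta}^{\rm eff})^{-1}$ by solving $r_\epsilon \,\#_\epsilon\,(\zeta - h^{\rm eff}_{d,K,\epsilon}) = 1$ modulo $O(\epsilon^{n+1})$, using the Moyal expansion $a\,\#_\epsilon\,b = ab + \tfrac{i\epsilon}{2}\{a,b\} - \tfrac{\epsilon^2}{8}\{a,b\}_2 + O(\epsilon^3)$. Matching powers of $\epsilon$ yields $r_0 = (\zeta - \mathfrak{h}_{d,K,0}^{\rm eff})^{-1}$, $r_1 = r_0 T_{0,1} r_0 - \tfrac{i}{2}\{r_0,\zeta-\mathfrak{h}_{d,K,0}^{\rm eff}\} r_0$, and a longer but similar expression for $r_2$ involving both $T_{0,2}$ and $\{r_0,\mathfrak{h}_{d,K,0}^{\rm eff}\}_2$. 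Integrating $-\pi^{-1}\overline{\partial}\widetilde{f}(\zeta)\,r_\epsilon(\zeta)$ in $\zeta$ then produces the symbol $f^{\rm sym}_\epsilon(k,X) = \sum_{j\ge 0}\epsilon^j f^{\rm sym}_j(k,X)$ of $f(\Op(h^{\rm eff}_{d,K,\epsilon}))$; matching its first three terms against the stated formulas for $f_{d,K,0}, f_{d,K,1}, f_{d,K,2}$ is a bookkeeping exercise.

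The trace formula of Section~\ref{sec:traceform} then gives
\begin{equation*}
\Trb[f(H_{d,K,\theta}^{\rm eff})] = \frac{1}{(2\pi)^2}\fint_{J\Omega}\int_{\Omega^*}\Tr_{L^2_\#}[f^{\rm sym}_\epsilon(k,X)]\,dk\,dX + O(\epsilon^{n+1}).
\end{equation*}
The simplification specific to this model is that $\mathfrak{h}_{d,K,0}^{\rm eff}(k,X) = T_0(-i\nabla_x+k-K)+\mathcal{V}_{d,K}(X)$ has constant coefficients in the $x$-variable and is thus simultaneously diagonalized by Fourier series on $L^2_\#$; its resolvent acts on each mode $G\in\mathbb{L}^*$ by multiplication by the $4{\times}4$ matrix $(\zeta - T_0(G+k-K) - \mathcal{V}_{d,K}(X))^{-1}$. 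This block-diagonal structure is preserved by $\partial_k, \partial_X$, Moyal products, and functional calculus, so $\Tr_{L^2_\#}[f^{\rm sym}_j(k,X)] = \sum_{G\in\mathbb{L}^*}\Tr_{\C^4}[f_{d,K,j}(G+k-K, X)]$ for some matrix-valued $f_{d,K,j}$, and the unfolding $\int_{\Omega^*}dk\,\sum_{G\in\mathbb{L}^*}\to\int_{\R^2}d\kappa$ gives \eqref{eq:DoS-BM}. The main obstacle is controlling the trace-class remainder after truncating the parametrix at order $n$: this requires $\cB(H^1_\#,L^2_\#)$-bounds on the residue with at most polynomial growth in $(\Im\zeta)^{-1}$, so that they are absorbed by $|\overline{\partial}\widetilde{f}(\zeta)|=O(|\Im\zeta|^\infty)$ from the almost-analytic extension, combined with the compact support of $f$ and the linear growth $|T_0(G)|\sim|G|$ to ensure absolute convergence of the unfolded sum over $G$.
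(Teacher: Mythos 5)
Your outline captures the right skeleton — the identity $H^{\rm eff}_{d,K,\theta}=\widetilde{\mathcal U}^{-1}\mathrm{Op}_\epsilon(h^{\rm eff}_{d,K,\epsilon})\widetilde{\mathcal U}$, the expansion of $T_\epsilon$ into $T_{0,j}$, the Moyal parametrix, the Helffer--Sj\"ostrand formula, the trace formula of Section~\ref{sec:traceform}, and the unfolding $\int_{\Omega^*}\sum_{G\in\mathbb L^*}\to\int_{\mathbb R^2}$ — but it misses the central technical device that makes the paper's argument close.

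The gap is in the passage from the parametrix to the trace. You construct the parametrix $r_\epsilon(\zeta)=\sum_j\epsilon^j r_j(\zeta)$ for the symbol of $(\zeta-\mathrm{Op}_\epsilon(h^{\rm eff}_{d,K,\epsilon}))^{-1}$, then integrate $-\pi^{-1}\overline\partial\widetilde f(\zeta)\,r_\epsilon(\zeta)$ and hand the result to the trace formula. However, each $r_j(\zeta)$ is only a priori in $S^1_\tau(\mathcal B(L^2_\#))$; it is \emph{not} a symbol in $\mathfrak S_1(L^2_\#)$, and Lemma~\ref{trace} requires an $\mathfrak S_1$-valued symbol whose trace norm is $L^1$ in $X$. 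Your proposed fix — polynomial growth in $|\Im\zeta|^{-1}$ absorbed by $|\overline\partial\widetilde f(\zeta)|=O(|\Im\zeta|^\infty)$ — controls the $\mathcal B$-norm but cannot by itself turn a full-rank bounded operator into a trace-class one. The paper flags exactly this pitfall right after Lemma~\ref{lem:a_j}, and resolves it by introducing the spectrally truncated symbol $h^{\rm eff}_{d,K,f,\epsilon}=\chi_{f,\mathrm{EMS}}(h^{\rm eff}_{d,K,\epsilon})\,h^{\rm eff}_{d,K,\epsilon}$ as in \eqref{eq:t-t-BM}, proving (Corollary~\ref{cor:trace-BM}) that the difference $h^{\rm eff}_{d,K,\bullet}-h^{\rm eff}_{d,K,f,\bullet}$ lies in $S^{1,\sqrt3/2}_\tau(\epsilon,\mathfrak S_1(L^2_\#))$ thanks to the uniform rank bound of Lemma~\ref{lem:spec-BM}, and proving (Lemma~\ref{lem:inverse-BM}) that $(\zeta-\mathrm{Op}_\epsilon(h^{\rm eff}_{d,K,f,\epsilon}))^{-1}$ is holomorphic on $\mathrm{Supp}(\widetilde f)$. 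Helffer--Sj\"ostrand is then applied to the \emph{difference} of the two resolvents in \eqref{eq:holom-BM}, which sandwiches the trace-class factor between two bounded resolvent expansions and lands every coefficient $d_j(\zeta)$ in $\mathfrak S_1(L^2_\#)$. Without this subtraction you have no way to invoke Lemma~\ref{trace} (or Corollary~\ref{cor:trace-comp}), and the proof does not close. The remaining steps you describe — the expansion of $T_\epsilon(\kappa)$, the Fourier-multiplier reduction $\Tr_{L^2_\#}\to\sum_G\Tr_{\mathbb C^4}$, and the unfolding — agree with the paper and are fine as stated.
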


In the above result, we do not assume that the function $\mathcal V_{d,K}(X)$ has any specific point symmetry. In the special case of the BM model, the odd terms in the expansion vanish due to symmetries.

\section{Analysis of \texorpdfstring{$h_{d,0}(k,X)$}{}} \label{sec:hd0}

In order to prove Theorems \ref{DoS-TBG}, we need to establish some useful properties of $h_{d,0}(k,X)$: we study the properties of $h_{d,0}(k,X)$ and $g(h_{d,0}(k,X))$ for specific functions $g:\R \to \C$, seen as operator-valued symbols. These results will be used in the proof of Theorem~\ref{DoS-TBG}.

\subsection{Spectral properties of the operator \texorpdfstring{$h_{d,0}(k,X)$}{}}
As a prelude, we first prove some basic spectral properties of the operator $h_{d,0}(k,X)$.
\begin{lemma}\label{lem:spec}
Under Assumptions \ref{def:MGpotential} and \ref{def:inter}, 
\begin{enumerate}
\item for all $k,X\in \mathbb{R}^2$, $h_{d,0}(k,X)$ is a self-adjoint operator on $L^2_{\rm per}$ with domain $H^2_{\rm per}$ and it holds
\begin{align}\label{eq:ess}
    \sigma_{\rm ess}(h_{d,0}(k,X))=[0,+\infty);
\end{align}
\item there exists $M \in \R_+$ such that
\begin{align}\label{eq:specbound}
    \inf_{\substack{k,X\in \mathbb{R}^2}} \sigma(h_{d,0}(k,X))\geq -M;
\end{align}
\item for any $E < 0$, there is a constant $m_E \in \N$ such that
\begin{align}\label{eq:rankbound}
    \sup_{\substack{ k,X\in \mathbb{R}^2}}\rank(\1_{(-\infty,E]}(h_{d,0}(k,X)))\leq m_E.
\end{align}
\end{enumerate}
\end{lemma}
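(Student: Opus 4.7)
My plan is to handle the three parts sequentially, reducing everything to a Kato--Rellich plus Weyl's-theorem argument for (1), a pointwise bound for (2), and a separable min-max comparison for (3). The common simplification is that $V_d(\cdot,X,\cdot)$ is uniformly bounded in $X$ and tends to zero as $|z|\to\infty$ uniformly in $(x,X)$, both of which are consequences of Assumptions~\ref{def:MGpotential} and \ref{def:inter} applied with $\alpha=0$.

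For (1), the free operator $T_0(k):=\tfrac12(-i\nabla_x+k)^2-\tfrac12\partial_z^2$ is self-adjoint on $H^2_{\rm per}$ (via combined Fourier series in $x$ on the cell $\Omega$ and Fourier transform in $z$), and boundedness of $V_d(\cdot,X,\cdot)$ gives Kato--Rellich self-adjointness of $h_{d,0}(k,X)$ on $H^2_{\rm per}$. For the essential spectrum, I would show that $V_d(\cdot,X,\cdot)(T_0(k)+1)^{-1}$ is compact on $L^2_{\rm per}\simeq L^2(\Omega\times\R)$: any bounded sequence in $L^2_{\rm per}$ is mapped into a bounded set of $H^2(\Omega\times \R)$, which is precompact in $L^2(\Omega\times[-R,R])$ by Rellich--Kondrachov for each $R>0$, and the uniform vanishing of $V_d$ as $|z|\to\infty$ upgrades this local precompactness to global precompactness in $L^2(\Omega\times \R)$. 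Weyl's theorem then yields $\sigma_{\rm ess}(h_{d,0}(k,X))=\sigma_{\rm ess}(T_0(k))=\bigcup_{G\in\mathbb L^*}[\tfrac12|k+G|^2,+\infty)\subseteq[0,+\infty)$, in the same sense as \eqref{eq:sigma_ess_hMG}.

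For (2), non-negativity of the kinetic terms and $V_d\geq -\|V_d\|_\infty$ give $h_{d,0}(k,X)\geq -(2\|V_{\mathrm{MG}}\|_\infty+\|V_{\mathrm{int},d}\|_\infty)=:-M$ uniformly in $(k,X)$. For (3), I would introduce $V_{\max}(z):=\sup_{(x,X)\in\R^2\times\R^2}|V_d(x,X,z)|$, which lies in $L^\infty(\R)$ and vanishes at infinity, so that
\[
h_{d,0}(k,X)\;\geq\;\underbrace{\tfrac12(-i\nabla_x+k)^2}_{A_k}\otimes I\;+\;I\otimes \underbrace{\bigl(-\tfrac12\partial_z^2-V_{\max}(z)\bigr)}_{B}.
\]
The operator $A_k$ has purely discrete spectrum $\{\tfrac12|k+G|^2\}_{G\in\mathbb L^*}$ on $L^2(\Omega)$, while $B$ is a Schr\"odinger operator on $L^2(\R)$ whose essential spectrum is $[0,+\infty)$ and which therefore has at most finitely many eigenvalues $\mu_j$ below any fixed $E<0$. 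Since $A_k$ and $B$ act on decoupled variables, $\sigma(A_k\otimes I+I\otimes B)=\sigma(A_k)+\sigma(B)$, and the min-max principle gives
\[
\rank\1_{(-\infty,E]}\!\bigl(h_{d,0}(k,X)\bigr)\;\leq\;\sum_{j:\mu_j\leq E}\#\bigl\{G\in\mathbb L^*:\tfrac12|k+G|^2\leq E-\mu_j\bigr\}.
\]
Using the $\tau$-equivariance $\sigma(h_{d,0}(k-G,X))=\sigma(h_{d,0}(k,X))$ for $G\in\mathbb L^*$ I may assume $k\in\Omega^*$, after which each summand on the right is a count of lattice points in a fixed ball and is thus bounded by a constant $m_E$ independent of $(k,X)$.

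The main technical point, and the reason the argument works at all, is the relative compactness of $V_d$ with respect to $T_0(k)$ on the Bloch-fiber space $L^2_{\rm per}$: although $V_d$ is $\mathbb L$-periodic in $x$ and thus does not decay at infinity on the covering $\R^2\times\R$, the quotient cell $\Omega$ is compact, so decay in the single remaining unbounded direction $z$ is already enough to apply Rellich--Kondrachov and obtain compactness of $V_d(T_0(k)+1)^{-1}$; this same compactness feeds both the Weyl's theorem step in (1) and the separable min-max bound in (3), and using $\tau$-equivariance to reduce $k$ to the bounded fundamental domain $\Omega^*$ is what finally makes the bound $m_E$ uniform.
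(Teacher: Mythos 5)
Your proof is correct, and parts (1) and (2) run along essentially the same lines as the paper's (Kato--Rellich for self-adjointness, Weyl's theorem after a relative-compactness argument, and a pointwise $L^\infty$ bound on $V_d$). The one variation in (1) is that you establish compactness of $V_d(T_0(k)+1)^{-1}$ by Rellich--Kondrachov on the compact cell plus the decay of $V_d$ in $z$, whereas the paper verifies that the resolvent difference is Hilbert--Schmidt via the Kato--Seiler--Simon inequality; both suffice for Weyl's theorem, yours being more elementary and theirs more quantitative.

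For part (3), however, you take a genuinely different and more constructive route. The paper's argument is \emph{qualitative}: it invokes the Courant--Fischer characterization $\mu_j(k,X)=\sigma_j(h_{d,0}(k,X))$, observes that these are continuous in $(k,X)$, restricts to the compact fundamental domain $\overline{\Omega^*}\times J\overline{\Omega}$ by periodicity/$\tau$-equivariance, and uses the Cantor intersection theorem on the nested sublevel sets $C_j=\mu_j^{-1}([-M,E])\cap(\overline{\Omega^*}\times J\overline{\Omega})$ to derive a contradiction with $\sigma_{\rm ess}\cap(-\infty,0)=\emptyset$. Your argument instead produces an explicit comparison operator: dominating $V_d(x,X,z)\ge -V_{\max}(z)$ with a $z$-only potential, you get the separable lower bound $h_{d,0}(k,X)\ge A_k\otimes I+I\otimes B$ and count bound states by hand. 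This buys you an explicit, in-principle computable bound on $m_E$ (Bargmann-type bounds for the 1D operator $B$ plus a lattice-point count), and it is self-contained in the sense that it does not rely on the essential-spectrum conclusion of part (1). The price is that it is tied to the specific product structure of $\Omega\times\R$ and to the fact that the decay of $V_d$ is controlled by a potential in $z$ alone; the paper's compactness argument would generalize more readily to perturbations without this clean $z$-separation, at the cost of being nonconstructive. One small stylistic point: you correctly hedge the equality $\sigma_{\rm ess}(T_0(k))=[0,+\infty)$ with a $\subseteq$ (for $k\notin\mathbb L^*$ the bottom of the essential spectrum is $\tfrac12\dist(k,\mathbb L^*)^2$), but since the statement and the downstream uses only need $\sigma_{\rm ess}\cap(-\infty,0)=\emptyset$, this does not affect the argument for any of the three parts.
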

\begin{proof}
Let $k,X\in\mathbb{R}^2$. From Kato-Rellich theorem, the operator $h_{d,0}(k,X)$ is self-adjoint on $L^2_{\rm per}$ with domain $H^2_{\rm per}$. Let $\xi\in\mathbb{C}\setminus\mathbb{R}$. It follows from Assumptions \ref{def:MGpotential} and \ref{def:inter}  and the Kato-Seiler-Simon inequality \cite{simon1978methods} that 
\begin{align*}
\MoveEqLeft    (h_{d,0}-\xi)^{-1}- \left( \frac{1}{2}(-i\nabla_x+k)^2-\frac{1}{2}\partial_z^2-\xi \right)^{-1}\\
&=-(h_{d,0}-\xi)^{-1}V_d
\left( \frac{1}{2}(-i\nabla_x+k)^2-\frac{1}{2}\partial_z^2-\xi \right)^{-1}
\end{align*}
is a Hilbert-Schmidt operator on $L^2_{\rm per}$. We deduce from Weyl's essential spectrum theorem (see e.g. \cite[Theorem XIII.14 and Corollary 1]{simon1978methods}) that 
$$
\sigma_{\rm ess}(h_{d,0}(k,X))=\sigma_{\rm ess}\left(\frac{1}{2}(-i\nabla_x+k)^2-\frac{1}{2}\partial_z^2\right)=[0,+\infty).
$$
The second assertion is an immediate consequence of the fact that 
\[
\sup_{x,X\in \mathbb{R}^2,z\in\mathbb{R}}|V_d(x,X,z)|=: M < \infty.
\]
We turn now to the proof of the third assertion. Let $E < 0$. As $h_{d,0}(k,X)$ is $\tau$-equivariant w.r.t. $k$ and $J\mathbb{L}$-periodic w.r.t. $X$, it suffices to prove that
\begin{align}\label{eq:rankbound_2}
    \sup_{\substack{ (k,X)\in \overline{\Omega^*} \times J\overline{\Omega}}}\rank(\1_{(-\infty,E]}(h_{d,0}(k,X)))=:m_E < \infty.
\end{align}
For any bounded-from-below self-adjoint operator $A$ on a Hilbert space $\cH$ with domain $D(A)$, and any $j \in \N^*$, we denote by 
\begin{align}\label{eq:courant}
    \sigma_j(A):= \inf_{V_j \in {\mathcal V}_j} \sup_{v \in V_j \, | \, \|v\|_{\cH}=1} \left\langle v,Av\right\rangle_\cH,
\end{align}
where ${\mathcal V}_j$ is the set of vector subspaces of $D(A)$ of dimension $j$. Recall that by Courant-Fischer min-max formula, $\sigma_j(A)$ is equal to the $j^{\rm th}$ lowest eigenvalue of $A$ (counting multiplicities) if $A$ has at least $j$ eigenvalues below $\inf(\sigma_{\rm ess}(A))$, and to $\inf(\sigma_{\rm ess}(A))$ otherwise. Since the functions 
\begin{equation}\label{eq:mu_j}
\mu_j : \R^2 \times \R^2 \ni (k,X) \mapsto \mu_j(k,X):=\sigma_j(h_{d,0}(k,X)) \in \R
\end{equation}
form a nondecreasing sequence of continuous functions, the sets
$$
C_j:=\mu_j^{-1}((-\infty,E])\bigcap (\overline{\Omega^*} \times J\overline{\Omega}) = \mu_j^{-1}([-M,E]) \bigcap (\overline{\Omega^*} \times J\overline{\Omega})
$$
are compact and such that 
\begin{align*}
    C_1\supset C_2\supset\cdots.
\end{align*}
If each $C_j$ was not empty, then so would be $\bigcap_{j \in \N^*}C_j$ by virtue of the Cantor intersection theorem.  For $(k_*,X_*) \in \bigcap_{j \in \N^*}C_j$, we would have $-M \le \sigma_j(h_{d,0}(k_*,X_*)) \le E$ for all $j \in \N^*$, which contradicts the fact that $\sigma_{\rm ess}(h_{d,0}(k_*,X_*)) \bigcap [-M,E]= \emptyset$. The inequality \eqref{eq:rankbound} is therefore satisfied with $m_E:=\max\{j \in \N^* \, | \, C_j \neq \emptyset \}$.
\end{proof}

\subsection{Properties of the operator-valued symbol \texorpdfstring{$h_{d,0}(k,X)$ }{}}\label{sec:hd0-symbol}

Let $f \in C_{\rm c}^\infty(\R;\R_+)$ with support included in $(-\infty,0)$, and let $\widetilde{f}$ be an almost analytic extension satisfying $\Supp(\widetilde{f})\subset (-\infty,0)$. We set
\[
\delta:=\frac{1}{4}\dist\left(0,\Supp(\widetilde{f})\right)>0,
\]
consider $\chi_{f} \in C^{\infty}(\R;\R)$ satisfying 
\begin{equation}\label{eq:chi_f}
0\leq \chi_{f}\leq 1 \mbox{ on } \R, \quad \chi_{f}=0 \textrm{ on } (-\infty,-2\delta],
\quad \textrm{and}\quad \chi_{f}=1 \textrm{ on } [-\delta,+\infty),
\end{equation}
and set
\begin{align}\label{eq:h-t}
    {h}_{d,0,f}(k,X):=\chi_{f}\left(h_{d,0}(k,X)\right) \, h_{d,0}(k,X).
\end{align}

We are going to show that $h_{d,0},{h}_{d,0,f}$ and $h_{d,0}-{h}_{d,0,f}$ are symbols in $S^{\rho}_{\twi}(\mathcal{X})$ for suitable order functions $\rho$ and linear operator spaces $\mathcal{X}$.

\begin{lemma}\label{lem:4.2}
Let $\omega(k,X):=1+|k|^2$. We have
\begin{align*} 
    h_{d,0},h_{d,0,f}\in S^{\omega}_{\twi}(
\mathcal{B}(H^2_{\rm per},L^2_{\rm per}))
\end{align*}
and
\begin{align} \label{eq:S-h-h-tilde}
    (h_{d,0}-h_{d,0,f})\in S^{\omega}_{\twi}(
\mathcal{B}(L^2_{\rm per},H^2_{\rm per}))\bigcap S^{1}_{\twi}(
\mathcal{B}(L^2_{\rm per}))\bigcap S^{\omega}_{\twi}(
\mathcal{B}(H^{-2}_{\rm per},L^2_{\rm per})).
\end{align}
\end{lemma}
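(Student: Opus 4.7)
The strategy decouples through the identity $h_{d,0,f} = h_{d,0} - g(h_{d,0})$, with $g(t) := t(1-\chi_f(t))$ smooth on $\R$ and vanishing on $[-\delta,+\infty)$. Since $\sigma(h_{d,0}(k,X)) \subset [-M,+\infty)$ uniformly by Lemma \ref{lem:spec}, we may replace $g$ outside $[-M-1,-\delta]$ by a smooth compactly supported extension, still denoted $g$, without altering the operator $g(h_{d,0})$. The lemma then reduces to two statements: (a) $h_{d,0} \in S^\omega_{\twi}(\mathcal{B}(H^2_{\rm per}, L^2_{\rm per}))$, and (b) $g(h_{d,0}) \in S^\omega_{\twi}(\mathcal{B}(L^2_{\rm per},H^2_{\rm per})) \cap S^1_{\twi}(\mathcal{B}(L^2_{\rm per})) \cap S^\omega_{\twi}(\mathcal{B}(H^{-2}_{\rm per},L^2_{\rm per}))$. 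The membership of $h_{d,0,f}$ in $S^\omega_{\twi}(\mathcal{B}(H^2_{\rm per},L^2_{\rm per}))$ then follows from (a)--(b) and the continuous inclusion $\mathcal{B}(L^2_{\rm per},H^2_{\rm per}) \hookrightarrow \mathcal{B}(H^2_{\rm per},L^2_{\rm per})$.

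For (a), the explicit form $h_{d,0}(k,X) = \tfrac{1}{2}(-i\nabla_x + k)^2 - \tfrac{1}{2}\partial_z^2 + V_d(x,X,z)$ makes $\tau$-equivariance in $k$ (using $\tau_G(-i\nabla_x+k)\tau_G^{-1} = -i\nabla_x+k-G$) and $J\mathbb{L}$-periodicity in $X$ (from Assumption \ref{def:MGpotential}) transparent. The symbol-derivatives are computed directly: $\partial_k h_{d,0} = -i\nabla_x+k$, $\partial_k^2 h_{d,0} = \mathbb{1}$, higher $\partial_k$'s vanish, and $\partial_X^\beta \px^\gamma h_{d,0}$ for $|\beta|+|\gamma|\ge 1$ equals multiplication by $\partial_X^\beta\partial_x^\gamma V_d$, which is uniformly bounded by Assumption \ref{def:MGpotential}. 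All of these are bounded from $H^2_{\rm per}$ into $L^2_{\rm per}$ by $C\omega(k)$.

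For (b), apply the Helffer--Sj\"ostrand formula \eqref{eq:HF_formula} to $g$ and differentiate under the integral using $\partial_Y R(\zeta) = R(\zeta)(\partial_Y h_{d,0})R(\zeta)$ (with the chain rule for higher orders), where $R(\zeta) := (\zeta-h_{d,0})^{-1}$ and $\partial_Y$ ranges over $\partial_{k_j}$, $\partial_{X_j}$, $\ad_{\partial_{x_j}}$. Any symbol-derivative of $g(h_{d,0})$ becomes a finite sum of integrals of products $R(\zeta)(\partial^{(1)}h_{d,0})R(\zeta)\cdots (\partial^{(N)}h_{d,0})R(\zeta)$ weighted by $\bar\partial \widetilde g(\zeta)$. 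The analysis then rests on three resolvent estimates: the standard bound $\|R(\zeta)\|_{\mathcal{B}(L^2_{\rm per})} \le |\Im\zeta|^{-1}$; the $k$-uniform bound
\[
\|(-i\nabla_x+k)R(\zeta)\|_{\mathcal{B}(L^2_{\rm per})} \le C(1+|\zeta|)^{1/2}|\Im\zeta|^{-1},
\]
derived from $2\langle v,Tv\rangle \ge \|(-i\nabla_x+k)v\|_{L^2_{\rm per}}^2$ applied to $v = R(\zeta)u$ together with $\langle v, Tv\rangle = \mathrm{Re}\langle v, (\zeta-V_d)v - u\rangle$ (here $T := \tfrac{1}{2}(-i\nabla_x+k)^2 - \tfrac{1}{2}\partial_z^2$); and the $H^2_{\rm per}$-mapping estimate
\[
\|R(\zeta)\|_{\mathcal{B}(L^2_{\rm per},H^2_{\rm per})} \le C\omega(k)(1+|\zeta|)|\Im\zeta|^{-1},
\]
obtained from $(T+1)R(\zeta) = -\mathbb{1} + (\zeta+1-V_d)R(\zeta)$ together with the $k$-weighted elliptic bound $\|u\|_{H^2_{\rm per}} \le C\omega(k)\|(T+1)u\|_{L^2_{\rm per}}$, itself a Fourier-analytic consequence of $|G|^2 \le 2|G+k|^2 + 2|k|^2$ for $G \in \mathbb{L}^*$. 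Each inner factor $(\partial^{(i)}h_{d,0})R(\zeta)$ is controlled in $\mathcal{B}(L^2_{\rm per})$ uniformly in $k$ (using the first two estimates together with the $L^\infty$-bounds on $\partial_X V_d$ and $\partial_x V_d$), and the conversion to $\mathcal{B}(L^2_{\rm per},H^2_{\rm per})$ is performed only once, at the outermost step, via the third estimate. Integration against $|\bar\partial \widetilde g(\zeta)| = O(|\Im\zeta|^\infty)$ absorbs any remaining powers of $|\Im\zeta|^{-1}$, yielding the desired bound by $C\omega(k)$. The same integrals estimated in $\mathcal{B}(L^2_{\rm per})$ only (skipping the $H^2$-conversion) give the $S^1_{\twi}(\mathcal{B}(L^2_{\rm per}))$ membership, and self-adjointness of $g(h_{d,0})$ transfers the first bound to $S^\omega_{\twi}(\mathcal{B}(H^{-2}_{\rm per},L^2_{\rm per}))$ by duality. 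The $\tau$-equivariance and $J\mathbb{L}$-periodicity of $g(h_{d,0})$ follow from those of $h_{d,0}$ through $\tau_G R(\zeta,k,X)\tau_G^{-1} = R(\zeta,k-G,X)$ and $R(\zeta,k,X-R) = R(\zeta,k,X)$ for $R\in J\mathbb{L}$.

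The main technical obstacle is keeping the power of $\omega(k)$ in the $\mathcal{B}(L^2_{\rm per},H^2_{\rm per})$ bound for derivatives of $g(h_{d,0})$ down to one: a naive application of the $H^2_{\rm per}$-mapping estimate to each of the $N+1$ resolvent factors appearing after $N$ differentiations would give an unacceptable factor $\omega(k)^{N+1}$. The resolution relies on the $k$-uniform bound for $(-i\nabla_x+k)R(\zeta)$ on $L^2_{\rm per}$, which allows every inner factor to be controlled without any $\omega$ contribution and confines the single unavoidable loss of $\omega$ to the outermost conversion step.
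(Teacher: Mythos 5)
Your proposal is correct, and its high-level structure mirrors the paper's: decompose $h_{d,0,f} = h_{d,0} - \xi_f(h_{d,0})$ with $\xi_f$ compactly supported thanks to the uniform spectral lower bound, verify the symbol class for $h_{d,0}$ by direct differentiation, and expand $\xi_f(h_{d,0})$ via Helffer--Sj\"{o}strand into iterated resolvent products, bounding each inner factor $(\partial h_{d,0})R(\zeta)$ in $\cB(L^2_{\rm per})$ uniformly in $k$ so that the single unavoidable $\omega$-loss occurs only at the outermost resolvent. Where you genuinely diverge is in how the two $k$-dependent resolvent bounds are established. You control $\|(-i\nabla_x+k)R(\zeta)\|_{\cB(L^2_{\rm per})}$ by an energy/quadratic-form argument ($T\ge 0$, $Tv = (\zeta-V_d)v-u$), and you prove $\|R(\zeta)\|_{\cB(L^2_{\rm per},H^2_{\rm per})}\lesssim\omega(k)$ by the direct Fourier-side elliptic inequality $\|u\|_{H^2_{\rm per}}\le C\omega(k)\|(T+1)u\|_{L^2_{\rm per}}$. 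The paper instead controls $(\partial_k h_{d,0})R(\zeta)$ by peeling off the uniformly bounded Fourier multiplier $(-i\nabla_x+k)\big((-i\nabla_x+k)^2-\partial_z^2+1\big)^{-1}$, and obtains the $\omega$-weighted bound on $(1-\Delta_x-\partial_z^2)^a R(\zeta)(1-\Delta_x-\partial_z^2)^b$ by first conjugating by $\tau_G$ to place $[k]$ in the compact set $\Omega^*$ (Eqns.\ \eqref{eq:triangle}--\eqref{eq:delta-h}, then \eqref{eq:29}). The two routes are equivalent in content; yours is slightly more elementary and does not invoke the $\tau$-equivariance reduction, while the paper's reduction to $[k]\in\Omega^*$ is a device reused elsewhere in the twisted-Weyl framework and is therefore more in keeping with the overall machinery. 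Your duality step for $S^\omega_{\twi}(\cB(H^{-2}_{\rm per},L^2_{\rm per}))$ is also legitimate: the operations $\partial_k,\partial_X,\px$ commute with taking adjoints, so every symbol-derivative of $\xi_f(h_{d,0})$ is self-adjoint and the $\cB(L^2_{\rm per},H^2_{\rm per})$ and $\cB(H^{-2}_{\rm per},L^2_{\rm per})$ norms coincide; the paper instead derives the two bounds symmetrically from \eqref{eq:29} with $(a,b)=(1,0)$ and $(0,1)$.
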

\begin{proof}
First, it is easy to see that
\begin{align*}
    \MoveEqLeft\sup_{k,X\in \R^2}\omega^{-1}(k,X)\| h_{d,0}(k,X)\|_{\cB(H^2_{\rm per},L^2_{\rm per})}<\infty.
\end{align*}
Next, we have
\eqref{eq:H-epsil},
\begin{align}\label{eq:partial-k-g}
 \partial_k h_{d,0}(k,X)= -i\nabla_x+k,\quad \partial_Xh_{d,0}(k,X)=\partial_X V_d(\cdot,X,\cdot),
\end{align}
and
\begin{align}\label{eq:partial-X-g}
\px h_{d,0}(k,X)=\partial_x V_d(\cdot,X,\cdot).
\end{align}
We deduce from Assumption \ref{def:MGpotential} that for all $\alpha,\beta,\gamma\in \N^2$, 
\begin{align*}
    \sup_{k,X\in \R^2}\omega^{-1}(k,X)\|\partial_k^\alpha\partial_X^\beta\px^\gamma h_{d,0}(k,X)\|_{\cB(H^2_{\rm per}, L^2_{\rm per})} < \infty.
\end{align*}
As $h_{d,0}$ is $\tau$-equivariant w.r.t. $k$ and $J\mathbb{L}$-periodic w.r.t. $X$, we finally have
\begin{align*}
     h_{d,0}\in S^{\omega}_{\twi}(\cB(H^2_{\rm per}, L^2_{\rm per})).
\end{align*}

\medskip

To show that $h_{d,0,f}\in S^{\omega}_{\twi}(
\mathcal{B}(H^2_{\rm per},L^2_{\rm per}))$, it suffices to prove 
\begin{align*}
    h_{d,0}-
h_{d,0,f}\in S^{\omega}_{\twi}(
\mathcal{B}(H^2_{\rm per},L^2_{\rm per})).
\end{align*} 
As $S^{\omega}_{\twi}(
\mathcal{B}(H^2_{\rm per},L^2_{\rm per}))\subset S^{\omega}_{\twi}(
\mathcal{B}(H^{-2}_{\rm per},L^2_{\rm per}))$, we only need to prove \eqref{eq:S-h-h-tilde}. Let $M$ be the constant in \eqref{eq:specbound}, and $\xi_{f}\in C^\infty(\R;\R)$ be such that 
$$
\mbox{Supp}(\widetilde \xi_f) \subset (-M-2,-\delta), \quad \mbox{and}  \quad \xi_{f}(t) =t(1-\chi_{f}(t)) \mbox{ for } t \in [-M,+\infty).
$$
It follows from Lemma \ref{lem:spec} and \eqref{eq:chi_f} that 
\[
\xi_f(h_{d,0})=(1-\chi_{f}(h_{d,0}))h_{d,0}=  h_{d,0}-
h_{d,0,f}.
\]
From this and Eqn. \eqref{eq:HF_formula}, we infer
\begin{align}\label{eq:h-htilde}
    h_{d,0}-h_{d,0,f}=-\frac{1}{\pi}\int_{\mathbb{C}}\overline{\partial}\widetilde\xi_f(\zeta)(\zeta-h_{d,0})^{-1}dL(\zeta),
\end{align}
where $\widetilde\xi_f$ is an almost analytic extension of $\xi_f$ such that $\widetilde\xi_f\in C^\infty_{\rm c}(\mathbb{C})$ and $|\overline{\partial}\widetilde\xi_f(\zeta)|=\mathcal{O}(|\Im \zeta|^\infty)$. 

Decomposing each $k \in \R^2$ as $k=[k]+G$ with $[k] \in \Omega^*$ and $G \in {\mathbb L}^*$, noticing that for any $r,s\in \mathbb{R}^2$ and $t \in \R$, we have
\begin{align}\label{eq:triangle}
    1+|r+s|^2+t^2\leq 1+2|r|^2+2|s|^2+t^2\leq  2(1+|s|^2+t^2)(1+|r|^2),
\end{align}
and using the $\tau$-equivariance of $h_{d,0}^c$ and the fact that 
$$
\tau_G^{-1}(1-\Delta_x-\partial^2_z)\tau_G= 1+|-i\nabla_x +G|^2-\partial^2_z,
$$
we obtain that for all $\zeta \in {\rm Supp}(\widetilde f)$ and $a,b\in \{0,1\}$ such that $0\leq a+b\leq 1$, 
\begin{align}\label{eq:29}
   \MoveEqLeft \|(1-\Delta_x-\partial^2_z)^a (\zeta-h_{d,0}(k,X))^{-1}(1-\Delta_x-\partial^2_z)^{b}\|_{\mathcal{B}(L^2_{\rm per})}\notag\\
    &=\|(1-\Delta_x-\partial^2_z)^a (\zeta-h_{d,0}([k]+G,X))^{-1}(1-\Delta_x-\partial^2_z)^{b}\|_{\mathcal{B}(L^2_{\rm per})}\notag\\
    &=\|(1-\Delta_x-\partial^2_z)^a\tau_G (\zeta-h_{d,0}([k],X))^{-1}\tau_G^{-1}(1-\Delta_x-\partial^2_z)^{b}\|_{\mathcal{B}(L^2_{\rm per})}\notag\\
    &\leq 4(1+|G|^2)^{a+b}\|(1-\Delta_x-\partial^2_z)^a (\zeta-h_{d,0}([k],X))^{-1}(1-\Delta_x-\partial^2_z)^{b}\|_{\mathcal{B}(L^2_{\rm per})}\notag\\
    &\leq C(1+|G|^2)^{a+b}|\Im \zeta|^{-1}\leq C'(1+|k|^2)^{a+b}|\Im \zeta|^{-1} = C' \omega(k,X)^{a+b} |\Im \zeta|^{-1}.
\end{align}
The first inequality holds because of \eqref{eq:triangle} while the second holds due to the facts that $0\leq a+b\leq 1$ and that there exist two constants $C_1,C_2 \in \R_+$ such that for all $(k,X) \in \Omega^* \times \R^2$ and $u\in H^2_{\rm per}$  
\begin{align}\label{eq:delta-h}
    \|(1-\Delta_x-\partial^2_z)u\|_{L^2_{\rm per}}\leq C_1\|h_{d,0}([k],X)u\|_{L^2_{\rm per}}+C_2\|u\|_{L^2_{\rm per}}.
\end{align}
Since $h_{d,0}(k,X)$ is self-adjoint, we have 
\begin{align*}
  \MoveEqLeft  \sup_{k,X\in\mathbb{R}^2} \omega^{-1} \| (\zeta-h_{d,0})^{-1}\|_{\mathcal{B}(L^2_{\rm per})} \le 
\sup_{k,X\in\mathbb{R}^2}\| (\zeta-h_{d,0})^{-1}\|_{\mathcal{B}(L^2_{\rm per})} = |\Im \zeta|^{-1}, 
\end{align*}
and applying \eqref{eq:29} first with $(a,b)=(1,0)$ then with $(a,b)=(0,1)$, gives
\begin{align*}
\MoveEqLeft \sup_{k,X\in\mathbb{R}^2}\omega^{-1}\| (\zeta-h_{d,0})^{-1}\|_{\mathcal{B}(L^2_{\rm per},H^2_{\rm per})}\\
 &= \sup_{k,X\in\mathbb{R}^2}\omega^{-1}\|(1-\Delta_x-\partial^2_z) (\zeta-h_{d,0})^{-1}\|_{\mathcal{B}(L^2_{\rm per})}\leq C|\Im \zeta|^{-1},
\end{align*}
and
 \begin{align*}
 \MoveEqLeft   \sup_{k,X\in\mathbb{R}^2}\omega^{-1}\| (\zeta-h_{d,0})^{-1}\|_{\mathcal{B}(H^{-2}_{\rm per},L^2_{\rm per})}\\
 &= \sup_{k,X\in\mathbb{R}^2}\omega^{-1}\| (\zeta-h_{d,0})^{-1}(1-\Delta_x-\partial^2_z)\|_{\mathcal{B}(L^2_{\rm per})}\leq C|\Im \zeta|^{-1}.
\end{align*}

Concerning the derivatives $\partial_k,\partial_X$ and $\px$ of $h_{d,0}$, there exists $C\in \R_+$ such that for all $k,X \in \R^2$ and $\zeta \in {\rm Supp}(\widetilde f)$,
\begin{align*}
 \MoveEqLeft   \|(\partial_X h_{d,0}) (\zeta- h^c_{d,0})^{-1} \|_{\mathcal{B}(L^2_{\rm per})}= \|\big(\partial_X V_{d}(\cdot,X,\cdot)\big) (\zeta- h_{d,0})^{-1} \|_{\mathcal{B}(L^2_{\rm per})} \leq C|\Im \zeta|^{-1}
\end{align*}
\begin{align*}
 \MoveEqLeft   \|\px h_{d,0} (\zeta- h_{d,0})^{-1} \|_{\mathcal{B}(L^2_{\rm per})}= \|\big(\partial_x  V_{d}(\cdot,X,\cdot)\big) (\zeta- h_{d,0})^{-1} \|_{\mathcal{B}(L^2_{\rm per})} \leq C|\Im \zeta|^{-1}
\end{align*}
and
\begin{align*}
&\|(\partial_k h_{d,0}) (\zeta- h_{d,0})^{-1} \|_{\mathcal{B}(L^2_{\rm per})} \\
 &\quad=\left\| \frac{(-i\nabla_x+k)}{(-i\nabla_x+k)^2-\partial_z^2+1} \big( (-i\nabla_x+k)^2-\partial_z^2+1\big) (\zeta- h_{d,0})^{-1} \right\|_{\mathcal{B}(L^2_{\rm per})} \leq C|\Im \zeta|^{-1},
\end{align*}

Denoting by $\partial:=(\partial_k,\partial_X,\px)$, we conclude that there exists a constant $C \in \R_+$ such that
\begin{equation} \label{eq:bound_dhhm1}  \sup_{k,X \in \R^2}
\|(\partial h_{d,0}) (\zeta- h_{d,0})^{-1} \|_{\mathcal{B}(L^2_{\rm per})}\leq C|\Im \zeta|^{-1}, 
\end{equation}
and therefore
\begin{align*}
 \MoveEqLeft   \sup_{k,X\in\mathbb{R}^2}\| \partial(\zeta-h_{d,0})^{-1}\|_{\mathcal{B}(L^2_{\rm per})} \\
 &\leq \sup_{k,X\in\mathbb{R}^2}\| (\zeta- h_{d,0} )^{-1}(\partial h_{d,0} ) (\zeta- h_{d,0})^{-1} \|_{\mathcal{B}(L^2_{\rm per})} \leq C|\Im \zeta|^{-2}.
\end{align*}
Analogously,
\begin{align*}
     \sup_{k,X\in\mathbb{R}^2}\omega^{-1}\| \partial(\zeta-h_{d,0} )^{-1}\|_{\mathcal{B}(H^{-2}_{\rm per},L^2_{\rm per})}&\leq C|\Im \zeta|^{-2}
\end{align*}
and
\begin{align*}
   \sup_{k,X\in\mathbb{R}^2}\omega^{-1}\| \partial(\zeta-h_{d,0} )^{-1}\|_{\mathcal{B}(L^2_{\rm per},H^2_{\rm per})}&\leq C|\Im \zeta|^{-2}.
\end{align*}
Higher-order derivatives can be bounded by the same arguments, leading to
\begin{align}
\sup_{k,X\in\mathbb{R}^2}\| \partial_k^\alpha\partial_X^\beta\px^\gamma(\zeta-h_{d,0})^{-1}\|_{\mathcal{B}(L^2_{\rm per})}&\leq C_{\alpha,\beta,\gamma}|\Im \zeta|^{-|\alpha|-|\beta|-|\gamma|-1}, \label{eq:31a}\\
\sup_{k,X\in\mathbb{R}^2}\omega^{-1}\| \partial_k^\alpha\partial_X^\beta\px^\gamma(\zeta-h_{d,0})^{-1}\|_{\mathcal{B}(L^2_{\rm per},H^2_{\rm per})}&\leq C_{\alpha,\beta,\gamma}|\Im \zeta|^{-|\alpha|-|\beta|-|\gamma|-1}, \label{eq:31b}\\
   \sup_{k,X\in\mathbb{R}^2}\omega^{-1}\| \partial_k^\alpha\partial_X^\beta\px^\gamma(\zeta-h_{d,0})^{-1}\|_{\mathcal{B}(H^{-2}_{\rm per},L^2_{\rm per})}&\leq C_{\alpha,\beta,\gamma}|\Im \zeta|^{-|\alpha|-|\beta|-|\gamma|-1}.\label{eq:31c}
\end{align}
Together with \eqref{eq:h-htilde} and the fact that $h_{d,0}$ is $\tau$-equivariant w.r.t. $k$ and $J\mathbb{L}$-periodic w.r.t. $X$, the above three bounds lead to \eqref{eq:S-h-h-tilde}, which completes the proof.
\end{proof}

\begin{lemma}\label{cor:trace}
The symbol $(h_{d,0}-h_{d,0,f})$ belongs to the class $S_{\twi}^{1}(\mathfrak{S}_1(L^2_{\rm per}))$.
\end{lemma}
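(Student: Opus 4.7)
The plan is to combine the uniform rank bound \eqref{eq:rankbound} of Lemma~\ref{lem:spec} with the $\cB(L^2_{\rm per})$-seminorm estimates already established in the proof of Lemma~\ref{lem:4.2}, inserting a chain of smooth cutoffs in the functional calculus of $h_{d,0}$ so that every term arising from a Leibniz expansion retains a low-rank factor.

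Since $h_{d,0}(k-G,X)=\tau_G h_{d,0}(k,X)\tau_G^{-1}$ for $G\in\L^*$ and $h_{d,0}(k,X+R)=h_{d,0}(k,X)$ for $R\in J\L$, the functional calculus immediately transfers these properties to $\xi_f(h_{d,0})=h_{d,0}-h_{d,0,f}$. Hence only the seminorm bounds $\cN^{1,\mathfrak{S}_1(L^2_{\rm per})}_{\alpha,\beta,\gamma}(h_{d,0}-h_{d,0,f})<\infty$ remain to be proved.

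Fix multi-indices $\alpha,\beta,\gamma\in\N^2$, set $N:=|\alpha|+|\beta|+|\gamma|$, and choose $\phi_1,\dots,\phi_{N+1}\in C^\infty_{\rm c}((-\infty,-\delta/2);[0,1])$ that all equal $1$ on an open neighborhood of $\Supp\xi_f\subset(-M-2,-\delta)$. Then $\phi_1(t)\cdots\phi_{N+1}(t)\xi_f(t)=\xi_f(t)$ for every $t\in\R$, and applying the functional calculus pointwise in $(k,X)$ gives
\begin{equation*}
\xi_f(h_{d,0}(k,X))=\phi_1(h_{d,0}(k,X))\cdots\phi_{N+1}(h_{d,0}(k,X))\,\xi_f(h_{d,0}(k,X)).
\end{equation*}
Applied with $E=-\delta/2$, the bound \eqref{eq:rankbound} shows that each of the $N+2$ factors above has rank at most $m:=m_{-\delta/2}$ uniformly in $(k,X)$. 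Moreover, since the $\phi_j$'s and $\xi_f$ are all compactly supported in $(-\infty,0)$, the Helffer--Sj\"ostrand formula combined with the resolvent estimates \eqref{eq:31a}--\eqref{eq:31c} derived in the proof of Lemma~\ref{lem:4.2} shows that each of $\phi_j(h_{d,0})$ and $\xi_f(h_{d,0})$ defines a symbol in $S^1_{\twi}(\cB(L^2_{\rm per}))$.

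Applying $\partial_k^\alpha\partial_X^\beta\px^\gamma$ to the above factorization and expanding by the Leibniz rule yields a finite linear combination of products in which the total derivative order $N$ is distributed among the $N+2$ operator factors. Since $N+2>N$, the pigeonhole principle guarantees that in every such product at least one factor carries no derivative, and hence has rank at most $m$ by the previous paragraph; consequently each product itself has rank at most $m$. The elementary inequality $\|A\|_{\mathfrak{S}_1(L^2_{\rm per})}\leq\rank(A)\,\|A\|_{\cB(L^2_{\rm per})}$, together with the uniform $\cB(L^2_{\rm per})$-bounds on each (possibly differentiated) factor, then provides a uniform $\mathfrak{S}_1$-bound on each term, and summing over the finitely many Leibniz terms yields the required seminorm estimate. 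The main subtlety is precisely the need to let the number of cutoffs grow with the derivative order $N$, so that the pigeonhole argument always produces at least one undifferentiated, low-rank factor.
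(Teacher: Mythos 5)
Your proof is correct but follows a genuinely different route from the paper's. The paper proves the uniform rank bound $\sup_{k,X}\rank(\partial_k^\alpha\partial_X^\beta\px^\gamma a)<\infty$ directly: for the $k,X$-derivatives it uses a difference-quotient argument together with the lower semicontinuity of rank under operator-norm limits, and for $\px$ it writes $a(k,X)=\sum_j\xi_f(\mu_j(k,X))\,|\psi_j(k,X)\rangle\langle\psi_j(k,X)|$ and uses $\px|\psi\rangle\langle\psi|=|\partial_x\psi\rangle\langle\psi|+|\psi\rangle\langle\partial_x\psi|$, obtaining a bound of order $2^{|\alpha|+|\beta|+|\gamma|}m_{-\delta}$. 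You sidestep the question of the rank of any \emph{differentiated} factor: by inserting $N+1$ auxiliary cutoffs $\phi_j$ equal to $1$ near $\Supp\xi_f$ and supported below $-\delta/2$, the pigeonhole keeps one undifferentiated, rank-$\leq m_{-\delta/2}$ factor in every Leibniz term, so the product has rank at most $m_{-\delta/2}$ independently of the derivative order; the remaining factors are then controlled only in $\cB(L^2_{\rm per})$. What your approach buys is that it avoids both the lower-semicontinuity-of-rank step and the $x$-regularity of the eigenfunctions $\psi_j$ that the paper needs for the $\px$-case; what it costs is the extra Helffer--Sj\"ostrand verification that each $\phi_j(h_{d,0})$ lies in $S^1_{\twi}(\cB(L^2_{\rm per}))$, which indeed follows from the resolvent bounds \eqref{eq:31a}--\eqref{eq:31c} (these hold for all $\zeta\in\C\setminus\R$, not just $\zeta\in\Supp\widetilde f$, so they apply on $\Supp\widetilde\phi_j$). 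Both arguments are valid; the paper's is shorter, yours is more mechanical and arguably more robust.
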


\begin{proof}
Let $a(k,X):=(h_{d,0}-h_{d,0,f})(k,X)$. By Lemma \ref{lem:4.2}, $a\in S^{1}_{\twi}(
\mathcal{B}(L^2_{\rm per}))$. It is easy to see that for any $k,X\in \mathbb{R}^d$,
\begin{align*}
  \|\partial_X^\alpha\partial_k^\beta \px^\gamma a\|_{\mathfrak{S}_1(L^2_{\rm per})}&\leq \|\partial_X^\alpha\partial_k^\beta\px^\gamma a\|_{\mathcal{B}(L^2_{\rm per})}\, \rank(\partial_X^\alpha\partial_k^\beta\px^\gamma a)\\
  &\leq C_{\alpha,\beta,\gamma}\, \rank(\partial_X^\alpha\partial_k^\beta\px^\gamma a).
\end{align*}
To end the proof, it suffices to show that $\sup_{k,X\in \R^2}\rank(\partial_X^\alpha\partial_k^\beta\px^\gamma a)<\infty$ for any $\alpha,\beta,\gamma\in \N^2$. 

For $\alpha=\beta=\gamma=0$, according to Lemma \ref{lem:spec}, we have
\begin{align*}
\sup_{k,X\in\mathbb{R}^2}\rank(a)
&\leq \sup_{k,X\in\mathbb{R}^2}\rank(\1_{(-\infty,\delta]}(h_{d,0}))\leq m_{-\delta} < \infty.
\end{align*}

Now we consider the derivatives w.r.t. $k,X\in \R^2$. For any $t\in [-1,1]\setminus\{0\}$, we have
\[
\sup_{k,X\in \R^2}\rank \left(\frac{a(k_1+t,k_2,X)-a(k_1,k_2,X)}{t} \right)\leq 2m_{-\delta}.
\]
Letting $t$ go to $0$, we get
\[
\sup_{k,X\in \R^2}\rank \left( \partial_{k_1}a\right) \leq 2m_{-\delta}.
\]
Thus,  for any $\alpha,\beta\in \mathbb{N}^d$, we have
\begin{align} 
  \sup_{k,X\in \R^2}\rank \left( \partial_k^\alpha\partial_X^\beta a \right)  \leq   2^{|\alpha|+|\beta|} \sup_{k,X\in \R^2}\rank \left( a \right)  \leq   2^{|\alpha|+|\beta|} m_{-\delta}. \label{eq:rank-a-kX}
\end{align}
Let $|\psi \rangle\langle \psi|$ be the orthonormal projector onto the vector space spanned by the function $\psi$. We decompose $a(k,X)$ as
\begin{align*}
    a(k,X)=\sum_{1}^{n(k,X)} \xi_f(\mu_j(k,X)) \left|\psi_j(k,X)\right>\left<\psi_j(k,X)\right|,
\end{align*}
where $n(k,X) \le m_{-\delta}$ is the rank of $\1_{(-\infty,-\delta]}(h_{d,0})$, $(\psi_j(k,X))_{j=1,\cdots,n(k,X)}$ form an $L^2_{\rm per}$-orthonormal basis of ${\rm Ran}(\1_{(-\infty,-\delta]}(h_{d,0}))$, and the notation $\mu_j$ was introduced in~\eqref{eq:mu_j}. By elementary elliptic regularity arguments, each $\psi_j(j,X)$ is in $C^\infty_{\rm b}(\R^3) \cap L^2_{\rm per}$. We thus have
\begin{align*}
    \px a(k,X)=\sum_{1}^{n(k,X)} \lambda_j(k,X)\Big(\left|\partial_x\psi_j\right>\left<\psi_j\right|+\left|\psi_j\right>\left<\partial_x\psi_j\right|\Big)(k,X),
\end{align*}
and therefore
\begin{align*}
    \sup_{k,X\in \R^2}\rank \left( \px^\gamma a \right)\leq  2^{|\gamma|}m_{-\delta}.
\end{align*}
From \eqref{eq:rank-a-kX} and proceeding as above, we have for all $\alpha,\beta,\gamma\in \mathbb{N}^d$,
\begin{align*}
 \MoveEqLeft   \sup_{k,X\in \R^2}\rank \left( \partial_k^\alpha\partial_X^\beta  \px^\gamma a \right) \leq   2^{|\alpha|+|\beta|+|\gamma|} m_{-\delta}.
\end{align*}
This implies that
\begin{align*}
  \sup_{k,X\in \R^2}\omega^{-1}\|\partial_X^\alpha\partial_k^\beta\px^\gamma a\|_{\mathfrak{S}_1(H)} < \infty.
\end{align*}
Hence $a\in S_{\twi}^{\omega}(\mathfrak{S}_1(L^2_{\rm per}))$.
\end{proof}

Analogous to Lemma \ref{lem:4.2}, we also have the following result. 
\begin{lemma}\label{lem:h-t-inversible}
 Let $\omega(k,X):=1+|k|^2$.  For any $\zeta\in \Supp(\widetilde{f})$, it holds
    \begin{align*}
        ( h_{d,0,f}-\zeta)^{-1}\in S^{\omega}_{\twi}(\mathcal{B}(L^2_{\rm per},H^2_{\rm per}))\bigcap S^{1}_{\twi}(\mathcal{B}(L^2_{\rm per})).
    \end{align*}
    Furthermore, for any $\alpha,\beta,\gamma\in \N^2$
    \begin{align}\label{eq:inverse-zeta1}
        \sup_{\zeta\in \Supp(\widetilde{f})}\cN_{\alpha,\beta,\gamma}^{\omega,\mathcal{B}(L^2_{\rm per},H^2_{\rm per})}( ( h_{d,0,f}-\zeta)^{-1})<\infty
    \end{align}
    and
    \begin{align}\label{eq:inverse-zeta2}
        \sup_{\zeta\in \Supp(\widetilde{f})}\cN_{\alpha,\beta,\gamma}^{1,\mathcal{B}(L^2_{\rm per})}( ( h_{d,0,f}-\zeta)^{-1})<\infty.
    \end{align}
\end{lemma}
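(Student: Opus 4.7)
The plan is to combine three ingredients: (i) spectral mapping to obtain uniform invertibility of $h_{d,0,f}-\zeta$, (ii) elliptic regularity plus the $\tau$-equivariance trick of Lemma~\ref{lem:4.2} for the $\cB(L^2_{\rm per},H^2_{\rm per})$-estimate, and (iii) an energy estimate to handle derivatives in $\cB(L^2_{\rm per})$ with weight~$1$.

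Since $h_{d,0,f}(k,X)=g(h_{d,0}(k,X))$ with $g(t):=\chi_f(t)t$, spectral mapping combined with Lemma~\ref{lem:spec} gives $\sigma(h_{d,0,f}(k,X))\subset[-2\delta,+\infty)$ uniformly in $(k,X)$: indeed $g\equiv 0$ on $(-\infty,-2\delta]$, $g=\mathrm{id}$ on $[-\delta,+\infty)$, and $g([-2\delta,-\delta])\subset[-2\delta,0]$ since $\chi_f\in[0,1]$ and $t\le 0$ there. Because $\dist(0,\Supp\widetilde f)=4\delta$ forces $\Supp f\subset(-\infty,-4\delta]$, a sufficiently thin almost-analytic extension $\widetilde f$ (which the construction allows) satisfies $\Supp\widetilde f\cap[-2\delta,+\infty)=\emptyset$, and we set $c:=\dist(\Supp\widetilde f,[-2\delta,+\infty))>0$. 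This yields $\sup_{\zeta,k,X}\|(h_{d,0,f}-\zeta)^{-1}\|_{\cB(L^2_{\rm per})}\le c^{-1}$. The $\cB(L^2_{\rm per},H^2_{\rm per})$-bound with weight $\omega$ then follows from the identity
\begin{equation*}
h_{d,0}(h_{d,0,f}-\zeta)^{-1}=(h_{d,0}-h_{d,0,f})(h_{d,0,f}-\zeta)^{-1}+1+\zeta(h_{d,0,f}-\zeta)^{-1},
\end{equation*}
the uniform $\cB(L^2_{\rm per})$-boundedness of $h_{d,0}-h_{d,0,f}$ (Lemma~\ref{lem:4.2}), the boundedness of $|\zeta|$ on $\Supp\widetilde f$, and the elliptic estimate \eqref{eq:delta-h} combined with the $\tau$-equivariance manipulation (which produces the factor $\omega(k,X)$) used in the proof of Lemma~\ref{lem:4.2}.

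For derivatives I use $\partial(h_{d,0,f}-\zeta)^{-1}=-(h_{d,0,f}-\zeta)^{-1}(\partial h_{d,0,f})(h_{d,0,f}-\zeta)^{-1}$, extended by Leibniz to higher orders. The main obstacle is that since $\partial_k h_{d,0}=-i\nabla_x+k$ is unbounded on $L^2_{\rm per}$, a naive three-factor bound via $\cB(L^2_{\rm per},H^2_{\rm per})\cdot\cB(H^2_{\rm per},L^2_{\rm per})\cdot\cB(L^2_{\rm per})$ only gives growth $\omega^2$, which is incompatible with membership in $S^1_\twi(\cB(L^2_{\rm per}))$. To resolve this I invoke the energy estimate
\begin{equation*}
\|(-i\nabla_x+k)w\|^2_{L^2_{\rm per}}+\|\partial_z w\|^2_{L^2_{\rm per}}=2\langle w,(h_{d,0}(k,X)-V_d)w\rangle_{L^2_{\rm per}}\le 2\langle w,h_{d,0}w\rangle_{L^2_{\rm per}}+C\|w\|^2_{L^2_{\rm per}},
\end{equation*}
with $C$ independent of $(k,X)$, applied to $w=(h_{d,0,f}-\zeta)^{-1}v$. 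Since $h_{d,0}w=v+\zeta w+(h_{d,0}-h_{d,0,f})w$ gives $|\langle w,h_{d,0}w\rangle|\le C\|v\|^2$ uniformly, this yields $\sup_{\zeta,k,X}\|(\partial_k h_{d,0})(h_{d,0,f}-\zeta)^{-1}\|_{\cB(L^2_{\rm per})}<\infty$. Writing $\partial_k h_{d,0,f}=\partial_k h_{d,0}-\partial_k\xi_f(h_{d,0})$ via the Helffer--Sj\"ostrand representation \eqref{eq:h-htilde} and applying the estimates \eqref{eq:31a}, the correction $\partial_k\xi_f(h_{d,0})$ is uniformly $\cB(L^2_{\rm per})$-bounded; hence $(\partial_k h_{d,0,f})(h_{d,0,f}-\zeta)^{-1}$ and so $\partial_k(h_{d,0,f}-\zeta)^{-1}$ are uniformly bounded in $\cB(L^2_{\rm per})$. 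The derivatives in $X$ and $\px$ involve multiplication by the bounded smooth functions $\partial_X V_d$ and $\partial_x V_d$ (plus Helffer--Sj\"ostrand corrections) and are handled analogously but more directly. Higher-order derivatives and the corresponding $\cB(L^2_{\rm per},H^2_{\rm per})$-estimates follow by induction, iterating the same energy/Helffer--Sj\"ostrand machinery. Finally, the $\tau$-equivariance in $k$ and $J\mathbb{L}$-periodicity in $X$ of $(h_{d,0,f}-\zeta)^{-1}$ are inherited from $h_{d,0}$, and all estimates above are uniform in $\zeta\in\Supp\widetilde f$.
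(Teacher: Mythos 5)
Your proof is correct, and while the overall scaffolding (uniform resolvent bound via the spectral gap, elliptic regularity plus $\tau$-equivariance for the $\cB(L^2_{\rm per},H^2_{\rm per})$-bound, then resolvent differentiation) mirrors the paper's, you handle the one genuinely delicate step --- controlling the unbounded symbol derivative $\partial_k h_{d,0} = -i\nabla_x+k$ against $(h_{d,0,f}-\zeta)^{-1}$ in $\cB(L^2_{\rm per})$ without incurring $\omega$-growth --- by a different device. The paper's route is purely operator-algebraic: it first shows $\sup_{k,X}\|(\partial h_{d,0})(h_{d,0}-i)^{-1}\|_{\cB(L^2_{\rm per})}<\infty$ by the Fourier-multiplier bound from Lemma~\ref{lem:4.2}, notes $\|(h_{d,0}-i)(h_{d,0,f}-\zeta)^{-1}\|_{\cB(L^2_{\rm per})}$ is bounded via the identity $(h_{d,0}-i)(h_{d,0,f}-\zeta)^{-1}=1+(h_{d,0}-h_{d,0,f}+\zeta-i)(h_{d,0,f}-\zeta)^{-1}$, and composes. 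You instead run a quadratic-form energy estimate directly on $w=(h_{d,0,f}-\zeta)^{-1}v$, exploiting the specific form of $h_{d,0}$; this is a more analytic, perhaps more transparent argument for the $\partial_k$-derivative, though it is tailored to $\partial_k$, whereas the paper's factoring through $(h_{d,0}-i)^{-1}$ treats $\partial_k$, $\partial_X$ and $\px$ uniformly. Both are standard tools and both close the gap you rightly identify. One small remark: the ``sufficiently thin almost-analytic extension'' device is unnecessary (and sits awkwardly with the fact that $\delta$ is defined in terms of $\Supp\widetilde{f}$): since $|\zeta|\ge 4\delta$ for all $\zeta\in\Supp\widetilde{f}$, one has $\dist(\zeta,[-2\delta,+\infty))\ge 2\delta$ automatically, which is exactly the paper's bound \eqref{h-elle}. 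Finally, ``higher-order derivatives follow by induction'' would benefit from one sentence noting that $\partial_k^\alpha h_{d,0}$ is the identity for $|\alpha|=2$ and zero for $|\alpha|\ge 3$, so that the first $k$-derivative is the only unbounded contribution one ever has to absorb.
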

\begin{proof} Let $\zeta \in \Supp(\widetilde{f})$.
 We first deduce from \eqref{eq:chi_f}-\eqref{eq:h-t} that
\begin{align}\label{h-elle}
    \sup_{\substack{  k,X\in\mathbb{R}^2\\\zeta\in \Supp(\widetilde{f})}}\|(h_{d,0,f}-\zeta)^{-1}\|_{\mathcal{B}(L^2_{\rm per})}\leq \frac{1}{2\delta}.
\end{align}
Proceeding as in the proof of \eqref{eq:29},  we obtain
\begin{align}
   \MoveEqLeft \|(1-\Delta_x-\partial^2_z) ( h_{d,0,f}(k,X)-\zeta)^{-1}\|_{\mathcal{B}(L^2_{\rm per})} \nonumber \\ & \leq 2 \omega(k,X) \|(1-\Delta_x-\partial^2_z)  (h_{d,0,f} ([k],X)-\zeta)^{-1}\|_{\mathcal{B}(L^2_{\rm per})}. \label{eq:bound_hd0f}
\end{align}
Using \eqref{eq:delta-h} and the fact that $(h_{d,0}- h_{d,0,f} )\in S^{1}_{\twi}(\mathcal{B}(L^2_{\rm per}))$ (see Lemma \ref{lem:4.2}), we get
\begin{align}\label{eq:delta-h'}
\forall u \in H^2_{\rm per}, \quad     \|(1-\Delta_x-\partial^2_z)u\|_{L^2_{\rm per}}&\leq C_1\|h_{d,0}([k],X)u\|_{L^2_{\rm per}}+C_2\|u\|_{L^2_{\rm per}}\notag\\
    &\leq C_1\|h_{d,0,f}([k],X)u\|_{L^2_{\rm per}}+C_3\|u\|_{L^2_{\rm per}},
\end{align}
where the constants $C_1$, $C_2$, $C_3$ are independent of $k$ and $X$.
Gathering together \eqref{h-elle}, \eqref{eq:bound_hd0f} and \eqref{eq:delta-h'}, we obtain 
\begin{align} \label{eq:bound_hdof-zeta-1}
\sup_{\substack{  k,X\in\mathbb{R}^2\\\zeta\in \Supp(\widetilde{f})}}\omega^{-1} \|(h_{d,0,f}-\zeta)^{-1}\|_{\mathcal{B}(L^2_{\rm per},H^2_{\rm per})} < \infty.
\end{align}
To bound the derivatives $\partial_k,\partial_X$ and commutator $\px$ of $( h_{d,0,f}(k,X)-\zeta)^{-1}$, we proceed as in Lemma \ref{lem:4.2}. Recall that $\partial=(\partial_k,\partial_X,\px)$. Adapting the proof of \eqref{eq:bound_dhhm1}, we get
$$
\sup_{\substack{  k,X\in\mathbb{R}^2\\\zeta\in \Supp(\widetilde{f})  k,X\in\mathbb{R}^2}} \| (\partial h_{d,0})(h_{d,0}-i)^{-1} \|_{\mathcal{B}(L^2_{\rm per})} <+ \infty.
$$
Together with the fact that $(h_{d,0}- h_{d,0,f} )\in S^{1}_{\twi}(\mathcal{B}(L^2_{\rm per}))$, this yields
\begin{align*}
&\sup_{\substack{  k,X\in\mathbb{R}^2\\\zeta\in \Supp(\widetilde{f})  k,X\in\mathbb{R}^2}}  \|    (\partial h_{d,0,f}) (h_{d,0}-i)^{-1} \|_{\mathcal{B}(L^2_{\rm per})} \\
& \qquad =\sup_{\substack{  k,X\in\mathbb{R}^2\\\zeta\in \Supp(\widetilde{f})}}\|(\partial h_{d,0}) (h_{d,0}-i)^{-1}-\partial (h_{d,0}-h_{d,0,f} )(h_{d,0}-i)^{-1} 
\|_{\mathcal{B}(L^2_{\rm per})} \\
& \qquad \leq \sup_{\substack{  k,X\in\mathbb{R}^2\\\zeta\in \Supp(\widetilde{f})}}\left( \|(\partial h_{d,0}) (h_{d,0}-i)^{-1} \|_{\mathcal{B}(L^2_{\rm per})} + \| \partial (h_{d,0}-h_{d,0,f} ) \|_{\mathcal{B}(L^2_{\rm per})} \right)  < \infty
\end{align*}
and, using \eqref{h-elle},
\begin{align*}
&\sup_{\substack{  k,X\in\mathbb{R}^2\\\zeta\in \Supp(\widetilde{f})}} \|      (h_{d,0}-i)( h_{d,0,f}-\zeta)^{-1}  \|_{\mathcal{B}(L^2_{\rm per})} \\ & \qquad = \sup_{\substack{  k,X\in\mathbb{R}^2\\\zeta\in \Supp(\widetilde{f})}}\| 1+(h_{d,0}- h_{d,0,f} +\zeta-i)(h_{d,0,f}-\zeta)^{-1}\|_{\mathcal{B}(L^2_{\rm per})} < \infty.
\end{align*}
Combining the above two bounds yields
\begin{align*}
 \sup_{\substack{  k,X\in\mathbb{R}^2\\\zeta\in \Supp(\widetilde{f})}}\|  (\partial\,h_{d,0,f}) ( h_{d,0,f} -\zeta)^{-1} \|_{\mathcal{B}(L^2_{\rm per})} < \infty.
\end{align*}
Using respectively \eqref{h-elle} and \eqref{eq:bound_hdof-zeta-1}, this implies that 
\begin{align*}
 \MoveEqLeft \sup_{\substack{  k,X\in\mathbb{R}^2\\\zeta\in \Supp(\widetilde{f})  }}\|\partial (h_{d,0,f}-\zeta)^{-1}\|_{\mathcal{B}(L^2_{\rm per})} \\
 &=\sup_{\substack{  k,X\in\mathbb{R}^2\\\zeta\in \Supp(\widetilde{f})}}\| (h_{d,0,f}-\zeta)^{-1} (\partial\,h_{d,0,f}) ( h_{d,0,f} -\zeta)^{-1} \|_{\mathcal{B}(L^2_{\rm per})} < \infty,
\end{align*}
and
\begin{align*}
\MoveEqLeft \sup_{\substack{  k,X\in\mathbb{R}^2\\\zeta\in \Supp(\widetilde{f})  }}\omega^{-1} \|\partial (h_{d,0,f}-\zeta)^{-1}\|_{\mathcal{B}(L^2_{\rm per},H^2_{\rm per})} \\
   &= \sup_{\substack{  k,X\in\mathbb{R}^2\\\zeta\in \Supp(\widetilde{f})}}\omega^{-1}\| ( h_{d,0,f}-\zeta)^{-1} \, (\partial\,h_{d,0,f}) \, ( h_{d,0,f} -\zeta)^{-1}\|_{\mathcal{B}(L^2_{\rm per},H^2_{\rm per})} \\
   & \le\sup_{\substack{  k,X\in\mathbb{R}^2\\\zeta\in \Supp(\widetilde{f})}}\omega^{-1}\| ( h_{d,0,f}-\zeta)^{-1} \|_{\mathcal{B}(L^2_{\rm per},H^2_{\rm per})} \, \|(\partial\,h_{d,0,f}) \, ( h_{d,0,f} -\zeta)^{-1}\|_{\mathcal{B}(L^2_{\rm per})} < \infty.
\end{align*}
Higher order derivatives can be bounded by the same arguments, which leads to the desired result.
\end{proof}

\section{Weyl calculus for moir\'e systems} 
\label{sec:Weyl-moire}

\subsection{Weyl quantization of operator-valued symbols}
\label{sec:Weyl}

This section collects some results on Weyl quantization of operator-valued symbols. Most of them are already known, or are simple corollaries of results in \cite{dimassi1999spectral,gerard1991mathematical,gerard1998scattering,panati2003effective,teufel2003adiabatic}. 

\medskip

Let $\cE_1$, $\cE_2$ and $\cE$ be Banach spaces of linear operators such that 
\begin{align}\label{product_operators_E1_E2}
   \forall a\in\mathcal{E}_1,b\in \mathcal{E}_2, \quad ab \in \mathcal{E} \quad \mbox{and} \quad  \|ab\|_{\mathcal{E}}\leq \|a\|_{\mathcal{E}_1}\|b\|_{\mathcal{E}_2}.
\end{align}
Let $\omega: \R^2 \times \R^2 \to \R_+$ be an order function and let $\epsilon_0 > 0$. 
For any symbols $a_\bullet\in S^{\omega_1,\epsilon_0}(\mathcal{E}_1)$ and $b_\bullet\in S^{\omega_2,\epsilon_0}(\mathcal{E}_2)$, the Moyal product of $a_\bullet$ and $b_\bullet$ is defined as follows (see e.g. \cite[Theorem 4.1]{zworski2012semiclassical}):
\begin{align}\label{eq:Weyl-product}
({a_\epsilon}\# {b_\epsilon})_\epsilon(Y):=\frac{1}{(\pi \epsilon)^{4}}\int\limits_{\mathbb{R}^{4}}\int\limits_{\mathbb{R}^{4}}e^{\frac{2i}{\epsilon}\sigma(Y_1,Y_2)}{a_\epsilon}(Y+Y_1){b_\epsilon}(Y+Y_2)\,dY_1dY_2,
\end{align}
with $Y:=(k,X)$, $\sigma(Y_1,Y_2)=\left<J_2 Y_1,Y_2\right>$ and
\begin{align*}
    J_n:=\begin{pmatrix}
0&I_n\\
-I_n&0
\end{pmatrix}.
\end{align*}
It holds $({a_\epsilon}\#{b_\epsilon})_\bullet \in S^{\omega_1\omega_2,\epsilon_0}(\mathcal{E})$ and
\begin{align*}
  \forall 0 < \epsilon \le \epsilon_0, \quad  \Op(a_\epsilon)\Op(b_\epsilon)=\Op(({a}\#{b})_\epsilon).
\end{align*}

\begin{lemma}  \label{lem:weyl-compact}
Let $U$ be a bounded closed set of $\mathbb{R}^2$ and $\epsilon_0 > 0$. Let $\mathcal{E}_1$, $\mathcal{E}_2$ and $\mathcal{E}$ as in Definition~\ref{def:tau-equi},  $a_\bullet\in S^{1,\epsilon_0}(\mathcal{E}_1)$ and $b_\bullet\in S^{1,\epsilon_0}(\mathcal{E}_2)$ satisfying $b_\epsilon(k,X)=0$ for all $\epsilon\in (0,\epsilon_0]$, $k \in \R^2$ and $X\not\in U$. Then, for all $\alpha,\beta\in\mathbb{N}^2$, $j\in \mathbb{N}$ and $C > 0$, there is $C_j \in \R_+$ such that for all $\epsilon\in (0,\epsilon_0]$, $k\in \R^2$ and all $\dist(X,U)\geq \frac{1}{C}$,
\begin{align}\label{eq:disjoint}
 \dist(X,U)\geq \frac{1}{C} \quad \Longrightarrow \quad  \|\partial_k^{\alpha}\partial_X^{\beta}(a_\epsilon\# b_\epsilon)_\epsilon (k,X)\|_{\mathcal{E}}\leq C_{j}\frac{\epsilon^j}{\dist(X,U)^j}.
\end{align}
\end{lemma}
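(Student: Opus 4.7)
This is a classical non-stationary phase estimate for the oscillatory integral defining the Moyal product \eqref{eq:Weyl-product}. The geometric input is that on the support of $Y_2 = (k_2,X_2) \mapsto b_\epsilon(k+k_2,\, X+X_2)$, we have $X+X_2 \in U$, hence $|X_2| \geq \dist(X,U) \geq 1/C > 0$. Since $\nabla_{k_1}\sigma(Y_1,Y_2) = -X_2$, this lower bound allows iterated integration by parts in $k_1$ to gain factors of $\epsilon/\dist(X,U)$.

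First, I would reduce to $\alpha = \beta = 0$ using the Leibniz rule $\partial_Y(a_\epsilon \# b_\epsilon) = (\partial_Y a_\epsilon) \# b_\epsilon + a_\epsilon \# (\partial_Y b_\epsilon)$ (valid for $Y = k_i$ or $X_i$): differentiation preserves both the symbol class $S^{1,\epsilon_0}$ and the $X$-support of the second factor, so the general case is a finite sum of instances of the $\alpha=\beta=0$ case.

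The core step is the integration by parts. Defining $L_{X_2} := \frac{i\epsilon}{2|X_2|^2} X_2 \cdot \nabla_{k_1}$, we have $L_{X_2} e^{2i\sigma/\epsilon} = e^{2i\sigma/\epsilon}$ wherever $X_2 \neq 0$. To regularize the singularity at the origin, I would fix a cutoff $\phi \in C^\infty(\R^2;\R)$ with $\phi \equiv 1$ on $\{|X_2| \geq 1/C\}$ and $\phi \equiv 0$ near the origin, and work with $\widetilde L := \phi(X_2) L_{X_2}$; this modification does not change the integral because $\phi(X_2) \equiv 1$ on the effective support of $b_\epsilon(Y+Y_2)$. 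Iterating $\widetilde L^t = -\widetilde L$ (formal $k_1$-adjoint; the coefficient is $k_1$-independent) $j$ times and using that $b_\epsilon$ is $k_1$-independent gives
\begin{align*}
(a_\epsilon \# b_\epsilon)(Y) = \frac{\epsilon^j}{(\pi\epsilon)^4} \sum_{|\gamma|=j} c_\gamma \int e^{2i\sigma/\epsilon}\, \frac{\phi(X_2)^j X_2^\gamma}{|X_2|^{2j}} \, (\partial_k^\gamma a_\epsilon)(Y+Y_1) \, b_\epsilon(Y+Y_2) \, dY_1 \, dY_2.
\end{align*}
Setting $\widetilde b_\epsilon^{X,\gamma}(k_2',X_2') := \phi(X_2'-X)^j (X_2'-X)^\gamma |X_2'-X|^{-2j} \, b_\epsilon(k_2',X_2')$, a direct substitution shows that the integral equals $(\partial_k^\gamma a_\epsilon \# \widetilde b_\epsilon^{X,\gamma})(Y)$. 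For $\dist(X,U) \geq 1/C$, the symbol $\widetilde b_\epsilon^{X,\gamma}$ lies in $S^{1,\epsilon_0}(\cE_2)$ with seminorms bounded, uniformly in the parameter $X$ and in $\epsilon$, by $C_{j,N} \dist(X,U)^{-j}$ (derivatives of $(X_2'-X)^\gamma/|X_2'-X|^{2j}$ on the effective support are $\lesssim \dist(X,U)^{-j-\text{order}}$, and $\dist(X,U)^{-\text{order}} \leq C^{\text{order}}$ is harmless). The continuity of $\# : S^{1,\epsilon_0}(\cE_1) \times S^{1,\epsilon_0}(\cE_2) \to S^{1,\epsilon_0}(\cE)$ (standard in operator-valued Weyl calculus, see e.g. \cite{teufel2003adiabatic,panati2003effective}) then yields $\|(a_\epsilon \# b_\epsilon)(Y)\|_\cE \leq C_j \epsilon^j / \dist(X,U)^j$, which is the claim.

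The main technical obstacle is justifying the integration by parts rigorously, since the oscillatory integral \eqref{eq:Weyl-product} is not absolutely convergent for general symbols in $S^{1,\epsilon_0}$. Two standard routes are available: (i) approximate $a_\epsilon$ and $b_\epsilon$ by Schwartz symbols, perform the calculation there, and pass to the limit using continuity of $\#$ in the Fr\'echet topologies of $S^{1,\epsilon_0}$; or (ii) perform additional non-stationary phase integrations by parts in $X_1$ and $k_2$ (exploiting $\nabla_{X_1}\sigma = k_2$ and $\nabla_{k_2}\sigma = X_1$) to produce absolutely convergent integrals at the cost of higher-order $\epsilon$-factors that are discarded in the final bound.
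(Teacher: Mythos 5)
Your argument is correct and reproduces exactly the non-stationary-phase argument underlying the reference the paper cites: the paper's entire proof is a one-line citation to Dimassi--Sj\"ostrand, Proposition~9.5, which is proved by precisely this integration by parts in $k_1$ using $\nabla_{k_1}\sigma = -X_2$ and the lower bound $|X_2|\ge \dist(X,U)$ on the support of $b_\epsilon(Y+Y_2)$. You correctly identify the geometric input, the regularizing cutoff $\phi$, the homogeneity bookkeeping ($|X_2|^{|\gamma|-2j}=|X_2|^{-j}$ when $|\gamma|=j$), and the need to justify the oscillatory integral manipulations, so the proposal fills in the details that the paper delegates to the textbook.
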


\begin{proof}
  The result readily follows from \cite[Proposition 9.5]{dimassi1999spectral}.
\end{proof}

The next results are concerned with the inverse of pseudodifferential operators with $\tau$-equivariant elliptic operator-valued symbols.

\begin{definition}[Elliptic operator-valued symbols] Let $\cH_1$ and $\cH_2$ be two separable Hilbert spaces and $\epsilon_0 > 0$. The symbol $a_\bullet\in S^{1,\epsilon_0}(\mathcal{B}(\cH_1,\cH_2))$ is called \textit{elliptic} if there exists a constant $\nu>0$ such that
\[
\inf_{\epsilon\in (0,\epsilon_0]}\inf_{k,X\in\mathbb{R}^2}\|a_\epsilon(k,X)\|_{\mathcal{B}(\cH_1,\cH_2)}\geq \nu>0.
\]
\end{definition}

\begin{proposition}\label{prop:ellip}
 Let $\epsilon_0 > 0$ and $a_\bullet\in S_{\tau}^{1,\epsilon_0}(\mathcal{B}(L^2_{\rm per}))$ be elliptic. Then there exist $0<\epsilon_0' \le \epsilon_0$ and $b_\bullet\in S_{\tau}^{1,\epsilon_0'}(\mathcal{B}(L^2_{\rm per}))$ such that 
\begin{align}
    \Op(a_\epsilon)\Op(b_\epsilon)=\Op(b_\epsilon)\Op(a_\epsilon)=\1_{\cH},
\end{align}
with 
\[
\sup_{\epsilon\in (0,\epsilon_0']}\|\Op(b_\epsilon)\|_{\mathcal{B}(\mathcal{H})}=\sup_{\epsilon\in (0,\epsilon_0']}\|\Op(a_\epsilon)^{-1}\|_{\mathcal{B}(\mathcal{H})} < \infty.
\]
\end{proposition}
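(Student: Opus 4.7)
The plan is a standard semiclassical parametrix construction, adapted to operator-valued $\tau$-equivariant symbols. Throughout, I interpret the ellipticity hypothesis as saying that $a_\epsilon(k,X)$ is invertible on $L^2_{\rm per}$ with $\sup_{\epsilon,k,X}\|a_\epsilon(k,X)^{-1}\|_{\mathcal{B}(L^2_{\rm per})} \le \nu^{-1}$, uniformly in $\epsilon \in (0,\epsilon_0]$, $k,X\in\mathbb R^2$.

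\textbf{Step 1: The principal parametrix.} Define $b^{(0)}_\epsilon(k,X) := a_\epsilon(k,X)^{-1}$. By the ellipticity assumption, $b^{(0)}_\bullet$ is uniformly bounded. Differentiating the identity $a_\epsilon \, b^{(0)}_\epsilon = \mathbb{1}$ yields recursive formulas of the form $\partial^\alpha_k \partial^\beta_X b^{(0)}_\epsilon = -b^{(0)}_\epsilon \, (\partial^\alpha_k \partial^\beta_X a_\epsilon) \, b^{(0)}_\epsilon + (\text{lower-order terms})$, and an induction on $|\alpha|+|\beta|$ bounds every derivative of $b^{(0)}_\epsilon$ in terms of the seminorms $\cN^{1,\mathcal{B}(L^2_{\rm per}),\epsilon_0}_{\alpha',\beta'}(a_\bullet)$ and powers of $\nu^{-1}$. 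Since inverting a $\tau$-equivariant operator preserves $\tau$-equivariance (from $a_\epsilon(k-G,X)=\tau_G a_\epsilon(k,X)\tau_G^{-1}$), we obtain $b^{(0)}_\bullet \in S^{1,\epsilon_0}_\tau(\mathcal{B}(L^2_{\rm per}))$.

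\textbf{Step 2: Moyal product remainder.} Using the asymptotic expansion of the Moyal product \eqref{eq:Weyl-product} (the standard stationary phase expansion), one has
\begin{equation*}
(a_\epsilon \# b^{(0)}_\epsilon)(k,X) = a_\epsilon(k,X) \, b^{(0)}_\epsilon(k,X) + \epsilon \, r_\epsilon(k,X) = \mathbb{1} + \epsilon \, r_\epsilon(k,X),
\end{equation*}
where $r_\bullet \in S^{1,\epsilon_0}_\tau(\mathcal{B}(L^2_{\rm per}))$ with seminorms controlled by those of $a_\bullet$ and $b^{(0)}_\bullet$; the $\tau$-equivariance of $r_\bullet$ follows because the Moyal product preserves $\tau$-equivariance (the phase $e^{2i\sigma(Y_1,Y_2)/\epsilon}$ is translation-invariant in $X$, and the $\tau_G$ intertwining of $a_\epsilon$ and $b^{(0)}_\epsilon$ carries through the integral).

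\textbf{Step 3: Neumann inversion of $\mathbb{1}+\epsilon r_\epsilon$ for the Moyal product.} Because $\sup_\epsilon \cN^{1,\mathcal{B}(L^2_{\rm per}),\epsilon_0}_{0,0}(r_\bullet) =: M < \infty$ and, more generally, each seminorm $\cN^{1,\mathcal{B}(L^2_{\rm per}),\epsilon_0}_{\alpha,\beta}$ of the iterated Moyal products $r_\epsilon^{\# n}$ can be bounded by $C_{\alpha,\beta}^n$ for a constant depending polynomially on finitely many seminorms of $r_\bullet$ (this is the usual consequence of the continuity of $\#$ on $S^{1,\epsilon_0}$, applied iteratively), the formal Neumann series
\begin{equation*}
s_\epsilon := \sum_{n=0}^{+\infty} (-\epsilon)^n r_\epsilon^{\# n}
\end{equation*}
converges in each seminorm for $\epsilon \le \epsilon_0'$ with $\epsilon_0'$ small enough, yielding $s_\bullet \in S^{1,\epsilon_0'}_\tau(\mathcal{B}(L^2_{\rm per}))$ with $(\mathbb{1}+\epsilon r_\epsilon) \# s_\epsilon = s_\epsilon \# (\mathbb{1}+\epsilon r_\epsilon) = \mathbb{1}$.

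\textbf{Step 4: Two-sided inverse and boundedness.} Set $b_\epsilon := b^{(0)}_\epsilon \# s_\epsilon$. Then $a_\epsilon \# b_\epsilon = (a_\epsilon \# b^{(0)}_\epsilon) \# s_\epsilon = (\mathbb{1} + \epsilon r_\epsilon) \# s_\epsilon = \mathbb{1}$. An analogous construction starting from a left approximate inverse $\tilde b^{(0)}_\epsilon := a_\epsilon^{-1}$ produces a left Moyal inverse $\tilde b_\epsilon$; the standard algebraic argument $\tilde b_\epsilon = \tilde b_\epsilon \# a_\epsilon \# b_\epsilon = b_\epsilon$ shows left and right inverses coincide, giving $\mathrm{Op}_\epsilon(b_\epsilon) = \mathrm{Op}_\epsilon(a_\epsilon)^{-1}$. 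Finally, uniform $\mathcal{B}(\mathcal H)$-boundedness of $\mathrm{Op}_\epsilon(b_\epsilon)$ follows from the Calderón–Vaillancourt theorem for operator-valued symbols in the class $S^{1,\epsilon_0'}(\mathcal{B}(L^2_{\rm per}))$, which bounds $\|\mathrm{Op}_\epsilon(b_\epsilon)\|_{\mathcal{B}(\mathcal H)}$ by finitely many seminorms $\cN^{1,\mathcal{B}(L^2_{\rm per}),\epsilon_0'}_{\alpha,\beta}(b_\bullet)$, all of which are finite by Steps 1–3.

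\textbf{Main obstacle.} The delicate point is the simultaneous control, uniformly in $\epsilon$, of all seminorms of the iterated Moyal products $r_\epsilon^{\# n}$, which is needed to justify convergence of the Neumann series in the Fréchet topology of $S^{1,\epsilon_0'}$ (not merely in $\mathcal{B}(\mathcal H)$). This rests on showing that $\#$ is continuous on $S^{1,\epsilon_0}$ with seminorm estimates of multiplicative form, after which the rest of the argument is essentially bookkeeping; one must also verify that the quantization rule \eqref{eq:Weyl_quantization}, with its reversed phase sign discussed in Remark~\ref{rem:weyl}, does not disrupt the standard calculus—this is immediate since the sign change amounts to the change of variables $X \mapsto -X$ on symbols, under which Moyal product, symbol classes, and $\tau$-equivariance are all stable.
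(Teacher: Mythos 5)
Your approach is genuinely different from the paper's. The paper simply cites \cite[Page 99--100]{dimassi1999spectral} (alternatively \cite[Theorem 8.3]{zworski2012semiclassical}) for the existence of $b_\bullet\in S^{1,\epsilon_0'}(\cB(L^2_{\rm per}))$ with $a_\epsilon\# b_\epsilon = \1_{L^2_{\rm per}}$ and then only proves the additional $\tau$-equivariance, using the identity $\Op(\tau_G b_\epsilon\tau_G^{-1})=\Op(b_\epsilon(\bullet-G,\bullet))$ together with Beals' theorem to pass from equality of operators to equality of symbols. You instead attempt to rebuild the inverse from scratch via a Neumann series of symbols $s_\epsilon=\sum_n(-\epsilon)^n r_\epsilon^{\#n}$.

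There is a genuine gap in your Step 3. The claim that $\cN_{\alpha,\beta}(r_\epsilon^{\#n})\le C_{\alpha,\beta}^n$ with $C_{\alpha,\beta}$ \emph{depending only on finitely many (fixed) seminorms of $r_\bullet$} is not what the continuity of $\#$ on $S(1)$ gives. The composition estimate for the Moyal product controls a given seminorm $\cN_{\alpha,\beta}(a\#b)$ by seminorms of $a$ and $b$ of \emph{strictly higher} order (the derivatives needed to make the oscillatory integral absolutely convergent). Iterating it $n$ times therefore makes the required derivative order of $r_\bullet$ grow linearly in $n$, so the constant in your bound is not ``a constant depending polynomially on finitely many seminorms of $r_\bullet$''; it depends on an unbounded family of seminorms, and nothing prevents those from growing super-exponentially in the derivative order. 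This is precisely why the references the paper cites do \emph{not} invert $\1+\epsilon r_\epsilon$ by a Neumann series in the Fr\'echet space of symbols: they instead (i) use a finite-order parametrix to show $\Op(a_\epsilon)$ has a bounded inverse on $L^2$ for small $\epsilon$ (a Neumann series of \emph{operators}, where $\cB(\cH)$-norms multiply), and (ii) invoke Beals' theorem to recognize that bounded inverse as the quantization of a symbol in $S(1)$. You flag this as the ``main obstacle'' but characterize the fix as bookkeeping; it is not — it is the reason the entire proof strategy in the literature is different. By contrast, your mechanism for $\tau$-equivariance (inversion and $\#$ preserve $\tau$-equivariance, hence so does the putative $b_\epsilon$) is clean and would, if the construction were rigorous, avoid the Beals-theorem uniqueness argument the paper uses for this last step — but it is downstream of the gap.
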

\begin{proof}
The proof of the existence of $0 < \epsilon_0' \le \epsilon_0$ and $b_\bullet\in S^{1,\epsilon_0'}(\mathcal{B}(L^2_{\rm per}))$ satisfying 
\begin{align}\label{eq:op-inverse-B}
     a_\epsilon\# b_\epsilon=\1_{L^2_{\rm per}},\qquad 
     \Op( a_\epsilon)\Op(b_{\epsilon})=\Op(b_{\epsilon}) \Op( a_\epsilon)= \1_{\cH}
\end{align}
can be found in \cite[Page 99-100]{dimassi1999spectral} (see also \cite[Lemma 3.5 and Page 790]{dimassi1993developpements} or \cite[Theorem 4.29 and Theorem 8.3]{zworski2012semiclassical}). It remains to show that $b_\bullet$ is $\tau$-equivariant in the $k$ variable, i.e., that 
\begin{align*}
\forall G \in \mathbb{L}^*, \quad \forall (k,X) \in \R^2 \times \R^2, \quad    b_\epsilon(k-G,X)=\tau_Gb_\epsilon(k,X)\tau_G^{-1} .
\end{align*}
Let $G\in\mathbb{L}^*$. For all $0 < \epsilon \le \epsilon_0'$, we have
\begin{align*}
    \Op(a_\epsilon(\bullet-G,\bullet)) \Op(\tau_G b_\epsilon \tau_G^{-1}) &=
      \Op(\tau_G a_\epsilon \tau_G^{-1}) \Op(\tau_G b_\epsilon \tau_G^{-1}) \\
      &= \Op( (\tau_G a_\epsilon \tau_G^{-1})) \# (\tau_G b_\epsilon \tau_G^{-1})) \\ &= \Op(\tau_G (a_\epsilon \# b_\epsilon) \tau_G^{-1}) = \Op(\1_{L^2_{\rm per}}) = \1_{\cH},
\end{align*}
and likewise  $\Op(\tau_G b_\epsilon \tau_G^{-1}) \Op(a_\epsilon(\bullet-G,\bullet)) = \1_{\cH}$. It follows that
$$
\forall 0 < \epsilon \le \epsilon_0', \quad 
\Op(\tau_G b_\epsilon \tau_G^{-1}) = \Op(a_\epsilon(\bullet-G,\bullet))^{-1} = \Op(b_\epsilon (\bullet-G,\bullet)).
$$
It follows from Beals' Theorem (see e.g. \cite[Theorem 8.3]{zworski2012semiclassical} and Step 3. in its proof) that $b_\epsilon (\bullet-G,\bullet) =\tau_G b_\epsilon \tau_G^{-1}$. Thus, $b_\bullet\in S_{\tau}^{1,\epsilon_0'}(\mathcal{B}(L^2_{\rm per}))$.
\end{proof}

Finally, we recall the following variant of Calderon-Vaillancourt theorem.
\begin{theorem}\label{th:caldVa} Let $\epsilon_0 > 0$ and $a_\bullet\in S^{1,\epsilon_0}_{\tau}(\cB(L^2_{\rm per}))$. Then for all $0 < \epsilon \le \epsilon_0$, $\Op(a_\epsilon)\in \mathcal{B}(\mathcal{H})$ and
\begin{align*}
    \sup_{\epsilon\in (0,\epsilon_0]}\|\Op(a_\epsilon)\|_{\mathcal{B}(\mathcal{H})}<\infty.
\end{align*}
\end{theorem}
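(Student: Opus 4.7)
The plan is to reduce the statement to the classical operator-valued Calder\'on--Vaillancourt theorem (see e.g. \cite[Chap.~7]{dimassi1999spectral} or \cite{teufel2003adiabatic}): for any separable Hilbert space $\mathcal{K}$ and any symbol $b \in S^{1}(\cB(\mathcal{K}))$ on $\mathbb{R}^2\times\mathbb{R}^2$, the Weyl operator $\mathrm{Op}_\epsilon(b)$ extends to a bounded operator on $L^2(\mathbb{R}^2;\mathcal{K})$ whose norm is controlled by a finite sum of seminorms $\cN^{1,\cB(\mathcal{K})}_{\alpha,\beta}(b)$ for $|\alpha|+|\beta|\le M$, with $M$ a universal constant. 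The uniformity in $\epsilon$ can be seen through the rescaling $X=\epsilon Y$, which identifies $\mathrm{Op}_\epsilon(b)$ with the unit-scale Weyl quantization of $\tilde b(k,Y):=b(k,\epsilon Y)$; derivatives in $Y$ only generate the favourable factor $\epsilon^{|\beta|}$, so that
\[
\sup_{k,Y\in\mathbb{R}^2}\|\partial_k^\alpha\partial_Y^\beta \tilde b(k,Y)\|_{\cB(\mathcal{K})} \le \epsilon^{|\beta|}\,\cN^{1,\cB(\mathcal{K})}_{\alpha,\beta}(b) \le \cN^{1,\cB(\mathcal{K})}_{\alpha,\beta}(b),\qquad \epsilon\in(0,\epsilon_0].
\]

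First I would verify that $\mathrm{Op}(a_\epsilon)$ preserves $\mathcal{H}$. For $\phi\in\mathcal{H}$ and $G\in\mathbb{L}^*$, evaluating \eqref{eq:Weyl_quantization} at $k-G$, performing the change of variable $k'\mapsto k'-G$, and using the $\tau$-equivariance of $a_\epsilon$, namely $a_\epsilon\bigl(\tfrac{k+k'}{2}-G,X\bigr)=\tau_G\, a_\epsilon\bigl(\tfrac{k+k'}{2},X\bigr)\tau_G^{-1}$, together with the quasi-periodicity $\phi_{k'-G}=\tau_G\phi_{k'}$, the factor $\tau_G^{-1}\tau_G$ cancels and one gets $[\mathrm{Op}(a_\epsilon)\phi]_{k-G}=\tau_G[\mathrm{Op}(a_\epsilon)\phi]_k$. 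Hence $\mathrm{Op}(a_\epsilon)\mathcal{H}\subset\mathcal{H}$.

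The main obstacle is the norm bound itself, because elements of $\mathcal{H}$ are not in $L^2(\mathbb{R}^2;L^2_{\rm per})$ (they are only quasi-periodic in $k$), so the ambient Calder\'on--Vaillancourt estimate cannot be applied verbatim. I would handle this by exploiting the $\tau$-equivariance to fold \eqref{eq:Weyl_quantization} over $\mathbb{L}^*$: writing $k'=[k']+G'$ with $[k']\in\Omega^*$ and summing the integrand over $G'\in\mathbb{L}^*$, the $\tau$-equivariance of $a_\epsilon$ and the quasi-periodicity of $\phi$ allow one to rewrite $\mathrm{Op}(a_\epsilon)$ as a Weyl operator on $L^2(\Omega^*;L^2_{\rm per})$ with matrix-valued symbol indexed by $\mathbb{L}^*$. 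Its seminorms are controlled by those of $a_\epsilon$ up to a constant depending only on the lattice geometry, and the toroidal Calder\'on--Vaillancourt theorem, combined with the uniform-in-$\epsilon$ bounds $\cN^{1,\cB(L^2_{\rm per}),\epsilon_0}_{\alpha,\beta}(a_\bullet)<\infty$, yields the desired estimate. An equivalent and possibly cleaner route is to conjugate by the Bloch transform, so that $\cU^{-1}\mathrm{Op}(a_\epsilon)\cU$ becomes a pseudodifferential operator on $L^2(\mathbb{R}^3)$ with a semiclassically scaled operator-valued symbol built from $a_\epsilon$; the classical operator-valued Calder\'on--Vaillancourt estimate then applies directly and gives $\sup_{\epsilon\in(0,\epsilon_0]}\|\mathrm{Op}(a_\epsilon)\|_{\cB(\mathcal{H})}<\infty$.
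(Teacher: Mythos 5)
Your preliminary observations are correct: the $\tau$-equivariance computation giving $[\mathrm{Op}_\epsilon(a_\epsilon)\phi]_{k-G}=\tau_G[\mathrm{Op}_\epsilon(a_\epsilon)\phi]_k$ is sound (both $k\mapsto k-G$ and $k'\mapsto k'-G$ shift the Weyl midpoint by the \emph{full} vector $G\in\mathbb{L}^*$, so \eqref{eq:tau-equiv} applies with no half-lattice vectors appearing), and the rescaling $X=\epsilon Y$ does push the $\epsilon$-dependence of the quantization into prefactors $\epsilon^{|\beta|}$ multiplying seminorms that are uniformly bounded by hypothesis. Note also that the paper supplies no proof at all for this theorem; it just cites \cite[Theorem~3, Appendix~A]{panati2003effective}, so you are attempting something the text merely references.

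The central step of your sketch is not justified, and the gap is substantive. Once you fix $k\in\Omega^*$ and decompose $k'=[k']+G'$ with $G'\in\mathbb{L}^*$, the Weyl symbol in \eqref{eq:Weyl_quantization} is evaluated at the midpoint $\tfrac{k+[k']+G'}{2}$, a point shifted by $G'/2$, which in general is \emph{not} a reciprocal lattice vector. Relation \eqref{eq:tau-equiv} only controls shifts by full elements of $\mathbb{L}^*$, so the $G'$-labelled blocks cannot simply be re-packaged as a Weyl symbol on $\Omega^*$ whose ``seminorms are controlled by those of $a_\epsilon$ up to a constant depending only on the lattice geometry''; that claim is essentially the theorem itself, asserted rather than proved. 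The decay that actually makes the $G'$-sum summable comes from a different mechanism: for $k,[k']\in\Omega^*$ and $G'\neq 0$ the phase $e^{-i(k-[k']-G')\cdot X/\epsilon}$ is non-stationary in $X$, and repeated integration by parts (each $X$-derivative of the symbol coming with a compensating power of $\epsilon$) yields arbitrary-order decay in $|k-[k']-G'|$, hence, after a Schur-type off-diagonal bound, in $|G'|$. That estimate is the substance of the cited proof and is missing from your sketch. Your alternative Bloch-conjugation route has an analogous hole: by unitarity $\|\mathrm{Op}_\epsilon(a_\epsilon)\|_{\cB(\cH)}=\|\cU^{-1}\mathrm{Op}_\epsilon(a_\epsilon)\cU\|_{\cB(L^2(\R^3))}$, but the conjugated operator is not in any obvious sense a semiclassical Weyl pseudodifferential operator with an operator-valued symbol in a class to which the standard Calder\'on--Vaillancourt theorem applies --- the action of $a_\epsilon(k,X)$ on $L^2_{\rm per}$ is intertwined with the $x$-periodicity --- so ``applies directly'' is again an assertion, not an argument.
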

The proof of the above result can be read in \cite[Theorem 3, Appendix A]{panati2003effective}.

\subsection{Trace formula for \texorpdfstring{$\tau$}{}-equivariant operator-valued symbols} \label{sec:traceform}

The proofs of our main results are based on the following estimates on the trace of operators on $\mathcal{H}$ of the form $\Op(a_\epsilon)$ with $a_\bullet \in S^{1,\epsilon_0}_{\tau}(\mathfrak{S}_1(L^2_{\rm per}))$. 
\begin{lemma}\label{trace} Let $\epsilon_0 > 0$ and $a_\bullet \in S^{1,\epsilon_0}_{\tau}(\mathfrak{S}_1(L^2_{\rm per}))$ be such that for any $\alpha,\beta\in \N^2$,
\begin{align}\label{ass:A-trace}
\sup_{\substack{k\in\R^2\\ \epsilon\in (0,\epsilon_0]}}\int_{\R^2}\|\partial_k^\alpha\partial_X^\beta  a_\epsilon(k,X)\|_{\mathfrak{S}_1(L^2_{\rm per})}dX<\infty.
\end{align}
Then, for all $\epsilon\in (0,\epsilon_0]$, we have $\Op(a_\epsilon)\in \mathfrak{S}_1(\mathcal{H})$ and $\epsilon^2\|\Op(a_\epsilon)\|_{\mathfrak{S}_1(\mathcal{H})}$ is uniformly bounded in $\epsilon\in (0,\epsilon_0]$. Furthermore, for all $n\geq 2$, there exists a constant $C_n \in \R_+$ independent of $a_\bullet$ such that
\begin{align}\label{eq:trace1}
\epsilon^2\Tr_{\mathcal{H}}(\Op(a_\epsilon)) &=\frac{1}{(2\pi)^2}\int_{\mathbb{R}^2}\int_{\Omega^*} \Tr_{L^2_{\rm per}}\Big(a_\epsilon(k,X)\Big)dkdX  + \rho_n(\epsilon) \epsilon^{2n}, 
\end{align}
with
\begin{align}\label{eq:trace11}
|\rho_n(\epsilon)| \le C_n \sup_{\substack{k\in \R^2\\ \epsilon\in (0,\epsilon_0]} }\int_{\R^2}\|\Delta_X^na_\epsilon (k,X)\|_{\mathfrak{S}_1(L^2_{\rm per})}dX.
\end{align}
\end{lemma}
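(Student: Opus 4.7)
The plan is to compute the Schwartz kernel of $\Op(a_\epsilon)$ viewed as an integral operator $\mathcal{H}\to\mathcal{H}$, fold the $k'$-variable using the $\tau$-equivariance to reduce the integration to $\Omega^*$, and then isolate the zero mode of the resulting sum over $\mathbb{L}^*$.

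Starting from \eqref{eq:Weyl_quantization}, viewed on $L^2(\R^2;L^2_{\rm per})$, $\Op(a_\epsilon)$ is the integral operator with operator-valued kernel
$K_0(k,k')=(2\pi\epsilon)^{-2}\int_{\R^2}a_\epsilon(\tfrac{k+k'}{2},X)e^{-i(k-k')\cdot X/\epsilon}\,dX$.
Since $a_\epsilon$ is $\tau$-equivariant in $k$ and every $\phi\in\mathcal{H}$ satisfies $\phi_{q+G}=\tau_{-G}\phi_q$ for $G\in\mathbb{L}^*$, writing $k'=q+G$ with $q\in\Omega^*$, $G\in\mathbb{L}^*$ expresses $\Op(a_\epsilon)$ as an integral operator on $\mathcal H$ with reduced kernel
\begin{align*}
\tilde K(k,q):=\sum_{G\in\mathbb{L}^*}K_0(k,q+G)\,\tau_{-G},\qquad k,q\in\Omega^*.
\end{align*}
Substituting $q=k$ yields
\begin{align*}
\tilde K(k,k)=\frac{1}{(2\pi\epsilon)^2}\sum_{G\in\mathbb{L}^*}\left(\int_{\R^2}a_\epsilon\!\left(k+\tfrac{G}{2},X\right)e^{iG\cdot X/\epsilon}\,dX\right)\tau_{-G}.
\end{align*}

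For each $G\neq 0$, I use $\Delta_X e^{iG\cdot X/\epsilon}=-|G|^2\epsilon^{-2}e^{iG\cdot X/\epsilon}$ and integrate by parts $n$ times in $X$ to obtain
\begin{align*}
\int_{\R^2} a_\epsilon\!\left(k+\tfrac{G}{2},X\right)e^{iG\cdot X/\epsilon}\,dX=\left(-\tfrac{\epsilon^2}{|G|^2}\right)^{\!n}\int_{\R^2}\Delta_X^n a_\epsilon\!\left(k+\tfrac{G}{2},X\right)e^{iG\cdot X/\epsilon}\,dX,
\end{align*}
whose $\mathfrak{S}_1(L^2_{\rm per})$-norm is bounded by $(\epsilon/|G|)^{2n}$ times the supremum in \eqref{ass:A-trace}. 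Since $\sum_{G\in\mathbb{L}^*\setminus\{0\}}|G|^{-2n}<\infty$ precisely for $n\geq 2$, the series defining $\tilde K(k,k)$ converges absolutely in $\mathfrak{S}_1(L^2_{\rm per})$ uniformly in $k\in\Omega^*$, which will yield trace class and the $O(\epsilon^{-2})$ bound on $\|\Op(a_\epsilon)\|_{\mathfrak{S}_1(\mathcal H)}$.

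To convert this into the announced trace formula, I invoke the identity $\Tr_{\mathcal{H}}(T)=\int_{\Omega^*}\Tr_{L^2_{\rm per}}(\tilde K(k,k))\,dk$ for integral operators on $\mathcal H$ with sufficiently regular reduced kernel; the absence of an averaging $\fint$ is the correct normalization for the $\mathcal{H}$ inner product, as can be verified by testing against the $\mathbb{L}$-indexed plane-wave basis. The $G=0$ summand of $\tilde K(k,k)$ contributes exactly $(2\pi\epsilon)^{-2}\int_{\Omega^*\times\R^2}\Tr_{L^2_{\rm per}}(a_\epsilon(k,X))\,dk\,dX$, so multiplying by $\epsilon^2$ produces the leading term in \eqref{eq:trace1}. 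For the remainder, using $|\Tr(A\tau_{-G})|\leq\|A\|_{\mathfrak{S}_1}\|\tau_{-G}\|_{\rm op}=\|A\|_{\mathfrak{S}_1}$, the IBP bound just derived, and integration over $\Omega^*$ produces \eqref{eq:trace11} with $C_n=|\Omega^*|(2\pi)^{-2}\sum_{G\neq 0}|G|^{-2n}$, a geometric constant independent of $a_\bullet$.

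The main technical obstacle is justifying rigorously the trace identity and the various interchanges of trace, sum, and integral in this operator-valued setting, since $a_\epsilon$ is only assumed to lie in $S^{1,\epsilon_0}_{\tau}(\mathfrak{S}_1(L^2_{\rm per}))$, not to be of finite fiberwise rank. I would handle this by first establishing every identity for symbols $a_\epsilon^{(N)}$ of finite $\mathfrak{S}_1$-rank in the $L^2_{\rm per}$ fiber (obtained by truncating the singular-value decomposition of $a_\epsilon(k,X)$ pointwise), where the trace reduces to a finite sum of scalar kernel traces, and then passing to the limit $N\to\infty$ using the uniform $\mathfrak{S}_1$-bounds derived above together with dominated convergence; the sharp threshold $n\geq 2$ reflects exactly the borderline integrability of $|G|^{-2n}$ on $\mathbb{L}^*\setminus\{0\}$.
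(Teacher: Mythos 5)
Your identification of the reduced kernel $\tilde K(k,q)=\sum_{G\in\mathbb{L}^*}K_0(k,q+G)\,\tau_{-G}$ and your evaluation on the diagonal is exactly the content that the paper extracts, more laboriously, via the Wannier basis and Poisson summation in Step 2 of its proof; and your bound $C_n=\frac{|\Omega^*|}{(2\pi)^2}\sum_{G\neq 0}|G|^{-2n}$ with the threshold $n\ge 2$ from $\sum_{G\neq 0}|G|^{-2n}<\infty$ in dimension two is correct. Your trace identity $\Tr_{\mathcal H}(T)=\int_{\Omega^*}\Tr_{L^2_{\rm per}}(\tilde K(k,k))\,dk$ is also the right one (the $|\Omega^*|$ factor from $\fint$ cancels against the $|\Omega^*|$ in the Poisson sum, as you can check). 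So the remainder estimate \eqref{eq:trace11}, given trace class, is fine.

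The gap is in establishing $\Op(a_\epsilon)\in\mathfrak S_1(\mathcal H)$. You write that absolute convergence of the series for $\tilde K(k,k)$ in $\mathfrak S_1(L^2_{\rm per})$, uniformly in $k$, ``will yield trace class and the $O(\epsilon^{-2})$ bound.'' This implication is false as stated: for an integral operator on a compact set, a continuous kernel with a finite (even nice) diagonal does not imply that the operator is trace class, and there is no inequality of the form $\|T\|_{\mathfrak S_1(\mathcal H)}\le C\sup_k\|\tilde K(k,k)\|_{\mathfrak S_1(L^2_{\rm per})}$. One genuinely needs control of the \emph{off-diagonal} behaviour of the kernel, or some other structural input. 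The paper supplies this in its Step 1 by computing $\|\Op(a_\epsilon)\|_{\mathfrak S_1(\mathcal H)}$ via its variational characterization over pairs of orthonormal bases, and then performing a threefold integration by parts (using the fields $\cL_{1,\epsilon}$, $\cL_{2,R-R'}$, $\cL_{3,R,\epsilon}$ of \eqref{eq:L1}--\eqref{eq:L3} and identity \eqref{eq:inte-parts}) to produce decay factors $(1+|k-k'|^2)^{-M_0/2}$, $(1+|R-R'|^2)^{-M_0/2}$ and $(1+|\epsilon^{-1}X-R|^2)^{-M_0/2}$. The first makes the $k'$ integral absolutely convergent; the second lets you sum over one lattice index; the third lets you sum over the other and brings in the $X$-integrability of the symbol. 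Only after this does \eqref{eq:5.25} follow. None of this appears in your argument, and it cannot be recovered from the diagonal alone.

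Your proposed fallback, truncating the singular-value decomposition of $a_\epsilon(k,X)$ fibrewise to finite rank, does not close the gap either: (i) the truncated fibres do not depend smoothly on $(k,X)$ in general (singular values can cross and eigenvectors can rotate discontinuously), so the truncation need not be a symbol; (ii) more importantly, even a symbol whose fibre values are rank one is not thereby trace class as an operator on $\mathcal H$ --- one still needs the integrability in $X$ coupled with kernel decay, which is precisely what the triple integration by parts is for. If you want to salvage the reduced-kernel approach, the way forward is to establish enough off-diagonal regularity of $\tilde K$ in the $\mathfrak S_1$-norm (a quantitative modulus of continuity on $\Omega^*\times\Omega^*$) to invoke a Birman--Solomyak/Delgado--Ruzhansky-type trace-class criterion for integral operators on compact sets, but that is a nontrivial additional step and effectively re-derives the estimates the paper gets from \eqref{eq:inte-parts}.
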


The rest of Section~\ref{sec:traceform} is devoted to the proof of this lemma.

\subsubsection{Integration by parts in semiclassical integrals}

Let us first introduce some notation:
\begin{itemize}
    \item for any function $f \in L^2_{\rm loc}(\R^2)$ or $g \in L^2_{\rm loc}(\R^2 \times \R)$, we set
\begin{align*}
\forall (x,R,z) \in \Omega \times  \mathbb L \times  \R, \quad &    I_{\#}(f)(x+R):=f(x), \\
& I_{\#}(g)(x+R,z):=g(x,z);
\end{align*}
\item for any function $f\in L^2(\R^3)$, we denote by 
\begin{align*}
 f_R:=f|_{(\Omega+R)\times \R_z}   \quad \mbox{so that} \quad f(x,z)=\sum_{R\in \mathbb{L}}f_R(x,z).
\end{align*}
Since $(\cU f_R)_k(x,z)=e^{-ik\cdot(x +R)}f(x+R,z)$ for all $(x,z)\in \Omega \times \R$, we get
\begin{align}\label{def:gR'-fR}
(\cU f)_k=\sum_{R\in \mathbb{L}} (\cU f_R)_k \quad \mbox{with} \quad    (\cU f_R)_k= I_{\#}(e^{-ik\cdot(\bullet +R)}f(\bullet+R)).
 \end{align}
\end{itemize}
Formally, we thus have for all $f,g \in L^2(\R^3)$,
\begin{align} 
&\langle \cU g,\Op(a) \cU f\rangle_\cH = \frac{1}{(2\pi\epsilon)^2}
\sum_{R \in \mathbb L} \sum_{R' \in \mathbb L}  \fint_{\Omega^*}  \int_{\R^2}   \int_{\R^2}   [\cI_{\epsilon,R,R'}^{g,f}({\rm Id},a)](k,k',X) \, dk' \,  dX \, dk 
\label{eq:UgOp(a)Uf_1}
\end{align}
where for all symbol $b \in S^1_\tau(\cB(L^2_{\rm per}))$ and field $\cL \in C^{\infty}(\R^2 \times \R^2;\cB(S^1_\tau(\cB(L^2_{\rm per}))))$, 
\begin{align} \label{eq:Ik+G} 
&[\cI_{\epsilon,R,R'}^{g,f}(\cL,b)](k,k',X):= e^{-i\frac{(k-k') \cdot X}\epsilon}
\left\langle (\cU g_R)_k,  \left[\cL(k-k',X) b\left( \frac{k+k'}2,X \right)\right] (\cU f_{R'})_{k'} \right\rangle_{L^2_{\rm per}} 
\end{align}
Integration by parts with respect to $k$, $k'$ and $X$ is a key tool to reformulate semiclassical expressions such as \eqref{eq:UgOp(a)Uf_1} as convergent series and integrals. To get rid of boundary terms in integrations by parts with respect to $k$, we use the following observation: since $b$ is $\tau$-equivariant and $k \mapsto (\cU v)_k$ is quasi-periodic for all $v \in L^2(\R^3)$, it holds
\begin{align*} 
\forall G \in \mathbb L^*, \quad [\cI_{R,R'}^{g,f}(\cL,b)](k,k'+G,X) = [\cI_{R,R'}^{g,f}(\cL,b)](k-G,k',X),
\end{align*}
from which we deduce that
\begin{align} 
& \fint_{\Omega^*}  \int_{\R^2}   \int_{\R^2}   [\cI_{\epsilon,R,R'}^{g,f}(\cL,b)](k,k',X) \, dk' \,  dX \, dk \nonumber \\
& \qquad \qquad= \int_{\R^2}   \int_{\R^2}   \fint_{\Omega^*} [\cI_{\epsilon,R,R'}^{g,f}(\cL,b)](k,k',X) \, dk' \,  dX \, dk
\label{eq:UgOp(a)Uf_2}
\end{align}
whenever the integrals are well-defined.

\medskip

The integrand in \eqref{eq:UgOp(a)Uf_1} is integrable w.r.t. $k$ (since $k$ takes its value in a compact set) and $X$ (in view of assumption~\eqref{ass:A-trace}), but not {\it a priori} with respect $k'$. As is standard in semiclassical analysis, we use the equality 
$$
\left(\frac{1+i\epsilon (k-k')\cdot\nabla_X}{1+|k-k'|^2} \right) e^{\frac{-i(k-k')X}{\epsilon}}=e^{\frac{-i(k-k')X}{\epsilon}}
$$
to formally integrate by parts with respect to the variable $X$ and obtain
\begin{align}\label{eq:k-k'}
\MoveEqLeft\fint\limits_{\Omega^*}\int\limits_{\R^2}\int\limits_{\R^2} [\cI_{\epsilon,R,R'}^{g,f}({\rm Id},a)](k,k',X) dk'dXdk \notag\\
&= \fint\limits_{\Omega^*}\int\limits_{\R^2}\int\limits_{\R^2}  [\cI_{\epsilon,R,R'}^{g,f}(\cL_{1,\epsilon}^{M_0},a)](k,k',X) dk'dXdk,
\end{align}
where $M_0 \in \N$ and where the field $\cL_{1,\epsilon}$ is defined as
\begin{align}\label{eq:L1}
  \cL_{1,\epsilon}(k_1,X_1):=\frac{1-i\epsilon k_1\cdot\nabla_X}{1+|k_1|^2}
\end{align}
(this field is in fact independent of the variable $X_1$).
The integrand on the RHS of \eqref{eq:k-k'} is integrable on $\Omega^* \times \R^2 \times \R^2$ as soon as $M_0 \ge 3$. We now need to further transform \eqref{eq:k-k'} to reveal that it is in fact the general term of a convergent series in $R$ and $R'$. For this purpose, we integrate by parts with respect to $k'$ and $k$, using \eqref{eq:UgOp(a)Uf_2} to deal with the latter.

\medskip

Since
\begin{align*}
i\nabla_{k} (\cU g_{R})_{k}&=(I_{\#}(\bullet)+R)I_{\#}\left(e^{-ik\cdot (\bullet+R)}g(\bullet+R)\right)=(I_{\#}(\bullet)+R)(\cU g_{R})_{k}, \\
i\nabla_{k'} (\cU f_{R'})_{k'}&=(I_{\#}(\bullet)+R')I_{\#}\left(e^{-ik'\cdot (\bullet+R')}f(\bullet+R')\right)=(I_{\#}(\bullet)+R')(\cU f_{R'})_{k'},  
\end{align*}
we have
\begin{align*}
   & (\nabla_k+\nabla_{k'}) \left([\cI_{\epsilon,R,R'}^{g,f}(\cL_{1,\epsilon}^{M_0},a)](k,k',X)\right)   \\ & \qquad = 
(\nabla_k+\nabla_{k'}) \left(  e^{-i\frac{(k-k') \cdot X}\epsilon}
\left\langle (\cU g_R)_k,  \left[\cL_{1,\epsilon}^{M_0}(k-k',X) a\left( \frac{k+k'}2,X \right)\right] (\cU f_{R'})_{k'} \right\rangle_{L^2_{\rm per}} \right)    \\ & \qquad = e^{-i\frac{(k-k') \cdot X}\epsilon}
\left\langle \nabla_k (\cU g_R)_k,  \left[\cL_{1,\epsilon}^{M_0}(k-k',X) a\left( \frac{k+k'}2,X \right)\right] (\cU f_{R'})_{k'} \right\rangle_{L^2_{\rm per}} \\  & \qquad + e^{-i\frac{(k-k') \cdot X}\epsilon}
\left\langle  (\cU g_R)_k,  \left[\cL_{1,\epsilon}^{M_0}(k-k',X)  (\nabla_k a)\left( \frac{k+k'}2,X \right)\right] (\cU f_{R'})_{k'} \right\rangle_{L^2_{\rm per}}     \\ & \qquad + e^{-i\frac{(k-k') \cdot X}\epsilon}
\left\langle (\cU g_R)_k,  \left[\cL_{1,\epsilon}^{M_0}(k-k',X) a\left( \frac{k+k'}2,X \right)\right] \nabla_{k'}(\cU f_{R'})_{k'} \right\rangle_{L^2_{\rm per}}   \\ & \qquad = i(R-R') \cI_{\epsilon,R,R'}^{g,f}(\cL_{1,\epsilon}^{M_0},a)(k,k',X) - i \cI_{\epsilon,R,R'}^{g,f}\left((i\nabla_k-{\rm ad}_{I_\#(\bullet)}) \cL_{1,\epsilon}^{M_0},a\right) (k,k',X).
\end{align*}
Using again\eqref{eq:Ik+G}-\eqref{eq:UgOp(a)Uf_2}, we obtain that for $M_0 \ge 3$,  $$
\fint_{\Omega^*}\int_{\R^2}\int_{\R^2}   (\nabla_k+\nabla_{k'}) \left([\cI_{\epsilon,R,R'}^{g,f}(\cL_{1,\epsilon}^{M_0},a)](k,k',X)\right) \, dk' \, dX \, dk=0,
$$
and therefore that
 \begin{align*}
 & (R-R') \fint_{\Omega^*}\int_{\R^2}\int_{\R^2}   \cI_{\epsilon,R,R'}^{g,f}(\cL_{1,\epsilon}^{M_0},a)(k,k',X) \, dk' \, dX \, dk \\
 & \qquad = \fint_{\Omega^*}\int_{\R^2}\int_{\R^2} \cI_{\epsilon,R,R'}^{g,f}\left((i\nabla_k-{\rm ad}_{I_\#(\bullet)}) \cL_{1,\epsilon}^{M_0},a\right) (k,k',X) \, dk' \, dX \, dk.
 \end{align*}
Introducing the uniform field  \begin{align}\label{eq:L2}
    \cL_{2,R}(k_1,X_1):= \frac{1+iR\cdot(i\nabla_k-\mathrm{ad}_{I_{\#}(\bullet)})}{1+|R|^2},
\end{align}
this equality yields
\begin{align}\label{eq:R-R'}
  \MoveEqLeft  \fint\limits_{\Omega^*}\int\limits_{\R^2}\int\limits_{\R^2} \cI_{\epsilon,R,R'}^{g,f}\left(\cL_{1,\epsilon}^{M_0},a_\epsilon\right)(k,k',X) \,  dk'dXdk\notag\\  &=\fint\limits_{\Omega^*}\int\limits_{\R^2}\int\limits_{\R^2} \cI_{\epsilon,R,R'}^{g,f}
  \left(\cL_{1,\epsilon}^{M_0}\cL_{2,R-R'}^{M_0},a_\epsilon \right)(k,k',X) \, dk'dXdk.
\end{align}

Finally, we have
$$
\frac{1+i(\epsilon^{-1}X- R)\cdot \nabla_{k}}{(1+|\epsilon^{-1}X- R|^2)} e^{\frac{-ik(X-\epsilon R)}{\epsilon}}=e^{\frac{-i k(X-\epsilon R))}{\epsilon}},
$$
and
\begin{align*}
[\cI_{\epsilon,R,R'}^{g,f}(\cL,b)](k,k',X)=& e^{-i\frac{k \cdot (X-\epsilon R)}\epsilon} e^{i\frac{k' \cdot X}\epsilon} \\
& \times 
\left\langle I_\#(e^{-ik\cdot \bullet}g(\bullet+R)),  \left[\cL(k-k',X) b\left( \frac{k+k'}2,X \right)\right] (\cU f_{R'})_{k'} \right\rangle_{L^2_{\rm per}} .
\end{align*}
Introducing the field  
\begin{align}\label{eq:L3}
    \cL_{3,R,\epsilon}(k_1,X_1):=  \frac{1+(\epsilon^{-1}X_1- R)\cdot (I_{\#}(\bullet)-i\nabla_{k})}{(1+|\epsilon^{-1}X_1- R|^2)}
\end{align}
(this field is independent of the variable $k_1$)
and proceeding as above, we obtain 
\begin{align}\label{eq:X-R}
&\fint\limits_{\Omega^*}\int\limits_{\R^2}\int\limits_{\R^2} \cI_{\epsilon,R,R'}^{g,f}
  \left(\cL_{1,\epsilon}^{M_0}\cL_{2,R-R'}^{M_0},a_\epsilon \right)(k,k',X) \, dk'dXdk \notag\\
  &\qquad =\fint\limits_{\Omega^*}\int\limits_{\R^2}\int\limits_{\R^2} \cI_{\epsilon,R,R'}^{g,f}
\left(\cL_{3,R,\epsilon}^{M_0}\cL_{1,\epsilon}^{M_0}\cL_{2,R-R'}^{M_0},a_\epsilon \right)(k,k',X) \, dk'dXdk.
\end{align}
Gathering together~\eqref{eq:k-k'}, \eqref{eq:R-R'} and \eqref{eq:X-R}, we finally obtain
\begin{align} \label{eq:inte-parts}
&\fint\limits_{\Omega^*}\int\limits_{\R^2}\int\limits_{\R^2} \cI_{\epsilon,R,R'}^{g,f}
  \left({\rm Id},a_\epsilon \right)(k,k',X) \, dk'dXdk \notag \\
  &\qquad =\fint\limits_{\Omega^*}\int\limits_{\R^2}\int\limits_{\R^2} \cI_{\epsilon,R,R'}^{g,f}
\left(\cL_{3,R,\epsilon}^{M_0}\cL_{1,\epsilon}^{M_0}\cL_{2,R-R'}^{M_0},a_\epsilon \right)(k,k',X) \, dk'dXdk. 
\end{align}
Whenever $M_0 \ge 3$, the integrand in the right-hand side is integrable and the resulting integral is the general term of an absolutely convergent series in $R$ and $R'$. Indeed, in view of the definitions of the fields $\cL_{1,\epsilon}$, $\cL_{2,R}$ and $\cL_{3,R,\epsilon}$, there exists a universal constant $C \in \R_+$ such that
\begin{align} \label{eq:bound_gOpaf}
&\left|\cI_{\epsilon,R,R'}^{g,f}
\left(\cL_{3,R,\epsilon}^{M_0}\cL_{1,\epsilon}^{M_0}\cL_{2,R-R'}^{M_0},a_\epsilon \right)(k,k',X)  \right|   \le C  {\dps \max_{|\alpha|,|\beta|,|\gamma_1|,|\gamma_2|\leq 2M_0}} \notag \\
&  \qquad  \frac{ \left(\left\|\left|a_{\epsilon,\alpha,\beta}^{\gamma_1,\gamma_2}\left(\frac{k+k'}{2},X\right)\right|^{1/2} (\cU g_{R})_{k}\right\|_{L^2_{\rm per}}^2 + \left\|\left|a_{\epsilon,\alpha,\beta}^{\gamma_1,\gamma_2}\left(\frac{k+k'}{2},X\right)\right|^{1/2} (\cU f_{R'})_{k'}\right\|_{L^2_{\rm per}}^2 \right) }{(1+|R-R'|^2)^{M_0/2} (1+|k-k'|^2)^{M_0/2} (1+|\epsilon^{-1}X-R|^2)^{M_0/2}},
\end{align}
where 
$$
a_{\epsilon,\alpha,\beta}^{\gamma_1,\gamma_2}(k,X):=I_{\#}(\bullet)^{\gamma_1} \partial_k^\alpha\partial_X^\beta a_\epsilon(k,X) I_{\#}(\bullet)^{\gamma_2},
$$
from which we deduce, using the boundedness of the operator $I_\#(\bullet)$ on $L^2_{\rm per}$, 
\begin{align} \label{eq:bound_gOpaf'}
&\left|\cI_{\epsilon,R,R'}^{g,f}
\left(\cL_{3,R,\epsilon}^{M_0}\cL_{1,\epsilon}^{M_0}\cL_{2,R-R'}^{M_0},a_\epsilon \right)(k,k',X)  \right|   \le C  {\dps \max_{|\alpha|,|\beta|\leq 2M_0}} \left\|\partial_k^\alpha\partial_X^\beta a_\epsilon \left(\frac{k+k'}2,X \right)\right\|_{\cB(L^2_{\rm per})}\notag \\
&  \qquad \times  \frac{ \left(\left\|(\cU g_{R})_{k}\right\|_{L^2_{\rm per}}^2 + \left\|(\cU f_{R'})_{k'}\right\|_{L^2_{\rm per}}^2 \right) }{(1+|R-R'|^2)^{M_0/2} (1+|k-k'|^2)^{M_0/2} (1+|\epsilon^{-1}X-R|^2)^{M_0/2}}.
\end{align}
Using assumption~\eqref{ass:A-trace}, the bound $(1+|\epsilon^{-1}X-R|^2) \ge 1$, the Parseval relation
$$
\forall \phi \in L^2(\R^3), \quad \sum_{R \in \mathbb L} \fint_{\Omega^*} 
\|(\cU \phi_R)_k\|_{L^2_{\rm per}}^2 dk = 
\sum_{R \in \mathbb L} \|\phi_R\|_{L^2(\R^3)}^2 = \|\phi\|_{L^2(\R^3)}^2,
$$
and the fact that for any $M_0\geq 3$,
\begin{align*}
   & \sup_{X\in \R^2}\sum_{R\in\mathbb{L}}\frac{1}{(1+|\epsilon^{-1}X-R|^2)^{M_0}}=\sup_{Y\in\Omega^*}\sum_{R\in\mathbb{L}}\frac{1}{(1+|Y-R|^2)^{M_0}} <\infty, 
\end{align*}
since $Y\mapsto \sum_{R'\in\mathbb{L}}\frac{1}{(1+|Y-R'|^2)^{M_0}}$ is $\mathbb{L}$-periodic,
we get
\begin{align*}
&\sum_{R \in \mathbb L} \sum_{R' \in \mathbb L}  \fint_{\Omega^*}  \int_{\R^2}   \int_{\R^2} \left|\cI_{\epsilon,R,R'}^{g,f}
\left(\cL_{3,R,\epsilon}^{M_0}\cL_{1,\epsilon}^{M_0}\cL_{2,R-R'}^{M_0},a_\epsilon \right)(k,k',X)  \right| \, dk' dX dk \\
& \qquad \le C  {\dps \max_{|\alpha|,|\beta|\leq 2M_0}}  \sup_{\substack{k\in\R^2\\ \epsilon\in (0,\epsilon_0]}}\int_{\R^2}\|\partial_k^\alpha\partial_X^\beta  a_\epsilon(k,X)\|_{\cB(L^2_{\rm per})}dX 
 \left( \|g\|_{L^2(\R^3)}^2 +\|f\|_{L^2(\R^3)}^2 \right) < \infty.
\end{align*}
Note that the crude bound $(1+|\epsilon^{-1}X-R|^2) \ge 1$ is sufficient to obtain this result. The term 
$(1+|\epsilon^{-1}X-R|^2)^{-M_0/2}$ in \eqref{eq:bound_gOpaf} will  play a more important role in the trace estimates below.

\subsubsection{Proof of Lemma \ref{trace}}\label{sec:5.2.2}

Let us first establish the following lemma.

\begin{lemma}\label{lem:trace-L2per}
    Let $a \in \mathfrak{S}_1(L^2_{\rm per})$ and $(e_j)_{j \in \N}$ an orthonormal basis of $L^2(\R^3)$. Then for all $k \in \R^2$ and $R \in \R^2$, it holds
    $$
    \|a\|_{\mathfrak{S}_1(L^2_{\rm per})} = \sum_{j \in \N} \left\| |a|^{1/2} I_\#(e^{-ik (\bullet + R)} e_j(\bullet + R)) \right\|_{L^2_{\rm per}}^2.
    $$
\end{lemma}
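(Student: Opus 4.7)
The plan is to realize the vectors appearing in the right-hand side as the images of the basis $(e_j)$ under a partial isometry $V_{k,R} : L^2(\R^3) \to L^2_{\rm per}$ satisfying $V_{k,R} V_{k,R}^* = \1_{L^2_{\rm per}}$, and then compute the sum as a Hilbert--Schmidt norm.

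First I would introduce the operator $V_{k,R} : L^2(\R^3) \to L^2_{\rm per}$ defined by
$$
V_{k,R} f := I_{\#}\bigl(e^{-ik\cdot(\bullet+R)} f(\bullet+R)\bigr).
$$
The key observation is that $V_{k,R} f$ depends only on the restriction $f_R := f|_{(\Omega+R)\times \R}$. Using the orthogonal decomposition $L^2(\R^3) = \bigoplus_{R' \in \mathbb L} L^2((\Omega+R')\times \R)$, one sees that $V_{k,R}$ vanishes on every summand with $R' \neq R$, while on $L^2((\Omega+R)\times \R)$ it is the composition of the translation $g \mapsto g(\bullet+R)$ (a unitary onto $L^2(\Omega \times \R)$), multiplication by the unit-modulus function $e^{-ik\cdot(\bullet+R)}$, and the identification $L^2(\Omega \times \R) \simeq L^2_{\rm per}$ via $I_{\#}$. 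Consequently $V_{k,R}$ is a partial isometry with initial space $L^2((\Omega+R)\times \R)$ and full range $L^2_{\rm per}$, and in particular
$$
V_{k,R} V_{k,R}^* = \1_{L^2_{\rm per}}.
$$

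Next, I would set $C := |a|^{1/2} V_{k,R} : L^2(\R^3) \to L^2_{\rm per}$. Since $|a|^{1/2} \in \mathfrak{S}_2(L^2_{\rm per})$ and $V_{k,R}$ is bounded, $C$ is Hilbert--Schmidt, and the relation above yields
$$
CC^* = |a|^{1/2} V_{k,R} V_{k,R}^* |a|^{1/2} = |a|.
$$
Therefore, on the one hand,
$$
\|C\|_{\mathfrak{S}_2}^2 = \Tr_{L^2_{\rm per}}(CC^*) = \Tr_{L^2_{\rm per}}(|a|) = \|a\|_{\mathfrak{S}_1(L^2_{\rm per})},
$$
while evaluating the same Hilbert--Schmidt norm in the orthonormal basis $(e_j)$ of $L^2(\R^3)$ gives
$$
\|C\|_{\mathfrak{S}_2}^2 = \sum_{j\in\N} \|C e_j\|_{L^2_{\rm per}}^2 = \sum_{j\in\N} \bigl\| |a|^{1/2}\, I_{\#}(e^{-ik\cdot(\bullet+R)} e_j(\bullet+R)) \bigr\|_{L^2_{\rm per}}^2.
$$
Comparing the two expressions yields the claimed identity, uniformly in $k$ and $R$.

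The only real subtlety worth flagging is that $(V_{k,R} e_j)_{j \in \N}$ is in general \emph{not} an orthonormal basis of $L^2_{\rm per}$: many images can be zero, and those that are nonzero are typically linearly dependent, since $V_{k,R}$ collapses the orthogonal complement of $L^2((\Omega+R)\times \R)$. The equality is therefore not a matter of choosing a convenient basis; it rests precisely on the isometry identity $V_{k,R} V_{k,R}^* = \1_{L^2_{\rm per}}$ that forces the auxiliary operator $C$ to satisfy $CC^* = |a|$. Beyond this point, no computation is involved.
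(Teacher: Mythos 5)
Your proof is correct, and it is organized differently from the paper's. The paper argues from the spectral decomposition $|a| = \sum_\ell \lambda_\ell |u_\ell\rangle\langle u_\ell|$, extends each $u_\ell$ by zero to $\widetilde u_\ell \in L^2(\R^3)$, swaps the double sum, and invokes Parseval in the basis $(e^{-ik(\bullet+R)}e_j(\bullet+R))_j$ to reduce to $\sum_\ell \lambda_\ell$. You instead introduce the partial isometry $V_{k,R}$ explicitly, observe $V_{k,R}V_{k,R}^* = \1_{L^2_{\rm per}}$, and compute the Hilbert--Schmidt norm of $C = |a|^{1/2}V_{k,R}$ in two ways. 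Both proofs rest on the same structural fact — the adjoint $V_{k,R}^*$ (which embeds $L^2_{\rm per}$ into $L^2(\R^3)$ as functions supported on $(\Omega+R)\times\R$, up to a phase) is an isometry — but the paper verifies this implicitly via the normalization $\|\widetilde u_\ell\|_{L^2(\R^3)} = \|u_\ell\|_{L^2_{\rm per}} = 1$, whereas you make it explicit as the partial-isometry identity. Your version is more compact and transparent about \emph{why} the result holds even though $(V_{k,R}e_j)_j$ is far from an orthonormal basis of $L^2_{\rm per}$ (a point you correctly flag); the paper's version is more elementary, proceeding entirely by direct computation without invoking the trace-ideal identity $\|C\|_{\mathfrak{S}_2}^2 = \Tr(CC^*)$. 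Either argument is acceptable.
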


\begin{proof}
Let $(\lambda_\ell)_{\ell \in \N}$ be the singular values of $a$, counted with their multiplicities, $(u_\ell)_{\ell \in \N}$ be an orthonormal basis of $L^2_{\rm per}$ such that 
\begin{align*}
       |a|=  \sum_{\ell \in \N} \lambda_\ell\left|u_\ell\right>\left<u_\ell\right|,
    \end{align*}
    and $\widetilde u_\ell \in L^2(\R^3)$ the extension by zero of the function equal to $u_\ell$ on $\Omega \times \R$.
    We have
    \begin{align*}
    &\sum_{j \in \N} \left\| |a|^{1/2} I_\#(e^{-ik (\bullet + R)} e_j(\bullet + R)) \right\|_{L^2_{\rm per}}^2  \\ & \qquad = \sum_{j \in \N} \langle I_\#(e^{-ik (\bullet + R)} e_j(\bullet + R)), |a|  I_\#(e^{-ik (\bullet + R)} e_j(\bullet + R)) \rangle_{L^2_{\rm per}} \\  & \qquad = 
    \sum_{j \in \N} \sum_{\ell \in \N} \lambda_\ell  |\langle I_\#(e^{-ik (\bullet + R)} e_j(\bullet + R)), u_\ell \rangle_{L^2_{\rm per}}|^2 \\ & \qquad = 
     \sum_{\ell \in \N} \lambda_\ell  \left( \sum_{j \in \N}|\langle e^{-ik (\bullet + R)} e_j(\bullet + R), \widetilde u_\ell \rangle_{L^2(\R^3)}|^2 \right).
    \end{align*}
    Observing that for all $k$ and $R$ in $\R^2$, $(e^{-ik(\bullet +R)}e_j(\bullet+R))_{j \in \N}$ is an orthonormal basis of $L^2(\R^3)$, and using Parseval's equality and the fact that $\|\widetilde u_\ell\|_{L^2(\R^3)}=\|u_\ell\|_{L^2_{\rm per}}=1$, we obtain
    $$
    \sum_{j \in \N} \| |a|^{1/2} I_\#(e^{-ik (\bullet + R)} e_j(\bullet + R)) \|_{L^2_{\rm per}}^2 = \sum_{\ell \in \N} \lambda_\ell = \|a\|_{{\mathfrak S}_1(L^2_{\rm per})},
    $$
    which completes the proof.
\end{proof}

The proof of Lemma \ref{trace} is inspired by \cite[Chapter 9]{dimassi1999spectral}. 

\medskip

\noindent
{\bf Step 1.} Let us first show that $\Op(a_\epsilon)\in \mathfrak{S}_1(\cH)$. Using the unitarity of the Bloch transform and the variational characterization of the trace norm, we have
\begin{align*}
    \|\Op(a_\epsilon)\|_{\mathfrak{S}_1(\cH)}&=\|\cU^{-1}\Op(a_\epsilon) \cU\|_{\mathfrak{S}_1(L^2(\R^3))}\\
    &=\sup_{(f_j),(g_j)}\sum_{j \in \N}\left< g_j,\cU^{-1}\Op(a_\epsilon) \cU f_j\right>_{L^2(\R^3)}\\
    &=\sup_{(f_j),(g_j)}\sum_{j \in \N}\fint_{\Omega^*}\left<(\cU g_j)_k,\Op(a_\epsilon)(\cU f_j)_k\right>_{L^2_{\rm per}}dk,
\end{align*}
where the supremum is taken over the set of pairs of orthonormal bases $(f_j)_{j \in \N}$ and $(g_j)_{j \in \N}$ of $L^2(\R^3)$.

From \eqref{eq:UgOp(a)Uf_1} and \eqref{eq:bound_gOpaf}, we deduce that for any integer $M_0 \ge 3$, we have
\begin{align}\label{eq:Ugj-Ufj}
  \MoveEqLeft \epsilon^2\left| \sum_{j\in \N}\fint_{\Omega}\left<(\cU g_j)_k,\Op(a_\epsilon)(\cU f_j)_k\right>_{L^2_{\rm per}}dk\right|\notag\\
  &
  \leq C\sum_{j \in \N}\;\max_{|\alpha|,|\beta|,|\gamma_1|,|\gamma_2|\leq 2M_0}\;\sum_{R,R'\in \mathbb{L}} \frac{1}{(1+|R-R'|^2)^{M_0/2}}\notag\\
  &\quad\times \int_{\R^2}\int_{\R^2}\fint_{\Omega}\Bigg[\frac{1}{(1+|k-k'|^2)^{M_0/2}}\frac{1}{(1+|\epsilon^{-1}X- R|^2)^{M_0/2}}\notag\\
  &\quad\times \Bigg(\left\|\left|a_{\epsilon,\alpha,\beta}^{\gamma_1,\gamma_2}\left(\frac{k+k'}{2},X\right)\right|^{1/2} (\cU g_{j,R'})_{k}\right\|_{L^2_{\rm per}}^2\notag\\
  &\quad+ \left\|\left|a_{\epsilon,\alpha,\beta}^{\gamma_1,\gamma_2}\left(\frac{k+k'}{2},X\right)\right|^{1/2} (\cU f_{j,R})_{k'}\right\|_{L^2_{\rm per}}^2\Bigg)\Bigg]dk'dXdk,
\end{align}
where $C \in \R_+$ is a universal constant. As $I_\# \in \cB(L^2_{\rm per})$, we have
\begin{align*}
\left\|a_{\epsilon,\alpha,\beta}^{\gamma_1,\gamma_2}\left(\frac{k+k'}{2},X\right)\right\|_{\mathfrak{S}_1(L^2_{\rm per})}\leq  \|I_\#\|_{\cB(L^2_{\rm per})}^{4M_0} \left\|\partial_k^\alpha\partial_X^\beta a_\epsilon\left(\frac{k+k'}{2},X\right)\right\|_{\mathfrak{S}_1(L^2_{\rm per})}.
\end{align*}
Then from Lemma \ref{lem:trace-L2per} and \eqref{def:gR'-fR}, we infer
\begin{align*}
   \MoveEqLeft \epsilon^2\left|\sum_{j}\fint_{\Omega}\left<(\cU g_j)_k,\Op(a_\epsilon)(\cU f_j)_k\right>_{L^2_{\rm per}}dk\right|\\
   &\leq C \max_{|\alpha|,|\beta|\leq 2M_0 }\;\sum_{R,R'\in \mathbb{L}} \frac{1}{(1+|R-R'|^2)^{M_0/2}}\int_{\R^2}\int_{\R^2}\fint_{\Omega}\frac{1}{(1+|k-k'|^2)^{M_0/2}}\\
  &\qquad\qquad\qquad\times \frac{1}{(1+|\epsilon^{-1}X- R|^2)^{M_0/2}}\left\|\partial_k^\alpha\partial_X^\beta  a_\epsilon\left(\frac{k+k'}{2},X\right)\right\|_{\mathfrak{S}_1(L^2_{\rm per})}dk'dXdk\\
  &\leq C \max_{|\alpha|,|\beta|\leq 2M_0 }\;\int_{\R^2}\int_{\R^2}\fint_{\Omega}\frac{1}{(1+|k-k'|^2)^{M_0/2}}\\
  &\qquad\qquad\qquad\times\sum_{R'\in \mathbb{L}} \frac{1}{(1+|\epsilon^{-1}X- R'|^2)^{M_0/2}}\left\|\partial_k^\alpha\partial_X^\beta  a_\epsilon\left(\frac{k+k'}{2},X\right)\right\|_{\mathfrak{S}_1(L^2_{\rm per})}dk'dXdk\\
  &\leq C\max_{|\alpha|,|\beta|\leq 2M_0 }\;\int_{\R^2}\int_{\R^2}\fint_{\Omega}\frac{1}{(1+|k-k'|^2)^{M_0/2}}\left\|\partial_k^\alpha\partial_X^\beta  a_\epsilon\left(\frac{k+k'}{2},X\right)\right\|_{\mathfrak{S}_1(L^2_{\rm per})}dk'dXdk\\
  &\leq C \max_{|\alpha|,|\beta|\leq 2M_0 }\; \sup_{k\in\R^2}\int_{\R^2}\left\|\partial_k^\alpha\partial_X^\beta  a_\epsilon(k,X)\right\|_{\mathfrak{S}_1(L^2_{\rm per})}dX.
\end{align*}

As a result, we obtain that for any $M_0\geq 3$,
\begin{align}\label{eq:5.25}
    \sup_{\epsilon\in (0,\epsilon_0]} \epsilon^2 \|\Op(a_\epsilon)\|_{\mathfrak{S}_1(\cH)}\leq C \max_{|\alpha|,|\beta|\leq 2M_0 }\;\sup_{\substack{k\in\R^2\\ \epsilon\in (0,\epsilon_0]}}\int_{\R^2}\|\partial_k^\alpha\partial_X^\beta  a_\epsilon(k,X)\|_{\mathfrak{S}_1(L^2_{\rm per})}dX<\infty.
\end{align}
This shows that $\Op(a_\epsilon)\in \mathfrak{S}_1(\mathcal{H})$ and $\epsilon^2\|\Op(a_\epsilon)\|_{\mathfrak{S}_1(\mathcal{H})}$ is uniformly bounded in $\epsilon\in (0,\epsilon_0]$.

\medskip

\noindent
{\bf Step 2.} Let us now prove \eqref{eq:trace1}. Let $(\psi_{n,R})_{n\in \N,R\in \mathbb{L}}$ be an orthonormal basis of Wannier functions of $L^2(\R^3)$ constructed as follows:
$$
\forall (n,R,x,z) \in \N \times \mathbb L \times \R^2 \times \R, \quad  \psi_{n,R}(x,z) = u_n|_{\Omega \times \R}(x-R,z),
$$
where $(u_n)_{n \in \N}$ is an orthonormal basis of $L^2_{\rm per}$ (each $\psi_{n,R}$ is supported in $(R+\Omega) \times \R$). For such a basis, it holds
$$
(\cU\psi_{n,R})_k = e^{-ik \cdot R} (\cU\psi_{n,0})_k.
$$
We then have
\begin{align}\label{eq:0.11}
  \MoveEqLeft \Tr_{\mathcal{H}}(\Op(a_\epsilon))=\sum_{R\in\mathbb{L}}\sum_{n\in \N}\left<\mathcal{U}\psi_{n,R}, \Op(a_\epsilon)\mathcal{U}\psi_{n,R}\right>_{\mathcal{H}}\notag\\
&=\frac{1}{(2\pi\epsilon)^2} \sum_{n\in \N}\sum_{R\in\mathbb{L}}  \int\limits_{\R^2}\int\limits_{\R^2}\fint\limits_{\Omega^*}\left\langle   (\cU\psi_{n,0})_k,a_\epsilon\left(\frac{k+k'}{2},X\right)(\cU\psi_{n,0})_{k'}\right\rangle_{L^2_{\rm per}}\notag\\
&\quad\qquad\qquad\qquad\qquad\qquad\qquad\qquad\qquad\qquad \times e^{\frac{-i(k-k')\cdot X}{\epsilon}} e^{i(k-k')\cdot R} dk'dXdk.
\end{align}
Using Poisson's formula
\begin{align*}
\sum_{R\in\mathbb{L}}  e^{-i(k-k')\cdot R}=|\Omega^*|\sum_{G\in\mathbb{L}^*}\delta(k-k'-G),
\end{align*}
we obtain
\begin{align*}
     \MoveEqLeft \epsilon^2\Tr_{\mathcal{H}}(\Op(a_\epsilon))\\
     &=\frac{1}{(2\pi)^2}\sum_{G\in\mathbb{L}^*}\int\limits_{\R^2}\int\limits_{\Omega^*}\sum_{n\in \N} \left\langle  (\cU \psi_{n,0})_{k'+G}, a_\epsilon\left(k'+\frac{G}{2},X\right)(\cU \psi_{n,0})_{k'}\right\rangle_{L^2_{\rm per}} e^{i\frac{G\cdot X}{\epsilon}}dk'dX.
\end{align*}

For $G\neq 0$, we have
$-\frac{\epsilon^2}{|G|^2}\Delta_X e^{i\frac{G\cdot X}{\epsilon}}=e^{i\frac{G\cdot X}{\epsilon}}$, to that for any $m\in \N$,
\begin{align*}
 \MoveEqLeft   \left|\int\limits_{\R^2}\fint\limits_{\Omega^*}\sum_{n\in \N} \left\langle   (\cU \psi_{n,0})_{k'+G}, a_\epsilon\left(k'+\frac{G}{2},X\right)(\cU \psi_{n,0})_{k'}\right\rangle_{L^2_{\rm per}} e^{\frac{iG\cdot X}{\epsilon}}dk'dX\right|\\
 &=\frac{\epsilon^{2m}}{|G|^{2m}}\left|\int\limits_{\R^2}\fint\limits_{\Omega^*}\sum_{n\in \N} \left\langle   (\cU \psi_{n,0})_{k'+G}, \Delta_X^m a_\epsilon\left(k'+\frac{G}{2},X\right)(\cU \psi_{n,0})_{k'}\right\rangle_{L^2_{\rm per}} \!\!\!\!\!\!\!\!e^{\frac{iG\cdot X}{\epsilon}}dk'dX\right|\\
 &\leq C\frac{\epsilon^{2m}}{|G|^{2m}}\sup_{k\in \R^2}\int_{\R^2}\|\Delta_X^m a_\epsilon\|_{\mathfrak{S}_1(L^2_{\rm per})}(k,X)dX
\end{align*}
since $((\cU \psi_{n,0})_{k+G})_{n\geq 1}$ and $((\cU \psi_{n,0})_{k})_{n\geq 1}$ are two orthonormal bases of $L^2_{\rm per}$ and
\begin{align*}
 \MoveEqLeft   \sum_{n\in \N} \left|\left\langle   (\cU \psi_{n,0})_{k'+G}, \Delta_X^m a_\epsilon\left(k'+\frac{G}{2},X\right)(\cU \psi_{n,0})_{k'}\right\rangle_{L^2_{\rm per}}\right|\\
 &\leq \frac{1}{2}\sum_{n\in \N} \left\|\left|\Delta_X^m a_\epsilon\left(k'+\frac{G}{2},X\right)\right|^{1/2} (\cU \psi_{n,0})_{k'+G}\right\|_{L^2_{\rm per}}^2\\
 &\quad+\frac{1}{2}\sum_{n\in \N}\left\|\left|\Delta_X^m a_\epsilon\left(k'+\frac{G}{2},X\right)\right|^{1/2} (\cU \psi_{n,0})_{k'}\right\|_{L^2_{\rm per}}^2\\
 &= \left\|\Delta_X^m a_\epsilon \left(k+\frac{G}{2},X\right)\right\|_{\mathfrak{S}_1(L^2_{\rm per})}.
\end{align*}
Denoting by 
\begin{align*}
\rho_m(\epsilon):= \epsilon^{-2m} \sum_{G\in\mathbb{L}^*\setminus\{0\}}\int\limits_{\R^2}\fint\limits_{\Omega^*}\sum_{n\in \N} \left\langle   (\cU \psi_{n,0})_{k'+G}, a_\epsilon\left(k'+\frac{G}{2},X\right)(\cU \psi_{n,0})_{k'}\right\rangle_{L^2_{\rm per}}e^{\frac{-iG\cdot X}{\epsilon}}dk'dX,
\end{align*}
and using $\left( - \frac{\epsilon^2}{|G|^2} \Delta_X \left( e^{\frac{-iG\cdot \bullet}{\epsilon}} \right) \right)(X)= e^{\frac{-iG\cdot X}{\epsilon}}$ for $G\neq 0$, we obtain
$$
|\rho_m(\epsilon)| \le C\sup_{k\in \R^2}\int_{\R^2}\|\Delta_X^m a_\epsilon\|_{\mathfrak{S}_1(L^2_{\rm per})}(k,X)dX,
$$
and 
\begin{align*}
    \epsilon^2\Tr_{\mathcal{H}}(\Op(a_\epsilon))&=\frac{1}{(2\pi)^2}\int\limits_{\R^2}\int\limits_{\Omega^*}\sum_{n\in \N} \left\langle   (\cU \psi_{n,0})_{k'}, a_\epsilon\left(k',X\right)(\cU \psi_{n,0})_{k'}\right\rangle_{L^2_{\rm per}}dk'dX + \epsilon^{2m} \rho_m(\epsilon) \\
     & = \frac{1}{(2\pi)^2}\int\limits_{\R^2}\int\limits_{\Omega^*} \tr_{L^2_{\rm per}}\left(a_\epsilon\left(k,X\right) \right) \, dk \, dX + \epsilon^{2m} \rho_m(\epsilon).
\end{align*}
This gives \eqref{eq:trace1} and completes the proof.

\subsection{Twisted Weyl calculus}
\label{sec:TWC}

We recall that $c:[-1,1] \ni \epsilon \mapsto c(\epsilon):= \frac{1-\sqrt{1-\epsilon^2}}\epsilon \in \R$, and that  the twisted Weyl quantization operator ${\rm Op}^c_\epsilon$ is defined in \eqref{eq:TWQ}. We formally define the twisted Moyal product $\widetilde\#_c$ of two symbols $a$ and $b$ as
\begin{align}
&(a \widetilde\#_c b)_\epsilon(Y) := \mathcal{T}^{-1}_{c(\epsilon)X}\big[a^c_\bullet \# b^c_\bullet )\big]_\epsilon(Y) \mathcal{T}_{c(\epsilon)X} \nonumber \\
&= \frac{1}{(\pi\epsilon)^4} \int_{\R^4 \times \R^4} \!\!\!\!
e^{\frac{2i}\epsilon \sigma(Y_1,Y_2)} \cT_{c(\epsilon)X_1}
a(Y+Y_1) \cT_{c(\epsilon)(X_2-X_1)} b(Y+Y_2) \cT_{-c(\epsilon)X_2} \, dY_1 \, dY_2, \label{eq:Moyal-twi}
\end{align}
with $Y=(k,X)$, $Y_1=(k_1,X_1)$, $Y_2=(k_2,X_2)$.

 \begin{proposition}\label{prop:PS-asymp}
Let $\mathcal{E},\mathcal{E}_1,\mathcal{E}_2$ be Banach spaces of the form $\cB(H^{s_1}_{\rm per},H^{s_2}_{\rm per})$ or ${\mathfrak S}_1(L^2_{\rm per})$ satisfying
\begin{align}\label{Holder}
   \forall a\in\mathcal{E}_1,b\in \mathcal{E}_2, \quad ab \in \mathcal{E} \quad \mbox{and} \quad  \|ab\|_{\mathcal{E}}\leq \|a\|_{\mathcal{E}_1}\|b\|_{\mathcal{E}_2}.
\end{align}
Let $0 < \epsilon_0 \le 1$. For any $a_\bullet \in S_{\twi}^{\omega_1,\epsilon_0}(\mathcal{E}_1)$ and $b_\bullet \in S_{\twi}^{\omega_2,\epsilon_0}(\mathcal{E}_2)$, $(a_\bullet \widetilde\#_c b_\bullet)_\bullet \in S_{\twi}^{\omega_1\omega_2,\epsilon_0}(\cE)$ and it holds
\begin{align}
\forall 0 < \epsilon \le \epsilon_0, \quad \Opc(a_\epsilon)\Opc(b_\epsilon)=\Opc((a_\bullet \widetilde \#_c b_\bullet )_\epsilon).
\end{align}
Furthermore, if $a\in S_{\twi}^{\omega_1}(\mathcal{E}_1)$ and $ b\in S_{\twi}^{\omega_2}(\mathcal{E}_2)$, then $(a \widetilde\#_c b)_\bullet$ is a semiclassical symbol in $S_{\twi}^{\omega_1\omega_2,1}(\epsilon,\cE)$, i.e., there exist $(d_{j})_{j \in \N}\in (S_{\twi}^{\omega_1\omega_2}(\mathcal{E}))^\N$ such that for all $n \in \N$, 
\begin{align}\label{eq:48}
    (a \widetilde\#_c b)_\epsilon = \sum_{j=0}^{n-1}\epsilon^j d_{j} + \mathcal{O}_{S_{\twi}^{\omega_1\omega_2,1}(\mathcal{E})}(\epsilon^{n+1})
\end{align}
In particular, we have
\begin{align}\label{eq:asym-2}
    d_0(k,X)=a(k,X)b(k,X), \quad d_1(k,K)=\frac{i}{2}\{a,b\}(k,X),
\end{align}
and  
\begin{align*}
    d_2(k,X)=-\frac{1}{8}\{a,b\}_2(k,X)+\frac{i}{4}\Big(\partial_k a (k,X) \, {\rm ad}_{\partial_x}b(k,X)-{\rm ad}_{\partial_x}a (k,X) \; \partial_k b(k,X)\Big).
\end{align*}
\end{proposition}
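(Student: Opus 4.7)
The strategy is to reduce the twisted calculus to the standard one via the untwisting map $a_\epsilon \mapsto a_\epsilon^c := \cT_{c(\epsilon)X}\,a_\epsilon\,\cT_{c(\epsilon)X}^{-1}$, which, by \eqref{eq:N<N-twi}, sends $S_{\twi}^{\omega,\epsilon_0}(\cE)$ into the standard class $S^{\omega,\epsilon_0}(\cE)$. Since $\Opc(a_\epsilon)=\Op(a_\epsilon^c)$ by definition, applying the standard Moyal composition rule to $a^c_\bullet, b^c_\bullet$ gives
$\Opc(a_\epsilon)\Opc(b_\epsilon) = \Op((a^c \# b^c)_\epsilon) = \Opc\bigl(\cT_{c(\epsilon)X}^{-1}(a^c \# b^c)_\epsilon\cT_{c(\epsilon)X}\bigr)$, which by the definition of $\widetilde\#_c$ is $\Opc((a \widetilde\#_c b)_\epsilon)$. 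The explicit integral representation \eqref{eq:Moyal-twi} is then obtained by substituting $a^c(Y+Y_i)=\cT_{c(\epsilon)(X+X_i)}\,a(Y+Y_i)\,\cT_{c(\epsilon)(X+X_i)}^{-1}$ into \eqref{eq:Weyl-product} and collapsing consecutive translations via $\cT_{y_1}^{-1}\cT_{y_2} = \cT_{y_2-y_1}$.

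The main work consists in verifying the three defining properties of $S_{\twi}^{\omega_1\omega_2,\epsilon_0}(\cE)$ from Definition~\ref{def:tau-equi-g}. The $\tau$-equivariance in $k$ follows because $\tau_G \cT_{c(\epsilon)X} = e^{iG\cdot c(\epsilon)X}\,\cT_{c(\epsilon)X}\tau_G$, so the scalar phases cancel in the two-sided conjugation; combined with the $\tau$-equivariance of the standard Moyal product (as in the proof of Proposition~\ref{prop:ellip}), this yields $\tau$-equivariance of $(a \widetilde\#_c b)_\epsilon$. The $J\mathbb L$-periodicity in $X$ is the subtle point, since $a^c$ is not itself periodic: using periodicity of $a$, a direct computation gives $a^c(k,X+R) = \cT_{c(\epsilon)R}\,a^c(k,X)\,\cT_{c(\epsilon)R}^{-1}$ for $R \in J\mathbb L$, and the Moyal integral then propagates this to $(a^c \# b^c)_\epsilon(k,X+R) = \cT_{c(\epsilon)R}(a^c \# b^c)_\epsilon(k,X)\cT_{c(\epsilon)R}^{-1}$; the extra conjugation is exactly cancelled when one conjugates by $\cT_{c(\epsilon)(X+R)}^{-1}$ rather than $\cT_{c(\epsilon)X}^{-1}$, restoring periodicity. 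The seminorm bounds rest on the identity $\partial_{X_j}\cT_{c(\epsilon)X} = -c(\epsilon)\,\cT_{c(\epsilon)X}\partial_{x_j}$ and its inverse: the Leibniz rule applied to $\cT_{c(\epsilon)X}^{-1}(a^c \# b^c)_\epsilon\cT_{c(\epsilon)X}$ converts every $\partial_X$ derivative falling on a translation factor into a $\px$ derivative of the enclosed symbol --- precisely the derivatives controlled by \eqref{norm:symbols_twi}. Together with the standard symbol estimates for $(a^c \# b^c)_\epsilon$, this yields $\cN^{\omega_1\omega_2,\cE,\epsilon_0}_{\alpha,\beta,\gamma}((a\widetilde\#_c b)_\bullet) < \infty$ for all $\alpha,\beta,\gamma \in \N^2$.

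For the asymptotic expansion in the $\epsilon$-independent case, the standard Moyal expansion gives
\begin{equation*}
(a^c \# b^c)_\epsilon = a^c b^c + \tfrac{i\epsilon}{2}\{a^c,b^c\} - \tfrac{\epsilon^2}{8}\{a^c,b^c\}_2 + \mathcal{O}_{S^{\omega_1\omega_2,1}(\cE)}(\epsilon^3).
\end{equation*}
Using $\partial_k a^c = (\partial_k a)^c$ together with $\partial_{X} a^c = (\partial_{X}a - c(\epsilon)\,\px a)^c$, one finds $\cT_{c(\epsilon)X}^{-1}(a^c b^c)\cT_{c(\epsilon)X} = ab$ exactly, $\cT_{c(\epsilon)X}^{-1}\{a^c,b^c\}\cT_{c(\epsilon)X} = \{a,b\} - c(\epsilon)(\px a \cdot \nabla_k b - \nabla_k a \cdot \px b)$, and $\cT_{c(\epsilon)X}^{-1}\{a^c,b^c\}_2\cT_{c(\epsilon)X} = \{a,b\}_2 + \mathcal{O}(c(\epsilon))$. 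Substituting the Taylor expansion $c(\epsilon) = \epsilon/2 + \mathcal{O}(\epsilon^3)$ and collecting powers of $\epsilon$ produces the announced $d_0 = ab$, $d_1 = \tfrac{i}{2}\{a,b\}$, and $d_2$. The main anticipated obstacle is the $J\mathbb L$-periodicity verification: one must carefully track how the $X$-dependent translations $\cT_{c(\epsilon)X_i}$ inside the Moyal integral interact with the periodicity shift, the linchpin being the identity $a^c(k,X+R) = \cT_{c(\epsilon)R}\,a^c(k,X)\,\cT_{c(\epsilon)R}^{-1}$; once this is settled, the remaining assertions follow by standard Weyl-calculus bookkeeping combined with the identity $\partial_X \cT_{c(\epsilon)X} = -c(\epsilon)\, \cT_{c(\epsilon)X}\partial_x$.
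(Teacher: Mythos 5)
Your proposal is correct and follows the same overall strategy as the paper: reduce to standard Weyl calculus via the untwisting map $a \mapsto a^c$, then verify the three defining properties of $S_{\twi}^{\omega_1\omega_2,\epsilon_0}(\cE)$. The composition identity, the $\tau$-equivariance via $\tau_G \cT_{x_0} = e^{iG\cdot x_0}\cT_{x_0}\tau_G$, and the quasi-periodicity relation $a^c(k,X+R)=\cT_{c(\epsilon)R}\,a^c(k,X)\,\cT_{c(\epsilon)R}^{-1}$ that you use to recover $J\mathbb L$-periodicity are all sound and reproduce the paper's conclusions. (The paper actually gets periodicity slightly more cheaply, reading it off from the integral representation \eqref{eq:Moyal-twi} in which the $\cT_{c(\epsilon)X_i}$ factors depend only on the integration variables, not on $X$ — but the route through quasi-periodicity of $a^c$ that you take is equivalent.) Similarly, for the seminorm bounds, the paper deduces the Leibniz rule $\partial_\bullet(a\widetilde\#_c b) = (\partial_\bullet a\widetilde\#_c b) + (a\widetilde\#_c\partial_\bullet b)$ for $\partial_\bullet \in\{\partial_k,\partial_X,\px\}$ directly from the integral representation, whereas you derive it by conjugating the standard Leibniz rule and tracking the $\px$ terms produced by $\partial_X\cT_{c(\epsilon)X}=-c(\epsilon)\cT_{c(\epsilon)X}\partial_x$; the two computations are the same after unfolding.

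The one place where your route genuinely differs is the asymptotic-expansion step. You apply the standard Moyal expansion of $a^c_\epsilon\#b^c_\epsilon$ (treating $a^c,b^c$ as an $\epsilon$-indexed family), then conjugate back and use $\partial_X a^c=(\partial_X a-c(\epsilon)\px a)^c$ together with the Taylor series of $c(\epsilon)$ to organize powers of $\epsilon$. This works — the Moyal remainder estimates depend only on finitely many seminorms of $a^c,b^c$, which by \eqref{eq:N<N-twi} are uniformly controlled — and it produces the correct $d_0,d_1,d_2$ (your $d_2=-\tfrac18\{a,b\}_2-\tfrac i4(\px a\cdot\nabla_k b-\nabla_k a\cdot\px b)$ agrees with the paper's after a sign rewriting). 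The paper instead changes variables $(k_j,X_j)\mapsto(k_j,\epsilon X_j)$ inside the twisted Moyal integral, which exposes the translations as $\cT_{g(\epsilon)X_j}$ with $g(\epsilon)=\epsilon c(\epsilon)=\epsilon^2/2+O(\epsilon^4)$, and then expands the single integrand in powers of $\epsilon$. The paper's change of variables has the advantage of producing all coefficients at once from one integral, without having to Taylor-expand $c(\epsilon)$ after the fact and re-assemble the $\epsilon$-powers; your version is perhaps more transparent about which standard-calculus facts are invoked. Both are legitimate.

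Two points you should make explicit if you write this up. First, when you invoke the standard Moyal expansion for the $\epsilon$-dependent family $a^c_\epsilon,b^c_\epsilon$, say a sentence about why the remainder is controlled uniformly in $\epsilon$ — it is, because the remainder estimate is a seminorm inequality and the seminorms of $a^c_\bullet,b^c_\bullet$ are uniform by \eqref{eq:N<N-twi}, but as stated it looks like you are applying a theorem for fixed symbols to a moving target. Second, the final remainder in \eqref{eq:48} must live in $S_{\twi}^{\omega_1\omega_2,1}(\cE)$, i.e., be controlled by the three-index seminorms including $\px^\gamma$; your sketch controls the $\partial_k^\alpha\partial_X^\beta$ derivatives of $\cT^{-1}_{c(\epsilon)X}r_\epsilon\cT_{c(\epsilon)X}$ by unitarity, but you should note that the $\px^\gamma$ derivatives are covered too, since $\px$ commutes with the conjugation and with the standard Moyal remainder's structure. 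Neither of these is a gap, but the statement of the class $S_{\twi}$ makes them worth one line each.
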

\begin{proof}[Proof of Proposition \ref{prop:PS-asymp}]
In view of \eqref{eq:N<N-twi}, we have $a_\bullet^c\in S^{\omega_1,\epsilon_0}(\cE_1)$ and $b_\bullet^c\in S^{\omega_2,\epsilon_0}(\cE_2)$. By \eqref{eq:Weyl-product} and standard composition for Weyl quantization (see e.g. \cite[Theorem 4.18]{zworski2012semiclassical}), $(a^c_\bullet \# b^c_\bullet)_\bullet$ is a symbol in $S^{\omega_1\omega_2,\epsilon_0}(\cE)$. As for all $a \in \R^2$ and $s \in \R$, $\cT_a$ is a unitary operator on $H^s_{\rm per}$, the function $\R^2 \times \R^2 \ni (k,X) \mapsto (a_\bullet \widetilde\#_c b_\bullet)_\epsilon \in \cE$ is well-defined and that
$$
\cN^{\omega_1\omega_2,\cE,\epsilon_0}_{0,0,0}((a_\bullet \widetilde\#_c b_\bullet)_\bullet) = \cN^{\omega_1\omega_2,\cE,\epsilon_0}_{0,0}((a^c_\bullet \# b^c_\bullet)_\bullet) < \infty.
$$ 
From the integral representation \eqref{eq:Moyal-twi}, we infer that
\begin{align*}
 \partial_k  (a_\bullet \widetilde\#_c b_\bullet)_\epsilon &= 
(\partial_k a_\bullet \widetilde\#_c b_\bullet)_\epsilon + 
    (a_\bullet \widetilde\#_c \partial_k b_\bullet)_\epsilon, \\
     \partial_X  (a_\bullet \widetilde\#_c b_\bullet)_\epsilon &= 
(\partial_X a_\bullet \widetilde\#_c b_\bullet)_\epsilon + 
    (a_\bullet \widetilde\#_c \partial_X b_\bullet)_\epsilon \\
      {\rm ad}_{\partial_x}  (a_\bullet \widetilde\#_c b_\bullet)_\epsilon, &= 
({\rm ad}_{\partial_x} a_\bullet \widetilde\#_c b_\bullet)_\epsilon +     (a_\bullet \widetilde\#_c \, {\rm ad}_{\partial_x} b_\bullet)_\epsilon .
\end{align*}
Since $S^{\omega,\epsilon_0}_{\rm twi}(\cE)$ is invariant under the actions of $\partial_k$, $\partial_X$, and ${\rm ad}_{\partial x}$, we deduce from the same arguments as above that for all $\alpha, \beta,\gamma \in \N^2$, there exists a constant $C_{\alpha,\beta,\gamma} \in \R_+$ such that
\begin{align*}
&\cN^{\omega_1\omega_2,\cE,\epsilon_0}_{\alpha,\beta,\gamma}((a_\bullet \widetilde\#_c b_\bullet)_\bullet) \\
& \quad \le C_{\alpha,\beta,\gamma} \sum_{(\alpha',\beta',\gamma') \le (\alpha,\beta,\gamma)} \cN^{\omega_1\omega_2,\cE,\epsilon_0}_{0,0}\left( \left((\partial_k^{\alpha'}\partial_X^{\beta'}{\rm ad}_{\partial_x}^{\gamma'} a_\bullet)^c_\bullet 
\# 
(\partial_k^{\alpha-\alpha'}\partial_X^{\beta-\beta'}{\rm ad}_{\partial_x}^{\gamma-\gamma'} b_\bullet)^c_\bullet\right)_\bullet \right) < \infty.
\end{align*}
It also follows from the integral representation~\eqref{eq:Moyal-twi} and the fact that 
$$
\forall G \in \mathbb{L}^*, \quad \forall x_0 \in \R^2, \quad \tau_G \cT_{x_0} = e^{iG \cdot x_0} \cT_{x_0} \tau_G,
$$
that for all $0 < \epsilon \le \epsilon_0$, the function $\R^2 \times \R^2 \ni (k,X) \mapsto (a_\bullet \widetilde\#_c b_\bullet)_\epsilon(k,X) \in \cE$ is $\tau$-equivariant in the $k$ variable and $J\mathbb L$-periodic in the $X$ variable.  Thus 
$(a_\bullet \widetilde\#_c b_\bullet)_\bullet \in S^{\omega_1\omega_2,\epsilon_0}_{\rm twi}(\cE)$.

\medskip

When $a\in S_{\twi}^{\omega_1}(\mathcal{E}_1)$ and $b\in S_{\twi}^{\omega_2}(\mathcal{E}_2)$, we can apply the change of variables $(k_j,X_j) \mapsto (k_j,\epsilon X_j)$ to the integral representation~\eqref{eq:TWQ} to get
\begin{align*}
   (a_\bullet \widetilde\#_c b_\bullet)_\epsilon(k,X) 
   = \frac 1{\pi^4} \int_{(\R^2)^4} & e^{2i(k_2\cdot X_1-k_1 \cdot X_2)}
 \cT_{g(\epsilon) X_1} a(k+k_1,X+\epsilon X_1)  \cT_{g(\epsilon) (X_2-X_1)} \\ & \qquad  
  b(k+k_2,X+\epsilon X_2)  \cT_{-g(\epsilon) X_2} \, dk_1 \, dX_1 \, dk_2 \, dX_2,
\end{align*}
with 
$$
g(\epsilon):=\epsilon c(\epsilon) = 1-\sqrt{1-\epsilon^2}
= \sum_{n=1}^{+\infty} (-1)^{n+1} \left( \begin{array}{c} 1/2 \\ n \end{array} \right) \epsilon^{2n} = \frac{\epsilon^2}2 + {\mathcal O}(\epsilon^4).  
$$
Since
$$
\forall x_0,k,X \in \R^2, \quad \cT_{x_0} = e^{-x_0 \cdot \partial_x} \quad \mbox{and} \quad 
\cT_{x_0} d(k,X) \cT_{-x_0} = e^{-x_0 \cdot {\rm ad}_{\partial_x}}d(k,X) , 
$$
we get the following final expression of the twisted Moyal product
\begin{align*}
     (a_\bullet \widetilde\#_c b_\bullet)_\epsilon(k,X) 
   = \frac 1{\pi^4} \int_{(\R^2)^4} & e^{2i(k_2\cdot X_1-k_1 \cdot X_2)}
 e^{-\epsilon c(\epsilon)X_1 \cdot {\rm ad}_{\partial_x}} a(k+k_1,X+\epsilon X_1)   \\ & \qquad  
 e^{-\epsilon c(\epsilon)X_2 \cdot {\rm ad}_{\partial_x}} b(k+k_2,X+\epsilon X_2)  \, dk_1 \, dX_1 \, dk_2 \, dX_2.
\end{align*}
Reasoning as in the proof of the asymptotic expansion of $(a\# b)_\epsilon$ for $\epsilon$-independent symbols $a \in S^{\omega_1,1}(\C)$ and $b \in S^{\omega_2,1}(\C)$ (see e.g. \cite[Theorem 4.17]{zworski2012semiclassical}), and using the expansion 
$$
\forall x_0,k,X \in \R^2, \quad \cT_{x_0} d(k,X) \cT_{-x_0} = \sum_{\alpha \in \N^2} \frac{(-1)^{|\alpha|}}{\alpha !} x_0^\alpha \,  {\rm ad}_{\partial_x}^\alpha d(k,X),
$$
we obtain \eqref{eq:48} with
\begin{align*}
    d_0(k,X) :&= \frac 1{\pi^4} \int_{(\R^2)^4}  e^{2i(k_2\cdot X_1-k_1 \cdot X_2)}
  a(k+k_1,X) b(k+k_2,X)  \, dk_1 \, dX_1 \, dk_2 \, dX_2 \\
  &= a(k,X) b(k,X), \\
   d_1(k,X) :&= \frac 1{\pi^4} \int_{(\R^2)^4}  e^{2i(k_2\cdot X_1-k_1 \cdot X_2)} \bigg(X_1 \cdot \partial_X a(k+k_1,X) b(k+k_2,X) \\
   & \qquad\qquad\qquad\qquad
   + a(k+k_1,X) X_2 \cdot \partial_X b(k+k_2,X) \bigg) \,  dk_1 \, dX_1 \, dk_2 \, dX_2   \\
  &= \frac{i}{2}\{ a, b \} (k,X),
 \\
  d_2(k,X) :&= \frac 1{\pi^4} \int_{(\R^2)^4}  e^{2i(k_2\cdot X_1-k_1 \cdot X_2)} \bigg( \frac 12 \sum_{\mu,\nu=1}^2  X_1^\mu  X_1^\nu  \partial_{X^\mu X^\nu}a(k+k_1,X) \, b(k+k_2,X)    \\
 &\hspace{25mm} +   \frac 12 \sum_{\mu,\nu=1}^2   X_2^\mu  X_2^\nu  a(k+k_1,X) \, \partial_{X^\mu X^\nu} b(k+k_2,X)   \\
  &\hspace{25mm} +  \sum_{\mu,\nu=1}^2   X_1^\mu  X_2^\nu  \partial_{X^\mu} a(k+k_1,X) \, \partial_{X^\nu} b(k+k_2,X) \\
  &\hspace{25mm} - \frac 12    X_1 \cdot {\rm ad}_{\partial_x} a(k+k_1,X) b(k+k_2,X)  \\
  & \hspace{25mm}  
  - \frac 12   a (k+k_1,X) X_2 \cdot {\rm ad}_{\partial_x} b(k+k_2,X) \bigg)  \,  dk_1 \, dX_1 \, dk_2 \, dX_2   \\
    &= -\frac{1}{8}\{a,b\}_2(k,X)+\frac{i}{4}\Big(\partial_k a (k,X) \, {\rm ad}_{\partial_x}b(k,X)-{\rm ad}_{\partial_x}a (k,X) \; \partial_k b(k,X)\Big),
 \end{align*}
and so on. This completes the proof.
\end{proof}

\begin{corollary}[Inverse of twisted elliptic symbols]\label{cor:ellip}
Let $0<\epsilon_0 \le 1$. Let $a_\bullet\in S_{\twi}^{1,\epsilon_0}(\mathcal{B}(L^2_{\rm per}))$ be elliptic. Then there exist $0<\epsilon_0'\le \epsilon_0$ and $b_\bullet\in S_{\twi}^{1,\epsilon_0'}(\mathcal{B}(L^2_{\rm per}))$ such that 
\begin{align}
\mathrm{Op}^c(a_\epsilon)\mathrm{Op}^c(b_\epsilon)=\mathrm{Op}^c(b_\epsilon)\mathrm{Op}^c(a_\epsilon)=\1_{\cH}.
\end{align}
\end{corollary}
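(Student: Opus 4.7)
My plan is to deduce the corollary from Proposition \ref{prop:ellip} via the twist transformation $a \mapsto a^c$. First I would set $\tilde a_\epsilon := a^c_\epsilon = \cT_{c(\epsilon)X}\,a_\epsilon\,\cT_{c(\epsilon)X}^{-1}$. Since $\cT_{c(\epsilon)X}$ is unitary on $L^2_{\rm per}$, $\tilde a_\bullet$ remains elliptic, and by \eqref{eq:N<N-twi} it lies in $S^{1,\epsilon_0}(\cB(L^2_{\rm per}))$. The $\tau$-equivariance of $\tilde a$ in $k$ is inherited from $a$ via the commutation relation $\cT_{x_0}\tau_G = e^{-iG\cdot x_0}\tau_G\cT_{x_0}$: the two phase factors cancel when the $\tau_G\cdot\tau_G^{-1}$ pair is sandwiched between $\cT_{c(\epsilon)X}$ and $\cT_{c(\epsilon)X}^{-1}$. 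Hence $\tilde a_\bullet\in S^{1,\epsilon_0}_\tau(\cB(L^2_{\rm per}))$ is elliptic, and Proposition \ref{prop:ellip} provides $0<\epsilon_0'\le\epsilon_0$ and $\tilde b_\bullet\in S^{1,\epsilon_0'}_\tau(\cB(L^2_{\rm per}))$ with $\Op(\tilde a_\epsilon)\Op(\tilde b_\epsilon) = \Op(\tilde b_\epsilon)\Op(\tilde a_\epsilon) = \1_\cH$ and $(\tilde a_\epsilon\#\tilde b_\epsilon)(k,X) = \1_{L^2_{\rm per}}$ pointwise. I would then set
\begin{equation*}
b_\epsilon(k,X) := \cT_{c(\epsilon)X}^{-1}\,\tilde b_\epsilon(k,X)\,\cT_{c(\epsilon)X},
\end{equation*}
so that $b^c_\epsilon = \tilde b_\epsilon$, $\Opc(b_\epsilon)=\Op(\tilde b_\epsilon)$, and the claimed inversion $\Opc(a_\epsilon)\Opc(b_\epsilon)=\Opc(b_\epsilon)\Opc(a_\epsilon)=\1_\cH$ follows.

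The remaining substantive task is to check that $b_\bullet\in S^{1,\epsilon_0'}_{\twi}(\cB(L^2_{\rm per}))$. Seminorm bounds on $b$ follow from those on $\tilde b$ by differentiating the defining relation and using $|c(\epsilon)|\le 1$ together with unitarity of $\cT$, and $\tau$-equivariance of $b$ in $k$ transfers from $\tilde b$ by the same phase-cancellation as in the first step. The delicate point is $J\mathbb L$-periodicity of $b$ in $X$, which unwinds (using $\cT_a\cT_b = \cT_{a+b}$) to the covariance
\begin{equation*}
\tilde b_\epsilon(k,X-R) \;=\; \cT_{-c(\epsilon)R}\,\tilde b_\epsilon(k,X)\,\cT_{c(\epsilon)R}, \qquad R\in J\mathbb L.
\end{equation*}
From the $J\mathbb L$-periodicity of $a$ I would directly verify that $\tilde a$ already satisfies the analogous covariance. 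Then, defining $\hat b_\epsilon(k,X) := \cT_{-c(\epsilon)R}\,\tilde b_\epsilon(k,X+R)\,\cT_{c(\epsilon)R}$ and substituting both covariances into the integral formula \eqref{eq:Weyl-product} for the Moyal product, the two conjugating factors $\cT_{c(\epsilon)R}\cT_{-c(\epsilon)R}=\1$ cancel inside the integrand and one obtains $(\tilde a\#\hat b)_\epsilon(k,X) = \cT_{-c(\epsilon)R}(\tilde a\#\tilde b)_\epsilon(k,X+R)\cT_{c(\epsilon)R} = \1_{L^2_{\rm per}}$, and symmetrically $(\hat b\#\tilde a)_\epsilon = \1_{L^2_{\rm per}}$. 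Uniqueness of two-sided inverses in the Moyal algebra then forces $\hat b_\bullet = \tilde b_\bullet$, which is exactly the desired covariance.

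The principal obstacle is thus the $J\mathbb L$-periodicity step, which reduces to this short symmetry-plus-uniqueness argument at the level of the Moyal product. As a self-contained alternative bypassing Proposition \ref{prop:ellip} entirely, one could construct $b$ directly in the twisted calculus: start from the pointwise inverse $b^{(0)}_\epsilon := a_\epsilon^{-1}$, which lies in $S^{1,\epsilon_0}_{\twi}(\cB(L^2_{\rm per}))$ since pointwise inversion preserves $\tau$-equivariance, $J\mathbb L$-periodicity, and, via $\partial(a^{-1})=-a^{-1}(\partial a)a^{-1}$, the seminorm estimates; apply Proposition \ref{prop:PS-asymp} to expand $(a\,\widetilde\#_c\,b^{(0)})_\epsilon = \1 + \epsilon\,r^{(1)}_\epsilon$ with $r^{(1)}_\bullet \in S^{1,\epsilon_0}_{\twi}$; and invert $\1+\epsilon r^{(1)}$ by a Neumann series in the twisted Moyal product $\widetilde\#_c$, convergent in $S^{1,\epsilon_0'}_{\twi}$ for $\epsilon_0'$ small enough.
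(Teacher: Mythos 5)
Your overall route is exactly the paper's: conjugate to $\tilde a_\bullet := a^c_\bullet$, apply Proposition~\ref{prop:ellip} to get $\tilde b_\bullet \in S^{1,\epsilon_0'}_\tau(\cB(L^2_{\rm per}))$, and set $b_\epsilon := \cT^{-1}_{c(\epsilon)X}\tilde b_\epsilon\cT_{c(\epsilon)X}$. Your $J\mathbb{L}$-periodicity argument is also correct and essentially the paper's, only phrased at the level of the ordinary Moyal product $\#$ acting on $(\tilde a,\tilde b)$ rather than the twisted product $\widetilde{\#}_c$ acting directly on $(a,b)$; the two formulations are equivalent via $(a\,\widetilde{\#}_c\,b)^c = a^c\# b^c$, and the covariance-plus-uniqueness argument goes through.

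The genuine gap is in the seminorm step, which you dismiss as routine. Proposition~\ref{prop:ellip} returns $\tilde b_\bullet$ in the class $S^{1,\epsilon_0'}_\tau(\cB(L^2_{\rm per}))$, which controls $\partial_k^\alpha\partial_X^\beta\tilde b_\epsilon$ but says \emph{nothing} about $\px^\gamma\tilde b_\epsilon$. Since $\px$ commutes with conjugation by $\cT_{c(\epsilon)X}$, you have $\px^\gamma b_\epsilon = \cT^{-1}_{c(\epsilon)X}(\px^\gamma\tilde b_\epsilon)\cT_{c(\epsilon)X}$ and hence $\|\px^\gamma b_\epsilon\|_{\cB(L^2_{\rm per})} = \|\px^\gamma\tilde b_\epsilon\|_{\cB(L^2_{\rm per})}$, so verifying $b_\bullet\in S^{1,\epsilon_0'}_{\twi}(\cB(L^2_{\rm per}))$ requires bounds on $\px^\gamma\tilde b_\epsilon$, and ``differentiating the relation $b=\cT^{-1}\tilde b\cT$'' cannot produce them from $\tilde b_\bullet\in S^{1,\epsilon_0'}_\tau$ alone. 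The paper supplies the missing ingredient: because $\px$ acts as a derivation on $\#$, applying it to $a^c_\epsilon\# b^c_\epsilon = \1_{L^2_{\rm per}}$ gives $\px b^c_\epsilon = -b^c_\epsilon\#(\px a^c_\epsilon)\# b^c_\epsilon$, and $\px a^c_\epsilon = (\px a_\bullet)^c_\epsilon$ \emph{is} controlled by the hypothesis $a_\bullet\in S^{1,\epsilon_0}_{\twi}$; iterating then bounds every $\px^\gamma\tilde b_\epsilon$, after which \eqref{eq:partial} converts these into the $\cN^{1,\epsilon_0'}_{\alpha,\beta,\gamma}(b_\bullet)$ seminorms. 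Your Neumann-series alternative does not sidestep the issue either: to sum $\sum_n(-\epsilon)^n(r^{(1)})^{\widetilde{\#}_c n}$ in the Fr\'echet space $S^{1,\epsilon_0'}_{\twi}$ you would need the same parametrix/Beals machinery that is already encapsulated in Proposition~\ref{prop:ellip}, since the twisted Moyal seminorms do not satisfy a Banach-algebra inequality.
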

\begin{proof}
From \eqref{eq:N<N-twi}, it is easy to see that $a_{\bullet}^c\in S^{1,\epsilon_0}_{\tau}(\cB(L^2_{\rm per}))$. Then from Proposition \ref{prop:ellip}, we infer that there exists $0 < \epsilon_0'<\epsilon_0$ and a symbol $b^c_\bullet\in S^{1,\epsilon_0'}_{\tau}(\mathcal{B}(L^2_{\rm per}))$ such that
\begin{align*}
a_\epsilon^c\# b_\epsilon^c=\1_{L^2_{\rm per}}.
\end{align*}
From \eqref{eq:Weyl-product}, we infer
\begin{align*}
  0=  \px(a_\epsilon^c\# b_\epsilon^c)=(\px a_\epsilon^c)\# b_\epsilon^c+a_\epsilon^c\# (\px b_\epsilon^c).
\end{align*}
Thus, by using \eqref{eq:derivative-tau},
\begin{align*}
    \px b_\epsilon^c=-b_\epsilon^c\#(\px a_\epsilon^c)\# b_\epsilon \in S^{1,\epsilon_0'}_{\tau}(\mathcal{B}(L^2_{\rm per})).
\end{align*}
Analogously, we know that for any $\gamma\in \N^2$, $\px^\gamma b_\bullet\in S^{1,\epsilon_0'}_{\tau}(\mathcal{B}(L^2_{\rm per}))$. This and \eqref{eq:partial} imply that for any $\alpha,\beta,\gamma\in \N^2$,
\begin{align*}
    \cN^{1,\epsilon_0'}_{\alpha,\beta,\gamma}(b_\epsilon)<\infty.
\end{align*}

To show $b_\bullet\in S^{1,\epsilon_0'}_{\twi}$, it remains to show that $b_{\epsilon}$ is $J\mathbb{L}$-periodic w.r.t. $X\in \R^2$. Now we define for any $R\in J\mathbb{L}$,
\begin{align*}
 a_\epsilon^R:=a_\epsilon(k,X+R)  ,\quad b_\epsilon^R:=b_\epsilon(k,X+R).
\end{align*}
As $a_\bullet\in S_{\twi}^{1,\epsilon_0}(\mathcal{B}(L^2_{\rm per}))$, we know $a^R_\epsilon(k,X)=a_\epsilon(k,X)$. Then from \eqref{eq:TWQ}.
\begin{align*}
  \1_{L^2_{\rm per}}= (a_\epsilon\widetilde{\#}_{c} b_\epsilon)(k,X+R)= (a_\epsilon^R\widetilde{\#}_{c} b_\epsilon^R)(k,X)=(a_\epsilon \widetilde{\#}_{c}b_\epsilon^R)(k,X).
\end{align*}
Thus,
\begin{align*}
    b_\epsilon^R=b_\epsilon.
\end{align*}
This shows that $b_{\epsilon}$ is $J\mathbb{L}$-periodic w.r.t. $X\in \R^2$. Now we show that $b_\bullet\in S^{1,\epsilon_0'}_{\twi}(\mathcal{B}(L^2_{\rm per}))$. This ends the proof.
\end{proof}

To prove Theorem \ref{DoS-TBG}, we will use the following corollary of Lemma \ref{trace}.
\begin{corollary}\label{cor:trace-comp}
Let $0 < \epsilon_0 \le 1$, $a_\bullet\in S^{1,\epsilon_0}_{\twi}(\mathfrak{S}_1(L^2_{\rm per}))$, and $\chi_N$ satisfying \begin{align}
    & 0\leq \chi_N(X)\leq 1 \quad \mbox{ for all } X \in \R^2, \label{eq:chi_1'} \\
    &  \chi_N(X)=1 \textrm{ if }\;X\in NJ\Omega, \quad \textrm{and}\quad \chi_N(X)=0 \textrm{ if }\;X\notin (N+1)J\Omega, \label{eq:chi_2'} \\
    &  \sup_{N \in \N} \|\partial^\alpha \chi_N \|_{L^\infty} < +\infty \quad \mbox{ for all  } \alpha\in\mathbb{N}^2. \label{eq:chi_3'}
\end{align}
Then there exists $C\in \R_+$ such that for all $N\in \N^*$,
\begin{align}\label{eq:trace2}
\sup_{\epsilon\in (0,\epsilon_0]}\|\mathrm{Op}^c_\epsilon(a_\epsilon)\mathrm{Op}_\epsilon(\chi^2_N)\|_{\mathfrak{S}_1(\mathcal{H})} \leq C \, N^2,
\end{align}
and 
\begin{align}\label{eq:trace3}
  \MoveEqLeft  \epsilon^2\Tr_{\mathcal{H}}\left(\mathrm{Op}^c_\epsilon(a_\epsilon)\mathrm{Op}_\epsilon(\chi^2_N) \right) \notag\\
  &=\frac{1}{ (2\pi)^2}\!\!\!\!\int\limits_{(N+2)J\Omega}\!\!\!\!\int_{\Omega^*} \Tr_{L^2_{\rm per}}\left(\left(a_\epsilon\widetilde{\#}_c\chi^2_N\right)_\epsilon(k,X)\right)dkdX+\mathcal{O}(\epsilon^\infty N^2).
\end{align}
In addition, if $a\in S^{1}_{\twi}(\mathfrak{S}_1(L^2_{\rm per}))$, then 
\begin{align}\label{eq:trace4}
    \frac{\epsilon^2}{|NJ\Omega|}\Tr_{\mathcal{H}}\left(\mathrm{Op}^c_\epsilon(a)\mathrm{Op}_\epsilon(\chi^2_N) \right) \mathop{\longrightarrow}_{N \to \infty} \frac{1}{ (2\pi)^2} \fint_{J\Omega} \int_{\Omega^*} \Tr_{L^2_{\rm per}}\left(a(k,X)\right)dkdX + \mathcal{O}(\epsilon^\infty).
\end{align}
\end{corollary}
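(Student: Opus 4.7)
The key observation is that $\chi_N^2$ is a scalar symbol depending only on $X$, so $(\chi_N^2)^c=\chi_N^2$, and by Proposition~\ref{prop:PS-asymp}
\[
\Opc(a_\epsilon)\Op(\chi_N^2)=\Opc(a_\epsilon)\Opc(\chi_N^2)=\Op(b_\epsilon),\qquad b_\epsilon:=a^c_\epsilon\#\chi_N^2,
\]
with $b_\epsilon(k,X)=\cT_{c(\epsilon)X}(a_\bullet\widetilde{\#}_c\chi_N^2)_\epsilon(k,X)\cT_{c(\epsilon)X}^{-1}$. One first checks that $b_\bullet\in S^{1,\epsilon_0}_\tau(\mathfrak{S}_1(L^2_{\rm per}))$: the $\tau$-equivariance of $a^c_\epsilon$ in $k$ follows from the identity $\cT_{c(\epsilon)X}\tau_G=e^{-iG\cdot c(\epsilon)X}\tau_G\cT_{c(\epsilon)X}$ (so the phases cancel in the conjugation), while the $\mathfrak{S}_1$-valued seminorm bounds are inherited from $a_\bullet\in S^{1,\epsilon_0}_{\rm twi}(\mathfrak{S}_1(L^2_{\rm per}))$ via~\eqref{eq:partial} (which controls the derivatives of $a^c_\epsilon$ in terms of the twisted seminorms, including those involving $\px$) together with the multiplicative action of $\chi_N^2$ and the Moyal product bound of Proposition~\ref{prop:PS-asymp}.

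The plan is now to apply Lemma~\ref{trace} to $b_\epsilon$. To verify the integrability condition~\eqref{ass:A-trace}, split $\int_{\R^2}\|\partial_k^\alpha\partial_X^\beta b_\epsilon(k,X)\|_{\mathfrak{S}_1}dX$ according to whether $X\in(N+2)J\Omega$ or not. On $(N+2)J\Omega$ the integrand is uniformly bounded in $(k,X,\epsilon,N)$ thanks to the $N$-uniform seminorms~\eqref{eq:chi_3'} of $\chi_N$, and the area is $O(N^2)$; on its complement, Lemma~\ref{lem:weyl-compact} applied to $\chi_N^2$ (which vanishes outside $U_N:=(N+1)J\Omega$) gives the pointwise bound $C_j\epsilon^j\dist(X,U_N)^{-j}$ for any $j$, and integrating in a tube around $\partial U_N$ (of perimeter $O(N)$) produces an $O(\epsilon^jN)$ contribution absorbed into $O(\epsilon^\infty N^2)$. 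Lemma~\ref{trace} then yields~\eqref{eq:trace2} together with
\[
\epsilon^2\Tr_\cH(\Op(b_\epsilon))=\frac{1}{(2\pi)^2}\int_{\R^2}\int_{\Omega^*}\Tr_{L^2_{\rm per}}(b_\epsilon(k,X))\,dk\,dX+O(\epsilon^\infty N^2).
\]
Cyclicity of the pointwise trace on $L^2_{\rm per}$ (combined with unitarity of $\cT_{c(\epsilon)X}$) turns $\Tr_{L^2_{\rm per}}(b_\epsilon(k,X))$ into $\Tr_{L^2_{\rm per}}((a_\bullet\widetilde{\#}_c\chi_N^2)_\epsilon(k,X))$, and the tail integral $\int_{\R^2\setminus(N+2)J\Omega}$ contributes another $O(\epsilon^\infty N)$ by the same application of Lemma~\ref{lem:weyl-compact}. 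This proves~\eqref{eq:trace3}.

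For~\eqref{eq:trace4} take $a\in S^1_{\rm twi}(\mathfrak{S}_1(L^2_{\rm per}))$ independent of $\epsilon$ and start from~\eqref{eq:trace3}. Proposition~\ref{prop:PS-asymp} expands, for any $n\in\N^*$,
\[
(a\widetilde{\#}_c\chi_N^2)_\epsilon=\sum_{j=0}^{n-1}\epsilon^j d_j^N+\epsilon^n R_n^N,
\]
with $R_n^N$ bounded in $S^{1,1}_{\rm twi}(\mathfrak{S}_1)$ uniformly in $N$ since the seminorms of $\chi_N^2$ are uniform. Because $\chi_N^2$ depends on neither $k$ nor $x$, every $d_j^N$ with $j\geq1$ contains a factor $\partial_X^\gamma\chi_N^2$ with $|\gamma|\geq1$, so $\Supp(d_j^N)\subset(N+1)J\Omega\setminus NJ\Omega$, a set of area $O(N)$. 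After dividing by $|NJ\Omega|=N^2|J\Omega|$, each such term contributes $O(\epsilon^j/N)\to 0$ as $N\to\infty$, while the remainder integrated over $(N+2)J\Omega$ is $O(\epsilon^n)$. Finally, with $F(X):=\int_{\Omega^*}\Tr_{L^2_{\rm per}}(a(k,X))\,dk$ being $J\mathbb{L}$-periodic, the identity $\chi_N^2\equiv1$ on $NJ\Omega$ and $\Supp(\chi_N^2)\subset (N+1)J\Omega$ gives
\[
\frac{1}{|NJ\Omega|}\int_{\R^2}\chi_N^2(X)F(X)\,dX=\fint_{J\Omega}F(X)\,dX+O(N^{-1}).
\]
Passing to $N\to\infty$ and using the arbitrariness of $n$ delivers~\eqref{eq:trace4}.

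\textbf{Main obstacle.} The delicate part is the interaction between the $\epsilon^\infty$ decay outside $\Supp(\chi_N^2)$ and the $N$-growth of the perimeter of $U_N$: the bounds from Lemma~\ref{lem:weyl-compact} have the form $\epsilon^j\dist(X,U_N)^{-j}$, and the linear-in-$N$ perimeter factor must be absorbed by raising $j$ large enough. The very same mechanism—derivatives of $\chi_N^2$ being supported on a codimension-one annulus of measure $O(N)$ rather than $O(N^2)$—is what forces every $\epsilon$-correction in the twisted Moyal expansion to disappear after the $|NJ\Omega|$-normalization, and this single observation drives all three claims.
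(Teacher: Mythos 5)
Your argument follows essentially the same path as the paper's proof: reduce to $\Op_\epsilon$ of the standard Moyal product $a^c_\epsilon\#\chi_N^2$, verify the hypotheses of Lemma~\ref{trace} by splitting the $X$-integral at $\partial((N+2)J\Omega)$ and invoking Lemma~\ref{lem:weyl-compact} for the far field, convert the pointwise trace back to the twisted product by unitarity of $\cT_{c(\epsilon)X}$, and then for~\eqref{eq:trace4} use that all $j\ge 1$ terms in the twisted Moyal expansion carry a derivative $\partial_X^\gamma\chi_N^2$ and hence live on the annulus $\overline{(N+1)J\Omega\setminus NJ\Omega}$ of area $O(N)$. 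One small point where you are actually more careful than the paper: the paper writes $(a^c_\epsilon\widetilde{\#}_c\chi_N^2)_\epsilon\in S^{1,\epsilon_0}_{\twi}(\mathfrak S_1)$, but $\chi_N^2$ is compactly supported rather than $J\mathbb L$-periodic in $X$, so the product symbol is not in $S_{\twi}$; you correctly place it in $S^{1,\epsilon_0}_\tau(\mathfrak S_1)$, which is all that Lemma~\ref{trace} requires.
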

\begin{proof}
{\bf Estimate \eqref{eq:trace2}.} Recall that $a_\epsilon^c(k,X)=\mathcal{T}_{c(\epsilon)X} a_\epsilon\mathcal{T}_{c(\epsilon)X}^{-1}$. In addition, in the sense of operators on $L^2_{\rm per}$, it holds $\chi_N(X)=\mathcal{T}_{c(\epsilon)X} \,\chi_N(X) \,\mathcal{T}_{c(\epsilon)X}^{-1}$.  By Proposition \ref{prop:PS-asymp} we have $(a_\epsilon^c\widetilde{\#}_{c}\chi^2_N)_\epsilon\in S_{\twi}^{1,\epsilon_0}(\mathfrak{S}_{1}(L^2_{\rm per}))$, i.e.,
\begin{align}\label{eq:5.34}
    \sup_{\epsilon\in (0,\epsilon_0]}\|\partial_k^\alpha \partial_X^\beta (a_\epsilon^c\# \chi^2_N)\|_{\mathfrak{S}_1(L^2_{\rm per})}\leq C_{\alpha,\beta}<\infty.
\end{align}
As $\Supp(\chi_N)\subset (N+1)J\Omega$, according to Lemma \ref{lem:weyl-compact}, for any $k\in \R^2$ and $X\in \R^2\setminus(N+2)J\Omega$, we also have that for any $j\in \N$,
\begin{align}\label{eq:Achi}
   \sup_{\substack{k\in \R^2\\\epsilon\in (0,\epsilon_0]}}\|\partial_k^\alpha \partial_X^\beta (a_\epsilon^c\# \chi^2_N)_\epsilon(k,X)\|_{\mathfrak{S}_1(L^2_{\rm per})}\leq C'_{\alpha,\beta,j} \frac{\epsilon^j}{\dist(X,(N+1)J\Omega)^j}.
\end{align}
Thus Assumption \ref{ass:A-trace} is satisfied. Then from \eqref{eq:5.25} we get \eqref{eq:trace2}. 

\medskip

{\bf Estimate \eqref{eq:trace3}.}  On the other hand, from \eqref{eq:trace1}, \eqref{eq:trace11} and \eqref{eq:Achi}, we infer that
\begin{align*}
\MoveEqLeft    \epsilon^2\Tr_{\mathcal{H}}(\Op(a_\epsilon)\Op(\chi_N^2)) \\
    &=\frac{1}{ (2\pi)^2}\int_{\mathbb{R}^2}\int_{\Omega^*} \Tr_{L^2_{\rm per}}\Big(a_\epsilon\#\chi_N^2\Big)dkdX+\mathcal{O}(\epsilon^\infty N^2)\\
    &=\frac{1}{ (2\pi)^2}\int_{(N+2)J\Omega)}\int_{\Omega^*} \Tr_{L^2_{\rm per}}\Big(a_\epsilon\#\chi_N^2\Big)dkdX+\mathcal{O}(\epsilon^\infty N^2).
\end{align*}
Now \eqref{eq:trace3} follows directly from the fact that 
\begin{align*}
    \Tr_{L^2_{\rm per}}[a^c_\epsilon\# \chi_N]= \Tr_{L^2_{\rm per}}[\mathcal{T}_{c(\epsilon)X}^{-1} a_\epsilon\widetilde{\#}_{c} \chi_N \mathcal{T}_{c(\epsilon)X}]=\Tr_{L^2_{\rm per}}[a_\epsilon\widetilde{\#}_{c} \chi_N].
\end{align*}
\medskip

{\bf Estimate \eqref{eq:trace4}.}  It remains to show \eqref{eq:trace4}. Notice that by the definition of $\chi_N$, for any $\alpha\in \N^2\setminus\{0\}$, we have
\begin{align*}
    \Supp(\partial_X^\alpha\chi_N^2)\subset \overline{[(N+1)J\Omega]\setminus [NJ\Omega]}.
\end{align*}
This and Proposition \ref{prop:PS-asymp} imply that there exists a sequence $a_{j} \in S^{1}_{\twi}(\mathfrak{S}(L^2_{\rm per}))$ satisfying $\Supp(a_j)\subset \overline{[(N+1)J\Omega]\setminus[NJ\Omega]}$ for any $j\geq 1$, such that
\begin{align*}
    \left(a\widetilde{\#}_c\chi_N^2\right)_\epsilon(k,X)= a(k,X)\chi_N^2(X) + \sum_{j=1}^{n} \epsilon^j a_j(k,X)+ \mathcal{O}_{S^{1,\epsilon_0}_{\twi}(\mathfrak{S}_1(L^2_{\rm per}))}(\epsilon^{n+1}).
\end{align*}
Thus, thanks to the $J\mathbb{L}$-periodicity of $a_j$ w.r.t. $X$, from \eqref{eq:trace3} we infer that
\begin{align}\label{eq:trace-5}
    \MoveEqLeft \limsup_{N\to +\infty}  \frac{\epsilon^2}{|NJ\Omega|} \int_{(N+2)J\Omega}\int_{\Omega^*}\Tr_{L^2_{\rm per}}\left(\left(a\widetilde{\#}_c\chi_N^2\right)_\epsilon\right)(k,X)dkdX\notag\\
    &=\limsup_{N\to +\infty}  \frac{\epsilon^2}{|NJ\Omega|} \int_{(N+2)J\Omega}\int_{\Omega^*}\Tr_{L^2_{\rm per}}\left(a(k,X)\chi_N^2(X)\right)dkdX \notag\\
    &\quad+\limsup_{N\to +\infty}  \sum_{j=1}^{n}\frac{\epsilon^{2+j}}{|NJ\Omega|} \int_{\overline{[(N+1)J\Omega]\setminus [NJ\Omega]}}\int_{\Omega^*}\Tr_{L^2_{\rm per}}[a_j]dkdX+ \mathcal{O}(\epsilon^{n+1})\notag\\
    &=\fint_{J\Omega}\int_{\Omega^*}\Tr_{L^2_{\rm per}}(a(k,X))dkdX+ \mathcal{O}(\epsilon^{n+1}). 
\end{align}
and analogously for the limit inferior. This proves \eqref{eq:trace4}. Hence the corollary.
\end{proof}

\section{Proof of Theorem~\ref{DoS-TBG}}{}\label{sec:red-TBG}

In order to study the density of state of $H_{d,\theta}$, it is convenient to consider the cutoff operator $\chi_N$ in \eqref{eq:DoS_H} as a pseudodifferential operator and rewrite \eqref{eq:DoS_H} as
\begin{align} \label{eq:DoS_H'}
    \Trb[f(H_{d,\theta})]&=\lim_{N\to +\infty}\frac{\epsilon^2}{|NJ\Omega|}\Tr_{L^2(\mathbb{R}^3)} \left( \chi_N(\epsilon \cdot) f(H_{d,\theta})\chi_N(\epsilon\cdot) \right),
\end{align}
where $\chi_N$ is now a function of the mesoscopic variable $X$ satisfying \eqref{eq:chi_1'}-\eqref{eq:chi_3'}.
We indeed have for any bounded function $\phi \in C^\infty(\R^2)$, 
$$
\phi(\epsilon \cdot) = \cU^{-1}{\rm Op}_\epsilon(\phi) \cU ,
$$
where on the LHS, $\phi(\epsilon\cdot)$ denotes the multiplication operator on $L^2(\R^2;\C)$ by the function $(x,z) \mapsto \phi(x)$, and on the RHS, $\phi$ denotes the operator-valued symbol $(k,X) \mapsto \phi(X)$, where for each $X \in \R^2$, $\phi(X)$ is the multiplication operator on $L^2_{\rm per}$ by the constant $\phi(X)$.

\medskip

We first prove that  $\frac{\epsilon^2}{|NJ\Omega|}\Tr_{L^2(\mathbb{R}^3)}[\chi_N(\epsilon \cdot)f(H_{d,\theta})\chi_N(\epsilon \cdot)]$ is uniformly bounded w.r.t. $N$ and $\epsilon\in (0,\epsilon_0]$, which verifies that the LHS of \eqref{eq:main2} is well-defined (since $f$ is nonnegative and $H_{d,\theta}$ is either periodic or ergodic) and finite for any $\epsilon\in (0,\epsilon_0]$. Then we prove the asymptotic expansion \eqref{eq:main2}.

Throughout this section, we use the order function $\omega(k,X):=1+|k|^2$. We recall that $\epsilon:=\sin(\theta/2)$.

\subsection{Finiteness of the LHS of \texorpdfstring{\eqref{eq:main2}}{}}\label{sec:6.1}
In this subsection, we are going to prove the following.
\begin{proposition}\label{th:sigma1}
With the assumptions of Theorem \ref{DoS-TBG}, there exists $0<\epsilon_0 \le 1$ such that for any $N\in \N^*$ and $\epsilon\in (0,\epsilon_0]$, we have $\chi_N(\epsilon\cdot) f(H_{d,\theta})\chi_N(\epsilon \cdot)\in\mathfrak{S}_1(L^2(\mathbb{R}^3))$. Moreover, there exists a constant $C\in \R_+$ such that for any $N\in \mathbb{N}^2$,
\begin{align}\label{eq:unibound}
    \sup_{\epsilon\in (0,\epsilon_0]} \frac{\epsilon^2}{|NJ\Omega|}\|\chi_N(\epsilon\cdot)f(H_{d,\theta})\chi_N(\epsilon\cdot)\|_{\mathfrak{S}_1(L^2(\mathbb{R}^3))}\leq C.
\end{align}
\end{proposition}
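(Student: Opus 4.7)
The plan is to show that $f(H_{d,\theta})=\cU^{-1}\Opc(A_\epsilon)\cU$ modulo a remainder whose localized trace norm is arbitrarily small in powers of $\epsilon$ (times a polynomial in $N$), where $A_\epsilon$ is a semiclassical symbol in $S^{1}_{\twi}(\mathfrak{S}_1(L^2_{\rm per}))$ with leading term $A_0(k,X)=f(h_{d,0}(k,X))$; the desired trace-class bound then follows from Corollary~\ref{cor:trace-comp}.

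Using the identities $H_{d,\theta}=\cU^{-1}\Opc(h_{d,0})\cU$ from \eqref{eq:Hdeps=Opceps} and $\chi_N(\epsilon\cdot)=\cU^{-1}\Op(\chi_N)\cU$, the problem reduces to establishing $\sup_{\epsilon\in(0,\epsilon_0]}\epsilon^{2}\,\|\Op(\chi_N)\,f(\Opc(h_{d,0}))\,\Op(\chi_N)\|_{\mathfrak{S}_1(\cH)}\le CN^{2}$ (since $|NJ\Omega|\sim N^2$). I would then apply Helffer--Sj\"ostrand's formula \eqref{eq:HF_formula} with an almost analytic extension $\widetilde f\in C^\infty_{\rm c}(\C)$ of $f$ satisfying $\dist(\Supp(\widetilde f),0)\ge 4\delta$, so that $\overline\partial\widetilde f$ vanishes to all orders on $\R$. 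For each $\zeta\in\Supp(\widetilde f)$ off $\R$, the resolvent estimates \eqref{eq:31a}--\eqref{eq:31c} from the proof of Lemma~\ref{lem:4.2} show that $(\zeta-h_{d,0})^{-1}\in S^{\omega}_{\twi}(\cB(L^2_{\rm per},H^2_{\rm per}))\cap S^{1}_{\twi}(\cB(L^2_{\rm per}))$ with seminorms polynomial in $|\Im\zeta|^{-1}$, so that iterating the Moyal identity $(\zeta-h_{d,0})\widetilde{\#}_c(\zeta-h_{d,0})^{-1}=I+\mathcal{O}(\epsilon)$ from Proposition~\ref{prop:PS-asymp} produces, for every $n\in\N$, symbols $b_0(\zeta)=(\zeta-h_{d,0})^{-1}$ and $b_j(\zeta)$ for $1\le j\le n$ (built from $b_0(\zeta)$ and derivatives of $h_{d,0}$) such that
\[
(\zeta-\Opc(h_{d,0}))^{-1}=\Opc\Bigl(\sum_{j=0}^n\epsilon^j b_j(\zeta)\Bigr)+\epsilon^{n+1}Q_{n,\epsilon}(\zeta),
\]
with $\|Q_{n,\epsilon}(\zeta)\|_{\cB(\cH)}$ bounded by a polynomial in $|\Im\zeta|^{-1}$.

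Integrating against $-\tfrac{1}{\pi}\overline\partial\widetilde f(\zeta)$ over $\C$ then yields $f(\Opc(h_{d,0}))=\sum_{j=0}^n\epsilon^j\Opc(A_j)+\epsilon^{n+1}E_{n,\epsilon}$, with $A_0(k,X)=f(h_{d,0}(k,X))$ and $A_j$ for $j\ge 1$ given by Helffer--Sj\"ostrand integrals of the $b_j(\zeta)$. Each $A_j$ lies in $S^{1}_{\twi}(\mathfrak{S}_1(L^2_{\rm per}))$: for $A_0$, this follows by reproducing the rank argument of Lemma~\ref{cor:trace} using the uniform finite-rank bound $\rank(\1_{(-\infty,-\delta]}(h_{d,0}(k,X)))\le m_{-\delta}$ from Lemma~\ref{lem:spec}, and for $j\ge 1$ the same finite-rank structure propagates through the spectral-projector derivatives entering $b_j$. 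Applying Corollary~\ref{cor:trace-comp} to each $A_j$ gives $\|\Opc(A_j)\Op(\chi_N^2)\|_{\mathfrak{S}_1(\cH)}\le CN^2/\epsilon^2$, and a standard commutator argument based on Proposition~\ref{prop:PS-asymp} (the commutator $[\Op(\chi_N),\Opc(A_j)]$ being a pseudodifferential operator of order $\epsilon$ with symbol supported near the boundary of $(N+1)J\Omega$) transfers the bound to the symmetric sandwich $\Op(\chi_N)\Opc(A_j)\Op(\chi_N)$; choosing $n$ large enough so that $\epsilon^{n+1}\|\Op(\chi_N)E_{n,\epsilon}\Op(\chi_N)\|_{\mathfrak{S}_1(\cH)}$ is absorbed into the same bound yields \eqref{eq:unibound}. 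The main obstacle is verifying that each $A_j$ belongs to $S^{1}_{\twi}(\mathfrak{S}_1(L^2_{\rm per}))$ with seminorms growing at most polynomially in $|\Im\zeta|^{-1}$, so that the integration against $\overline\partial\widetilde f=\mathcal{O}(|\Im\zeta|^\infty)$ is admissible; once this is in place, the remaining steps are routine compositions of the trace-class estimates assembled in Section~\ref{sec:hd0} with the twisted Weyl calculus of Section~\ref{sec:Weyl-moire}.
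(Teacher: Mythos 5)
Your proposal takes the direct parametrix route: expand $(\zeta-\Opc(h_{d,0}))^{-1}$ in powers of $\epsilon$ via the twisted Moyal calculus and argue that the resulting coefficients $A_j$, after integration against $\overline\partial\widetilde f$, are trace-class-valued symbols in $S^{1}_{\twi}(\mathfrak{S}_1(L^2_{\rm per}))$. This is \emph{precisely} the approach the paper flags as problematic in the remark preceding Lemma~\ref{lem:5.7}: the parametrix coefficients $b_j(\zeta)$ live a priori only in $\cB(L^2_{\rm per})$, and it is not clear that $\widetilde f(\zeta)b_j(\zeta)$, or the integral $A_j=-\pi^{-1}\int_\C\overline\partial\widetilde f(\zeta)\,b_j(\zeta)\,dL(\zeta)$, lands in $\mathfrak{S}_1(L^2_{\rm per})$. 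You acknowledge this as ``the main obstacle'' and address it only with the sentence that ``the same finite-rank structure propagates through the spectral-projector derivatives entering $b_j$''. This is the missing step, not a routine one: the $b_j(\zeta)$ for $j\ge 1$ are sums of products of the form $(\zeta-h_{d,0})^{-1}B_1(\zeta-h_{d,0})^{-1}\cdots$ with unbounded insertions $B_\ell=\partial_k h_{d,0}$, $\partial_X h_{d,0}$, ${\rm ad}_{\partial_x}h_{d,0}$, and a genuine argument (e.g.\ a double/triple spectral integral against the divided differences of $f$, using that $\1_{(-\infty,-\delta]}(h_{d,0})$ has uniformly bounded rank) is required before one can assert finite rank, let alone control the $S^{1}_{\twi}$ seminorms with the polynomial $|\Im\zeta|^{-1}$ growth you need.

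The paper sidesteps the issue entirely and in fact does not need any semiclassical expansion to prove Proposition~\ref{th:sigma1}. Its mechanism is: introduce the energy-cutoff Hamiltonian $h_{d,0,f}=\chi_f(h_{d,0})h_{d,0}$ (see \eqref{eq:h-t}), whose spectrum avoids $\Supp(\widetilde f)$. Lemma~\ref{lem:inverse} shows that $\zeta\mapsto(\Opc(h_{d,0,f})-\zeta)^{-1}$ is uniformly bounded, hence holomorphic, on $\Supp(\widetilde f)$, so its Helffer--Sj\"ostrand integral vanishes identically. One then writes $f(\Opc(h_{d,0}))$ as in \eqref{eq:holom}, which automatically inserts the factor $\Opc(h_{d,0,f})-\Opc(h_{d,0})$ between two uniformly bounded resolvents. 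Since $h_{d,0,f}-h_{d,0}$ is a uniformly finite-rank, hence trace-class-valued, twisted symbol (Lemma~\ref{cor:trace}), the composed symbol is manifestly in $S^{1,\epsilon_0}_{\twi}(\mathfrak{S}_1(L^2_{\rm per}))$, and a single application of Corollary~\ref{cor:trace-comp} gives the bound \eqref{eq:unibound}. By contrast, your proof builds the full expansion $\sum_j\epsilon^jA_j$ only to discard it and keep a uniform bound: even if the trace-class gap were closed, you would be proving Lemma~\ref{lem:a_j} (the content of Section~\ref{sec:5.2}) as a prerequisite for Proposition~\ref{th:sigma1}, whereas the paper's structure needs Proposition~\ref{th:sigma1} first and then the expansion. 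I would therefore recommend adopting the $h_{d,0,f}$ device: it is exactly the ingredient that makes the trace-class estimate both rigorous and short.

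Two smaller points. First, the sandwich $\Op(\chi_N)\,f(\Opc(h_{d,0}))\,\Op(\chi_N)$ does not require a commutator argument: since $0\le\chi_N\le 1$ and $f\ge 0$, the trace norm of the sandwich is bounded by that of $f(\Opc(h_{d,0}))\Op(\chi_N)$ directly, as the paper uses in \eqref{eq:Hd0-chi}. Second, the statement you quote from Corollary~\ref{cor:trace-comp}, namely $\|\Opc(A_j)\Op(\chi_N^2)\|_{\mathfrak{S}_1}\le CN^2/\epsilon^2$, is the correct scaling (the $\epsilon^{-2}$ is exactly what the prefactor $\epsilon^2/|NJ\Omega|$ compensates), so your bookkeeping of $\epsilon$ and $N$ powers is consistent with the target \eqref{eq:unibound}; the gap is not in the scalings but in the trace-class membership itself.
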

The proof of Proposition \ref{th:sigma1} is postponed until the end of Section \ref{sec:6.1}.

As the Bloch transform operator $\mathcal{U}$ is unitary, it follows from \eqref{eq:Hdeps=Opceps} and \eqref{eq:HF_formula} that
\begin{align}\label{eq:Ztrans1}
    \mathcal{U}f(H_{d,\theta})\chi_N(\epsilon \cdot)\mathcal{U}^{-1}&= f\left(\mathcal{U}H_{d,\theta}\mathcal{U}^{-1}\right)\mathcal{U}\chi_N(\epsilon \cdot)\mathcal{U}^{-1}=f(\Opc(h_{d,0}))\Op(\chi_N),
\end{align}
and
\begin{align}\label{eq:Hd0-chi}
\|\chi_N(\epsilon \cdot)f(H_{d,\theta})\chi_N(\epsilon \cdot)\|_{\mathfrak{S}_1(L^2(\mathbb{R}^3))}&\leq \|f(H_{d,\theta})\chi_N(\epsilon \cdot)\|_{\mathfrak{S}_1(L^2(\mathbb{R}^3))}\notag\\
&= \|f(\Opc(h_{d,0}))\Op(\chi_N)\|_{\mathfrak{S}_1(\mathcal{H})}.
\end{align}

The main step is to rewrite $f(\Opc(h_{d,0}))$ by using \eqref{eq:HF_formula} and the operator $h_{d,0,f}(k,X)$ defined by \eqref{eq:h-t}. To do so, we first need to prove that $\zeta \mapsto (\Opc(h_{d,0,f})-\zeta)^{-1}$ is holomorphic on $\Supp(\widetilde{f})$. Since $\Opc(h_{d,0,f})$ is self-adjoint, it suffices to show that $(\Opc(h_{d,0,f})-\zeta)^{-1}$ is uniformly bounded on $\Supp(\widetilde{f})$. This step is inspired by \cite{dimassi1993developpements}.
\begin{lemma}\label{lem:inverse}
Let $\epsilon_0 > 0$ be small enough. Then, for all $\zeta\in \Supp(\widetilde{f})$, there exits an operator-valued symbol $g_\bullet(\zeta)\in S^{\omega,\epsilon_0}_{\twi}(\mathcal{B}(L^2_{\rm per},H^2_{\rm per}))\bigcap S^{1,\epsilon_0}_{\twi}(\mathcal{B}(L^2_{\rm per}))$ such that
\begin{align*}
    (\Opc(h_{d,0,f})-\zeta)^{-1} =  \Opc(g_\epsilon(\zeta)),
\end{align*}
and
\begin{align}\label{eq:ellep}
    \sup_{\substack{\epsilon\in (0,\epsilon_0]\\\zeta\in\Supp(\widetilde{f})}}\|\Opc(g_\epsilon(\zeta))\|_{\mathcal{B}(\mathcal{H})}=\sup_{\substack{\epsilon\in (0,\epsilon_0]\\\zeta\in\Supp(\widetilde{f})}}\|(\Opc(h_{d,0,f})-\zeta)^{-1}\|_{\mathcal{B}(\mathcal{H})} < \infty.
\end{align}
\end{lemma}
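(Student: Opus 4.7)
The plan is to construct the inverse by a parametrix argument entirely within the twisted Weyl calculus developed in Section~\ref{sec:TWC}. The starting point is Lemma~\ref{lem:h-t-inversible}, which provides the pointwise inverse
\[
e_0(\zeta)(k,X) := (h_{d,0,f}(k,X)-\zeta)^{-1},
\]
and shows that $e_0(\zeta) \in S^{\omega}_{\twi}(\mathcal{B}(L^2_{\rm per},H^2_{\rm per})) \cap S^{1}_{\twi}(\mathcal{B}(L^2_{\rm per}))$ with all seminorms bounded uniformly in $\zeta\in\Supp(\widetilde f)$ by \eqref{eq:inverse-zeta1}--\eqref{eq:inverse-zeta2}. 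Heuristically $e_0(\zeta)$ is the zeroth-order approximation of the symbol of the resolvent; the full symbol will be obtained by correcting it with a semiclassical remainder.

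Next I would apply Proposition~\ref{prop:PS-asymp} to the twisted Moyal product $(h_{d,0,f}-\zeta)\,\widetilde\#_c\, e_0(\zeta)$, using the Banach spaces $\cE_1 = \cB(H^2_{\rm per},L^2_{\rm per})$ and $\cE_2 = \cB(L^2_{\rm per},H^2_{\rm per})$ with product space $\cE = \cB(L^2_{\rm per})$. The zeroth-order term of the expansion is $(h_{d,0,f}-\zeta)e_0(\zeta) = \1_{L^2_{\rm per}}$, so
\[
(h_{d,0,f}-\zeta)\,\widetilde\#_c\, e_0(\zeta) = \1_{L^2_{\rm per}} + \epsilon\,\rho_\epsilon(\zeta),
\]
and the key claim is that $\rho_\bullet(\zeta) \in S^{1,1}_{\twi}(\cB(L^2_{\rm per}))$ uniformly in $\zeta \in \Supp(\widetilde f)$. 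This claim is the main obstacle, since the growth of $\partial_k^\alpha\partial_X^\beta\px^\gamma h_{d,0,f}$ in the operator norm $\cB(H^{2-|\alpha|}_{\rm per},L^2_{\rm per})$ must be absorbed against the decay of the corresponding derivatives of $e_0(\zeta)$ in $\cB(L^2_{\rm per},H^{2-|\alpha|}_{\rm per})$. The argument relies on differentiating the resolvent identity $e_0(\zeta) = -e_0(\zeta)(\partial h_{d,0,f}) e_0(\zeta)$ and iterating, as in the proof of Lemma~\ref{lem:4.2}, with the bounds transported through the integral representation \eqref{eq:Moyal-twi} of $\widetilde\#_c$; the analogous estimates for $\px$-derivatives follow by Leibniz since $\px$ commutes with multiplication by $(k,X)$-functions.

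Once this is established, Theorem~\ref{th:caldVa} applied to $\rho_\epsilon(\zeta)^c \in S^{1,1}_\tau(\cB(L^2_{\rm per}))$ (together with unitarity of $\cT_{c(\epsilon)X}$) yields $\sup_{\epsilon,\zeta}\|\Opc(\rho_\epsilon(\zeta))\|_{\cB(\cH)} < \infty$. Choosing $\epsilon_0>0$ so small that $\epsilon\|\Opc(\rho_\epsilon(\zeta))\|_{\cB(\cH)} \leq \tfrac12$ uniformly, Corollary~\ref{cor:ellip} applied to the elliptic symbol $\1_{L^2_{\rm per}}+\epsilon\rho_\epsilon(\zeta) \in S^{1,\epsilon_0}_{\twi}(\cB(L^2_{\rm per}))$ provides a symbol $s_\bullet(\zeta) \in S^{1,\epsilon_0}_{\twi}(\cB(L^2_{\rm per}))$ with $\Opc(s_\epsilon(\zeta)) = (\1_{\cH}+\epsilon\Opc(\rho_\epsilon(\zeta)))^{-1}$. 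The uniformity in $\zeta$ is preserved since the Neumann/parametrix construction of Corollary~\ref{cor:ellip} depends only on the seminorms of the input symbol, which are uniformly controlled.

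Finally, setting
\[
g_\epsilon(\zeta) := e_0(\zeta)\,\widetilde\#_c\, s_\epsilon(\zeta),
\]
Proposition~\ref{prop:PS-asymp} places $g_\bullet(\zeta)$ in $S^{\omega,\epsilon_0}_{\twi}(\cB(L^2_{\rm per},H^2_{\rm per})) \cap S^{1,\epsilon_0}_{\twi}(\cB(L^2_{\rm per}))$, and the composition rule gives
\[
\Opc(h_{d,0,f}-\zeta)\,\Opc(g_\epsilon(\zeta)) = \bigl(\1_{\cH}+\epsilon\Opc(\rho_\epsilon(\zeta))\bigr)\Opc(s_\epsilon(\zeta)) = \1_{\cH}.
\]
Running the symmetric argument with $e_0(\zeta)\,\widetilde\#_c\,(h_{d,0,f}-\zeta) = \1_{L^2_{\rm per}}+\epsilon\widetilde\rho_\epsilon(\zeta)$ and invoking uniqueness of the two-sided inverse yields $(\Opc(h_{d,0,f})-\zeta)^{-1} = \Opc(g_\epsilon(\zeta))$. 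The uniform bound \eqref{eq:ellep} is then immediate from $\|\Opc(g_\epsilon(\zeta))\|_{\cB(\cH)} \le \|\Opc(e_0(\zeta))\|_{\cB(\cH)}\|\Opc(s_\epsilon(\zeta))\|_{\cB(\cH)}$, both factors being uniformly bounded by Theorem~\ref{th:caldVa}.
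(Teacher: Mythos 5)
Your construction matches the paper's proof almost exactly: same parametrix built from $g_0(\zeta)=(h_{d,0,f}-\zeta)^{-1}$ via Lemma~\ref{lem:h-t-inversible}, same use of Proposition~\ref{prop:PS-asymp} to express $(h_{d,0,f}-\zeta)\,\widetilde\#_c\,g_0(\zeta)=\1_{L^2_{\rm per}}+\epsilon\rho_\epsilon(\zeta)$, same appeal to Corollary~\ref{cor:ellip} to invert $\1+\epsilon\rho_\epsilon$, and the same left/right-inverse argument at the end. Two small remarks on the places where your write-up departs from the paper's line of reasoning.

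First, for the claim $\rho_\bullet(\zeta)\in S^{1,\epsilon_0}_{\twi}(\cB(L^2_{\rm per}))$ you propose a direct ``growth-absorption'' argument carried through the Moyal integral and the differentiated resolvent identity; this would work, but it does not sit cleanly inside the fixed-Banach-space framework of Proposition~\ref{prop:PS-asymp} (your $\cB(H^{2-|\alpha|}_{\rm per},L^2_{\rm per})$ spaces vary with $\alpha$). The paper's route is shorter: Proposition~\ref{prop:PS-asymp} applied with $\cE_1=\cB(H^2_{\rm per},L^2_{\rm per})$, $\cE_2=\cB(L^2_{\rm per},H^2_{\rm per})$, $\cE=\cB(L^2_{\rm per})$ gives $\rho_\bullet\in S^{\omega^2,\epsilon_0}_{\twi}(\cB(L^2_{\rm per}))$, and since $\rho_\epsilon$ is $\tau$-equivariant in $k$ its $\cB(L^2_{\rm per})$-seminorms are $\mathbb{L}^*$-periodic, hence bounded, so it actually lies in $S^{1,\epsilon_0}_{\twi}(\cB(L^2_{\rm per}))$; uniformity in $\zeta$ follows from \eqref{eq:inverse-zeta1}--\eqref{eq:inverse-zeta2}. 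Second, your intermediate invocation of Theorem~\ref{th:caldVa} to bound $\|\Opc(\rho_\epsilon)\|_{\cB(\cH)}$ before applying Corollary~\ref{cor:ellip} is not needed for the ellipticity check: what Corollary~\ref{cor:ellip} requires is a uniform pointwise bound $\sup_{\epsilon,k,X,\zeta}\|\rho_\epsilon(\zeta)(k,X)\|_{\cB(L^2_{\rm per})}<\infty$, which already follows from $\rho_\bullet\in S^{1,\epsilon_0}_{\twi}$, and the corollary itself shrinks $\epsilon_0$ as needed. Calder\'on--Vaillancourt is of course still what gives the final bound \eqref{eq:ellep}, as you state.
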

\begin{proof}
Thanks to Lemma \ref{lem:4.2}, we know that $h_{d,0,f} \in S^{\omega}_\twi(
\mathcal{B}(H^2_{\rm per},L^2_{\rm per}))$. Let $g_0(\zeta)=(h_{d,0,f}-\zeta)^{-1}$. According to Lemma \ref{lem:h-t-inversible}, we have that 
\begin{align*}
    g_0(\zeta)\in S^{\omega}_{\twi}(\mathcal{B}(L^2_{\rm per},H^2_{\rm per}))\bigcap S^{1}_{\twi}(\mathcal{B}(L^2_{\rm per})).
\end{align*}
For all $\zeta\in \Supp(\widetilde{f})$,
\begin{align*}
    \left((h_{d,0,f}-\zeta)\widetilde{\#}_{c} g_0(\zeta)\right)_\epsilon(k,X)=\1_{L^2_{\rm per}}-\epsilon R_\epsilon(\zeta)(k,X)
\end{align*}
with $R_\bullet(\zeta)\in S^{\omega^2,\epsilon_0}_\twi(\mathcal{B}(L^2_{\rm per}))$.  On the other hand, by $\tau$-equivariance of $R_\epsilon$, the norm $\|R_\epsilon(\zeta)\|_{\mathcal{B}(L^2_{\rm per})}$ is $\mathbb{L}^*$-periodic. This and \eqref{eq:derivative-tau} imply $R_\bullet(\zeta)\in S^{1,\epsilon_0}_\twi(\mathcal{B}(L^2_{\rm per}))$ for all $\zeta\in \Supp(\widetilde{f})$. Furthermore, from \eqref{eq:inverse-zeta1}-\eqref{eq:inverse-zeta2} we infer that for any $\alpha,\beta,\gamma\in \N^2$,
\begin{align*}
   \sup_{\zeta\in \Supp(\widetilde{f})}\cN_{\alpha,\beta,\gamma}^{1,\epsilon_0,\cB(L^2_{\rm per})}(R_{\epsilon})<\infty.
\end{align*} 

Now, Corollary \ref{cor:ellip} shows that there exists a symbol $g_{1,\bullet}(\zeta)\in S^{1,\epsilon_0}_{\twi}(\mathcal{B}(L^2_{\rm per}))$ such that
\begin{align*}
    \Opc(g_{1,\epsilon}(\zeta))=(\1_{\mathcal{H}}-\epsilon\Opc(R_\epsilon(\zeta)))^{-1}.
\end{align*}
Let $g_\epsilon(\zeta)=g_0(\zeta)\widetilde{\#}_cg_{1,\epsilon}(\zeta)$. Then $g_\bullet(\zeta)\in S^{\omega,\epsilon_0}_{\twi}(\mathcal{B}(L^2_{\rm per},H^2_{\rm per}))\bigcap S^{1,\epsilon_0}_{\twi}(\mathcal{B}(L^2_{\rm per}))$, and \eqref{eq:ellep} follows from the Calderon-Vaillancourt theorem (i.e., Theorem \ref{th:caldVa}). This gives the right inverse:
\begin{align*}
    \Opc(h_{d,0,f}-\zeta)\Opc(g_\epsilon(\zeta))=\1_{L^2_{\rm per}}.
\end{align*}
By the same argument, we can also find a left inverse
\begin{align*}
 \Opc(g'_\epsilon(\zeta))\Opc(h_{d,0,f}-\zeta)=\1_{H^2_{\rm per}}.
\end{align*}
Thus,
\begin{align*}
    g_\epsilon(\zeta)=g'_\epsilon(\zeta) \widetilde{\#}_c( h_{d,0,f}-\zeta)\widetilde{\#}_cg_\epsilon(\zeta)=g_\epsilon'(\zeta).
\end{align*}
\end{proof}
Then thanks to Lemma \ref{lem:inverse}, $(\zeta-\Opc(h_{d,0,f}))^{-1}$ is holomorphic on $\zeta\in \Supp(\widetilde{f})$. Therefore we have
\begin{align*}
    \int_{\mathbb{C}} \overline{\partial}\widetilde{f}(\zeta)(\zeta-\Opc(h_{d,0,f}))^{-1}dL(\zeta)=-\int_{\mathbb{C}} \widetilde{f}(\zeta)\overline{\partial}(\zeta-\Opc(h_{d,0,f}))^{-1}dL(\zeta)=0.
\end{align*}
Thus from \eqref{eq:HF_formula},
    \begin{align}\label{eq:holom}
   \MoveEqLeft f(\Opc(h_{d,0}))=-\frac{1}{\pi}\int_{\mathbb{C}} \overline{\partial}\widetilde{f}(\zeta)[(\zeta-\Opc(h_{d,0}))^{-1}- (\zeta-\Opc(h_{d,0,f}))^{-1}]dL(\zeta)\notag\\
    &=\frac{1}{\pi}\int_{\mathbb{C}} \overline{\partial}\widetilde{f}(\zeta)(\zeta-\Opc(h_{d,0}))^{-1}(\Opc(h_{d,0,f})-\Opc(h_{d,0}))(\zeta-\Opc(h_{d,0,f}))^{-1}dL(\zeta).
\end{align}

Now we turn to the
\begin{proof}[Proof of Proposition \ref{th:sigma1}]
Inserting \eqref{eq:holom} into \eqref{eq:Hd0-chi} and using the fact that $1\leq \chi_N\leq 1$ (see \eqref{eq:chi_1}), we obtain that
\begin{align*}
\MoveEqLeft\sup_{\epsilon\in (0,\epsilon_0]}\|\chi_N(\epsilon\cdot)f(H_{d,\theta})\chi_N(\epsilon\cdot)\|_{\mathfrak{S}_1(L^2(\R^3))}\\
    &\leq \sup_{\epsilon\in (0,\epsilon_0]}\|f(\Opc(h_{d,0}))\Op(\chi_N)\|_{\mathfrak{S}_1(\cH)}\\
    &\leq C\sup_{\substack{\zeta\in \Supp(\widetilde{f})\\ \epsilon\in (0,\epsilon_0]}}\|(\Opc(h_{d,0,f})-\Opc(h_{d,0}))\Op(g_\epsilon(\zeta))\Op(\chi_N)\|_{\mathfrak{S}_1(\mathcal{H})},
\end{align*}
since $\sup_{\epsilon\in (0,\epsilon_0]}\|(\zeta-\Opc(h_{d,0}))^{-1}\|_{\mathcal{B}(\mathcal{H})}\leq |\Im \zeta|^{-1}$ and $|\overline{\partial}\widetilde{f}(\zeta)|=\mathcal{O}(|\Im\zeta|^\infty)$.  

From Lemma \ref{cor:trace} and Lemma \ref{lem:inverse}, it is easy to see that 
\begin{align*}
\big((h^c_{d,0,f}-h^c_{d,0})\#g_\bullet(\zeta)\big)_\bullet \in S^{1,\epsilon_0}_{\twi}(\mathfrak{S}_1(L^2_{\rm per})). 
\end{align*}
Then from Corollary \ref{cor:trace-comp} (with $\chi_N^2$ being replaced by $\chi_N$ in this result), we deduce that there exists a constant $C\in \R_+$ such that for any $N\in \N^*$,
\begin{align}\label{eq:Hdep-chiN}
  \MoveEqLeft \sup_{\epsilon\in (0,\epsilon_0]}\frac{\epsilon^2}{|NJ\Omega|}\|\chi_N(\epsilon \cdot)f(H_{d,\theta})\chi_N(\epsilon\cdot)\|_{\mathfrak{S}_1(L^2(\R^3))}\leq \frac{C|(N+2)J\Omega|}{3|NJ\Omega|}\leq C.
\end{align}
Hence Proposition \ref{th:sigma1}.
\end{proof}

\subsection{Asymptotic expansion for DoS}\label{sec:5.2}
From \eqref{eq:Hdep-chiN}, we infer that $\frac{\epsilon^2}{|NJ\Omega|}f(H_{d,\theta})\chi_N(\epsilon\cdot) \in \mathfrak{S}_1(L^2(\R^3))$ for any $\epsilon\in (0,\epsilon_0]$. As $0\leq \chi_N\leq 1$, we have
\begin{align}\label{eq:chiNfhchiN}
 \frac{\epsilon^2}{|NJ\Omega|}\Tr_{L^2(\R^3)}[\chi_N(\epsilon\cdot)f(H_{d,\theta})\chi_N(\epsilon\cdot)]&=\frac{\epsilon^2}{|NJ\Omega|}\Tr_{L^2(\R^3)}[f(H_{d,\theta})\chi_N^2(\epsilon\cdot)]\notag\\
    &=\frac{\epsilon^2}{|NJ\Omega|}\Tr_{\cH}[f(\Opc(h_{d,0}))\Op(\chi_N^2)].
\end{align}
In this subsection, we are going to study the asymptotic expansion of $f(\Opc(h_{d,0}))\Op(\chi_N^2)$ in $\mathfrak{S}_1(\cH)$:
\begin{lemma}\label{lem:a_j}
There exist a sequence $(f_{d,j})$ in $S_{\twi}^{1}(\mathfrak{S}_1(L^2_{\rm per}))$ and a constant $C\in \R_+$ such that for all $n\in \N^*$, $j\in \N$ and all $\epsilon\in (0,1]$,
\begin{align}\label{eq:dec-fh}
    \frac{\epsilon^2}{|NJ\Omega|}\left\|f(\Opc(h_{d,0}))\Op(\chi_N^2)-\sum_{j=0}^n\epsilon^j \Opc(f_{d,j})\Op(\chi_N^2)\right\|_{\mathfrak{S}_1(\mathcal{H})}\leq C\frac{\epsilon^{n+1}}{|\Omega|}.
\end{align}
In particular, 
\begin{align}\label{eq:a0}
    f_{d,0}(k,X)=f(h_{d,0}(k,X)),
\end{align}

\begin{align}\label{eq:a1}
    f_{d,1}(\zeta)
    =- \frac 1 \pi \int_{\C} \overline{\partial}\widetilde{f}(\zeta ) \Big[
    - \frac i 2 \{(\zeta-h_{d,0})^{-1},(\zeta-h_{d,0})\}(\zeta-h_{d,0})^{-1} 
    \Big] dL(\zeta),
\end{align}

and
\begin{align}\label{eq:a2}
 \MoveEqLeft    f_{d,2}(k,X)=-\frac{1}{\pi}\int_\C \overline{\partial}\widetilde{f}(\zeta)  \Big[-\frac{1}{4}(\zeta-h_{d,0})^{-1}\{(\zeta-h_{d,0}),(\zeta-h_{d,0})^{-1}\}^2\notag\\
     &+\frac{1}{4}\{(\zeta-h_{d,0})^{-1},\{(\zeta-h_{d,0}),(\zeta-h_{d,0})^{-1}\}\}\notag\\
     &+\frac{1}{8}(\zeta-h_{d,0})^{-1}\{(\zeta-h_{d,0}),(\zeta-h_{d,0})^{-1}\}_2\notag\\
     &+\frac{i}{4}(\zeta-h_{d,0})^{-1}\Big(\partial_k h_{d,0}{\rm ad}_{\partial_x} (\zeta-h_{d,0})^{-1}-{\rm ad}_{\partial_x}h_{d,0} \partial_k (\zeta-h_{d,0})^{-1}\Big)\Big](k,X) \, dL(\zeta).
\end{align}
\end{lemma}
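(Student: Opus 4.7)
My plan is to combine the Helffer-Sj\"ostrand representation of $f(\mathrm{Op}_\epsilon^c(h_{d,0}))$ given in \eqref{eq:holom} with a parametrix construction for $(\zeta-\mathrm{Op}_\epsilon^c(h_{d,0}))^{-1}$ in the twisted Weyl calculus. Setting $A:=\mathrm{Op}_\epsilon^c(h_{d,0})$, $A_f:=\mathrm{Op}_\epsilon^c(h_{d,0,f})$ and $V:=\mathrm{Op}_\epsilon^c(h_{d,0,f}-h_{d,0})$, the formula \eqref{eq:holom} rewrites $f(A)$ as a contour integral whose integrand is the product $(\zeta-A)^{-1}V(\zeta-A_f)^{-1}$. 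By Lemma~\ref{lem:inverse}, $(\zeta-A_f)^{-1}=\mathrm{Op}_\epsilon^c(g_\epsilon(\zeta))$ with $g_\bullet(\zeta)\in S^{1,\epsilon_0}_{\twi}(\mathcal{B}(L^2_{\rm per}))$, and by Lemma~\ref{cor:trace}, $v:=h_{d,0,f}-h_{d,0}\in S^1_{\twi}(\mathfrak{S}_1(L^2_{\rm per}))$. The presence of $V$ with a trace-class-valued symbol is exactly what will convert operator-norm estimates into trace-norm estimates after composition with $\mathrm{Op}_\epsilon(\chi_N^2)$.

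The second step is a parametrix construction. Starting from $r_0(\zeta):=(\zeta-h_{d,0})^{-1}$, I apply Proposition~\ref{prop:PS-asymp} to compute $(\zeta-h_{d,0})\widetilde{\#}_c r_0=1+\epsilon q_1(\zeta)+\epsilon^2 q_2(\zeta)+\cdots$ in $S^{1}_{\twi}(\mathcal{B}(L^2_{\rm per}))$, where each $q_j$ involves Poisson brackets and $\mathrm{ad}_{\partial_x}$-commutators of $h_{d,0}$ and $r_0$. Solving iteratively for $r_1,\dots,r_n\in S^1_{\twi}(\mathcal{B}(L^2_{\rm per}))$ gives $\widetilde r_\epsilon(\zeta):=\sum_{j=0}^n\epsilon^j r_j(\zeta)$ with
\[
(\zeta-h_{d,0})\widetilde{\#}_c\widetilde r_\epsilon(\zeta)=1-\epsilon^{n+1}\widetilde s_\epsilon(\zeta),\qquad \widetilde s_\bullet(\zeta)\in S^{1,\epsilon_0}_{\twi}(\mathcal{B}(L^2_{\rm per})),
\]
with seminorms polynomial in $|\Im\zeta|^{-1}$, thanks to the resolvent estimates~\eqref{eq:31a}-\eqref{eq:31c}. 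Applying $\mathrm{Op}_\epsilon^c$ and multiplying by $(\zeta-A)^{-1}$ gives
\[
(\zeta-A)^{-1}=\mathrm{Op}_\epsilon^c(\widetilde r_\epsilon(\zeta))+\epsilon^{n+1}\mathrm{Op}_\epsilon^c(\widetilde s_\epsilon(\zeta))(\zeta-A)^{-1}.
\]

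Substituting into the integrand of \eqref{eq:holom} and invoking Proposition~\ref{prop:PS-asymp} for the three-fold twisted Moyal product, I set
\[
\rho_\epsilon(\zeta):=\widetilde r_\epsilon(\zeta)\,\widetilde{\#}_c\,v\,\widetilde{\#}_c\,g_\epsilon(\zeta)\in S^{1,\epsilon_0}_{\twi}(\mathfrak{S}_1(L^2_{\rm per})),
\]
the trace-class character coming entirely from the $v$ factor. Expanding $\rho_\epsilon(\zeta)=\sum_{j=0}^n\epsilon^j\rho_j(\zeta)+O_{S^{1,\epsilon_0}_{\twi}(\mathfrak{S}_1)}(\epsilon^{n+1})$ and defining
\[
f_{d,j}(k,X):=\frac{1}{\pi}\int_{\mathbb C}\bar\partial\widetilde f(\zeta)\,\rho_j(\zeta)(k,X)\,dL(\zeta),
\]
I identify $f_{d,0}$ via the symbol-level resolvent identity $\rho_0(\zeta)=(\zeta-h_{d,0,f})^{-1}-(\zeta-h_{d,0})^{-1}$, which together with $f(h_{d,0,f})=0$ (since $\sigma(h_{d,0,f})\cap\Supp(f)=\emptyset$) yields $f_{d,0}=f(h_{d,0})$. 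The higher coefficients $f_{d,1}$ and $f_{d,2}$ are obtained by expanding the twisted Moyal products at orders $\epsilon^1$ and $\epsilon^2$ using the explicit formulas of Proposition~\ref{prop:PS-asymp}, then simplifying via repeated use of the resolvent identity and integration by parts in $\zeta$ against $\bar\partial\widetilde f(\zeta)$ to eliminate all occurrences of $h_{d,0,f}$; this recovers \eqref{eq:a1}--\eqref{eq:a2}. To confirm $f_{d,j}\in S^1_{\twi}(\mathfrak{S}_1(L^2_{\rm per}))$, I argue as in Lemma~\ref{cor:trace}: every $\rho_j(\zeta)$ can, after the algebraic simplification, be written as a product of a spectral projector-like factor onto $\mathbf{1}_{(-\infty,-\delta]}(h_{d,0})$ times bounded factors, and the $|\bar\partial\widetilde f(\zeta)|=\mathcal{O}(|\Im\zeta|^\infty)$ decay absorbs all polynomial singularities produced by derivatives of $(\zeta-h_{d,0})^{-1}$.

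The final step is the trace-norm remainder estimate. The total error $f(A)\mathrm{Op}_\epsilon(\chi_N^2)-\sum_{j=0}^n\epsilon^j\mathrm{Op}_\epsilon^c(f_{d,j})\mathrm{Op}_\epsilon(\chi_N^2)$ splits into a Moyal tail $\mathrm{Op}_\epsilon^c(O_{S^{1,\epsilon_0}_{\twi}(\mathfrak{S}_1)}(\epsilon^{n+1}))\mathrm{Op}_\epsilon(\chi_N^2)$, which is bounded in $\mathfrak{S}_1(\mathcal{H})$ by $C\epsilon^{n+1}N^2$ directly via \eqref{eq:trace2}, and a parametrix tail of the form $\epsilon^{n+1}\!\int\!\bar\partial\widetilde f(\zeta)\,\mathrm{Op}_\epsilon^c(\widetilde s_\epsilon)\,(\zeta-A)^{-1}V(\zeta-A_f)^{-1}\mathrm{Op}_\epsilon(\chi_N^2)\,dL(\zeta)$. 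Writing the last three factors as $(\zeta-A)^{-1}\,\mathrm{Op}_\epsilon^c((v\,\widetilde{\#}_c\,g_\epsilon(\zeta))\,\widetilde{\#}_c\chi_N^2)$, Corollary~\ref{cor:trace-comp} applied to the $\mathfrak{S}_1$-valued symbol gives a trace-norm bound $\leq CN^2\mathrm{poly}(|\Im\zeta|^{-1})$, while $(\zeta-A)^{-1}$ and $\mathrm{Op}_\epsilon^c(\widetilde s_\epsilon)$ are bounded on $\mathcal{H}$ with the same kind of polynomial growth (the latter by Theorem~\ref{th:caldVa}); integration against $\bar\partial\widetilde f(\zeta)$ then yields a trace-norm bound $\leq C\epsilon^{n+1}N^2$, which is exactly \eqref{eq:dec-fh} once divided by $|NJ\Omega|/\epsilon^2$. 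The main technical obstacle of the proof is the verification that the defined $f_{d,j}$ belong to the class $S^1_{\twi}(\mathfrak{S}_1(L^2_{\rm per}))$ and match the stated explicit formulas; this is where the $h_{d,0,f}$-regularization must carefully be shown to drop out after the $\zeta$-integration.
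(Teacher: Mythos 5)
Your strategy is essentially the one the paper uses: start from Helffer--Sj\"ostrand representation \eqref{eq:holom}, expand the resolvent $(\zeta-\Opc(h_{d,0}))^{-1}$ by a twisted parametrix, exploit that $v:=h_{d,0,f}-h_{d,0}$ lies in $S^{1}_{\twi}(\mathfrak{S}_1(L^2_{\rm per}))$ so that the whole integrand becomes trace-class-valued, and finish with Corollary~\ref{cor:trace-comp}. The identification of $f_{d,0}$ via $\rho_0=(\zeta-h_{d,0,f})^{-1}-(\zeta-h_{d,0})^{-1}$ and holomorphy of the $h_{d,0,f}$-piece is also exactly what the paper does.

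There is, however, a gap in the way you organize the expansion: you take the third factor in $\rho_\epsilon(\zeta)=\widetilde r_\epsilon(\zeta)\,\widetilde\#_c\,v\,\widetilde\#_c\,g_\epsilon(\zeta)$ to be the exact inverse symbol $g_\epsilon(\zeta)$ supplied by Lemma~\ref{lem:inverse}, and then assert a semiclassical expansion $\rho_\epsilon=\sum_{j\le n}\epsilon^j\rho_j+O(\epsilon^{n+1})$. But Lemma~\ref{lem:inverse} only shows that $g_\bullet(\zeta)$ belongs to the class $S^{1,\epsilon_0}_{\twi}(\mathcal B(L^2_{\rm per}))$ uniformly in $\epsilon$; it is \emph{not} established there that $g_\bullet(\zeta)$ is a semiclassical symbol, i.e.\ that it admits an expansion in $S^{1,\epsilon_0}_{\twi}(\epsilon,\mathcal B(L^2_{\rm per}))$. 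Proposition~\ref{prop:PS-asymp} produces a semiclassical expansion of a twisted Moyal product only when the factors are $\epsilon$-independent (or already semiclassical); with an $\epsilon$-dependent factor that is merely bounded uniformly in $\epsilon$ it yields membership in $S^{\omega,\epsilon_0}_{\twi}$ but no asymptotic series in~$\epsilon$. The paper avoids this by running a \emph{second} parametrix construction (its Lemma on $(\zeta-\Opc(h_{d,0,f}))^{-1}$, which produces coefficients $b_j(\zeta)$ and an $O(\epsilon^{n+1})$ remainder), rather than using the exact inverse symbol. Without this second parametrix you cannot identify the $\rho_j(\zeta)$ and hence the $f_{d,j}$; your remainder estimate is fine as written (since there you only need uniform bounds on $g_\epsilon$), but the coefficient identification in the middle of your argument is where the missing step lives. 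The fix is small and stays within your framework: construct $\widetilde r'_\epsilon(\zeta)=\sum_{j\le n}\epsilon^j b_j(\zeta)$ with $(\zeta-h_{d,0,f})\,\widetilde\#_c\,\widetilde r'_\epsilon(\zeta)=1+O_{S^{1,\epsilon_0}_{\twi}}(\epsilon^{n+1})$ (using the uniform resolvent bound \eqref{h-elle} in place of the $|\Im\zeta|^{-1}$ bounds), replace $g_\epsilon$ by $\widetilde r'_\epsilon$ in $\rho_\epsilon$, and absorb the extra remainder into the parametrix tail.
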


One may consider studying the asymptotic expansion $f(\Opc(h_{d,0}))$ directly by using the classical parametrix construction:
\begin{align*}
  \widetilde{f}(\zeta)(\zeta-\Opc(h_{d,0}))^{-1}=  \sum_{j=0}^n \epsilon^j \widetilde{f}(\zeta) \Opc(\widetilde{a}_{j}(\zeta))+\mathcal{O}_{\mathcal{B}(\mathcal{H})}(\epsilon^{n+1}).
\end{align*}
With this approach, we have $\widetilde{f}(\zeta)\widetilde{a}_{j}(\zeta)\in 
\cB(L^2_{\rm per})$, but we do not know a priori that $\widetilde{f}(\zeta)\widetilde{a}_{j}(\zeta)\in \mathfrak{S}_1(L^2_{\rm per})$. To circumvent this issue, we start from \eqref{eq:holom}: we study the asymptotic expansions of $(\zeta-\Opc(h_{d,0}))^{-1}$ and $(\zeta-\Opc(h_{d,0,f}))^{-1}$ in $\cB(L^2_{\rm per})$ separately, and use the fact that $h_{d,0}-h_{d,0,f}\in \mathfrak{S}_1(L^2_{\rm per})$ and Lemma \ref{cor:trace} to guarantee that the expansion holds in $\mathfrak{S}_1(\cH)$.
To prove Lemma~\ref{lem:a_j}, we will need the following result.

\begin{lemma}\label{lem:5.7}
For all $\zeta\in\Supp(\widetilde{f})\setminus\mathbb{R}$, there exist a sequence $(a_{j}(\zeta))_{j \in \N}$ and a sequence $(r_{n,\epsilon}(\zeta))_{n \in \N}$ such that
\begin{align}\label{eq:32'}
  \forall j,n \in \N, \quad  \overline{\partial}\widetilde{f}(\zeta) a_{j}(\zeta) \in S_{\twi}^{1}(\mathcal{B}(L^2_{\rm per})),\quad \overline{\partial}\widetilde{f}(\zeta)r_{n,\bullet}(\zeta)\in S_{\twi}^{1,\epsilon_0}(\epsilon,\mathcal{B}(L^2_{\rm per})),
\end{align}
uniformly in $\zeta$, and for any $n\geq 0$,
\begin{align}\label{eq:32}
   (\zeta-\Opc(h_{d,0}))^{-1}=  \sum_{j=0}^{n}\epsilon^j \Opc(a_{j}(\zeta))-\epsilon^{n+1}(\zeta-\Opc(h_{d,0}))^{-1}\Opc(r_{n,\epsilon}(\zeta)).
\end{align}
In particular, $a_{0}(\zeta)=(\zeta-h_{d,0})^{-1}$, $ a_{1}(\zeta)=-\frac{i}{2}\{(\zeta-h_{d,0})^{-1},\zeta-h_{d,0}\}(\zeta-h_{d,0})^{-1}$ and
\begin{align*}
    a_2(\zeta)&=-\frac{1}{4}(\zeta-h_{d,0})^{-1}\{(\zeta-h_{d,0}),(\zeta-h_{d,0})^{-1}\}^2\\
     &\quad+\frac{1}{4}\{(\zeta-h_{d,0})^{-1},\{(\zeta-h_{d,0}),(\zeta-h_{d,0})^{-1}\}\}\\
     &\quad+\frac{1}{8}(\zeta-h_{d,0})^{-1}\{(\zeta-h_{d,0}),(\zeta-h_{d,0})^{-1}\}_2\\
     &\quad+\frac{i}{4}(\zeta-h_{d,0})^{-1}\Big(\partial_k h_{d,0}{\rm ad}_{\partial_x} (\zeta-h_{d,0})^{-1}-{\rm ad}_{\partial_x}h_{d,0} \partial_k (\zeta-h_{d,0})^{-1}\Big).
\end{align*}
\end{lemma}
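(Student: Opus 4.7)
The plan is a classical parametrix construction adapted to the twisted Weyl calculus of Proposition~\ref{prop:PS-asymp}. Seeking formally $b(\zeta,\epsilon) = \sum_{j\geq 0} \epsilon^j a_j(\zeta)$ such that $(\zeta - h_{d,0}) \, \widetilde{\#}_c \, b(\zeta,\epsilon) = 1$ in the sense of formal power series, the asymptotic expansion of the twisted Moyal product from Proposition~\ref{prop:PS-asymp} yields the recursion
\begin{align*}
(\zeta - h_{d,0})\, a_j(\zeta) = -\sum_{i=1}^{j} d_i(\zeta - h_{d,0},\, a_{j-i}(\zeta)), \qquad j \geq 1,
\end{align*}
together with $(\zeta - h_{d,0}) a_0(\zeta) = 1$. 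Left-multiplying by $(\zeta - h_{d,0})^{-1}$ uniquely defines each $a_j(\zeta)$. Using the explicit forms $d_0 = a b$, $d_1 = \frac{i}{2}\{a, b\}$, and $d_2$ from \eqref{eq:asym-2}, direct substitution gives $a_0 = (\zeta - h_{d,0})^{-1}$, then $a_1 = -\frac{i}{2}(\zeta - h_{d,0})^{-1}\{\zeta - h_{d,0}, (\zeta - h_{d,0})^{-1}\}$ (which rearranges to the stated form via direct expansion of the noncommutative Poisson bracket), and similarly $a_2$. Truncating at order $n$ we obtain by construction $(\zeta - h_{d,0}) \, \widetilde{\#}_c \sum_{j=0}^n \epsilon^j a_j(\zeta) = 1 + \epsilon^{n+1} r_{n,\epsilon}(\zeta)$, where $r_{n,\epsilon}$ is the remainder in Proposition~\ref{prop:PS-asymp}. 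Applying $\Opc$ and left-multiplying by the resolvent $(\zeta - \Opc(h_{d,0}))^{-1}$ of the self-adjoint operator $\Opc(h_{d,0})$ (bounded with norm $\le |\Im\zeta|^{-1}$ for $\zeta \notin \R$) yields precisely \eqref{eq:32}.

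The main obstacle is verifying the symbol-class statements \eqref{eq:32'}, namely that $\overline{\partial}\widetilde f(\zeta)\, a_j(\zeta) \in S^1_\twi(\cB(L^2_{\rm per}))$ (with order function $\equiv 1$, not $\omega = 1+|k|^2$), uniformly in $\zeta \in \Supp(\widetilde f) \setminus \R$. Each $a_j(\zeta)$ is a finite sum of products of the form $R D_1 R D_2 \cdots D_N R$, where $R := (\zeta - h_{d,0})^{-1}$ and each $D_\ell$ is a first-order derivative of $h_{d,0}$ in $k$, $X$, or via $\ad_{\partial_x}$. Individually $R \in S^\omega(\cB(L^2_{\rm per}, H^2_{\rm per}))$ and $\partial_k h_{d,0} = -i\nabla_x + k$ is unbounded of order $\omega^{1/2}$, so a naive estimate would yield only weight $\omega$. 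The crucial cancellation is the bound \eqref{eq:bound_dhhm1}, which shows that the composition $(\partial_\bullet h_{d,0})\,R$ is uniformly bounded on $L^2_{\rm per}$. Grouping each derivative factor with its adjacent $R$ turns every product in $a_j(\zeta)$ into a composition of $L^2_{\rm per}$-bounded operators, with only polynomial bounds in $|\Im \zeta|^{-1}$ arising from \eqref{eq:31a}--\eqref{eq:31c}. These blow-ups are absorbed by $|\overline{\partial}\widetilde f(\zeta)| = \mathcal{O}(|\Im \zeta|^\infty)$, yielding uniform-in-$\zeta$ membership. Derivatives $\partial_k^\alpha \partial_X^\beta \ad_{\partial_x}^\gamma a_j(\zeta)$ are handled by Leibniz's rule, producing only additional polynomial factors in $|\Im \zeta|^{-1}$ which are absorbed the same way. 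The analogous argument bounds $r_{n,\epsilon}(\zeta)$ in $S^{1,\epsilon_0}_\twi(\epsilon, \cB(L^2_{\rm per}))$.

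Finally, the $\tau$-equivariance in $k$ and $J\mathbb L$-periodicity in $X$ propagate through the entire recursion: Lemma~\ref{lem:4.2} provides both properties for $h_{d,0}$ (and hence for $R$); differentiation in $k$, $X$, or via $\ad_{\partial_x}$ preserves them per \eqref{eq:derivative-tau}; composition of equivariant/periodic factors and Poisson brackets preserve them likewise. Thus each $a_j(\zeta)$ and each $r_{n,\epsilon}(\zeta)$ belongs to the appropriate twisted symbol class, completing the proof.
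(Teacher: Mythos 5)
Your proposal is correct and follows essentially the same route as the paper's proof: a recursive parametrix construction via the twisted Moyal product of Proposition~\ref{prop:PS-asymp}, leading to the same recursion and the same explicit formulas for $a_0,a_1,a_2$, with the $S^1_{\twi}(\cB(L^2_{\rm per}))$ membership verified from the resolvent bounds of Lemma~\ref{lem:4.2} (notably \eqref{eq:31a}--\eqref{eq:31c} and \eqref{eq:bound_dhhm1}) and the decay $|\overline{\partial}\widetilde f(\zeta)|=\mathcal O(|\Im\zeta|^\infty)$ absorbing the $|\Im\zeta|^{-1}$ blowups. The only cosmetic difference is that the paper passes from $S^{\omega^2}$ to $S^1$ by invoking $\mathbb L^*$-periodicity of $\cB(L^2_{\rm per})$-norms of $\tau$-equivariant symbols, whereas you group each unbounded factor $\partial_k h_{d,0}$ with an adjacent resolvent; your wording ``each $D_\ell$ is a first-order derivative of $h_{d,0}$'' should read ``a derivative of $h_{d,0}$, the only unbounded one being $\partial_k h_{d,0}$,'' since $d_2$ and the $\{\cdot,\cdot\}_2$ bracket introduce second-order derivatives, but this does not affect the argument.
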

\begin{proof}
The proof of this lemma follows from the classical parametrix construction for $\tau$-equivariant symbols (see e.g. \cite[Lemma 1]{panati2003effective}).

Let $a_{0}(\zeta):=(\zeta-h_{d,0})^{-1}$. Using \eqref{eq:31a}, \eqref{eq:31b} and $|\overline{\partial}\widetilde{f}(\zeta)|=\mathcal{O}(|\Im \zeta|^\infty)$, we see that
\begin{align*}
    \overline{\partial}\widetilde{f}(\zeta) a_{0}(\zeta)\in S_{\twi}^{1}(\mathcal{B}(L^2_{\rm per}))\bigcap S^{\omega}_{\twi}(\mathcal{B}(L^2_{\rm per},H^2_{\rm per})),
\end{align*}
uniformly in $\zeta$. According to Proposition \ref{prop:PS-asymp}, we have
\begin{equation} \label{eq:rec0}
   (\zeta-h_{d,0})\widetilde{\#}_c a_{0}(\zeta) =\1_{L^2_{\rm per}}+\epsilon {r}_{0,\epsilon}(\zeta) = \1_{L^2_{\rm per}}+\epsilon p_1(\zeta) + {\mathcal O}_{S^{\omega,\epsilon_0}_{\twi}(\cB(L^2_{\rm per})}(\epsilon^2),
\end{equation}
with 
$$
p_1(\zeta):= \frac i2 \{(\zeta-h_{d,0}),(\zeta-h_{d,0})^{-1}\}\quad \mbox{and} \quad \overline{\partial}\widetilde{f}(\zeta){r}_{0,\bullet}(\zeta) \in S^{\omega^2,\epsilon_0}_{\twi}(\epsilon,\mathcal{B}(L^2_{\rm per})),
$$
uniformly in $\zeta$.
As the $\mathcal{B}(L^2_{\rm per})$ norm of the $\tau$-equivariance functions is $\mathbb{L}^*$-periodic, $\overline{\partial}\widetilde{f}(\zeta)\widetilde{r}_{0,\bullet}(\zeta)\in S^{1,\epsilon_0}_{\twi}(\epsilon,\mathcal{B}(L^2_{\rm per}))$. Applying the twisted Weyl operator ${\rm Op}_\epsilon^c$ to both sides of the first equality in \eqref{eq:rec0} yields
$$
(\zeta-{\rm Op}_\epsilon^c(h_{d,0})) \, {\rm Op}_\epsilon^c(a_0(\zeta)) = 1_\cH + \epsilon {\rm Op}_\epsilon^c(r_{0,\epsilon}(\zeta)).
$$
Thus, \eqref{eq:32'}-\eqref{eq:32} hold true for $j=n=0$. 

We now set $a_1(\zeta)=-(\zeta-h_{d,0})^{-1} p_1(\zeta)$. Using again Proposition \ref{prop:PS-asymp}, we get
\begin{align*}
   (\zeta-h_{d,0})\widetilde{\#}_c \Big( a_{0}(\zeta) + \epsilon a_1(\zeta) \Big) =\1_{L^2_{\rm per}}+\epsilon^2 {r}_{1,\epsilon}(\zeta) = \1_{L^2_{\rm per}}+\epsilon^2 p_2(\zeta) + {\mathcal O}_{S^{\omega,\epsilon_0}_{\twi}(\cB(L^2_{\rm per})}(\epsilon^3),
\end{align*}
with 
\begin{align*}
    p_2(\zeta)&=\frac{1}{4} \{(\zeta-h_{d,0}),(\zeta-h_{d,0})^{-1}\}^2 - \frac{1}{4} ,\{(\zeta-h_{d,0}),(\zeta-h_{d,0})^{-1}\}\} \\
    & \quad - \frac{1}{8} \{(\zeta-h_{d,0}),(\zeta-h_{d,0})^{-1}\}_2\\
     &\quad -\frac{i}{4} \Big(\partial_k h_{d,0}{\rm ad}_{\partial_x} (\zeta-h_{d,0})^{-1}-{\rm ad}_{\partial_x}h_{d,0} \partial_k (\zeta-h_{d,0})^{-1}\Big).
\end{align*}
and 
$$
\overline{\partial}\widetilde{f}(\zeta){r}_{1,\bullet}(\zeta) \in S^{\omega^2,\epsilon_0}_{\twi}(\epsilon,\mathcal{B}(L^2_{\rm per})),
$$
uniformly in $\zeta$. Reasoning as above, we obtain \eqref{eq:32'}-\eqref{eq:32} for $j,n \le 1$. Setting $a_2(\zeta)=-(\zeta-h_{d,0})^{-1}p_2(\zeta)$, and using once more Proposition \ref{prop:PS-asymp}, we see that 
\begin{align*}
   (\zeta-h_{d,0})\widetilde{\#}_c \Big( a_{0}(\zeta) + \epsilon a_1(\zeta) + \epsilon^2 a_1(\zeta) \Big) =\1_{L^2_{\rm per}}+\epsilon^3 {r}_{2,\epsilon}(\zeta),
\end{align*}
with 
$$
\overline{\partial}\widetilde{f}(\zeta){r}_{2,\bullet}(\zeta) \in S^{\omega^2,\epsilon_0}_{\twi}(\epsilon,\mathcal{B}(L^2_{\rm per})),
$$
uniformly in $\zeta$, which leads to \eqref{eq:32'}-\eqref{eq:32} for $j,n \le 2$. The result for any $n$ is obtained by induction.
\end{proof}

\begin{lemma}\label{lem:5.8}
For all $\zeta\in \Supp(\widetilde{f})$, there exist a sequence $(b_{j}(\zeta))_{j \in \N}$,  and a sequence $(\widetilde{r}_{n,\epsilon}(\zeta))_{n \in \N}$ such that
\begin{align}
    b_{j}(\zeta) \in S_{\twi}^{1}(\mathcal{B}(L^2_{\rm per})),\quad \widetilde{r}_{n,\bullet}(\zeta)\in S_{\twi}^{1,\epsilon_0}(\epsilon,\mathcal{B}(L^2_{\rm per}))
\end{align}
and for all $n\geq 0$,
\begin{align}
   (\zeta-\Opc(h_{d,0,f}))^{-1}=  \sum_{j=0}^{n}\epsilon^j \Op(b_{j}(\zeta))-\epsilon^{n+1}\Op(g_\epsilon(\zeta))\Opc(\widetilde{r}_{n}(\epsilon,\zeta)).
\end{align}
In particular, $b_{0}=(\zeta-h_{d,0,f})^{-1}$, $b_{1}=-\frac{i}{2}(\zeta-h_{d,0,f})^{-1}\{\zeta-h_{d,0,f},(\zeta-h_{d,0,f})^{-1}\}$ and
\begin{align*}
    b_2(\zeta)&=-\frac{1}{4}(\zeta-h_{d,0,f})^{-1}\{(\zeta-h_{d,0,f}),(\zeta-h_{d,0,f})^{-1}\}^2\\
     &\quad+\frac{1}{4}\{(\zeta-h_{d,0,f})^{-1},\{(\zeta-h_{d,0,f}),(\zeta-h_{d,0,f})^{-1}\}\}\\
     &\quad+\frac{1}{8}(\zeta-h_{d,0,f})^{-1}\{(\zeta-h_{d,0,f}),(\zeta-h_{d,0,f})^{-1}\}_2\\
     &\quad+\frac{i}{4}(\zeta-h_{d,0,f})^{-1}\Big(\partial_k h_{d,0,f}{\rm ad}_{\partial_x} (\zeta-h_{d,0,f})^{-1}-{\rm ad}_{\partial_x}h_{d,0,f} \partial_k (\zeta-h_{d,0,f})^{-1}\Big).
\end{align*}
\end{lemma}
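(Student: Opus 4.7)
The plan is to mimic the parametrix construction of Lemma~\ref{lem:5.7}, now exploiting the key improvement that $h_{d,0,f}$ was engineered so that $\mathrm{Supp}(\widetilde{f})$ is disjoint from its spectrum: by Lemma~\ref{lem:h-t-inversible}, $b_0(\zeta):=(\zeta-h_{d,0,f})^{-1}$ lies in $S^{\omega}_{\twi}(\mathcal{B}(L^2_{\rm per},H^2_{\rm per}))\cap S^{1}_{\twi}(\mathcal{B}(L^2_{\rm per}))$ with seminorms uniformly bounded in $\zeta\in \mathrm{Supp}(\widetilde{f})$. No compensating factor $\overline{\partial}\widetilde{f}(\zeta)$ is therefore required, in contrast with Lemma~\ref{lem:5.7}.

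Concretely, I would first apply Proposition~\ref{prop:PS-asymp} to obtain the twisted Moyal asymptotic expansion
\[
(\zeta-h_{d,0,f})\,\widetilde{\#}_c\,b_0(\zeta)\;=\;\mathbf{1}_{L^2_{\rm per}}+\epsilon\,\varrho_1(\zeta)+\mathcal{O}_{S^{\omega,\epsilon_0}_{\twi}(\mathcal{B}(L^2_{\rm per}))}(\epsilon^2),
\]
with $\varrho_1(\zeta)=-\tfrac{i}{2}\{\zeta-h_{d,0,f},(\zeta-h_{d,0,f})^{-1}\}$, and then set $b_1(\zeta):=-b_0(\zeta)\varrho_1(\zeta)$. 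Iterating this construction order by order yields symbols $b_j(\zeta)\in S^{1}_{\twi}(\mathcal{B}(L^2_{\rm per}))$ (uniformly in $\zeta$) such that
\[
(\zeta-h_{d,0,f})\,\widetilde{\#}_c\,\sum_{j=0}^{n}\epsilon^j b_j(\zeta)\;=\;\mathbf{1}_{L^2_{\rm per}}+\epsilon^{n+1}\widetilde{r}_{n,\epsilon}(\zeta),
\]
with $\widetilde{r}_{n,\bullet}(\zeta)\in S^{\omega^2,\epsilon_0}_{\twi}(\epsilon,\mathcal{B}(L^2_{\rm per}))$. The explicit formulas for $b_0$, $b_1$, $b_2$ follow by reading off the first three coefficients of the Moyal expansion in Proposition~\ref{prop:PS-asymp}, keeping the Poisson bracket, the $\{\cdot,\cdot\}_2$ term, and the $\mathrm{ad}_{\partial_x}$-commutator contribution at order $\epsilon^2$.

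The remainder can then be sharpened: by $\tau$-equivariance in $k$, the $\mathcal{B}(L^2_{\rm per})$-norm of $\widetilde{r}_{n,\epsilon}(\zeta)(k,X)$ is $\mathbb{L}^*$-periodic in $k$ (since $\tau_G$ is unitary on $L^2_{\rm per}$), hence bounded; the same argument applies to all $k,X,\mathrm{ad}_{\partial_x}$-derivatives, so $\widetilde{r}_{n,\bullet}(\zeta)$ is upgraded from $S^{\omega^2,\epsilon_0}_{\twi}$ to $S^{1,\epsilon_0}_{\twi}(\epsilon,\mathcal{B}(L^2_{\rm per}))$. Finally, applying $\Opc$ and right-composing by $(\zeta-\Opc(h_{d,0,f}))^{-1}=\Opc(g_\epsilon(\zeta))$ from Lemma~\ref{lem:inverse} produces the announced identity.

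The main obstacle is verifying uniformity of the seminorms of $b_j(\zeta)$ and $\widetilde{r}_{n,\epsilon}(\zeta)$ in $\zeta\in\mathrm{Supp}(\widetilde{f})$ across the iteration, but this is handled by the quantitative bounds \eqref{eq:inverse-zeta1}--\eqref{eq:inverse-zeta2} of Lemma~\ref{lem:h-t-inversible}, which already guarantee uniform control of all derivatives of $b_0(\zeta)$; each iteration step then only multiplies $b_0(\zeta)$ by a symbol built from $h_{d,0,f}$ and its derivatives (bounded by Lemma~\ref{lem:4.2}) together with lower-order $b_i$'s. The rest of the argument is the standard Neumann-series closing step carried through twisted Moyal calculus, and requires no further estimates beyond those already established in Sections~\ref{sec:hd0} and~\ref{sec:TWC}.
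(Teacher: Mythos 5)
Your proposal is correct and follows essentially the same route as the paper: the paper's proof simply says to repeat the parametrix construction of Lemma~\ref{lem:5.7}, replacing the almost-analyticity bound $|\overline{\partial}\widetilde f(\zeta)| = \mathcal O(|\Im\zeta|^\infty)$ by the uniform resolvent estimate \eqref{h-elle}, which is exactly the mechanism you isolate via Lemma~\ref{lem:h-t-inversible}. One small slip: the first-order Moyal coefficient is $d_1 = \tfrac{i}{2}\{a,b\}$ (Proposition~\ref{prop:PS-asymp}), so $\varrho_1(\zeta) = +\tfrac{i}{2}\{\zeta - h_{d,0,f},(\zeta - h_{d,0,f})^{-1}\}$ (not $-\tfrac{i}{2}$); with $b_1 = -b_0\varrho_1$ this then reproduces the stated formula for $b_1$.
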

\begin{proof}
We proceed as in the proof of Lemma \ref{lem:5.7}. The only difference is that instead of using $|\overline{\partial}\widetilde{f}(\zeta)|=\mathcal{O}(|\Im \zeta|^\infty)$, we use \eqref{h-elle}. 
\end{proof}

We now turn to the
\begin{proof}[Proof of Lemma \ref{lem:a_j}]
According to Lemma \ref{cor:trace}, 
\[
h_{d,0}-h_{d,0,f}\in S_{\twi}^{1}(\mathfrak{S}_1(L^2_{\rm per})).
\]
Let $\zeta\in \Supp(\widetilde{f})\setminus\mathbb{R}$. Using Lemma \ref{lem:5.7} and Lemma \ref{lem:5.8}, we can find a sequence $(d_{j}(\zeta))_{j \in \N}$ and two sequence $(\mathcal{R}_{1,n,\epsilon}(\zeta))_{n \in \N}$ and $(\mathcal{R}_{2,n,\epsilon}(\zeta))_{n \in \N}$ such that
\begin{align*}
    \overline{\partial}\widetilde{f}(\zeta )d_{j}(\zeta)\in S_{\twi}^{1}(\mathfrak{S}_1(L^2_{\rm per})) \quad\textrm{and}\quad \overline{\partial}\widetilde{f}(\zeta )\mathcal{R}_{1,n,\bullet},\; \overline{\partial}\widetilde{f}(\zeta )\mathcal{R}_{2,n,\bullet}\in  S_{\twi}^{1,\epsilon_0}(\epsilon,\mathfrak{S}_1(L^2_{\rm per})),
\end{align*}
uniformly in $\zeta$, and
\begin{align}\label{eq:38}
 \MoveEqLeft (\zeta-\Opc(h_{d,0}))^{-1}(\Opc(h_{d,0})-\Opc(h_{d,0}))(\zeta-\Opc(h_{d,0,f}))^{-1}\notag\\
 &=\sum_{j=0}^n\epsilon^j[\Opc(d_{j}(\zeta))+\epsilon^{n+1}[(\zeta-\Opc(h_{d,0}))^{-1}\Opc(\mathcal{R}_{1,n,\epsilon})+\Opc(\mathcal{R}_{2,n,\epsilon})].
\end{align}
Thanks to \eqref{eq:HF_formula}, from Corollary \ref{cor:trace-comp} we infer that there exists $C\in \R_+$ such that for all $N\in \N^*$,
\begin{align}\label{eq:39}
   \MoveEqLeft \sup_{\epsilon\in (0,\epsilon_0]} \frac{\epsilon^2}{|NJ\Omega|}\left\|\int_{\C} \overline{\partial}\widetilde{f}(\zeta-\Opc(h_{d,0}))^{-1}\Opc(\mathcal{R}_{1,n,\epsilon}) \Op(\chi_N^2) dL(\zeta)\right\|_{\mathfrak{S}_1(\mathcal{H})}\notag\\
   &\leq \sup_{\epsilon\in (0,\epsilon_0]} \frac{\epsilon^2}{|NJ\Omega|}\int_{\C} |\overline{\partial}\widetilde{f}(\zeta )||\Im\zeta|^{-1}\left\|\Opc(\mathcal{R}_{1,n,\epsilon})\Op(\chi_N^2) \right\|_{\mathfrak{S}_1(\mathcal{H})}dL(\zeta)\leq C.
\end{align}
Analogously, there exists $C\in \R_+$ such that for all $N\in \N^*$,
\begin{align}\label{eq:40}
    \sup_{\epsilon\in (0,\epsilon_0]} \frac{\epsilon^2}{|NJ\Omega|}\left\|\int_{\C} \overline{\partial}\widetilde{f}(\zeta )\Opc(\mathcal{R}_{2,n,\epsilon})\Op(\chi_N^2) dL(\zeta)\right\|_{\mathfrak{S}_1(\mathcal{H})}\leq C.
\end{align}
Let 
\begin{align}\label{eq:41}
    f_{d,j}:=\frac{1}{\pi}\int_{\C} \overline{\partial}\widetilde{f}(\zeta ) d_{j}(\zeta)dL(\zeta)\in S^{1,\epsilon_0}_{\rm twi}(\mathfrak{S}_{1}(L^2_{\rm per})).
\end{align}
Then \eqref{eq:dec-fh} follows from \eqref{eq:holom} and \eqref{eq:38}-\eqref{eq:41}.

\medskip

It remains to show \eqref{eq:a0}-\eqref{eq:a2}. According to Lemma \ref{lem:5.7}, Lemma \ref{lem:5.8} and \eqref{eq:38}, we know that
\begin{align*}
    d_{0}(\zeta)=(\zeta-h_{d,0})^{-1}(h_{d,0,f}-h_{d,0})(\zeta-h_{d,0,f})^{-1}=(\zeta-h_{d,0,f})^{-1}-(\zeta-h_{d,0})^{-1}.
\end{align*}
By using \eqref{eq:41} and the fact that $(\zeta-h_{d,0,f})^{-1}$ is holomorphic on $\Supp(\widetilde{f})$, we get $f_{d,0}(k,X)=f(h_{d,0}(k,X))$.

We turn to prove \eqref{eq:a1}. Notice that
\begin{align*}
    d_{1}(\zeta)&=-\frac{i}{2}\{(\zeta-h_{d,0})^{-1},\zeta-h_{d,0}\}(\zeta-h_{d,0})^{-1}(h_{d,0,f}-h_{d,0})(\zeta-h_{d,0,f})^{-1}\\
    &\quad -\frac{i}{2}(\zeta-h_{d,0})^{-1}(h_{d,0,f}-h_{d,0})(\zeta-h_{d,0,f})^{-1}\{\zeta-h_{d,0,f},(\zeta-h_{d,0,f})^{-1}\}\\
    &\quad+\frac{i}{2}\{(\zeta-h_{d,0})^{-1},(h_{d,0,f}-h_{d,0})\}(\zeta-h_{d,0,f})^{-1}\\
    &\quad+\frac{i}{2}\{(\zeta-h_{d,0})^{-1}(h_{d,0,f}-h_{d,0}),(\zeta-h_{d,0,f})^{-1}\}.
\end{align*}
Let $A(\zeta):=(\zeta-h_{d,0})$ and $B(\zeta):=(\zeta-h_{d,0,f})$. Then using the fact that $h_{d,0,f}-h_{d,0}=A(\zeta)-B(\zeta)$ and $\partial B^{-1}(\zeta)=-B^{-1}\partial B(\zeta) B^{-1}(\zeta)$ with $\partial:=(\partial_k,\partial_X)$, we obtain
\begin{align*}
    2i d_1(\zeta)&= -\{A^{-1}(\zeta),A(\zeta)\}A^{-1}(\zeta)+B^{-1}\{B^{-1}(\zeta),B(\zeta)\}(\zeta)\\
    &\quad+\{A^{-1}(\zeta)B(\zeta),B^{-1}(\zeta)\}+\{A^{-1}(\zeta),B(\zeta)\}B^{-1}(\zeta)-A^{-1}(\zeta)\{B(\zeta),B^{-1}(\zeta)\}\\
    &=-\{A^{-1}(\zeta),A(\zeta)\}A^{-1}(\zeta)+\{B^{-1}(\zeta),B(\zeta)\}B^{-1}(\zeta).
\end{align*}
Since  $\zeta \mapsto [B^{-1}\{B,B^{-1}\}](\zeta)$ is holomorphic on $\Supp(\widetilde{f})$, we obtain \eqref{eq:a1} by multiplying both sides of the above equality by $\overline\partial \widetilde f(\zeta)$ and integrating over $\C$. 

The relation \eqref{eq:a2} can be established in the same way. 
\end{proof}

\subsection{Proof of \texorpdfstring{\eqref{eq:main2}}{}}\label{sec:6.3}

Using Corollary \ref{cor:trace-comp}, we know that
\begin{align*}
    \sup_{\epsilon\in (0,\epsilon_0]}\frac{\epsilon^2}{|NJ\Omega|}\|\Opc(f_{d,j})\Op(\chi_N^2)\|_{\mathfrak{S}_1(\mathcal{H})}\leq C.
\end{align*}
This implies $\sup_{\epsilon\in (0,\epsilon_0]} \frac{\epsilon^2}{|NJ\Omega|}\left|\Tr_{\cH}[\Opc(f_{d,j})\Op(\chi_N^2)]\right|<\infty$. Now we can use Lemma \ref{lem:a_j} to expand $\Opc(f_{d,j})\Op(\chi_N^2)$ in powers of $\epsilon$: from \eqref{eq:chiNfhchiN} and Lemma \ref{lem:a_j}, we infer   
\begin{align*}
\Tr_{L^2(\mathbb{R}^3)}[\chi_N(\epsilon \cdot)f(H_{d,\theta})\chi_N(\epsilon \cdot)]
    &=\sum_{j=0}^n \epsilon^{j}\Tr_{\mathcal{H}}[\Opc(f_{d,j})\Op(\chi_N^2) ]+\mathcal{O}(\epsilon^{n-1}|NJ\Omega|).
    \end{align*}
Then as $f_{d,j}\in S^{1}_{\twi}(\mathfrak{S}_{1}(L^2_{\rm per}))$, from \eqref{eq:trace4}, we obtain
\begin{align}\label{eq:6.21}
\MoveEqLeft \Trb[f(H_{d,\theta})]=\sum_{j=0}^n \frac{\epsilon^j}{(2\pi)^2}\fint_{J\Omega}\int_{\Omega^*}\Tr_{L^2_{\rm per}}[f_{d,j}(k,X)]dkdX+\mathcal{O}(\epsilon^{n+1}).
\end{align}
In addition, from Lemma \ref{lem:a_j}, we can choose $f_{d,0}=f(h_{d,0})$,
\begin{align*}
    f_{d,1}=\frac{i}{2\pi}\int_{\C} \overline{\partial}\widetilde{f}(\zeta )  \{(\zeta-h_{d,0})^{-1},(\zeta-h_{d,0})\}(\zeta-h_{d,0})^{-1}(\zeta)dL(\zeta)
\end{align*}
and
\begin{align*}
 \MoveEqLeft    f_{d,2}(k,X)=-\frac{1}{\pi}\int_\C \overline{\partial}\widetilde{f}(\zeta)  \Big[-\frac{1}{4}(\zeta-h_{d,0})^{-1}\{(\zeta-h_{d,0}),(\zeta-h_{d,0})^{-1}\}^2\\
     &+\frac{1}{4}\{(\zeta-h_{d,0})^{-1}\{(\zeta-h_{d,0}),(\zeta-h_{d,0})^{-1}\}\}\\
     &+\frac{1}{8}(\zeta-h_{d,0})^{-1}\{(\zeta-h_{d,0}),(\zeta-h_{d,0})^{-1}\}_2\\
     &+\frac{i}{4}(\zeta-h_{d,0})^{-1}\Big(\partial_k h_{d,0}{\rm ad}_{\partial_x} (\zeta-h_{d,0})^{-1}-{\rm ad}_{\partial_x}h_{d,0} \partial_k (\zeta-h_{d,0})^{-1}\Big)\Big](k,X) \, dL(\zeta).
\end{align*}
Since $H_{d,\theta}$ are $H_{d,-\theta}$ are unitary equivalent (for instance via the symmetry with respect to the vertical $Ox_1z$ plane), their density of states are the same. This implies that the coefficents of the odd terms in the series expansion \eqref{eq:6.21} vanish.

\subsection{Proof of \texorpdfstring{\eqref{eq:main2'}}{}}

For any $a\in \R_+$, we define the unitary scaling operator $\mathcal{P}_{a}$ by 
\begin{align}
    \forall u \in L^2(\R^3), \quad (\mathcal{P}_{a})u(x,z)=a^{-1}u(a^{-1}x,z).
\end{align}
A straightforward calculation shows that
\begin{align*}
    H_{d,\theta}= \mathcal{P}_{(1+\eta^2)^{1/2}} \widetilde{H}_{d,\eta}\mathcal{P}_{(1+\eta^2)^{1/2}}^{-1},
\end{align*}
with $\eta:=\tan(\theta/2)$ and
\begin{align}
{\widetilde{H}}_{d,\eta}= -\frac{1}{2(1+\eta^2)}\Delta_x-\frac{1}{2}\partial_z^2 +V_{d}(x,\eta x,z).
\end{align}
Thus using cut-off functions $\chi_N$ satisfying \eqref{eq:chi_1}-\eqref{eq:chi_3}, we obtain
\begin{align*}
    \Trb[f(H_{d,\theta})]&=\lim_{N\to +\infty}\frac{1}{|NJ\Omega|}\Tr_{L^2(\mathbb{R}^3)} \left( \chi_N f\Big(\mathcal{P}_{(1+\eta^2)^{1/2}} \widetilde{H}_{d,\weps}\mathcal{P}_{(1+\eta^2)^{1/2}}^{-1}\Big)\chi_N \right)\\
    &=\lim_{N\to +\infty}\frac{1}{|NJ\Omega|}\Tr_{L^2(\mathbb{R}^3)} \left( \chi_N((1+\weps^2)^{-1/2} \cdot) f\Big(\widetilde{H}_{d,\weps}\Big)\chi_N((1+\weps^2)^{-1/2} \cdot) \right) \\
    &= (1+\eta^2)  \Trb[f(\widetilde H_{d,\eta})].
\end{align*}
We now observe that
\begin{align*}
    \widetilde{H}_{d,\weps} = \cU^{-1} {\rm Op}_{\weps} (\widetilde{h}_{d,\weps}) \cU,
\end{align*}
where
\[
\widetilde{h}_{d,\weps}(k,X):=\frac{1}{2(1+\weps^2)}(-i\nabla_x+k)^2-\frac{1}{2}\partial_z^2+V_d(x,X,z).
\]
It is easily seen that $ \widetilde{h}_{d,\weps}$ is a semiclassical symbol in the class $S^{\omega,1}_{\tau,\rm per}(\mathcal{B}(H^2_{\rm per},L^2_{\rm per}))$.
The rest of the proof of \eqref{eq:main2'} is essentially the same as the one of \eqref{eq:main2'}. The only differences are that (i) we can use the usual Weyl quantization operator ${\rm Op}_\eta$ (and not its twisted version), (ii) the symbol depends on $\eta$. For brevity we only sketch the main differences with the proof~\ref{eq:main2} and leave the details to the reader.

\medskip

Setting
\begin{align*}
     \widetilde{h}_{d,\weps,f}(k,X):=\chi_{f}\left(\widetilde{h}_{d,\weps}(k,X)\right) \, \widetilde{h}_{d,\weps}(k,X),
\end{align*}
with $\chi_{f}$ given by \eqref{eq:chi_f}, we have the following analogue to Lemma \ref{lem:4.2}.
\begin{lemma}\label{lem:4.2'}
Let $\omega(k,X):=1+|k|^2$. We have
\begin{align*} 
     \widetilde{h}_{d,\weps}, \widetilde{h}_{d,\weps,f}\in S^{\omega,1/2}_{\tau,\rm per}(
\eta,\mathcal{B}(H^2_{\rm per},L^2_{\rm per}))
\end{align*}
and
\begin{align} \label{eq:S-h-h-tilde'}
    ( \widetilde{h}_{d,\weps}- \widetilde{h}_{d,\weps
    ,f})\in S^{\omega,1/2}_{\tau,\rm per}(\eta,
\mathcal{B}(L^2_{\rm per},H^2_{\rm per}))\bigcap S^{1,1/2}_{\tau,\rm per}(\eta,
\mathcal{B}(L^2_{\rm per}))\bigcap S^{\omega,1/2}_{\tau,\rm per}(\eta,
\mathcal{B}(H^{-2}_{\rm per},L^2_{\rm per})).
\end{align}
\end{lemma}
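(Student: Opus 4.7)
The plan is to mirror the proof of Lemma \ref{lem:4.2}, treating $\eta$-dependence by explicit Taylor expansion of the prefactor $\frac{1}{2(1+\eta^2)}$ and checking that every estimate in the unrescaled argument is uniform in $\eta\in(0,1/2]$. A crucial simplification compared with Lemma \ref{lem:4.2} is that the symbol $\widetilde{h}_{d,\eta}(k,X)$ involves $V_d(x,X,z)$ (no $c(\eta)X$ twist of the $x$ variable), so $x$-periodicity and $X$-periodicity are decoupled; consequently we work in the non-twisted class $S^{\omega,1/2}_{\tau,\mathrm{per}}(\eta,\cdot)$ and no $\mathrm{ad}_{\partial_x}$ derivatives appear.

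First, I would verify directly that $\widetilde{h}_{d,\eta}\in S^{\omega,1/2}_{\tau,\mathrm{per}}(\eta,\mathcal{B}(H^2_{\rm per},L^2_{\rm per}))$. The $\tau$-equivariance in $k$ is immediate from the kinetic term $(-i\nabla_x+k)^2$, and the $J\mathbb{L}$-periodicity in $X$ from that of $V_d(\cdot,X,\cdot)$. The seminorm bounds follow as in Lemma \ref{lem:4.2} since the only $\eta$-dependence sits in the scalar prefactor $\frac{1}{2(1+\eta^2)}$, which lies in a compact interval away from $0$ and $+\infty$ for $\eta\in(0,1/2]$. The semiclassical expansion in $\eta$ is obtained from the convergent (for $|\eta|\le 1/2$) series
\begin{align*}
\frac{1}{2(1+\eta^2)} = \frac12\sum_{j\ge 0}(-1)^j\eta^{2j},
\end{align*}
so that $\widetilde{h}_{d,\eta} = \widetilde{h}_{d,0} + \sum_{j\ge 1}(-1)^j\eta^{2j}\, q_j$ with $q_j(k,X):=\frac12(-i\nabla_x+k)^2\in S^{\omega}_{\tau,\mathrm{per}}(\mathcal{B}(H^2_{\rm per},L^2_{\rm per}))$ independent of $X$. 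Truncating this series at order $n$ yields the required remainder $\mathcal{O}_{S^{\omega,1/2}_{\tau,\mathrm{per}}}(\eta^{2n+2})$.

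Second, I would establish the $\eta$-uniform analogue of Lemma \ref{lem:spec}: $\widetilde h_{d,\eta}(k,X)$ is self-adjoint on $L^2_{\rm per}$ with domain $H^2_{\rm per}$, has essential spectrum $[0,+\infty)$, is uniformly bounded below by $-M=-\|V_d\|_\infty$, and for each $E<0$ the rank of $\mathbf 1_{(-\infty,E]}(\widetilde h_{d,\eta}(k,X))$ is uniformly bounded in $k,X$ and in $\eta\in(0,1/2]$. The Weyl-essential-spectrum argument carries over verbatim; the lower bound is independent of $\eta$ since $V_d$ is; and the Cantor-intersection argument applies because the $\eta$-dependent continuous Courant functions $\mu_j(\eta,k,X)$ are jointly continuous on the compact set $(0,1/2]\times\overline{\Omega^*}\times J\overline{\Omega}$ (with the obvious continuous extension to $\eta=0$, which also has essential spectrum in $[0,+\infty)$).

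Third, with these uniform spectral bounds in hand, I would introduce the same cutoff $\chi_f$ as in \eqref{eq:chi_f} and write, via Helffer--Sjöstrand,
\begin{align*}
\widetilde{h}_{d,\eta}-\widetilde{h}_{d,\eta,f} = -\frac{1}{\pi}\int_{\mathbb{C}}\overline{\partial}\widetilde{\xi_f}(\zeta)(\zeta-\widetilde{h}_{d,\eta})^{-1}\,dL(\zeta),
\end{align*}
and bound the integrand as in Lemma \ref{lem:4.2}. The elliptic regularity estimate $\|(1-\Delta_x-\partial_z^2)u\|_{L^2_{\rm per}}\le C_1\|\widetilde h_{d,\eta}([k],X)u\|_{L^2_{\rm per}}+C_2\|u\|_{L^2_{\rm per}}$ holds uniformly in $\eta\in(0,1/2]$ because $\frac{1}{2(1+\eta^2)}\ge \frac{1}{2(1+1/4)}$, hence the bounds \eqref{eq:29} and their derivative analogues \eqref{eq:31a}--\eqref{eq:31c} transport over with $\eta$-independent constants. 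Combined with $|\overline{\partial}\widetilde\xi_f(\zeta)|=\mathcal{O}(|\Im\zeta|^\infty)$ this yields membership of the difference in the three spaces listed. Finally, since $\widetilde{h}_{d,\eta}$ already admits a semiclassical expansion in $\eta$, the same is true of $(\zeta-\widetilde h_{d,\eta})^{-1}$ by the standard resolvent identity, and hence of the difference $\widetilde{h}_{d,\eta}-\widetilde{h}_{d,\eta,f}$ after integration in $\zeta$; this promotes the symbol statements to genuine semiclassical ones. The membership of $\widetilde h_{d,\eta,f}$ itself then follows by subtraction.

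\textbf{Main obstacle.} The one point that requires care is ensuring that every constant in the resolvent bounds is genuinely $\eta$-independent on $(0,1/2]$, especially when translating the scalar Laplacian coefficient back and forth between $\widetilde{h}_{d,\eta}$ and $(1-\Delta_x-\partial_z^2)$; the restriction to $\eta\le 1/2$ in the claimed class $S^{\cdot,1/2}_{\tau,\mathrm{per}}$ is precisely what makes the prefactor $\frac{1}{2(1+\eta^2)}$ stay in a compact subinterval of $(0,1/2]$, so this is the natural hypothesis and no new ingredient beyond the arguments of Lemmas \ref{lem:4.2}--\ref{lem:h-t-inversible} is needed.
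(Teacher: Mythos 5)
Your proposal is correct and follows essentially the same route as the paper: the geometric series expansion of the prefactor $\tfrac{1}{2(1+\eta^2)}$ to see that $\widetilde h_{d,\eta}$ is a semiclassical symbol, and then the Helffer--Sj\"ostrand formula combined with the resolvent identity \eqref{eq:resolvent'} to transport the semiclassical expansion to $\widetilde h_{d,\eta}-\widetilde h_{d,\eta,f}$. The only difference is that you explicitly spell out the $\eta$-uniform spectral bounds (the analogue of Lemma~\ref{lem:spec}) and elliptic estimates that the paper's terser proof leaves implicit.
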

It is clear that $ \widetilde{h}_{d,\weps}$ is a semiclassical symbol in $S^{\omega,1/2}_{\tau,\rm per}(\mathcal{B}(H^2_{\rm per},L^2_{\rm per}))$ since for $|\eta| < 1$,
\begin{align*}
\widetilde{h}_{d,\weps}(k,X)=  \widetilde{h}_{d,0}(k,X) + \frac 12 \left(\sum_{j=1}^{+\infty} (-1)^j \eta^{2j}\right) (-i\nabla_x+k)^2.
\end{align*}
The fact that
$\widetilde{h}_{d,\weps,f}$ and $\widetilde{h}_{d,\weps}-\widetilde{h}_{d,\weps,f}$ are semiclassical symbols follows from the resolvent formula
\begin{align}\label{eq:resolvent'}
    (\zeta-\widetilde{h}_{d,\weps})^{-1}=\sum_{j=0}^\infty (\zeta-\widetilde{h}_{d,0})^{-1}\Big[(\widetilde{h}_{d,\weps}-\widetilde{h}_{d,0})(\zeta-\widetilde{h}_{d,0})^{-1}\Big]^j
\end{align}
and Helffer-Sj\"ostrand's formula.

Then we have
\begin{lemma}\label{cor:trace'}
The symbol $(\widetilde{h}_{d,\weps}-\widetilde{h}_{d,\weps,f})$ belongs to the class $S_{\tau,\rm per}^{1/2}(\eta,\mathfrak{S}_1(L^2_{\rm per}))$.
\end{lemma}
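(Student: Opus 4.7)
The plan is to mirror the proof of Lemma~\ref{cor:trace} with two adaptations: working in the $\tau$-equivariant $J\mathbb{L}$-periodic class instead of the twisted class (which eliminates all ${\rm ad}_{\partial_x}$ bookkeeping), and establishing in addition the asymptotic expansion in $\eta$. Let $a_\eta := \widetilde{h}_{d,\eta} - \widetilde{h}_{d,\eta,f}$. By Lemma~\ref{lem:4.2'}, $a_\bullet \in S^{1,1/2}_{\tau,\rm per}(\eta,\cB(L^2_{\rm per}))$, so uniform operator-norm bounds on all derivatives $\partial_k^\alpha \partial_X^\beta a_\eta$ are already in hand. It therefore suffices to upgrade these to uniform trace-norm bounds (for membership in $S^{1,1/2}_{\tau,\rm per}(\cdot)$) and to produce the $\eta$-expansion of $a_\eta$ in that class.

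\textbf{Uniform bounds.} First I would establish the analogue of Lemma~\ref{lem:spec} uniformly on $\eta \in [0,1/2]$: the essential spectrum of $\widetilde{h}_{d,\eta}(k,X)$ equals $[0,+\infty)$ by the same Kato-Seiler-Simon plus Weyl argument; $\inf_{\eta,k,X} \sigma(\widetilde{h}_{d,\eta}(k,X))\geq -M$ with $M=\sup|V_d|$; and for each $E<0$ there is $m_E\in\N$ with
\[
\sup_{\eta\in[0,1/2],\, k,X\in\R^2} \rank\big(\1_{(-\infty,E]}(\widetilde{h}_{d,\eta}(k,X))\big)\leq m_E.
\]
The rank statement follows from Courant-Fischer applied to the continuous functions $(\eta,k,X)\mapsto \sigma_j(\widetilde{h}_{d,\eta}(k,X))$ on the compact set $[0,1/2]\times\overline{\Omega^*}\times J\overline{\Omega}$ together with a Cantor intersection argument identical to the one in the proof of Lemma~\ref{lem:spec}. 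Since $\chi_f=1$ on $[-\delta,+\infty)$, the identity $a_\eta=(1-\chi_f(\widetilde{h}_{d,\eta}))\widetilde{h}_{d,\eta}$ implies $\rank(a_\eta)\leq m_{-\delta}$ uniformly, and the finite-difference trick used in Lemma~\ref{cor:trace} upgrades this to
\[
\sup_{\eta\in(0,1/2],\,k,X\in\R^2}\rank(\partial_k^\alpha\partial_X^\beta a_\eta)\leq 2^{|\alpha|+|\beta|}m_{-\delta}\qquad\forall\alpha,\beta\in\N^2.
\]
Combining with $\|\cdot\|_{\mathfrak{S}_1(L^2_{\rm per})}\leq \|\cdot\|_{\cB(L^2_{\rm per})}\cdot\rank(\cdot)$ and the operator-norm bounds from Lemma~\ref{lem:4.2'} yields $a_\bullet\in S^{1,1/2}_{\tau,\rm per}(\mathfrak{S}_1(L^2_{\rm per}))$.

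\textbf{Asymptotic expansion.} Since $\frac{1}{1+\eta^2}=\sum_{j\ge 0}(-\eta^2)^j$ converges in operator norm on $H^2_{\rm per}\to L^2_{\rm per}$ for $|\eta|\le 1/2$, the symbol $\widetilde{h}_{d,\eta}-\widetilde{h}_{d,0}=-\frac{\eta^2}{2(1+\eta^2)}(-i\nabla_x+k)^2$ is a genuine semiclassical symbol in the class $S^{\omega,1/2}_{\tau,\rm per}(\eta,\cB(H^2_{\rm per},L^2_{\rm per}))$. Inserting the Neumann series~\eqref{eq:resolvent'} into the Helffer-Sj\"ostrand representation
\[
a_\eta = \widetilde{h}_{d,\eta}-\widetilde{h}_{d,\eta,f}=-\frac{1}{\pi}\int_{\C}\overline{\partial}\widetilde\xi_f(\zeta)\,(\zeta-\widetilde{h}_{d,\eta})^{-1}\,dL(\zeta)
\]
(where $\widetilde\xi_f$ is the almost-analytic extension of $\xi_f(t):=t(1-\chi_f(t))$ introduced in the proof of Lemma~\ref{lem:4.2}) produces, after truncating to order $n$, coefficients $a_j$ that are finite sums of contour integrals of polynomials in $(\zeta-\widetilde{h}_{d,0})^{-1}$ and $(-i\nabla_x+k)^2$. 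Each $a_j$ is $\tau$-equivariant in $k$ and $J\mathbb{L}$-periodic in $X$, and the rank argument above (applied with $\widetilde{h}_{d,0}$ in place of $\widetilde{h}_{d,\eta}$ and exploiting the resolvent identity plus the fact that $\Supp(\widetilde\xi_f)\subset (-M-2,-\delta)$) shows $a_j\in S^{1}_{\tau,\rm per}(\mathfrak{S}_1(L^2_{\rm per}))$. The remainder is controlled in $S^{1,1/2}_{\tau,\rm per}(\mathfrak{S}_1(L^2_{\rm per}))$ by the same combination of operator-norm estimates (from the Neumann tail) and uniform rank bounds.

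\textbf{Main obstacle.} The one non-routine point is uniformity in $\eta$ of the resolvent estimates on $\Supp(\widetilde\xi_f)$: one must check that $(\zeta-\widetilde{h}_{d,\eta})^{-1}$ and its derivatives satisfy bounds of the form~\eqref{eq:29}-\eqref{eq:31c} with constants independent of $\eta\in(0,1/2]$, which is the input the Helffer-Sj\"ostrand formula relies on. This uniformity follows from the spectral bounds established above, since $\Supp(\widetilde\xi_f)$ stays a fixed positive distance from $\sigma(\widetilde{h}_{d,\eta})\cap[-M-2,+\infty)$ for all $\eta\in[0,1/2]$; the only $\eta$-dependent ingredient, the factor $(1+\eta^2)^{-1}\in[4/5,1]$, affects all constants by a bounded multiplicative factor, so the estimates go through verbatim.
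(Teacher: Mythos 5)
Your plan is essentially the paper's: express $a_\eta=\xi_f(\widetilde h_{d,\eta})$ through Helffer--Sj\"ostrand, expand the resolvent in $\eta$ via the Neumann series~\eqref{eq:resolvent'}, and supply trace-class control by a finite-rank argument. Your ``uniform bounds'' step is a sound adaptation of Lemma~\ref{cor:trace}, and the observation that $(1+\eta^2)^{-1}\in[4/5,1]$ only rescales constants is the right reason for uniformity in $\eta$.

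The gap is in the sentence concluding the ``asymptotic expansion'' step, where you invoke ``the rank argument above'' together with $\Supp\widetilde\xi_f\subset(-M-2,-\delta)$ to conclude $a_j\in S^{1}_{\tau,\rm per}(\mathfrak{S}_1(L^2_{\rm per}))$. The rank argument of Lemma~\ref{cor:trace} applies to $\xi_f(\widetilde h_{d,\eta})(k,X)$ as a whole, whose rank is bounded because $\Supp\xi_f$ lies in the gap below $\sigma_{\rm ess}$. But the coefficients $a_j$ ($j\ge 2$) produced by the Neumann series are contour integrals of words like $(\zeta-\widetilde h_{d,0})^{-1}(-i\nabla_x+k)^2(\zeta-\widetilde h_{d,0})^{-1}\cdots$, and the resolvent $(\zeta-\widetilde h_{d,0})^{-1}$ is not compact on $L^2_{\rm per}$ (the essential spectrum is $[0,\infty)$), let alone finite rank; nothing in the integrand is finite rank, so that argument does not carry over unmodified. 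The paper bridges this with the decomposition~\eqref{eq:A1-A2}: split $(\zeta-\widetilde h_{d,0})^{-1}=A_1(\zeta)+A_2(\zeta)$, where $A_1$ carries the spectral projector $\1_{[-M,-\delta/2)}(\widetilde h_{d,0})$ (uniformly finite rank by Lemma~\ref{lem:spec}) and $A_2$ carries $\1_{[-\delta/2,\infty)}(\widetilde h_{d,0})$. Precisely because $\Supp\widetilde\xi_f\subset(-M-2,-\delta)$, the map $\zeta\mapsto A_2(\zeta)$ is holomorphic near $\Supp\widetilde\xi_f$, so the all-$A_2$ word integrates to zero against $\overline\partial\widetilde\xi_f$ and every surviving word contains at least one finite-rank factor $A_1$. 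This is the step your proposal gestures at but does not state; without it (or an equivalent finite-difference-in-$\eta$ argument exploiting lower semicontinuity of rank under operator-norm limits), the trace-class membership of the $a_j$'s, and hence the expansion in $\mathfrak{S}_1$, is not justified.
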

\begin{proof}[Sketch of proof]
Using \eqref{eq:HF_formula} and the resolvent identity \eqref{eq:resolvent'}, we can write
\begin{align*}
    (\widetilde{h}_{d,\eta}-\widetilde{h}_{d,\eta,f})=\sum_{j=0}^n \weps^j\widetilde{h}_{d,0,f,j}+ \weps^{n+1}\widetilde{h}_{d,\weps,f,n+1}.
\end{align*}
It is easy to see that $\widetilde{h}_{d,0,f,0}=( \widetilde{h}_{d,0}- \widetilde{h}_{d,0
    ,f})$ is in $S_{\tau,\rm per}^{1}(\mathfrak{S}_1(L^2_{\rm per}))$, and $\widetilde{h}_{d,0,f,1}=0$. We consider now the next order term. Using \eqref{eq:h-htilde} this term can be written as 
    \begin{align*}
      \widetilde{h}_{d,0,f,2}= -\frac{1}{\pi}\int_{\C}\overline{\partial}\widetilde\xi_f(\zeta)(z) (\zeta-\widetilde{h}_{d,0})^{-1}(-i\nabla_x+k)^2(\zeta-\widetilde{h}_{d,0})^{-1}dL(\zeta).
    \end{align*}
Then we decompose  $(\zeta-\widetilde{h}_{d,\weps})^{-1}$ as 
\begin{align}\label{eq:A1-A2}
    (\zeta-\widetilde{h}_{d,\weps})^{-1}=A_1(\zeta)+A_2(\zeta),
\end{align}
with
\begin{align*}
    A_1(\zeta):=(\zeta-\widetilde{h}_{d,\widetilde{h}_{d,\weps}})^{-1}\1_{[-M,-\delta/2)}(\widetilde{h}_{d,\weps}),\qquad A_2(\zeta):=(\zeta-\widetilde{h}_{d,\widetilde{h}_{d,\weps}})^{-1}\1_{[-\delta/2,\infty)}(\widetilde{h}_{d,\weps}).
\end{align*}
The function $\zeta \mapsto A_2(\zeta)$ is holomorphic on a neighborhooof of $\Supp(\widetilde\xi_f)$, and we have
\begin{align*}
     \MoveEqLeft   
 \widetilde{h}_{d,0,f,2} = -\frac{1}{\pi}\int_{\C}\overline{\partial}\widetilde\xi_f(\zeta)(z) \Big[A_1(\zeta)(-i\nabla_x+k)^2(\zeta-A_1(\zeta)\\
     &\qquad+A_1(\zeta)(-i\nabla_x+k)^2(\zeta-A_2(\zeta)+A_2(\zeta)(-i\nabla_x+k)^2(\zeta-A_1(\zeta)\Big]dL(\zeta)
    \end{align*}
    Note that $\rank((A_1)(\zeta))<\infty$. Repeating the argument in Lemma \ref{cor:trace}, we obtain $ \widetilde{h}_{d,0,f,2}\in S_{\tau,\rm per}^{1}(\mathfrak{S}_1(L^2_{\rm per}))$. Proceeding similarly with the higher order terms, we obtain that $\widetilde{h}_{d,0,f,j}\in S_{\tau,\rm per}^{1}(\mathfrak{S}_1(L^2_{\rm per}))$ and $\widetilde{h}_{d,\weps,f,n+1}\in S_{\tau,\rm per}^{1,1/2}(\mathfrak{S}_1(L^2_{\rm per}))$. 
\end{proof}

 \subsection{Proof of \texorpdfstring{\eqref{eq:a2=a2'}}{}}
\label{sec:check}
 
Let us show that
    \begin{align}\label{eq:fj=fj~}
    \fint_{J\Omega^*}\int_{\Omega^*}\Tr_{L^2_{\rm per}}[f_{d,2}(k,X)] \, dk \, dX=  \fint_{J\Omega^*}\int_{\Omega^*}\Tr_{L^2_{\rm per}}[f_{d,2}'(k,X)] \, dk \, dX,
 \end{align}
where  $f_{d,2}$ and $f_{d,2}'$ are given by \eqref{eq:f_d2} and \eqref{eq:f_d2'} respectively.
For that, it suffices to show that
\begin{align}\label{eq:8.6}
\MoveEqLeft \fint\limits_{J\Omega^*}\int\limits_{\Omega^*}\Tr_{L^2_{\rm per}}\left[ \int_\C \overline{\partial}\widetilde{f}(\zeta)  \Big[-\frac{i}{4} A^{-1}\Big(\partial_k A{\rm ad}_{\partial_x} A^{-1}-{\rm ad}_{\partial_x}A \partial_k A^{-1}\Big) \Big](k,X) \, dL(\zeta)\right]dkdX \notag\\
  &= \fint\limits_{J\Omega^*}\int\limits_{\Omega^*}\Tr_{L^2_{\rm per}}\left[ \int_\C \overline{\partial}\widetilde{f}(\zeta)  \Big[-\frac{1}{2} A^{-1}(-i\nabla_x+k)^2A^{-1}-A^{-1} \Big](k,X) \, dL(\zeta)\right]dkdX
\end{align}
where $A:=A(\zeta)=\zeta-h_{d,0}(k,X)$.

As $\partial_k A=-(-i\nabla_x+k)$ and $-i{\rm ad}_{\partial_x} A =(-i\nabla_x+k)  A- A (-i\nabla_x+k)$, we have
\begin{align*}
  \MoveEqLeft  \Tr_{L^2_{\rm per}}\left[ \int_\C \overline{\partial}\widetilde{f}(\zeta)  \Big[-\frac{i}{4} A^{-1}\Big(\partial_k A{\rm ad}_{\partial_x} A^{-1}-{\rm ad}_{\partial_x}A \partial_k A^{-1}\Big) \Big](k,X) \, dL(\zeta)\right]\\
    &=\Tr_{L^2_{\rm per}}\left[ \int_\C \overline{\partial}\widetilde{f}(\zeta)  \Tr_{L^2_{\rm per}}
    \Big[-\frac{1}{2}A^{-1}(-i\nabla_x+k)^2 A^{-1}-\frac{1}{2}\nabla_k A\cdot \nabla_k A^{-1}\Big](k,X) \, dL(\zeta)\right].
\end{align*}
Notice that $ \nabla_k A\cdot \nabla_k A^{-1} =\nabla_k\cdot ((\nabla_k A)A^{-1})-(\Delta_k A)A^{-1}=\nabla_k\cdot ((\nabla_k A)A^{-1})+2A^{-1}$. We thus have
\begin{align*}
   \MoveEqLeft \int_{\Omega^*}\Tr_{L^2_{\rm per}}\left[ \int_\C \overline{\partial}\widetilde{f}(\zeta)  \Big[\nabla_k A\cdot \nabla_k A^{-1}\Big](k,X) \, dL(\zeta)\right] dk\\
    &=2\int_{\Omega^*}\Tr_{L^2_{\rm per}}\left[ \int_\C \overline{\partial}\widetilde{f}(\zeta)  A^{-1}(k,X) \, dL(\zeta)\right] dk\\
    &\quad+\int_{\Omega^*}\nabla_k\cdot \left(\Tr_{L^2_{\rm per}}\left[ \int_\C \overline{\partial}\widetilde{f}(\zeta)  [(\nabla_k A) A^{-1}](k,X) \, dL(\zeta)\right] \right)dk\\
    &=2\int_{\Omega^*}\Tr_{L^2_{\rm per}}\left[ \int_\C \overline{\partial}\widetilde{f}(\zeta)  A^{-1}(k,X) \, dL(\zeta)\right] dk,
\end{align*}
where the last equality holds since, according to $\tau$-equivariance, 
\begin{align*}
    \Tr_{L^2_{\rm per}}\left[\int_\C \overline{\partial}\widetilde{f}(\zeta)  [(\nabla_k A) A^{-1}](k,X) \, dL(\zeta)\right]
\end{align*}
is $\mathbb{L}^*$-periodic in $k\in \R^2$. Hence \eqref{eq:8.6}.

\section{Proof of Theorem~\ref{DoS-BM}}\label{sec:red-BM}

The proof of Theorem~\ref{DoS-BM} is very similar to the one of Theorem \ref{DoS-TBG}. So we only sketch it here for the sake of brevity. Throughout this section, the order function is $\omega(\kappa,X):=(1+|\kappa|^2)^{1/2}$. 

\medskip

Let us first study the matrix $T_\epsilon(\kappa)$ defined in~\eqref{eq:T-V-1}. By definition of $\epsilon$, the twist angle~$\theta$ is given by
\begin{align*}
    \theta =2 \arcsin(\epsilon),
\end{align*}
and is therefore real-analytic in $\epsilon$ on $(-1,1)$. Thus, $T_\epsilon(\kappa)$ can be expanded as  
\begin{align}\label{eq:T-epsilon-dec}
    T_\epsilon(\kappa)=\sum_{j=0}^\infty\epsilon^j T_{0,j}(\kappa) \qquad \mbox{with} \quad T_{0,j}(\bullet):=\frac{1}{j!}\frac{\partial^j}{\partial\epsilon^j} T_{\epsilon}|_{\epsilon=0}(\bullet) \in C^\infty(\R^2;\C^{4\times 4}).
\end{align}
In particular, $T_{0,0}(\kappa)=T_0(\kappa)$,
\begin{align*}
    T_{0,1}(\kappa):=
   \begin{pmatrix}
   v_{\rm F} (-\sigma_2,\sigma_1) \cdot \kappa & 0\\
   0&  -v_{\rm F} (-\sigma_2,\sigma_1) \cdot \kappa \end{pmatrix};
\end{align*}
and $
    T_{0,2}(\kappa):=
   -\frac{1}{2} T_0(\kappa)$.
It is easy to see that there exists $C_j\in \R_+$ such that for any $\kappa\in \R^2$,
\begin{align}\label{eq:bound-Tj1}
    \|T_{0,j(\kappa)}\|_{\rm F}\leq C_j|\kappa|,\quad \|\partial_\kappa T_{0,j}(\kappa)\|_{\rm F}\leq C_j
\end{align}
and that for any $\alpha\in \N^2$ such that $|\alpha|\geq 2$,
\begin{align}\label{eq:bound-Tj2}
    \partial_\kappa^\alpha T_{0,j}(\kappa)=0.
\end{align}

\subsection{Properties of the operator-valued symbol \texorpdfstring{$h^{\rm eff}_{d,K,\epsilon}(k,X)$ }{}}\label{sec:t-symbol}

To study the properties of the operator-valued symbol $h^{\rm eff}_{d,K,\epsilon}$ defined in~\eqref{eq:def_tdeps}, we argue as in Section \ref{sec:hd0-symbol}. Because of symmetries, it suffices to consider the case when $0 \le \epsilon \le \frac{\sqrt 3}2$, i.e. $0 \le \theta \le \frac{2\pi}3$.
\begin{lemma}\label{lem:spec-BM}
Let $0\leq \epsilon\leq \frac{\sqrt{3}}{2}$. Under Assumptions \ref{def:MGpotential} and \ref{def:inter}, $h^{\rm eff}_{d,K,\epsilon}(k,X)$ is a self-adjoint operator on $L^2_\#$ with domain $H^1_{\#}$ satisfying
\begin{enumerate}
\item for any $k,X\in \mathbb{R}^2$,
\begin{align}\label{eq:ess-BM}
    \sigma_{\rm ess}(h^{\rm eff}_{d,K,\epsilon}(k,X))=\emptyset;
\end{align}
\item for any $-\infty < E_1<E_2< +\infty$, there is a constant $m_{\rm EMS}(E_1,E_2) \in \N$ such that
\begin{align}\label{eq:rankbound-BM}
 \sup_{\epsilon\in [0,\frac{\sqrt{3}}{2}]} \sup_{\substack{ k,X\in \mathbb{R}^2}}\rank
    \Big(\1_{[E_1,E_2]}\big(h^{\rm eff}_{d,K,\epsilon}(k,X)\big)\Big)\leq m_{\rm EMS}(E_1,E_2).
\end{align}
\end{enumerate}
\end{lemma}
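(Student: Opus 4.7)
The plan is to exploit the algebraic identity $T_\epsilon(\kappa)^2 = v_{\rm F}^2 |\kappa|^2 I_4$, which is independent of $\epsilon$ and follows from $(\pmb\sigma_{\pm\theta/2}\cdot\kappa)^2=|\kappa|^2 I_2$ (the rotations $e^{\mp i\theta/4\sigma_3}$ act orthogonally on the pair $(\sigma_1,\sigma_2)$). Combined with Fourier series on $\Omega$, this identity plays the role that the Laplacian and Weyl's essential spectrum theorem played in Lemma \ref{lem:spec}.

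For self-adjointness and \eqref{eq:ess-BM}, I would work in the Fourier basis $u=\sum_{G\in\mathbb L^*}c_G e^{iG\cdot x}$ of $L^2_\#$. The operator $T_\epsilon(-i\nabla_x+k-K)$ acts block-diagonally by the $4\times 4$ Hermitian matrix $T_\epsilon(G+k-K)$, whose eigenvalues are $\pm v_{\rm F}|G+k-K|$ (each with multiplicity two); this makes it self-adjoint with domain $H^1_\#$. Since $T_\epsilon(-i\nabla_x+k-K)^2=v_{\rm F}^2|-i\nabla_x+k-K|^2 I_4$ has compact resolvent on $L^2_\#$ by Rellich-Kondrachov, so does $T_\epsilon(-i\nabla_x+k-K)$ itself. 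Adding the bounded Hermitian multiplication operator $\mathcal V_{d,K}(X)$ preserves self-adjointness on $H^1_\#$, and via a Neumann series for $(h^{\rm eff}_{d,K,\epsilon}(k,X)-i\lambda)^{-1}$ with $|\lambda|$ large enough, also preserves compactness of the resolvent, yielding \eqref{eq:ess-BM}.

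For the uniform rank bound \eqref{eq:rankbound-BM}, by $\tau$-equivariance in $k$ and $J\mathbb L$-periodicity in $X$ it suffices to establish the estimate for $(k,X)\in\overline{\Omega^*}\times J\overline\Omega$. For any $\psi$ in the range of $\1_{[E_1,E_2]}(h^{\rm eff}_{d,K,\epsilon}(k,X))$ one has $\|h^{\rm eff}_{d,K,\epsilon}\psi\|_{L^2_\#}\le \max(|E_1|,|E_2|)\|\psi\|_{L^2_\#}$, hence
$$\|T_\epsilon(-i\nabla_x+k-K)\psi\|_{L^2_\#}\le \big(\max(|E_1|,|E_2|)+\|\mathcal V_{d,K}\|_{L^\infty}\big)\|\psi\|_{L^2_\#}=:C(E_1,E_2),$$
with $C(E_1,E_2)$ independent of $(k,X,\epsilon)$. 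The squared identity combined with Parseval's formula translates this into the weighted-$\ell^2$ bound $v_{\rm F}^2\sum_{G\in\mathbb L^*}|G+k-K|^2|c_G|^2\le C(E_1,E_2)^2\sum_{G\in\mathbb L^*}|c_G|^2$. Choosing $R$ slightly larger than $C(E_1,E_2)/v_{\rm F}$, the orthogonal projection $P_R$ onto Fourier modes with $|G+k-K|\le R$ is bounded below by $1/\sqrt 2$ on the range of the spectral projector (hence injective), and therefore
$$\rank\big(\1_{[E_1,E_2]}(h^{\rm eff}_{d,K,\epsilon}(k,X))\big)\le 4\cdot\#\{G\in\mathbb L^*:|G+k-K|\le R\},$$
which, for $k\in\overline{\Omega^*}$, is uniformly bounded by a constant depending only on $E_1,E_2,K,v_{\rm F}$ and the geometry of $\mathbb L^*$.

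The conceptual point is that uniformity in $\epsilon$ is built in automatically from the identity $T_\epsilon(\kappa)^2=v_{\rm F}^2|\kappa|^2 I_4$; no further $\epsilon$-dependent analysis is needed. The only mild technical point is the Neumann-series justification of compactness of the resolvent of $T_\epsilon + \mathcal V_{d,K}$, which is standard since $\mathcal V_{d,K}$ is bounded.
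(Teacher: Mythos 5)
Your proof is correct, and a minor remark first: the paper's own sketch contains a typo (it writes that the eigenvalues of $T_\epsilon(G+k-K)$ are $\pm v_{\rm F}|G+k-K|^2$); you correctly identify them as $\pm v_{\rm F}|G+k-K|$, each two-fold degenerate, via the $\epsilon$-independent identity $T_\epsilon(\kappa)^2=v_{\rm F}^2|\kappa|^2 I_4$.

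Your route to the rank bound is a variation on the paper's, and it is worth noting where they differ. The paper exploits the fact that $\mathcal V_{d,K}(X)$ is a \emph{constant} $4\times 4$ matrix for fixed $X$, so the entire Hamiltonian $h^{\rm eff}_{d,K,\epsilon}(k,X)=T_\epsilon(-i\nabla_x+k-K)+\mathcal V_{d,K}(X)$ is a Fourier multiplier: in the Fourier basis it is a direct sum of the matrices $T_\epsilon(G+k-K)+\mathcal V_{d,K}(X)$, whose spectra lie in $\pm v_{\rm F}|G+k-K|+[-\|\mathcal V_{d,K}\|_\infty,\|\mathcal V_{d,K}\|_\infty]$. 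Only finitely many $G$ (uniformly for $k$ in a compact set) can produce an eigenvalue in $[E_1,E_2]$, which gives the rank bound directly with no spectral-theory machinery. You instead avoid the block-diagonal structure and argue through the operator identity $T_\epsilon(\kappa)^2=v_{\rm F}^2|\kappa|^2 I_4$, Parseval, and a low-pass filter $P_R$, showing $P_R$ is injective on the range of the spectral projector. Your argument is slightly more elaborate but more robust: it would still work if $\mathcal V_{d,K}$ were replaced by a bounded $x$-dependent multiplication operator, since you never actually need the Hamiltonian to be block-diagonal, only $T_\epsilon$ to be a Fourier multiplier whose square is scalar. Both routes correctly extract the key point that the $\epsilon$-dependence of $T_\epsilon$ is a unitary conjugation, so nothing in the estimate depends on $\epsilon$.

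One small quantitative remark: you say ``choosing $R$ slightly larger than $C(E_1,E_2)/v_{\rm F}$'' and then claim $P_R$ is bounded below by $1/\sqrt 2$; the latter actually requires $R\geq\sqrt 2\,C(E_1,E_2)/v_{\rm F}$. However, any $R>C(E_1,E_2)/v_{\rm F}$ already gives $\|(1-P_R)\psi\|<\|\psi\|$ for nonzero $\psi$ in the range, which is injectivity, and that is all the rank bound needs. So the argument is unaffected.
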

\begin{proof}[Stetch of proof]
 The first assertion follows from the same arguments as in the proof of Lemma \ref{lem:spec}, we have \eqref{eq:ess-BM}.

The second assertion easily follows from the facts that (i) $h^{\rm eff}_{d,K,\epsilon}(k,X)=T_\epsilon(-i\nabla_x+k-K) +\mathcal{V}_{d}(X)$ is a Fourier multiplier operator on $L^2_{\#}$, (ii) its eigenvalues are therefore the eigenvalues of the matrices $\big(T_{\epsilon}(G+k-K)+\mathcal{V}_{d}(X)\big)$, $G\in \mathbb{L}^*$, (iii) the function $\R^2 \ni X \mapsto \mathcal{V}_{d}(X) \in \C^{4 \times 4}$ is bounded, and (iv) the eigenvalues of $T_{\epsilon}(G+k-K)$ are $\pm v_{\rm F} |G+k-K|^2$ (each of them being two-fold degenerate).
\end{proof}

We now introduce a new symbol $h^{\rm eff}_{d,K,f,\epsilon}$. Let $\delta>0$ be a small enough constant. Setting
\[
e_1:= \inf\Supp(\widetilde{f})-2\delta,\quad e_2:= \sup\Supp(\widetilde{f})+2\delta,
\]
we define the operator-valued symbol $h^{\rm eff}_{d,K,f,\epsilon}$ as
\begin{align}\label{eq:t-t-BM}
    h^{\rm eff}_{d,K,f,\epsilon}(k,X)=\chi_{f,\rm EMS}(h^{\rm eff}_{d,K,\epsilon})(k,X)h^{\rm eff}_{d,K,\epsilon}(k,X)
\end{align}
where $\chi_{f,\rm EMS}\in C^\infty(\R)$  satisfies
\begin{align*}
    &\Supp(\chi_{f,\rm EMS})\subset((-\infty,e_1+\delta)\cup (e_2-\delta,+\infty)), \\
    &\chi_{f,\rm EMS}(s)=1\quad \textrm{for }s\in  (-\infty,e_1]\cup [e_2,+\infty)\quad \textrm{and}\quad 0\leq \chi_{f,\rm EMS}\leq 1.
\end{align*}
Analogous to Lemma \ref{lem:4.2} we have the following results.
\begin{lemma}\label{lem:4.2-BM}
We have
\begin{align*}
 h^{\rm eff}_{d,K,\bullet},h^{\rm eff}_{d,K,f,\bullet}\in S^{\omega,\frac{\sqrt{3}}{2}}_{\tau}(
\mathcal{B}(\epsilon,H^1_{\#},L^2_{\#}))
\end{align*}
and
\begin{align}\label{eq:S-t-ttilde}
    (h^{\rm eff}_{d,K,\bullet}-h^{\rm eff}_{d,K,f,\bullet})\in S^{\omega,\frac{\sqrt{3}}{2}}_{\tau}(\epsilon,\mathcal{B}(L^2_{\#},H^1_{\#}))\bigcap S^{1,\frac{\sqrt{3}}{2}}_{\tau}(\epsilon,\mathcal{B}(L^2_{\#}))\bigcap S^{\omega,\frac{\sqrt{3}}{2}}_{\tau}(\epsilon,
\mathcal{B}(H^{-1}_{\#},L^2_{\#})).
\end{align}
\end{lemma}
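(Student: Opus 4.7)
The proof follows the pattern of Lemma \ref{lem:4.2} with two simplifications (no transverse variable $z$ and no $\mathrm{ad}_{\partial_x}$ direction to track) and one key change: because $T_\epsilon$ is linear in its argument, the natural order function drops from $1+|k|^2$ to $\omega(k,X)=(1+|k|^2)^{1/2}$. The key structural identity, which I will use repeatedly, is $T_\epsilon(\kappa)^2=v_{\rm F}^2|\kappa|^2 I_4$ (a direct computation using $(\sigma\cdot\kappa)^2=|\kappa|^2 I_2$ and the fact that $\sigma_{\pm\theta/2}\cdot\kappa$ is a unitary conjugate of $\sigma\cdot\kappa$).

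First, I would verify $h^{\rm eff}_{d,K,\bullet}\in S^{\omega,\sqrt 3/2}_{\tau}(\cB(H^1_\#,L^2_\#))$. From \eqref{eq:T-epsilon-dec}-\eqref{eq:bound-Tj2}, $\kappa\mapsto T_\epsilon(\kappa)$ is affine with derivatives bounded uniformly in $\epsilon\in[0,\sqrt 3/2]$, which yields $\|h^{\rm eff}_{d,K,\epsilon}(k,X)u\|_{L^2_\#}\le C\omega(k,X)\|u\|_{H^1_\#}$, $\|\partial_k h^{\rm eff}_{d,K,\epsilon}\|_{\cB(L^2_\#)}\le C$, and $\partial_k^\alpha h^{\rm eff}_{d,K,\epsilon}=0$ for $|\alpha|\ge 2$. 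The $X$-derivatives only involve $\mathcal V_{d,K}$ and are uniformly bounded by smoothness and $J\mathbb L$-periodicity. The $\tau$-equivariance in $k$ follows from $\tau_G^{-1}(-i\nabla_x)\tau_G=-i\nabla_x+G$. Combining these bounds with the Helffer-Sj\"ostrand formula (for the decomposition of $\chi_{f,{\rm EMS}}(h^{\rm eff}_{d,K,\epsilon})$, see below), one obtains $h^{\rm eff}_{d,K,f,\bullet}$ in the same class.

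The heart of the argument is \eqref{eq:S-t-ttilde}. By construction,
\[
h^{\rm eff}_{d,K,\epsilon}-h^{\rm eff}_{d,K,f,\epsilon}=\xi_f(h^{\rm eff}_{d,K,\epsilon}),\qquad \xi_f(t):=t\,(1-\chi_{f,\mathrm{EMS}}(t)),
\]
with $\xi_f\in C^\infty_{\rm c}(\R;\R)$ supported in $[e_1,e_2]$. Applying \eqref{eq:HF_formula} to an almost-analytic extension $\widetilde{\xi_f}$, the analysis reduces to resolvent estimates of $(\zeta-h^{\rm eff}_{d,K,\epsilon})^{-1}$ for $\zeta$ in a bounded neighbourhood of $[e_1,e_2]$. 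Self-adjointness gives $\|(\zeta-h^{\rm eff}_{d,K,\epsilon})^{-1}\|_{\cB(L^2_\#)}\le|\Im\zeta|^{-1}$, while the identity $T_\epsilon(\kappa)^2=v_{\rm F}^2|\kappa|^2 I_4$ yields $\|(-i\nabla_x+k-K)u\|_{L^2_\#}\le Cv_{\rm F}^{-1}\|T_\epsilon(-i\nabla_x+k-K)u\|_{L^2_\#}$, and hence
\[
\omega(k,X)^{-1}\|(\zeta-h^{\rm eff}_{d,K,\epsilon})^{-1}\|_{\cB(L^2_\#,H^1_\#)}\le C|\Im\zeta|^{-1}
\]
uniformly in $(k,X,\epsilon)$ and $\zeta$ in the relevant complex set. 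Differentiating the resolvent and using $\|(\partial_k h^{\rm eff}_{d,K,\epsilon})(\zeta-h^{\rm eff}_{d,K,\epsilon})^{-1}\|_{\cB(L^2_\#)}\le C|\Im\zeta|^{-1}$ together with the analogous $\partial_X$ bound, one obtains
\[
\omega(k,X)^{-1}\|\partial_k^\alpha\partial_X^\beta(\zeta-h^{\rm eff}_{d,K,\epsilon})^{-1}\|_{\cB(L^2_\#,H^1_\#)}\le C_{\alpha,\beta}|\Im\zeta|^{-|\alpha|-|\beta|-1},
\]
and a symmetric bound in $\cB(H^{-1}_\#,L^2_\#)$ by duality and self-adjointness. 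Because $|\bar\partial\widetilde{\xi_f}(\zeta)|=\cO(|\Im\zeta|^\infty)$, the integrals converge absolutely, producing membership in $S^{\omega,\sqrt 3/2}_{\tau}(\cB(L^2_\#,H^1_\#))\cap S^{\omega,\sqrt 3/2}_{\tau}(\cB(H^{-1}_\#,L^2_\#))$. Finally, the bound in $S^{1,\sqrt 3/2}_\tau(\cB(L^2_\#))$ uses the uniform rank bound \eqref{eq:rankbound-BM}: since $\xi_f$ is supported in $[e_1,e_2]$ and $\|\xi_f\|_\infty\le \max(|e_1|,|e_2|)$, we have $\|\xi_f(h^{\rm eff}_{d,K,\epsilon}(k,X))\|_{\cB(L^2_\#)}\le C$ uniformly. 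Derivatives are controlled by the same finite-rank argument as in the proof of Lemma \ref{cor:trace}, namely differentiating the spectral projector onto the (uniformly finite-dimensional) range and using that the number of spanning eigenfunctions and their derivatives in $(k,X)$ are uniformly bounded.

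The only substantive obstacle is the derivation of the weight $\omega(k,X)$ in the resolvent estimate; this is where the identity $T_\epsilon(\kappa)^2=v_{\rm F}^2|\kappa|^2 I_4$ plays a decisive role. All remaining steps are routine adaptations of the proofs of Lemma \ref{lem:4.2} and Lemma \ref{cor:trace}, with Fourier-multiplier estimates on the periodic torus replacing the more involved computations on the cylinder $\R^2\times\R$.
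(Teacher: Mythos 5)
Your plan correctly reproduces the $\epsilon$-uniform symbol-class bounds and the reduction to the two-dimensional periodic setting, and the algebraic identity $T_\epsilon(\kappa)^2 = v_{\rm F}^2|\kappa|^2 I_4$ is a neat alternative to the paper's route for the weighted resolvent estimate \eqref{eq:29-BM}: the paper instead reduces to $[k]\in\Omega^*$ via $\tau$-equivariance, as in \eqref{eq:29}, together with the uniform relative bound \eqref{eq:delta-t}; your direct elliptic estimate using the structure of $T_\epsilon$ is just as effective.

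There is, however, a genuine gap. The classes in the statement are the \emph{semiclassical} classes $S^{\omega,\sqrt 3/2}_\tau(\epsilon,\cE)$ of Definition~\ref{def:symbols}, which require an asymptotic expansion $a_\epsilon = \sum_{j=0}^n \epsilon^j a_j + O_{S^{\omega,\sqrt 3/2}_\tau(\cE)}(\epsilon^{n+1})$ for every $n$, not merely $\epsilon$-uniform bounds. Your argument produces $\epsilon$-uniform resolvent bounds and hence shows $(h^{\rm eff}_{d,K,\bullet}-h^{\rm eff}_{d,K,f,\bullet}) \in S^{\omega,\sqrt 3/2}_\tau(\cE)$, but it never exhibits the expansion, so membership in the semiclassical intersection \eqref{eq:S-t-ttilde} is not actually established. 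The missing ingredient is the resolvent expansion \eqref{eq:resolvent}: one must write
\begin{equation*}
(\zeta-h^{\rm eff}_{d,K,\epsilon})^{-1}=\sum_{j\ge 0}(\zeta-h^{\rm eff}_{d,K,0})^{-1}\bigl[(T_\epsilon-T_0)(-i\nabla_x+k-K)(\zeta-h^{\rm eff}_{d,K,0})^{-1}\bigr]^j,
\end{equation*}
insert the Taylor series $T_\epsilon=\sum_{\ell\ge 0}\epsilon^\ell T_{0,\ell}$ from \eqref{eq:T-epsilon-dec}, and regroup in powers of $\epsilon$, controlling each coefficient and the remainder with the very bounds you derived — in particular $\|(-i\nabla_x+k-K)(\zeta-h^{\rm eff}_{d,K,0})^{-1}\|_{\cB(L^2_\#)}\le C|\Im\zeta|^{-1}$, obtained from your identity applied at $\epsilon=0$, together with \eqref{eq:bound-Tj1}--\eqref{eq:bound-Tj2}. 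Plugging the resulting expansion into the Helffer--Sj\"ostrand representation \eqref{eq:t-htilde-BM} then gives the required semiclassical expansion of $h^{\rm eff}_{d,K,\epsilon}-h^{\rm eff}_{d,K,f,\epsilon}$ (and, by difference, of $h^{\rm eff}_{d,K,f,\epsilon}$). This step is explicit and non-optional in the paper's proof; the finite-rank argument you invoke in the style of Lemma~\ref{cor:trace} addresses the trace norm, which is the content of Corollary~\ref{cor:trace-BM}, not the semiclassical structure asserted here.
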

\begin{proof}[Sketch of proof]
We first recall that $\mathcal{V}_{d,K}(X)$ is in $C^\infty(\R^2;\C^{4\times 4}_{\rm herm})$. As $h^{\rm eff}_{d,K,\epsilon}$is $\tau$-equivariant, it is easy to see that $ h^{\rm eff}_{d,K,\bullet}\in S^{\omega,\frac{\sqrt{3}}{2}}_{\tau}(
\mathcal{B}(H^{1}_{\#},L^2_{\#})).$ Then from \eqref{eq:T-epsilon-dec}-\eqref{eq:bound-Tj2}, we get $h^{\rm eff}_{d,K,\bullet}\in S^{\omega,\frac{\sqrt{3}}{2}}_{\tau}(\epsilon,
\mathcal{B}(H^{1}_{\#},L^2_{\#})).$

To show that $h^{\rm eff}_{d,K,f,\epsilon}\in S^{\omega,\frac{\sqrt{3}}{2}}_{\tau}(
\mathcal{B}(\epsilon,H^1_{\#},L^2_{\#}))$, it suffices to prove \eqref{eq:S-t-ttilde}. Let $\xi_{f,\rm EMS}(s)=s(1-\chi_{f,\rm EMS}(s))$. From Eqn. \eqref{eq:HF_formula}, we infer
\begin{align}\label{eq:t-htilde-BM}
    h^{\rm eff}_{d,K,\epsilon}-h^{\rm eff}_{d,K,f,\epsilon}=-\frac{1}{\pi}\int_{\mathbb{C}}\overline{\partial}\widetilde{\xi}_{f,\rm EMS}(\zeta)(\zeta-h^{\rm eff}_{d,K,\epsilon})^{-1}dL(\zeta).
\end{align}
where $\widetilde{\xi}_{f,\rm EMS}\in C^\infty_{\rm c}(\mathbb{C})$ and $|\overline{\partial}\widetilde{\xi}_{f,\rm EMS}(\zeta)|=\mathcal{O}(|\Im \zeta|^\infty)$. 

Arguing as in Lemma \ref{lem:4.2}, for $a,b\in \{0,1\}$ and $0\leq a+b\leq 1$, from the $\tau$-equivariance of $H_{d,K,\theta}$ we infer
\begin{align}\label{eq:29-BM}
   \MoveEqLeft    \sup_{\epsilon\in [0,\frac{\sqrt{3}}{2}]}\sup_{  k,X\in \R^2}\omega^{-(a+b)}\|(1-\Delta_x)^{a/2} (\zeta-h^{\rm eff}_{d,K,\epsilon})^{-1}(1-\Delta_x)^{b/2}\|_{\mathcal{B}(L^2_{\#})}\leq C'|\Im \zeta|^{-1}.
\end{align}
 This inequality holds by using the fact that for any $u\in H^1_{\#}$ there exists constants $C_1,C_2 \in \R_+$ independent of $k,X$ such that for any $\epsilon\in [0,\frac{\sqrt{3}}{2}]$,
\begin{align}\label{eq:delta-t}
    \|(1-\Delta_x)^{1/2}u\|_{L^2_{\#}}\leq C_1\|h^{\rm eff}_{d,K,\epsilon}([k],X)u\|_{L^2_{\#}}+C_2\|u\|_{L^2_{\#}}.
\end{align}
Thus \eqref{eq:29-BM} gives,
\begin{align*}
 \sup_{\epsilon\in [0,\frac{\sqrt{3}}{2}]} \sup_{k,X\in\mathbb{R}^2}\| (\zeta-h^{\rm eff}_{d,K,\epsilon})^{-1}\|_{\mathcal{B}(L^2_{\#})}&\leq C|\Im \zeta|^{-1},\\
 \sup_{\epsilon\in [0,\frac{\sqrt{3}}{2}]} \sup_{k,X\in\mathbb{R}^2}\omega^{-1}\| (\zeta-h^{\rm eff}_{d,K,\epsilon})^{-1}\|_{\mathcal{B}(L^2_{\#},H^1_{\#})}&\leq C|\Im \zeta|^{-1},\\
 \sup_{\epsilon\in [0,\frac{\sqrt{3}}{2}]} \sup_{k,X\in\mathbb{R}^2}\omega^{-1}\| (\zeta-h^{\rm eff}_{d,K,\epsilon})^{-1}\|_{\mathcal{B}(H^{-1}_{\#},L^2_{\#})}&\leq C|\Im \zeta|^{-1}.
\end{align*}
Then arguing as in Lemma \ref{lem:4.2}, we deduce 
\begin{align*}
    (h^{\rm eff}_{d,K,\bullet}-h^{\rm eff}_{d,K,f,\bullet})\in S^{\omega,\frac{\sqrt{3}}{2}}_{\tau}(
\mathcal{B}(L^2_{\#},H^1_{\#}))\bigcap S^{1,\frac{\sqrt{3}}{2}}_{\tau}(
\mathcal{B}(L^2_{\#}))\bigcap S^{\omega,\frac{\sqrt{3}}{2}}_{\tau}(
\mathcal{B}(H^{-1}_{\#},L^2_{\#})).
\end{align*}

To end the proof, it remains to show $ (h^{\rm eff}_{d,K,\epsilon}-h^{\rm eff}_{d,K,f,\epsilon})$ satisfies \eqref{eq:expansion_symbol}. Notice that by resolvent identity,
\begin{align}\label{eq:resolvent}
    (h^{\rm eff}_{d,K,\epsilon}-\zeta)^{-1}=\sum_{j=0}^\infty (h_{d,K,0}^{\rm eff}-\zeta)^{-1}\Big[(T_{\epsilon}-T_0)(-i\nabla_x+k-K)(h_{d,K,0}^{\rm eff}-\zeta)^{-1}\Big]^j,
\end{align}
and $(T_\epsilon-T_0n)(-i\nabla_k+k-K)$ is also $\tau$-equivariant.
Then inserting \eqref{eq:T-epsilon-dec} into above identity, using \eqref{eq:bound-Tj1}-\eqref{eq:bound-Tj2}, \eqref{eq:t-htilde-BM}, and proceeding as above, we infer \eqref{eq:S-t-ttilde}. This ends the proof.
\end{proof}

Applying Lemma \ref{lem:spec-BM}, we obtain 
\begin{align*}
 \MoveEqLeft   \sup_{\epsilon\in (0,\frac{\sqrt{3}}{2}]}\sup_{k,X\in\mathbb{R}^2}\rank(h^{\rm eff}_{d,K,\epsilon}-h^{\rm eff}_{d,K,f,\epsilon})\\
 &\leq \sup_{\epsilon\in (0,\frac{\sqrt{3}}{2}]}\sup_{k,X\in\mathbb{R}^2}\rank(\1_{[e_1,e_2]}(h^{\rm eff}_{d,K,\epsilon}))\leq m_{\rm EMS}(e_1,e_2).
\end{align*}
Proceeding as for Lemma \ref{cor:trace} and Lemma \ref{lem:4.2'}, we obtain the following.
\begin{corollary}\label{cor:trace-BM}
The symbol $(h^{\rm eff}_{d,K,\bullet}-h^{\rm eff}_{d,K,f,\bullet})$ belongs to $ S_{\tau}^{1,\frac{\sqrt{3}}{2}}(\epsilon,\mathfrak{S}_1(L^2_{\#}))$.
\end{corollary}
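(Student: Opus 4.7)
The plan is to mirror closely the strategy used to prove Lemma~\ref{cor:trace} in the atomic-scale setting, adapted to the matrix-valued effective Hamiltonian via the resolvent expansion argument of the sketch of Lemma~\ref{cor:trace'}. The starting point is Lemma~\ref{lem:4.2-BM}, which already places $(h^{\rm eff}_{d,K,\bullet}-h^{\rm eff}_{d,K,f,\bullet})$ in $S^{1,\frac{\sqrt{3}}{2}}_\tau(\epsilon,\cB(L^2_\#))$. What remains is to upgrade the target Banach space from $\cB(L^2_\#)$ to $\mathfrak{S}_1(L^2_\#)$. The main tool for this is the elementary inequality
\[
\|A\|_{\mathfrak{S}_1(L^2_\#)}\leq \|A\|_{\cB(L^2_\#)}\cdot\rank(A),
\]
combined with the uniform rank bound \eqref{eq:rankbound-BM} from Lemma~\ref{lem:spec-BM}.

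First, I would control the zeroth-order term: since $(h^{\rm eff}_{d,K,\epsilon}-h^{\rm eff}_{d,K,f,\epsilon})(k,X) = \xi_{f,\rm EMS}(h^{\rm eff}_{d,K,\epsilon}(k,X))$ is supported spectrally inside $[e_1,e_2]$, Lemma~\ref{lem:spec-BM} yields
\[
\sup_{\epsilon\in(0,\frac{\sqrt{3}}{2}],\,k,X\in\R^2}\rank\bigl(h^{\rm eff}_{d,K,\epsilon}-h^{\rm eff}_{d,K,f,\epsilon}\bigr)(k,X)\leq m_{\rm EMS}(e_1,e_2).
\]
Next, for derivatives $\partial_k^\alpha\partial_X^\beta$, I would use the finite-difference argument already employed in Lemma~\ref{cor:trace}: since the rank of $(a(k+te_j,X)-a(k,X))/t$ is bounded by $2\,\rank(a)$, an easy induction gives $\rank(\partial_k^\alpha\partial_X^\beta a)\leq 2^{|\alpha|+|\beta|}\,\sup\rank(a)$. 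Combined with the $\cB(L^2_\#)$-seminorm bounds from Lemma~\ref{lem:4.2-BM}, this delivers
\[
\sup_{\epsilon\in(0,\frac{\sqrt{3}}{2}]}\cN^{1,\mathfrak{S}_1(L^2_\#)}_{\alpha,\beta}(h^{\rm eff}_{d,K,\bullet}-h^{\rm eff}_{d,K,f,\bullet})<\infty,
\]
which shows that the symbol lies in $S^{1,\frac{\sqrt{3}}{2}}_\tau(\mathfrak{S}_1(L^2_\#))$.

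To promote this to the semiclassical class $S_{\tau}^{1,\frac{\sqrt{3}}{2}}(\epsilon,\mathfrak{S}_1(L^2_\#))$, I would proceed as in the sketch of Lemma~\ref{cor:trace'}. Inserting the expansion \eqref{eq:T-epsilon-dec} of $T_\epsilon$ into the resolvent identity \eqref{eq:resolvent} and then into the Helffer-Sj\"ostrand representation \eqref{eq:t-htilde-BM} of $(h^{\rm eff}_{d,K,\epsilon}-h^{\rm eff}_{d,K,f,\epsilon})$, one obtains a formal series
\[
h^{\rm eff}_{d,K,\epsilon}-h^{\rm eff}_{d,K,f,\epsilon}=\sum_{j=0}^{n}\epsilon^{j}a_{d,K,j}+\epsilon^{n+1}r_{d,K,\epsilon,n+1},
\]
where each $a_{d,K,j}$ is a finite linear combination of integrals of the form
\[
\int_{\C}\overline\partial\,\widetilde\xi_{f,\rm EMS}(\zeta)(\zeta-h^{\rm eff}_{d,K,0})^{-1}T_{0,j_1}(\kappa)(\zeta-h^{\rm eff}_{d,K,0})^{-1}\cdots T_{0,j_p}(\kappa)(\zeta-h^{\rm eff}_{d,K,0})^{-1}\,dL(\zeta).
\]

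The main technical obstacle I anticipate is showing that every coefficient $a_{d,K,j}$ and the remainder $r_{d,K,\epsilon,n+1}$ are trace-class-valued with uniform seminorm bounds. The $\cB(L^2_\#)$ part is immediate from Lemma~\ref{lem:4.2-BM} and the bounds \eqref{eq:bound-Tj1}-\eqref{eq:bound-Tj2} on the $T_{0,j}$'s. For the rank aspect, one cannot directly exploit a spectral localization as above because the integrands contain free resolvents $(\zeta-h^{\rm eff}_{d,K,0})^{-1}$ that have infinite rank. The remedy, following the decomposition \eqref{eq:A1-A2} in the proof of Lemma~\ref{cor:trace'}, is to split each resolvent as $A_1(\zeta)+A_2(\zeta)$, where $A_1(\zeta):=(\zeta-h^{\rm eff}_{d,K,0})^{-1}\1_{[e_1-\delta/2,e_2+\delta/2]}(h^{\rm eff}_{d,K,0})$ has uniformly bounded finite rank thanks to \eqref{eq:rankbound-BM}, and $A_2(\zeta)$ is holomorphic on a complex neighbourhood of $\Supp(\widetilde\xi_{f,\rm EMS})$. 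The $A_2$-only contributions vanish by Cauchy's theorem, while every surviving term contains at least one $A_1$-factor, and is therefore finite-rank with rank bounded uniformly in $(k,X,\epsilon)$. Combining this with the derivative argument above, and invoking the $\tau$-equivariance inherited from $h^{\rm eff}_{d,K,\epsilon}$ and $T_{0,j}(-i\nabla_x+k-K)$, concludes the proof.
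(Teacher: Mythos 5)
Your proposal is correct and mirrors exactly what the paper intends with its terse \enquote{Proceeding as for Lemma~\ref{cor:trace} and Lemma~\ref{lem:4.2'}}: a uniform rank bound from Lemma~\ref{lem:spec-BM}, finite differences to propagate rank control to $\partial_k^\alpha\partial_X^\beta$, and the resolvent expansion together with the $A_1(\zeta)+A_2(\zeta)$ spectral splitting (with the $A_2$-only term killed by holomorphy) to get the semiclassical expansion in $\mathfrak{S}_1(L^2_\#)$. You have filled in the details the paper leaves implicit, including the correct identification that the $A_1/A_2$ device from the proof of Lemma~\ref{cor:trace'} is what is really being invoked.
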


\subsection{Finiteness of the LHS of \texorpdfstring{\eqref{eq:DoS-BM}}{}}\label{sec:7.2}
Proceeding as for Corollary \ref{cor:ellip} and Lemma \ref{lem:inverse}, from the $J\mathbb{L}$-periodicity of $X\mapsto h^{\rm eff}_{d,K,\epsilon}(\cdot,X)$,  we also have the following.
\begin{lemma}\label{lem:inverse-BM}
Let $0<\epsilon_0\leq \frac{\sqrt{3}}{2}$ be small enough. There exists an operator-valued symbol $g_{K,\epsilon}(\zeta)\in S^{\omega,\epsilon_0}_{\tau}(\mathcal{B}(L^2_{\#},H^1_{\#}))\bigcap S^{1,\epsilon_0}_{\tau}(\mathcal{B}(L^2_{\#}))$ such that $g_{K,\epsilon}$ is $J\mathbb{L}$-periodic w.r.t. $X$ and for any $\zeta\in \Supp(\widetilde{f})$,
\begin{align*}
    \Op(g_{K,\epsilon}(\zeta))= (\Op(h^{\rm eff}_{d,K,f,\epsilon})-\zeta)^{-1}.
\end{align*}
\end{lemma}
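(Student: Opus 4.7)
The plan is to transpose, almost verbatim, the argument used for Lemma~\ref{lem:inverse} in the atomic-scale setting, replacing the twisted Weyl calculus by the ordinary Weyl calculus of Section~\ref{sec:Weyl}, and then to add a short separate argument to recover the $J\mathbb{L}$-periodicity in $X$.

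First, I would take as initial parametrix the pointwise inverse symbol
\[
g_{K,0}(\zeta)(k,X) := \bigl(h^{\rm eff}_{d,K,f,\epsilon}(k,X)-\zeta\bigr)^{-1}.
\]
Thanks to Lemma~\ref{lem:spec-BM} and the construction of $h^{\rm eff}_{d,K,f,\epsilon}$ in \eqref{eq:t-t-BM}, the operator $(h^{\rm eff}_{d,K,f,\epsilon}(k,X)-\zeta)$ is boundedly invertible for $\zeta\in\Supp(\widetilde f)$, uniformly in $(k,X,\epsilon)$, so the argument mimicking Lemma~\ref{lem:h-t-inversible} --- bounding successive derivatives $\partial_k^\alpha\partial_X^\beta g_{K,0}(\zeta)$ by expanding them via the resolvent identity and using the estimates \eqref{eq:bound-Tj1}-\eqref{eq:bound-Tj2} on $\partial^\alpha_\kappa T_{0,j}$ --- shows that
\[
g_{K,\bullet}(\zeta)\in S^{\omega,\epsilon_0}_{\tau}(\mathcal{B}(L^2_{\#},H^1_{\#}))\bigcap S^{1,\epsilon_0}_{\tau}(\mathcal{B}(L^2_{\#})),
\]
with bounds uniform in $\zeta\in\Supp(\widetilde f)$. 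The $\tau$-equivariance in $k$ and the $J\mathbb{L}$-periodicity in $X$ are inherited from those of $h^{\rm eff}_{d,K,f,\epsilon}$.

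Next, I would compute the Moyal product and use the asymptotic expansion of Proposition~\ref{prop:PS-asymp} (applied in the standard, non-twisted setting) to write
\[
(h^{\rm eff}_{d,K,f,\epsilon}-\zeta)\#g_{K,0}(\zeta) = \1_{L^2_\#}-\epsilon R_{K,\epsilon}(\zeta),
\]
where $R_{K,\bullet}(\zeta)\in S^{\omega^2,\epsilon_0}_{\tau}(\mathcal{B}(L^2_\#))$, uniformly in $\zeta$. By $\tau$-equivariance in $k$, the operator norm $\|R_{K,\epsilon}(\zeta)(k,X)\|_{\mathcal{B}(L^2_\#)}$ is $\mathbb{L}^*$-periodic in $k$, hence bounded; combined with the control of its $X$-derivatives, this upgrades $R_{K,\bullet}(\zeta)$ to $S^{1,\epsilon_0}_{\tau}(\mathcal{B}(L^2_\#))$. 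Proposition~\ref{prop:ellip} then produces, for $\epsilon_0$ small enough, a symbol $g_{K,1,\bullet}(\zeta)\in S^{1,\epsilon_0}_{\tau}(\mathcal{B}(L^2_\#))$ such that
\[
\Op(g_{K,1,\epsilon}(\zeta)) = (\1_{\widetilde\cH}-\epsilon\,\Op(R_{K,\epsilon}(\zeta)))^{-1}.
\]
Setting $g_{K,\epsilon}(\zeta):=g_{K,0}(\zeta)\#g_{K,1,\epsilon}(\zeta)$ and using Proposition~\ref{prop:PS-asymp} and the Calderon-Vaillancourt Theorem~\ref{th:caldVa} yields a right inverse of $\Op(h^{\rm eff}_{d,K,f,\epsilon}-\zeta)$ with the claimed symbol regularity; a symmetric construction produces a left inverse, which must coincide with $g_{K,\epsilon}(\zeta)$ by associativity of the Moyal product.

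The only step that requires additional care compared to Lemma~\ref{lem:inverse} is the $J\mathbb{L}$-periodicity of $g_{K,\epsilon}(\zeta)$ in the $X$ variable, which is not automatic from Proposition~\ref{prop:ellip}. I would handle this exactly as in the proof of Corollary~\ref{cor:ellip}: for any $R\in J\mathbb{L}$, set $g_{K,\epsilon}^R(\zeta)(k,X):=g_{K,\epsilon}(\zeta)(k,X+R)$ and observe that since $h^{\rm eff}_{d,K,f,\epsilon}$ is $J\mathbb{L}$-periodic in $X$, one has
\[
(h^{\rm eff}_{d,K,f,\epsilon}-\zeta)\# g_{K,\epsilon}^R(\zeta) = \bigl((h^{\rm eff}_{d,K,f,\epsilon}-\zeta)\# g_{K,\epsilon}(\zeta)\bigr)(k,X+R) = \1_{L^2_\#},
\]
so that $g_{K,\epsilon}^R(\zeta)$ is also a right inverse; by uniqueness, $g_{K,\epsilon}^R(\zeta)=g_{K,\epsilon}(\zeta)$. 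I do not expect a genuine obstacle in this proof: the only mildly subtle point is to ensure that the remainder $R_{K,\epsilon}(\zeta)$ really lies in a class to which Proposition~\ref{prop:ellip} applies, and this is taken care of by the $\tau$-equivariance argument sketched above.
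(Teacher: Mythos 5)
Your proposal reconstructs exactly the argument the paper invokes (it simply says ``Proceeding as for Corollary~\ref{cor:ellip} and Lemma~\ref{lem:inverse}''), translated from the twisted to the standard Weyl calculus: pointwise-inverse parametrix, Moyal-product remainder upgraded to $S^{1,\epsilon_0}_\tau$ via $\tau$-equivariance, inversion by Proposition~\ref{prop:ellip}, and $J\mathbb{L}$-periodicity recovered by the translation-uniqueness argument of Corollary~\ref{cor:ellip}. The only cosmetic difference is that you write $g_{K,0}(\zeta)$ without displaying its $\epsilon$-dependence (it inherits one through $h^{\rm eff}_{d,K,f,\epsilon}$), which does not affect the argument since the relevant symbol class is the $\epsilon$-uniform one.
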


Analogous to \eqref{eq:Ztrans1}, we have 
\begin{align}\label{eq:Ztrans-BM1}
     \mathcal{U}\chi_N(\epsilon \cdot)f(H_{d,K,\theta}^{\rm eff}) \chi_N(\epsilon \cdot)\mathcal{U}^{-1}=\Op(\chi_N) f(\Op(h^{\rm eff}_{d,K,\epsilon}))\Op(\chi_N).
\end{align}

According to Lemma \ref{lem:inverse-BM}, $(\Op(h^{\rm eff}_{d,K,f,\epsilon})-\zeta)^{-1}$ is holomorphic on $\Supp(\widetilde{f})$, then
\begin{align*}
    \int_{\C} \overline{\partial}\widetilde{f}(\zeta)(\zeta-\Op(h^{\rm eff}_{d,K,f,\epsilon}))^{-1}dL(\zeta)=-\int_{\C} \widetilde{f}(\zeta)\overline{\partial}(\zeta-\Op(h^{\rm eff}_{d,K,f,\epsilon}))^{-1}dL(\zeta)=0.
\end{align*}
Thus
    \begin{align}\label{eq:holom-BM}
\MoveEqLeft    f(\Op(h^{\rm eff}_{d,K,\epsilon}))=-\frac{1}{\pi}\int_{\C} \overline{\partial}\widetilde{f}(\zeta)[(\zeta-\Op(h^{\rm eff}_{d,K,\epsilon}))^{-1}- (\zeta-\Op(h^{\rm eff}_{d,K,f,\epsilon}))^{-1}]dL(\zeta)\notag\\
    &=\frac{1}{\pi}\int_{\C} \overline{\partial}\widetilde{f}(\zeta)(\zeta-\Op(h^{\rm eff}_{d,K,\epsilon}))^{-1}\notag\\
    &\quad\times (\Op(h^{\rm eff}_{d,K,f,\epsilon})-\Op(h^{\rm eff}_{d,K,\epsilon}))(\zeta-\Op(h^{\rm eff}_{d,K,f,\epsilon}))^{-1}dL(\zeta).
\end{align}
Then arguing as for the proof of Theorem \ref{DoS-TBG}, there exists $C\in \R_+$ such that for all $N\in \N^*$,
\begin{align*}
    \sup_{\epsilon\in (0,\epsilon_0]}\frac{\epsilon^2}{|NJ\Omega|}\|\chi_N(\epsilon\cdot)f(H_{d,K,\theta}^{\rm eff})\chi_N(\epsilon\cdot)\|_{\mathfrak{S}_1(L^2(\R^2))}\leq \frac{C|(N+1)J\Omega|}{|NJ\Omega|}\leq C.
\end{align*}

\subsection{Asymptotic expansion for DoS}\label{sec:7.3}

\begin{lemma}\label{lem:7.5}
For all $\zeta\in\Supp(\widetilde{f})\setminus\mathbb{R}$, there exist a sequence $(a^{\rm eff}_{K,j}(\zeta))$, $j\in \N$, which are $J\mathbb{L}$-periodic w.r.t. $X$, and a sequence $(r^{\rm eff}_{K,n,\epsilon}(\zeta))$ such that
\begin{align}\label{eq:7.13}
    \overline{\partial}\widetilde{f}(\zeta) a^{\rm eff}_{K,j}(\zeta) \in S_{\tau}^{1}(\mathcal{B}(L^2_{\#})),\quad \overline{\partial}\widetilde{f}(\zeta)r^{\rm eff}_{n,\bullet}(\zeta)\in S_{\tau}^{1,\epsilon_0}(\epsilon,\mathcal{B}(L^2_{\#})),
\end{align}
and for any $n\geq 0$,
\begin{align}\label{eq:7.14}
   (\zeta-\Opc(h^{\rm eff}_{d,K,\epsilon}))^{-1}=  \sum_{j=0}^{n}\epsilon^j \Opc(a^{\rm eff}_{K,j}(\zeta))-\epsilon^{n+1}(\zeta-\Opc(h^{\rm eff}_{d,K,\epsilon}))^{-1}\Opc(r^{\rm eff}_{K,n,\epsilon}(\zeta)).
\end{align}
In particular, $a^{\rm eff}_{K,0}(\zeta)=(\zeta-h^{\rm eff}_{d,K,0})^{-1}$, 
\begin{align*}
     a^{\rm eff}_{K,1}(\zeta)&=-\frac{i}{2}\{(\zeta-h^{\rm eff}_{d,K,0})^{-1},\zeta-h^{\rm eff}_{d,K,0}\}(\zeta-h^{\rm eff}_{d,K,0})^{-1}\\
     &\quad+(\zeta-h^{\rm eff}_{d,K,0})^{-1}T_{0,1}(-i\nabla_x+k-K)(\zeta-h^{\rm eff}_{d,K,0})^{-1}
\end{align*}
and
\begin{align*}
     a^{\rm eff}_{K,2}(\zeta)&=-\frac{1}{4}(\zeta-h^{\rm eff}_{d,K,0})^{-1}\{(\zeta-h_{d,0}),(\zeta-h^{\rm eff}_{d,K,0})^{-1}\}^2\\
     &\quad+\frac{1}{4}\{(\zeta-h^{\rm eff}_{d,K,0})^{-1},\{(\zeta-h^{\rm eff}_{d,K,0}),(\zeta-h^{\rm eff}_{d,K,0})^{-1}\}\}\\
     &\quad+\frac{1}{8}(\zeta-h^{\rm eff}_{d,K,0})^{-1}\{(\zeta-h^{\rm eff}_{d,K,0}),(\zeta-h^{\rm eff}_{d,K,0})^{-1}\}_2\\
     &\quad+(\zeta-h^{\rm eff}_{d,K,0}(k,X))^{-1}T_{0,2}(-i\nabla_x+k-K)(\zeta-h^{\rm eff}_{d,K,0}(k,X))^{-1}\\
     &\quad +(\zeta-h^{\rm eff}_{d,K,0}(k,X))^{-1}\Big[T_{0,1}(-i\nabla_x+k-K)(\zeta-h^{\rm eff}_{d,K,0}(k,X))^{-1}\Big]^2\\
     &\quad+ \frac{i}{2}\{(\zeta-h_{d,K,0}^{\rm eff})^{-1}, T_{0,1}(-i\nabla_x+k-K)\}(\zeta-h_{d,K,0}^{\rm eff})^{-1}\\
     &\quad- \frac{i}{2}\{(\zeta-h_{d,K,0}^{\rm eff})^{-1}T_{0,1}(-i\nabla_x+k-K)(\zeta-h_{d,K,0}^{\rm eff}),\zeta-h_{d,K,0}^{\rm eff}\}(\zeta-h_{d,K,0}^{\rm eff})^{-1}\\
     &\quad-\frac{i}{2}\{(\zeta-h_{d,K,0}^{\rm eff}),\zeta-h_{d,K,0}^{\rm eff}\}(\zeta-h_{d,K,0}^{\rm eff})^{-1}T_{0,1}(-i\nabla_x+k-K)(\zeta-h_{d,K,0}^{\rm eff}).
\end{align*}
\end{lemma}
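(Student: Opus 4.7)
The plan is to adapt verbatim the parametrix construction of Lemma \ref{lem:5.7}, with two new features to keep track of: (i) the symbol $h^{\rm eff}_{d,K,\epsilon}$ itself depends on $\epsilon$ through $T_\epsilon$, and (ii) we work on $L^2_\#$ with the standard Weyl calculus rather than its twisted version. Because of (ii), the ${\rm ad}_{\partial_x}$ derivatives appearing in the twisted Moyal expansion of Proposition \ref{prop:PS-asymp} are absent here, and one may simply invoke the standard asymptotic expansion of the Moyal product recalled in Section \ref{sec:Weyl}.

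First, expanding $T_\epsilon(\kappa)=\sum_{j\ge 0}\epsilon^j T_{0,j}(\kappa)$ via \eqref{eq:T-epsilon-dec} and using the bounds \eqref{eq:bound-Tj1}--\eqref{eq:bound-Tj2}, I would write
\[
h^{\rm eff}_{d,K,\epsilon}(k,X)=\mathfrak{h}^{\rm eff}_{d,K,0}(k,X)+\sum_{j\ge 1}\epsilon^j T_{0,j}(-i\nabla_x+k-K),
\]
with $\mathfrak{h}^{\rm eff}_{d,K,0}(k,X)=T_0(-i\nabla_x+k-K)+\mathcal{V}_{d,K}(X)$, so that $h^{\rm eff}_{d,K,\bullet}$ is a semiclassical symbol in $S^{\omega,\epsilon_0}_{\tau,{\rm per}}(\epsilon,\mathcal{B}(H^1_\#,L^2_\#))$ with $\omega(k,X)=(1+|k|^2)^{1/2}$. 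Next, I would seek an asymptotic parametrix of the form $a^{\rm eff}_{K,\epsilon}(\zeta)=\sum_{j=0}^n\epsilon^j a^{\rm eff}_{K,j}(\zeta)$ satisfying $(\zeta-h^{\rm eff}_{d,K,\epsilon})\#\,a^{\rm eff}_{K,\epsilon}(\zeta)=\1_{L^2_\#}-\epsilon^{n+1}\rho_{n,\epsilon}(\zeta)$ in $S^{1,\epsilon_0}_\tau(\epsilon,\mathcal{B}(L^2_\#))$. Matching powers of $\epsilon$, the leading order gives $a^{\rm eff}_{K,0}=(\zeta-\mathfrak{h}^{\rm eff}_{d,K,0})^{-1}$; at order $\epsilon^1$ the contribution of the Moyal Poisson bracket $\tfrac{i}{2}\{\zeta-\mathfrak{h}^{\rm eff}_{d,K,0},a^{\rm eff}_{K,0}\}$ combines with the new source term $-T_{0,1}(-i\nabla_x+k-K)\,a^{\rm eff}_{K,0}$ to yield the stated $a^{\rm eff}_{K,1}(\zeta)$; at order $\epsilon^2$ one must further include the next term in the Moyal expansion ($-\tfrac18\{\cdot,\cdot\}_2$), the quadratic iteration of the $T_{0,1}$ contribution, the first-order mixed brackets between $T_{0,1}$ and $a^{\rm eff}_{K,0}$, and the $T_{0,2}$ source term, leading to the expression for $a^{\rm eff}_{K,2}(\zeta)$.

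To show that each $\overline\partial\widetilde f(\zeta)a^{\rm eff}_{K,j}(\zeta)$ belongs to $S^{1}_\tau(\mathcal{B}(L^2_\#))$ uniformly in $\zeta$, I would proceed exactly as in the proof of Lemma \ref{lem:5.7}: the $\tau$-equivariance of $h^{\rm eff}_{d,K,0}$ and the uniform resolvent bound \eqref{eq:29-BM} proved in Section \ref{sec:t-symbol} yield
\[
\sup_{k,X\in\R^2}\omega(k,X)^{-m}\big\|\partial_k^\alpha\partial_X^\beta(\zeta-\mathfrak{h}^{\rm eff}_{d,K,0}(k,X))^{-1}\big\|_{\mathcal{B}(L^2_\#)}\le C_{\alpha,\beta}\,|\mathrm{Im}\,\zeta|^{-|\alpha|-|\beta|-1},
\]
and the Leibniz rule then propagates these bounds to all the $a^{\rm eff}_{K,j}$; the decay $|\overline\partial\widetilde f(\zeta)|=\mathcal{O}(|\mathrm{Im}\,\zeta|^\infty)$ absorbs the $|\mathrm{Im}\,\zeta|^{-\ell}$ factors. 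The $J\mathbb{L}$-periodicity of each $a^{\rm eff}_{K,j}$ in $X$ follows automatically from the $J\mathbb{L}$-periodicity of $\mathfrak{h}^{\rm eff}_{d,K,0}$ and $T_{0,j}$. To pass from the Moyal identity to \eqref{eq:7.14}, I would multiply on the left by $(\zeta-\mathrm{Op}_\epsilon(h^{\rm eff}_{d,K,\epsilon}))^{-1}$ and set $r^{\rm eff}_{K,n,\epsilon}(\zeta):=\rho_{n,\epsilon}(\zeta)$; the required remainder control $\overline\partial\widetilde f(\zeta)r^{\rm eff}_{K,n,\bullet}(\zeta)\in S^{1,\epsilon_0}_\tau(\epsilon,\mathcal{B}(L^2_\#))$ uniformly in $\zeta$ follows from the bounds on $a^{\rm eff}_{K,j}$ and the same $\tau$-equivariance trick used in the proof of Lemma \ref{lem:5.7}.

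The only genuinely non-routine step will be verifying the order-$\epsilon^2$ formula claimed for $a^{\rm eff}_{K,2}(\zeta)$, which requires careful bookkeeping of three distinct sources: the Moyal bracket expansion to second order (contributing the first three lines of the stated expression), the new terms coming from the $\epsilon$-expansion of $h^{\rm eff}_{d,K,\epsilon}$ (contributing the $T_{0,2}$ and $[T_{0,1}]^2$ lines), and the cross Poisson brackets between $T_{0,1}$ and $a^{\rm eff}_{K,0}$ that arise when one expands $(\zeta-h^{\rm eff}_{d,K,\epsilon})\#(a^{\rm eff}_{K,0}+\epsilon a^{\rm eff}_{K,1})$ up to order $\epsilon^2$. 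All other ingredients are direct transcriptions of the arguments already carried out in Lemmas \ref{lem:4.2-BM}, \ref{cor:trace-BM} and \ref{lem:inverse-BM}.
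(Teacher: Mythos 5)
Your proposal is correct and follows essentially the same strategy as the paper's sketch: expand $T_\epsilon$ (hence $h^{\rm eff}_{d,K,\epsilon}$) as a semiclassical symbol, then run the Lemma \ref{lem:5.7} parametrix construction for the standard (untwisted) Moyal product, using the $\tau$-equivariance and the resolvent bound \eqref{eq:29-BM} plus $|\overline\partial\widetilde f(\zeta)|=\mathcal{O}(|\Im\zeta|^\infty)$ to control the coefficients. The only organizational difference is that you match all powers of $\epsilon$ simultaneously in a single parametrix ansatz $\sum_j \epsilon^j a^{\rm eff}_{K,j}$, whereas the paper first expands the $\epsilon$-dependent resolvent $a^{\rm eff}_{K,0,\epsilon}(\zeta):=(\zeta-h^{\rm eff}_{d,K,\epsilon})^{-1}$ into its coefficients $a^{\rm eff}_{K,0,j}$ (equation \eqref{eq:7.15}) and then performs the Moyal parametrix with that expanded input; the two bookkeepings yield identical formulas.
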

\begin{proof}[Sketch of proof]
This proof is essentially the same as that for Lemma \ref{lem:5.7}. We only point out the differences. Setting $a^{\rm eff}_{K,0,\epsilon}(\zeta):=(\zeta-h^{\rm eff}_{d,K,\epsilon}(k,X))^{-1}$, we deduce from \eqref{eq:T-epsilon-dec}-\eqref{eq:bound-Tj2} that
\begin{align*}
    \widetilde{f}(\zeta) a^{\rm eff}_{K,0,\bullet}(\zeta)\in S^{1,\frac{\sqrt{3}}{2}}(\epsilon,\mathcal{B}(L^2_{\#}))\bigcap S^{\omega,\frac{\sqrt{3}}{2}}_{\tau}(\epsilon,\mathcal{B}(L^2_{\#},H^2_{\#})).
\end{align*}
In particular,  we have
\begin{align}\label{eq:7.15}
    \widetilde{f}(\zeta) a^{\rm eff}_{K,0,\epsilon}(\zeta)= \sum_{j=0}^n \epsilon^j \widetilde{f}(\zeta) a^{\rm eff}_{K,0,j}(\zeta)+\mathcal{O}(\epsilon^{n+1})\quad \textrm{in } S^{1,\frac{\sqrt{3}}{2}}(\mathcal{B}(L^2_{\#}))\bigcap S^{\omega,\frac{\sqrt{3}}{2}}_{\tau}(\mathcal{B}(L^2_{\#},H^2_{\#}))
\end{align}
with
\begin{align*}
    a^{\rm eff}_{K,0,0}(\zeta)&= (\zeta-h^{\rm eff}_{d,K,0}(k,X))^{-1}\\
    a^{\rm eff}_{K,0,1}(\zeta)&= (\zeta-h^{\rm eff}_{d,K,0}(k,X))^{-1}T_{0,1}(-i\nabla_x+k-K)(\zeta-h^{\rm eff}_{d,K,0}(k,X))^{-1}
\end{align*}
and
\begin{align*}
    a^{\rm eff}_{K,0,2}(\zeta)&= (\zeta-h^{\rm eff}_{d,K,0}(k,X))^{-1}T_{0,2}(-i\nabla_x+k-K)(\zeta-h^{\rm eff}_{d,K,0}(k,X))^{-1}\\
    &\quad+(\zeta-h^{\rm eff}_{d,K,0}(k,X))^{-1}\Big[T_{0,1}(-i\nabla_x+k-K)(\zeta-h^{\rm eff}_{d,K,0}(k,X))^{-1}\Big]^2.
\end{align*}
Then proceeding as for Lemma \ref{lem:5.7} and using \eqref{eq:7.15}, we obtain \eqref{eq:7.13}-\eqref{eq:7.14}, and the formulas for $a_{K,0}^{\rm eff}$, $a_{K,1}^{\rm eff}$ and $a_{K,2}^{\rm eff}$.
\end{proof}
We also have an equivalent of Lemma \ref{lem:5.8} for $(\zeta-\Op(h_{d,K,f,\epsilon}))^{-1}$. Finally, we have the following equivalent of Lemma \ref{lem:a_j} for $f(\Opc(h^{\rm eff}_{d,K,\epsilon}))$.
\begin{lemma}\label{lem:a_j-t}
There exist a sequence $(f^{\rm eff}_{d,K,j})_{j \in \N}$ of operator-valued symbols in $S_{\tau}^{1}(\mathfrak{S}_1(L^2_{\#}))$, which are $J\mathbb{L}$-periodic w.r.t. $X$, and a constant $C\in \R_+$ such that for all $n\in \N^*$, $j\in \N$ and all $\epsilon\in (0,\frac{\sqrt 3}2]$,
\begin{align}\label{eq:dec-fh-t}
    \frac{\epsilon^2}{|NJ\Omega|}\left\|f(\Opc(h^{\rm eff}_{d,K,\epsilon}))\Op(\chi_N^2)-\sum_{j=0}^n\epsilon^j \Opc(f^{\rm eff}_{d,K,j})\Op(\chi_N^2)\right\|_{\mathfrak{S}_1(\widetilde{\mathcal{H}})}\leq C\frac{\epsilon^{n+1}}{|\Omega|}.
\end{align}
In particular, 
\begin{align}\label{eq:a0-t}
    f^{\rm eff}_{d,K,0}(k,X)=f(h^{\rm eff}_{d,K,0}(k,X)),
\end{align}
\begin{align}\label{eq:a1-t}
    f_{d,K,1}(\kappa,X)&:=\frac{i}{2\pi}\int_{\C} \overline{\partial}\widetilde{f}(\zeta )  \{(\zeta- \mathfrak{h}_{d,K,0}^{\rm eff})^{-1},(\zeta- \mathfrak{h}_{d,K,0}^{\rm eff})\}(\kappa,X)(\zeta- \mathfrak{h}_{d,K,0}^{\rm eff}(\kappa,X))^{-1} \, dL(\zeta)\\
    &\quad-\frac{1}{\pi}\int_{\C} \overline{\partial}\widetilde{f}(\zeta )  (\zeta- \mathfrak{h}_{d,K,0}^{\rm eff}(\kappa,X))^{-1}T_{0,1}(\kappa)(\zeta- \mathfrak{h}_{d,K,0}^{\rm eff}(\kappa,X))^{-1}dL(\zeta)
\end{align}
and
\begin{align*}
  \MoveEqLeft  f^{\rm eff}_{d,K,2}(k,X)=-\frac{1}{\pi}\int_{\C} \overline{\partial}\widetilde{f}(\zeta )  \Big[-\frac{1}{4}(\zeta-h^{\rm eff}_{d,K,0})^{-1}\{(\zeta-h_{d,K,0}^{\rm eff}),(\zeta-h^{\rm eff}_{d,K,0})^{-1}\}^2\\
     &+\frac{1}{4}\{(\zeta-h^{\rm eff}_{d,K,0})^{-1},\{(\zeta-h^{\rm eff}_{d,K,0}),(\zeta-h^{\rm eff}_{d,K,0})^{-1}\}\}\\
     &+\frac{1}{8}(\zeta-h^{\rm eff}_{d,K,0})^{-1}\{(\zeta-h^{\rm eff}_{d,K,0}),(\zeta-h^{\rm eff}_{d,K,0})^{-1}\}_2\\
     &+(\zeta-h^{\rm eff}_{d,K,0}(k,X))^{-1}T_{0,2}(-i\nabla_x+k-K)(\zeta-h^{\rm eff}_{d,K,0}(k,X))^{-1}\\
     & +(\zeta-h^{\rm eff}_{d,K,0}(k,X))^{-1}\Big[T_{0,1}(-i\nabla_x+k-K)(\zeta-h^{\rm eff}_{d,K,0}(k,X))^{-1}\Big]^2\\
     &+ \frac{i}{2}\{(\zeta-h_{d,K,0}^{\rm eff})^{-1}, T_{0,1}(-i\nabla_x+k-K)\}(\zeta-h_{d,K,0}^{\rm eff})^{-1}\\
     &- \frac{i}{2}\{(\zeta-h_{d,K,0}^{\rm eff})^{-1}T_{0,1}(-i\nabla_x+k-K)(\zeta-h_{d,K,0}^{\rm eff})^{-1},\zeta-h_{d,K,0}^{\rm eff}\}(\zeta-h_{d,K,0}^{\rm eff})^{-1}\\
     &-\frac{i}{2}\{(\zeta-h_{d,K,0}^{\rm eff})^{-1},\zeta-h_{d,K,0}^{\rm eff}\}(\zeta-h_{d,K,0}^{\rm eff})^{-1}T_{0,1}(-i\nabla_x+k-K)(\zeta-h_{d,K,0}^{\rm eff})^{-1}\Big]dL(\zeta).
\end{align*}
\end{lemma}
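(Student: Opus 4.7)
The strategy parallels the proof of Lemma~\ref{lem:a_j}, with the simplifications that (i) $h^{\rm eff}_{d,K,\epsilon}$ acts on matrix-valued functions so no twisted Weyl calculus is needed (standard ${\rm Op}_\epsilon$ applied to a $\tau$-equivariant, $J\mathbb L$-periodic symbol suffices), and (ii) the ``$\px$-derivatives'' in the Moyal asymptotics are absent. The starting point is the Helffer--Sj\"ostrand representation \eqref{eq:holom-BM} of $f(\Op(h^{\rm eff}_{d,K,\epsilon}))$ in terms of the product
\[
(\zeta-\Op(h^{\rm eff}_{d,K,\epsilon}))^{-1}\bigl(\Op(h^{\rm eff}_{d,K,f,\epsilon})-\Op(h^{\rm eff}_{d,K,\epsilon})\bigr)(\zeta-\Op(h^{\rm eff}_{d,K,f,\epsilon}))^{-1},
\]
in which the ``middle'' factor is, by Corollary~\ref{cor:trace-BM}, in $S^{1,\sqrt 3/2}_\tau(\epsilon,\mathfrak S_1(L^2_\#))$. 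Thus the whole integrand will automatically inherit $\mathfrak S_1(L^2_\#)$-valued symbols as soon as the two resolvents are expanded in $\cB(L^2_\#)$-valued symbols.

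The second step is to apply Lemma~\ref{lem:7.5} to expand the first resolvent, and the companion ``$f$-version'' (the moir\'e-scale analog of Lemma~\ref{lem:5.8}, which follows from Lemma~\ref{lem:inverse-BM} by the same parametrix construction as in Section~\ref{sec:5.2}) to expand the second. Composing the two expansions via Proposition~\ref{prop:PS-asymp} and the $\tau$-equivariance/$J\mathbb L$-periodicity machinery, one obtains, for every $n\in\N$,
\begin{align*}
(\zeta{-}\Op(h^{\rm eff}_{d,K,\epsilon}))^{-1}(h^{\rm eff}_{d,K,f,\epsilon}{-}h^{\rm eff}_{d,K,\epsilon})(\zeta{-}\Op(h^{\rm eff}_{d,K,f,\epsilon}))^{-1}
=\sum_{j=0}^{n}\epsilon^j\Op(d_j(\zeta))+\epsilon^{n+1}\mathcal R_{n,\epsilon}(\zeta),
\end{align*}
with $\overline\partial\widetilde f(\zeta)d_j(\zeta)\in S^{1}_\tau(\mathfrak S_1(L^2_\#))$ and $\overline\partial\widetilde f(\zeta)\mathcal R_{n,\bullet}(\zeta)\in S^{1,\epsilon_0}_\tau(\epsilon,\mathfrak S_1(L^2_\#))$, all uniformly in $\zeta$ since $|\overline\partial\widetilde f(\zeta)|=\cO(|\Im\zeta|^\infty)$ compensates the finite polynomial blow-up of the norms in $|\Im\zeta|^{-1}$.

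One then sets $f^{\rm eff}_{d,K,j}:=\frac{1}{\pi}\int_\C\overline\partial\widetilde f(\zeta)d_j(\zeta)\,dL(\zeta)$, which is in $S^{1}_\tau(\mathfrak S_1(L^2_\#))$ and $J\mathbb L$-periodic in $X$. The trace-norm bound \eqref{eq:dec-fh-t} follows from the (moir\'e-scale analog of) Corollary~\ref{cor:trace-comp}, applied to the remainder $\Op(\mathcal R_{n,\epsilon}(\zeta))\Op(\chi_N^2)$, after integration in $\zeta$. For the explicit formulas $f^{\rm eff}_{d,K,0},f^{\rm eff}_{d,K,1},f^{\rm eff}_{d,K,2}$, the key simplification is that $d_0(\zeta)=(\zeta-h^{\rm eff}_{d,K,f,0})^{-1}-(\zeta-h^{\rm eff}_{d,K,0})^{-1}$, whose first summand is holomorphic on $\Supp(\widetilde f)$ and vanishes upon integration against $\overline\partial\widetilde f$, leaving $f(h^{\rm eff}_{d,K,0})$. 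The $d_1$ and $d_2$ terms are simplified exactly as in the proof of Lemma~\ref{lem:a_j} via the identity $h^{\rm eff}_{d,K,f,\epsilon}-h^{\rm eff}_{d,K,\epsilon}=A(\zeta)-B(\zeta)$ with $A:=\zeta-h^{\rm eff}_{d,K,0}$, $B:=\zeta-h^{\rm eff}_{d,K,f,0}$, discarding each term that is holomorphic in $\zeta$ on $\Supp(\widetilde f)$.

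The principal obstacle lies in the bookkeeping at order $\epsilon^2$: three distinct sources contribute, namely (a) the second-order Moyal-bracket terms from Proposition~\ref{prop:PS-asymp} applied to each resolvent separately, (b) the cross-product between the first-order Poisson-bracket correction in one resolvent and the first-order $T_{0,1}$-correction in the other (these are the $\frac i2\{\cdot,T_{0,1}\}$-type terms in the stated formula), and (c) the pure $T_{0,2}$-contribution $(\zeta{-}h^{\rm eff}_{d,K,0})^{-1}T_{0,2}(\kappa)(\zeta{-}h^{\rm eff}_{d,K,0})^{-1}$ coming from \eqref{eq:T-epsilon-dec} with $T_{0,2}=-\frac 12 T_0$, which accounts for the $-\frac 12(\zeta-\mathfrak h^{\rm eff}_{d,K,0})^{-1}T_0(\zeta-\mathfrak h^{\rm eff}_{d,K,0})^{-1}$ factor. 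Collecting these and reducing to the form stated in the Lemma requires carefully tracking which combinations of $A^{-1}$ and $B^{-1}$ appear and systematically eliminating every purely $B$-dependent summand by Cauchy's theorem; this is the analog of the computation \eqref{eq:a2} in the atomic-scale case.
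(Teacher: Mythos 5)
Your proposal is essentially correct and follows the same route the paper takes: start from the Helffer--Sj\"ostrand identity \eqref{eq:holom-BM}, use that $h^{\rm eff}_{d,K,\bullet}-h^{\rm eff}_{d,K,f,\bullet}$ is trace-class valued (Corollary~\ref{cor:trace-BM}) so the expansion lands in $\mathfrak S_1(L^2_\#)$, expand the two resolvents via the parametrix constructions (Lemma~\ref{lem:7.5} and its $f$-companion), and then eliminate the purely $B$-dependent terms by the holomorphy of $\zeta\mapsto(\zeta-h^{\rm eff}_{d,K,f,0})^{-1}$ on $\Supp\widetilde f$. Your remark that the moir\'e case requires no twisted Weyl calculus is correct (the $\Opc$ in the paper's statement is a holdover typo from Section~\ref{sec:red-TBG}; the model is quantized with the standard $\Op_\epsilon$ throughout Section~\ref{sec:red-BM}).

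One minor imprecision: the identity $h^{\rm eff}_{d,K,f,\epsilon}-h^{\rm eff}_{d,K,\epsilon}=A(\zeta)-B(\zeta)$ with $A:=\zeta-h^{\rm eff}_{d,K,0}$, $B:=\zeta-h^{\rm eff}_{d,K,f,0}$ is exact only at $\epsilon=0$, because both $h^{\rm eff}_{d,K,\epsilon}$ and $h^{\rm eff}_{d,K,f,\epsilon}$ carry genuine $\epsilon$-dependence through $T_\epsilon$. Unlike the atomic-scale case (where $h_{d,0}$ and $h_{d,0,f}$ are $\epsilon$-independent), the middle factor here must itself be expanded as a semiclassical symbol in $\mathfrak S_1(L^2_\#)$, as Corollary~\ref{cor:trace-BM} provides. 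This is an additional source of contributions at each order that your list at order $\epsilon^2$ does not explicitly separate from item (c); it is worth acknowledging, though it does not affect the correctness of the overall strategy or the final formulas.
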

\begin{proof}[Sketch of proof]
We only need to show \eqref{eq:a1-t}. Let $A_{\rm eff}(\zeta):=\zeta-h^{\rm eff}_{d,K,0}$, $B_{\rm eff}(\zeta):=\zeta-h^{\rm eff}_{d,K,f,0}$, and $C_{\rm eff}:=T_{0,1}(-i\nabla_x+k-K)$.
Proceeding as for Lemma \ref{lem:a_j}, we have
\begin{align*}
   f^{\rm eff}_{d,K,1}(\zeta)&=-\frac{1}{\pi}\int_{\C} \overline{\partial}\widetilde{f}(\zeta )  \Big[\frac{i}{2}\{A_{\rm eff}^{-1}(\zeta),A_{\rm eff}(\zeta)\}A_{\rm eff}^{-1}(\zeta)\notag\\
    &\quad+ A_{\rm eff}^{-1}(\zeta)C_{\rm eff}A_{\rm eff}^{-1}(\zeta)\big(A_{\rm eff}(\zeta)-B_{\rm eff}(\zeta)\big)B_{\rm eff}^{-1}(\zeta)\\
    &\quad + A_{\rm eff}^{-1}(\zeta)\big(A_{\rm eff}(\zeta)-B_{\rm eff}(\zeta)\big)B_{\rm eff}^{-1}(\zeta)C_{\rm eff}B_{\rm eff}^{-1}(\zeta)\Big]dL(\zeta).
\end{align*}
Notice that, by the functional calculus, $\zeta \mapsto B_{\rm eff}(\zeta)$ is holomorphic in a neighborhood of $\Supp(\widetilde{f})$. Thus, 
\begin{align*}
  \MoveEqLeft f^{\rm eff}_{d,K,1}(\zeta)=-\frac{1}{\pi}\int_{\C} \overline{\partial}\widetilde{f}(\zeta ) \Big[ -\frac{i}{2}\{A_{\rm eff}^{-1}(\zeta),A_{\rm eff}(\zeta)\}A_{\rm eff}^{-1}(\zeta)+A_{\rm eff}^{-1}(\zeta)C_{\rm eff}A_{\rm eff}^{-1}(\zeta) \Big]dL(\zeta).
\end{align*}
The expression of $f_{d,K,2}^{\rm eff}$ is obtained similarly.
\end{proof}

\subsection{End of the proof of Theorem \ref{DoS-BM}}
Now we turn to the
\begin{proof}[Sketch proof of Theorem \ref{DoS-BM}]

Proceeding as for Corollary \ref{cor:trace-comp} and Theorem \ref{DoS-TBG},  we infer from Lemma \ref{lem:a_j-t} that 
\begin{align}\label{eq:7.12}
\MoveEqLeft\Trb[f(H_{d,K,\theta}^{\rm eff})]=\sum_{j=0}^n \frac{\epsilon^j}{(2\pi)^2}\fint_{J\Omega}\int_{\Omega^*}\Tr_{L^2_{\#}}\Big(f^{\rm eff}_{d,K,j}(k,X)\Big)dkdX+\mathcal{O}(\epsilon^{n+1}).
\end{align}

Recall that
\begin{align*}
    \mathfrak{h}_{d,K,0}^{\rm eff}(k,X):=T_0(k-K)+\mathcal{V}_{d,K}(X) \in \C^{4 \times 4}_{\rm herm}.
\end{align*}
Notice that $h^{\rm eff}_{d,K,0}(k,X)$ is a multiplication operator in the Fourier representation since $h^{\rm eff}_{d,K,0}(k,X)= \mathfrak{h}_{d,K,0}^{\rm eff}(-i\nabla_x+k,X)$. As $(T_{0,\ell}(-\nabla_x+k-K))_{\ell\in \N}$ are also multiplication operators in the Fourier representation, for any $0\leq j\leq n$, the operator $a_{d,K,j}$ are Fourier multipliers since $a_{d,K,j}$ is a composition of $\partial_k^\alpha h^{\rm eff}_{d,K,0}$, $\partial_X^\beta h^{\rm eff}_{d,K,0}$ and $\partial_k^\alpha T_{0,\ell}$ for some $\alpha,\beta \in \N^2$ and $\ell\in \N$. Thus there is a sequence of functions $(f_{d,K,j}(k,X))_{j \in \N}$ in $S^1_{\tau}(\C^{4\times 4})$ such that $f^{\rm eff}_{d,K,j}(k,X)=f_{d,K,j}(-i\nabla_x+k,X)$. As a result,
\begin{align*}
  \MoveEqLeft  \fint_{J\Omega}\int_{\Omega^*}\Tr_{L^2_{\#}}[f^{\rm eff}_{d,K,j}(k,X)] \, dk \, dX=\fint_{J\Omega}\int_{\Omega^*}\Tr_{L^2_{\#}}[f_{d,K,j}(-i\nabla_x+k,X)]d k dX\\  
&=\sum_{G\in\mathbb{L}^*}\fint_{J\Omega}\int_{\Omega^*}\Tr_{\C^4}[f_{d,K,j}(k+G,X)]d k dX=\fint_{J\Omega}\int_{\R^2}\Tr_{\C^4}[f_{d,K,j}(k,X)]dkdX.
    \end{align*}
Inserting this into \eqref{eq:7.12}, we obtain \eqref{eq:DoS-BM}. In particular, we can take
\begin{align*}
    f_{d,K,0}(k,X)=f(\mathfrak{h}_{d,K,0}^{\rm eff}(k,X)),
\end{align*}
\begin{align*}
    f_{d,K,1}(k,X):&=\frac{i}{2\pi}\int_{\C} \overline{\partial}\widetilde{f}(\zeta )  \{(\zeta- \mathfrak{h}_{d,K,0}^{\rm eff})^{-1},(\zeta- \mathfrak{h}_{d,K,0}^{\rm eff})\}(k,X) (\zeta-\mathfrak{h}_{d,K,0}^{\rm eff}(k,X))^{-1} \, dL(\zeta)\\
    &\quad-\frac{1}{\pi}\int_{\C} \overline{\partial}\widetilde{f}(\zeta )  (\zeta- \mathfrak{h}_{d,K,0}^{\rm eff}(k,X))^{-1}T_{0,1}(k-K)(\zeta- \mathfrak{h}_{d,K,0}^{\rm eff}(k,X))^{-1} \, dL(\zeta)
\end{align*}
and
\begin{align*}
  \MoveEqLeft  f^{\rm eff}_{d,K,2}(k,X)=-\frac{1}{\pi}\int_{\C} \overline{\partial}\widetilde{f}(\zeta )  \Big[-\frac{1}{4}(\zeta-h^{\rm eff}_{d,K,0})^{-1}\{(\zeta-h_{d,K,0}^{\rm eff}),(\zeta-h^{\rm eff}_{d,K,0})^{-1}\}^2\\
     &+\frac{1}{4}\{(\zeta-h^{\rm eff}_{d,K,0})^{-1},\{(\zeta-h^{\rm eff}_{d,K,0}),(\zeta-h^{\rm eff}_{d,K,0})^{-1}\}\}\\
     &+\frac{1}{8}(\zeta-h^{\rm eff}_{d,K,0})^{-1}\{(\zeta-h^{\rm eff}_{d,K,0}),(\zeta-h^{\rm eff}_{d,K,0})^{-1}\}_2\\
     &+(\zeta-h^{\rm eff}_{d,K,0}(k,X))^{-1}T_{0,2}(-i\nabla_x+k-K)(\zeta-h^{\rm eff}_{d,K,0}(k,X))^{-1}\\
     & +(\zeta-h^{\rm eff}_{d,K,0}(k,X))^{-1}\Big[T_{0,1}(-i\nabla_x+k-K)(\zeta-h^{\rm eff}_{d,K,0}(k,X))^{-1}\Big]^2\\
     &+ \frac{i}{2}\{(\zeta-h_{d,K,0}^{\rm eff})^{-1}, T_{0,1}(-i\nabla_x+k-K)\}(\zeta-h_{d,K,0}^{\rm eff})^{-1}\\
     &- \frac{i}{2}\{(\zeta-h_{d,K,0}^{\rm eff})^{-1}T_{0,1}(-i\nabla_x+k-K)(\zeta-h_{d,K,0}^{\rm eff})^{-1},\zeta-h_{d,K,0}^{\rm eff}\}(\zeta-h_{d,K,0}^{\rm eff})^{-1}\\
     &-\frac{i}{2}\{(\zeta-h_{d,K,0}^{\rm eff})^{-1},\zeta-h_{d,K,0}^{\rm eff}\}(\zeta-h_{d,K,0}^{\rm eff})^{-1}T_{0,1}(-i\nabla_x+k-K)(\zeta-h_{d,K,0}^{\rm eff})^{-1}\Big]dL(\zeta).
\end{align*}
\end{proof}

\section*{Acknowledgements} This project has received funding from the European Research Council (ERC) under the European Union’s Horizon 2020 research and innovation program (grant agreement EMC2 No. 810367) and from the Simons Targeted Grant Award No. 896630. We thank Simon Becker, Clotilde Fermanian Kammerer, David Gontier, and Maciej Zworski for helpful discussions.

\medskip
\begin{refcontext}[sorting=nyt]
\printbibliography[heading=bibintoc, title={Bibliography}]
\end{refcontext}
\end{document}